\newcommand{\be}{\begin{eqnarray}}
\newcommand{\ee}{\end{eqnarray}}
\newcommand{\bez}{\begin{eqnarray*}}
\newcommand{\eez}{\end{eqnarray*}}
\numberwithin{equation}{section}
\numberwithin{equation}{section}
\newtheorem{theorem}{Theorem}[section]
\newtheorem{definition}{Definition}[section]
\newtheorem{lemma}[theorem]{Lemma}
\newtheorem{coro}[theorem]{Corollary}
\theoremstyle{definition}
\newtheorem{remark}[theorem]{Remark}
\DeclareMathOperator{\Pf}{Pf}
\begin{document}
% \begin{CJK}{UTF8}{gbsn}

\title[Nonisospectral deformation of OPs and Painlev\'{e}]{Stationary reduction method based on nonisospectral deformation of orthogonal polynomials, and discrete Painlev\'{e}-type equations}

\date{}

\author{Xiao-Lu Yue}
%\thanks{$^*$Corresponding author (changxk@lsec.cc.ac.cn).}
\address{Department of Mathematics, City University of Hong Kong, Tat Chee Avenue, Kowloon, Hong Kong; Academy of Mathematics and Systems Science, Chinese Academy of Sciences, Beijing 100190, P.R. China, and School of Mathematical Sciences, University of Chinese Academy of Sciences, Beijing 100049, P.R. China.}
\email{yuexiaolu@lsec.cc.ac.cn}

\author{Xiang-Ke Chang}
%\thanks{$^*$Corresponding author (changxk@lsec.cc.ac.cn).}
\address{SKLMS \& ICMSEC, Academy of Mathematics and Systems Science, Chinese Academy of Sciences, Beijing 100190, P.R. China, and School of Mathematical Sciences, University of Chinese Academy of Sciences, Beijing 100049, P.R. China.}
\email{changxk@lsec.cc.ac.cn}
%\thanks{}

\author{Xing-Biao Hu}
%\thanks{$^*$Corresponding author (changxk@lsec.cc.ac.cn).}
\address{ICMSEC, Academy of Mathematics and Systems Science, Chinese Academy of Sciences, Beijing 100190, P.R. China, and School of Mathematical Sciences, University of Chinese Academy of Sciences, Beijing 100049, P.R. China.}
\email{hxb@lsec.cc.ac.cn}

\begin{abstract}
In this work, we propose a new approach called ``stationary reduction method based on nonisospectral deformation of orthogonal polynomials" for deriving discrete Painlev\'{e}-type (d-P-type) equations. We apply this approach to  (bi)orthogonal polynomials satisfying ordinary orthogonality, $(1,m)$-biorthogonality, generalized Laurent biorthogonality, Cauchy biorthogonality and partial-skew orthogonality.  As a result, several seemingly novel classes of high order d-P-type equations or integrable difference systems with potential relations with new d-P-type equations, along with their particular solutions and respective Lax pairs, are derived. Notably, the derived integrable difference system related to the Cauchy biorthogonality is a stationary reduction of a nonisospectral generalization involving the first two flows of the Toda hierarchy of CKP type. Additionally, the integrable difference system related to the partial-skew orthogonality is associated with the nonisospectral Toda hierarchy of BKP type, and it is found to admit a solution expressed in terms of Pfaffians.

%Notably, the \textcolor{red}{d-P-type equation} related to the Cauchy biorthogonality can be viewed as a stationary reduction of a nonisospectral generalization involving the first two flows of the Toda hierarchy of CKP type. Additionally, the \textcolor{red}{d-P-type equation} related to the partial-skew orthogonality is associated with the nonisospectral Toda hierarchy of BKP type, and this equation is found to admit a solution expressed in terms of Pfaffians.

%Using this method to derive the d-P equation does not require the construction of a specific semi-classical weight function. At the same time, the solution of the d-P equation can be given, and it is proved that stationary reduction can be realized from the perspective of the solution. We have done a series of work using this method, involving a variety of recently proposed new orthogonalities, such as Cauchy biorthogonality, Partial-skew orthogonality, etc.

\end{abstract}

%with a parameter $l$ ($l\in\mathbb{N}^+$)

\keywords{Orthogonal polynomials, Nonisospectral deformation, Stationary reduction, Discrete Painlev\'{e}-type equations}
\subjclass[2010]{33C47, 33E17, 34M55, 39A36, 37K10}

\maketitle
\tableofcontents

\section{Introduction} 
%\todo{Unify the writing or notations\\
%bracket \bigg( \bigg)\\
% abbreviate $P_n(x;t)$ for $P_n(x)$ or $P_n$\\
% Check References}

\subsection{On OPs and their analogues}

An orthogonal polynomial (OP) sequence is a family of polynomials such that any two different polynomials in the sequence are orthogonal to each other under some inner product, among which, the classical OPs, such as Jacobi, Laguerre and Hermite  OPs, are the most widely used. Since the 19th century, the theory of OPs has been well developed; see e.g. \cite{askey1975orthogonal,askey1985some,beals2016special,brezinski1980pade,chihara1978introduction,ismail2009classical,szeg1939orthogonal,baik2007discrete,marcellan2021orthogonal,simon2009orthogonal}.

The theory of OPs encompasses two distinct but interconnected facets. The two facets have many things in common, and the
division line is quite blurred, it is more or less along algebra vs. analysis.
The first aspect pertains to the formal and algebraic elements of the theory, which establishes strong links with special functions, combinatorics, and algebra. The exploration of broader classes of OPs using mathematical analysis techniques constitutes the other facet of the theory. In this regard, the primary inquiries revolve around the asymptotic properties of the polynomials and their zeros, the reconstruction of the orthogonality measure, and related aspects. 

%The study in the present work is more relevant to the first aspect mentioned.

Nowadays, there have been some more general definitions of OPs \cite{brezinski1992biorthogonality,dunkl2014orthogonal,bultheel1999orthogonal,stahl1992general,marcellan2021orthogonal}, such as matrix OPs, multiple OPs, multivariable OPs , bi-OPs, rational orthogonal functions etc., and some analogues of orthogonality, such as skew-orthogonality, partial skew-orthogonality etc. (Unless otherwise specified, polynomials or functions with certain orthogonality are collectively referred to as OPs for simplicity.) In particular, new concepts called \textit{Cauchy bi-OPs} \cite{bertola2010cauchy}, \textit{partial-skew-OPs} \cite{chang2018partial} and \textit{generalized Laurent bi-OPs} \cite{wang2022generalization} were proposed very recently and the corresponding theories and applications have been increasingly developed (see e.g. \cite{bertola2009cubic,bertola2009cauchy,bertola2010cauchy,bertola2014cauchy,bertola2013strong,gharakhloo2023modulated,ito2023generalized,chang2021two,forrester2021fox,li2019cauchy,chang2022hermite,lago2019mixed,bertola2014universality,chang2018degasperis,chang2018application,fidalgo2020asymptotic,gonzalez2022strong}).

OPs have emerged with great importance in the fields of mathematical physics, quantum mechanics, numerical analysis, statistics, probability, and many other disciplines.
In recent decades, considerable attention has been paid to interdisciplinary studies between the theory of OPs and integrable systems. For example, OPs can serve as wave functions appearing in the Lax pairs of Toda-type lattices \cite{aptekarev1997toda,ismail2009classical,van2022orthogonal,adler1995matrix,adler1997string,adler1999generalized,alvarez2013orthogonal,alvarez2011multiple,ariznabarreta2016multivariate,chang2018partial,kharchev1997faces,mukaihira2002schur,nenciu2005lax,peherstorfer2007toda,vinet1998integrable,deift2000orthogonal}, while the compatibility conditions of discrete spectral transformations of OPs can give rise to discrete integrable systems \cite{aptekarev2016toda,aptekarev2016discrete,chang2015about,miki2011discrete,papageorgiou1995orthogonal,spiridonov2007integrable,spiridonov1995discrete,spiridonov1997discrete,tsujimoto2000discrete}. The theory of OPs can play important roles in the study of peaked soliton problems for a class of integrable partial differential equations \cite{beals2000multipeakons,beals2001peakons,lundmark2016inverse,chang2016multipeakons,chang2018degasperis,chang2019isospectral,lago2019mixed,chang2022hermite,lundmark2022view}. In addition, semi-classical OPs will lead to Painlev\'{e}-type equations \cite{vanass2018,van2022orthogonal}.
This work is devoted to an interdisciplinary study on nonisospectral deformation of OPs and integrable Painlev\'{e}-type equations.

% 书
% In modern numerical analysis, the theory of formal orthogonal polynomials holds a pivotal position \cite{brezinski1980pade,brezinski1991bibliography,brezinski1992biorthogonality,brezinski2000convergence}, particularly in the realms of eigenvalue problems, convergence acceleration algorithms, Pad\'{e} approximation, and continued fractions, among other related theories.

\subsection{On Painlev\'{e} equations}
Around the early 20th century, Painlev\'{e}, Gambier and Fuchs investigated the problem, proposed by Picard, related to second-order ordinary differential equations of the following form
\begin{align}
\frac{d^2y}{dx^2}=F\left(\frac{dy}{dx},y,x\right),\label{intro0-3}
\end{align}
where $F$ is a rational function of $\frac{dy}{dx}$ and $y$, and is analytic in terms of $x$. The objective is to classify the differential equation \eqref{intro0-3} under the condition that all movable singularities of the solution are poles. They discovered that, up to a M\"{o}bius transformation, there were fifty equations of the form \eqref{intro0-3} that possess this property, now known as the \textit{Painlev\'{e} property}.

Painlev\'{e}, Gambier and Fuchs further demonstrated that, among these fifty equations, forty-four of them can be reduced to linear equations that can be solved using elliptic functions or previously known special functions such as Airy functions or Bessel functions. The remaining six equations, which cannot be reduced to linear form, give rise to new nonlinear ordinary differential equations that define new transcendental functions. These equations are known as the Painlev\'{e} equations \cite{clarkson2003painleve,conte2012painleve,conte2008painleve,fokas2006painleve,noumi2004painleve,iwasaki2013gauss,ince1956ordinary,levi2013painleve,vanass2018}:
\begin{align*}
\text{P}_{\text{I}}:\quad \frac{d^2 y}{dx^2}=&6y^2+x, \\
\text{P}_{\text{II}}: \quad\frac{d^2 y}{dx^2}=&2y^3+xy+\alpha, \\
\text{P}_{\text{III}}: \quad \frac{d^2 y}{dx^2}=&\frac{(\frac{dy}{dx})^2}{y}-\frac{\frac{dy}{dx}}{x}+\frac{\alpha y^2+\beta}{x}+\gamma y^3+\frac{\delta}{y}, \\
\text{P}_{\text{IV}}: \quad \frac{d^2 y}{dx^2}=&\frac{(\frac{dy}{dx})^2}{2y}-\frac{3y^3}{2}+4xy^2+2(x^2-\alpha)y+\frac{\beta}{y}, \\
\text{P}_{\text{V}}: \quad \frac{d^2 y}{dx^2}=&\left(\frac{1}{2y}+\frac{1}{y-1}\right)\left(\frac{dy}{dx}\right)^2-\frac{\frac{dy}{dx}}{x}+\frac{(y-1)^2}{x^2}\left(\alpha y+\frac{\beta}{y}\right)+\frac{\gamma y}{x}+\frac{\delta y(y+1)}{y-1},\\
\text{P}_{\text{VI}}: \quad \frac{d^2 y}{dx^2}=&\frac{1}{2}\left(\frac{1}{y}+\frac{1}{y-1}+\frac{1}{y-x}\right)\left(\frac{dy}{dx}\right)^2-\left(\frac{1}{x}+\frac{1}{x-1}+\frac{1}{y-x}\right)\frac{dy}{dx}\\
&+\frac{y(y-1)(y-x)}{x^2(x-1)^2}\left(\alpha+\frac{\beta x}{y^2}+\frac{\gamma(x-1)}{(y-1)^2}+\frac{\delta x(x-1)}{(y-x)^2}\right),
\end{align*}
whose solutions are called the Painlev\'e transcendents. 
Here, $\alpha$, $\beta$, $\gamma$, and $\delta$ are all constants. It is also noted that each of the Painlev\'e equations can be written as a (non-autonomous) Hamiltonian system for a suitable Hamiltonian function $H_J(q,p,z)$. The function $\sigma(z) \equiv H_J(q,p,z)$ satisfies a second-order, second-degree ordinary differential equation, known as the Jimbo–Miwa–Okamoto equation or Painlev\'e $\sigma$-equation, whose solution is expressible in terms of the solution of the associated Painlev\'e  equation \cite{jimbo1981monodromy}.

Although the Painlev\'{e} equations were initially regarded as purely mathematical objects, they also widely appear in various studies of physical systems. For instance, P$_{\text{II}}$ is related to the Korteweg-de Vries (KdV) equation \cite{segur1981asymptotic,claeys2010painleve,fokas1981linearization,ablowitz1978nonlinear}, which models shallow water waves, and it also appears in the long time asymptotics for the the mKdV equation \cite{deift1993steepest}. The Painlev\'{e} equations can also arise in the solutions of the nonlinear Schrödinger equation \cite{bertola2013universality,boscolo2002self,wang2023defocusing} and appear in the study of the Camassa-Holm equation \cite{barnes2022similarity,de2010painleve}. Additionally, the Painlev\'{e} equations have applications in many other fields \cite{forrester2002application,ablowitz1991soliton,jimbo1980density,bertola2015asymptotics,balogh2015strong,bertola2024exactly,baik1999distribution,fokas2006painleve,forrester2010log,tracy1994level,mehta2004random,levi2013painleve}, including OPs, statistical mechanics, random matrices, plasma physics, quantum gravity, general relativity, and etc.

% \cite{ablowitz1980connection,ablowitz1977exact}
%\cite{joshi1994direct,ablowitz1978nonlinear,kruskal1992painleve}.  \cite{ramani2009non,kajiwara2017geometric,atkinson2016geometry}

%Ablowitz M J and Segur H 1977 Exact linearisation of a Painlev´e transcendent Phys. Rev. Lett. 38 1103–1106
%[14] Ablowitz M J, Ramani A and Segur H 1978 Nonlinear evolution equations and ordinary differential equations of Painleve
%type Lett. Nuovo Cim. 23 333–338
%[15] Ablowitz M J, Ramani A and Segur H 1980 A connection between nonlinear evolution equations and ordinary differential
%equations of P-type. I J. Math. Phys. 21 715–721
%[16] Ablowitz M J, Ramani A and Segur H 1980 A connection between nonlinear evolution equations and ordinary differential
%equations of P-type. II J. Math. Phys. 21 1006–1015
%

In recent years, there has been increasing interest in discrete Painlev\'{e} (d-P) equations \cite{joshi2019discrete,conte2012painleve,conte2008painleve,kajiwara2017geometric,grammaticos1991integrable,ramani1991discrete,vanass2018,van2022orthogonal,noumi2004painleve,grammaticos2004discrete,cassatella2012riemann,gordoa2010matrix,dzhamay2018some,joshi2023monodromy,hietarinta2016discrete,joshi2021riemann,sakai2001rational}. d-P equations are nonlinear, non-autonomous, second-order ordinary difference equations that tend to continuous Painlev\'{e} equations in a certain limit.  The systematic study of d-P equations started
in the early 1990s in the work of Grammaticos, Ramani, who first, together with Papageorgiou \cite{grammaticos1991integrable}, introduced the notion of \textit{singularity confinement} as a discrete counterpart of the Painlevé property and proposed to use it as an integrability detector for discrete systems, then, together with Hietarinta \cite{ramani1991discrete}, applied this idea to obtain non-autonomous version of the two-dimensional integrable mappings
known as the Quispel–Roberts–Thompson (QRT) mappings  and succeeded in constructing d-P equations systematically.

Based on whether the coefficients are linear, exponential, or elliptic functions of $n$, d-P equations can be classified into three-types \cite{kajiwara2017geometric,joshi2019discrete,ramani1991discrete,conte2008painleve}, denoted by the prefixes $d-$, $q-$, or $ell-$ respectively, before the equation names. For example, there exist 
\begin{align*}
\text{d-P}_{\text{I}}: \quad &x_{n+1}+x_n+x_{n-1}=\frac{z_n+a(-1)^n}{x_n}+b,\\
\text{d-P}_{\text{II}}: \quad &x_{n+1}+x_{n-1}=\frac{x_nz_n+a}{1-x_n^2},\\
\text{d-P}_{\text{IV}}: \quad &(x_{n+1}+x_n)(x_n+x_{n-1})=\frac{(x_n^2-a^2)(x_n^2-b^2)}{(x_n+z_n)^2-c^2},\\
\text{d-P}_{\text{V}}: \quad &\frac{(x_{n+1}+x_n-z_{n+1}-z_n)(x_n+x_{n-1}-z_n-z_{n-1})}{(x_{n+1}+x_n)(x_n+x_{n-1})}\\=
&\frac{((x_n-z_n)^2-a^2)((x_n-z_n)^2-b^2)}{(x_n-c^2)(x_n-d^2)},
\end{align*}
where $z_n=\alpha n+\beta$, $a,\ b,\ c,\ d$ are all constants, and
\begin{align*}
\text{q-P}_{\text{III}}: \quad &x_{n+1}x_{n-1}=\frac{(x_n-aq_n)(x_n-bq_n)}{(1-cx_n)(1-x_n/c)},\\
\text{q-P}_{\text{V}}: \quad &(x_{n+1}x_n-1)(x_{n-1}x_n-1)=\frac{(x_n-a)(x_n-1/a)(x_n-b)(x_n-1/b)}{(1-cx_nq_n)(1-x_nq_n/c)},\\
\text{q-P}_{\text{VI}}: \quad &\frac{(x_{n+1}x_n-q_nq_{n+1})(x_{n-1}x_n-q_nq_{n-1})}{(x_{n+1}x_n-1)(x_{n-1}x_n-1)}=\frac{(x_n-aq_n)(x_n-q_n/a)(x_n-bq_n)(x_n-q_n/b)}{(x_n-c)(x_n-1/c)(x_n-d)(x_n-1/d)},
% d-P_{\text{V}} \quad &\frac{(x_{n+1}+x_n-z_{n+1}-z_n)(x_n+x_{n-1}-z_n-z_{n-1})}{(x_{n+1}+x_n)(x_n+x_{n-1})}\\=
% &\frac{((x_n-z_n)^2-a^2)((x_n-z_n)^2-b^2)}{(x_n-c^2)(x_n-d^2)}
\end{align*}
where $q_n=q_0q^ n+\beta$, $a,\ b,\ c,\ d$ are all constants. An example of a scalar elliptic d-P equation is
\begin{align*}
&cn(\gamma_n)dn(\gamma_n)(1-k^2sn^4(z_n))x_n(x_{n+1}+x_{n-1})\\
&-cn(z_n)dn(z_n)(1-k^2sn^2(z_n)sn^2(\gamma_n))(x_{n+1}x_{n-1}+x_n^2)\\
&+(cn^2(z_n)-cn^2(\gamma_n))cn(z_n)dn(z_n)(1+k^2x^2_nx_{n+1}x_{n-1})=0,
\end{align*}
in which 
\begin{align*}
z_n=(\gamma_e+\gamma_0)n+\omega,\quad \gamma_n=\left\{
		\begin{aligned}
			&\gamma_e \quad n=2j,\\
			&\gamma_0 \quad n=2j+1,\nonumber
		\end{aligned}\right.
\end{align*}
and $cn$, $dn$, $sn$ are all Jacobi elliptic functions.

It is Sakai that gave the definite classification scheme of d-P equations based on algebro-geometric ideas \cite{sakai2001rational}. There is also certain connection between d-P-type equations and birational representation of affine Weyl groups \cite{noumi2004painleve,noumi1998affine}. For a comprehensive survey of the geometric aspects of d-P equations, see the article by Kajiwara, Noumi and Yamada \cite{kajiwara2017geometric} together with the references therein.

At the end of this subsection, we mention that d-P equations also appear in some applied problems \cite{borodin2003discrete,borodin2003distribution,fokas1991discrete,borodin2016lectures,hone2014discrete,hone2002lattice,fokas1992isomonodromy}, such as the computations of gap probabilities of various ensembles in the emerging field of integrable probability, quantum gravity, and reductions of lattice equations etc.

%\todo{coupled painleve, third-order}

%\todo{clarkson, forrester, bertola,manas, anton, kajiwara, joshi,hone,pickering}

\subsection{Connections between Painlev\'{e} equations and OPs}
It is known that there exists close connections between the (discrete and continuous) Painlev\'{e} equations and OPs. The relationship between Painlev\'{e} equations and OPs can be dated back to the work of Shohat \cite{shohat1939differential} in 1939 and later Freud \cite{freud1976coefficients} in 1976. However the equations in their works were not identified as d-P equations until the work of Fokas, Its and Kitaev \cite{fokas1991discrete,fokas1992isomonodromy} in the early 1990s.  Later, Magnus  demonstrated certain relationship between semi-classical OPs and the (continuous) Painlev\'e  equations \cite{magnus1995painleve}. Nowadays, a multitude of connections between Painlev\'{e} equations and OPs have been discovered; see a recent monography by Van Assche \cite{vanass2018} and references therein or some others e.g. \cite{clarkson2006painleve,clarkson2013recurrence,van2007discrete,boelen2010discrete,chen1997ladder,chen2008ladder,clarkson2014relationship,dai2010determinants,forrester2004discrete,boelen2013generalized,chen2010painleve,yue2022laurent,forrester2006bi,xu2015painleve,cassatella2014singularity,cassatella2019riemann,dzhamay2020recurrence,van2022orthogonal,bertola2015asymptotics,balogh2015strong,bertola2024exactly}. The relationship can be summarized as follows:

\begin{enumerate}
    \item[(1)] d-P equations are satisfied by the recurrence coefficients of certain semi-classical OPs.

\item[(2)] The recurrence coefficients of OPs undergoing a Toda-type evolution satisfy Painlev\'{e} differential equations, and their special solutions are associated with special functions such as Airy functions, Bessel functions, (confluent) hypergeometric functions, and parabolic cylinder functions. 
%Furthermore, these special solutions are interrelated.

\item[(3)] Expressions for rational solutions of some Painlev\'{e} equations can be formulated utilizing Wronskians of certain OPs.

\item[(4)] Special transcendental solutions of Painlev\'{e} equations are often used to establish the local asymptotics of OPs at critical points.

\end{enumerate}

 Motivated by the fact that the cross-research on OPs and Painlev\'{e} equations has promoted the mutual development of both fields, we are interested in the interdisciplinary studies of OPs and d-P equations. Specifically, we are curious about what types of Painlev\'{e} equations are related to the recently proposed Cauchy bi-OPs \cite{bertola2010cauchy}, partial-skew-OPs \cite{chang2018partial} and generalized Laurent bi-OPs \cite{wang2022generalization}.

 To this end, we propose a new method called \textit{stationary reduction method based on nonisospectral deformation of OPs} for deriving d-P-type equations in Section \ref{BM}. Subsequently, \textbf{ this approach are applied to various OPs, including ordinary OPs, $\mathbf{(1,m)}$-type bi-OPs, generalized Laurent bi-OPs, Cauchy bi-OPs and partial-skew OPs, in Section \ref{OOP}-\ref{Partial-skew}. As a result, some seemingly new  high order  d-P-type equations are obtained}.  It is noted that the obtained new integrable difference systems from partial-skew OPs and Cauchy bi-OP, which might be related to new  d-P-type equations of  high order, are associated to nonisospectral Toda hierarchies of BKP and CKP types, respectively. In particular, the obtained integrable difference system from partial-skew OPs enjoys solutions in terms of Pfaffians. Section \ref{cd} is devoted to conclusion and discussions.
Some technical details involving Pfaffians are included in the appendices.

It is also noted that the concept of Cauchy bi-OPs was first proposed in \cite{bertola2010cauchy} and they are motivated by the multipeakon solutions of a shallow water wave equation called the Degasperis--Procesi equation \cite{lundmark2003multi,lundmark2005degasperis}. So far, the Cauchy bi-OPs have been extensively investigated in the literature, including the inspired Cauchy two-matrix model,  Toda lattice of CKP-type (C-Toda), mixed Hermite--Pad\'e approximation problem, as well as their generalizations etc. (see e.g. \cite{bertola2009cauchy,bertola2009cubic,bertola2010cauchy,bertola2014cauchy,bertola2013strong,chang2018degasperis,chang2021two,li2019cauchy,forrester2021fox,lago2019mixed,bertola2014universality,chang2018degasperis,gonzalez2022strong}). In fact, isospectral deformation of the Cauchy bi-OPs \cite{chang2018degasperis} can lead to the C-Toda lattice, which is also shown to be related  with the discrete CKP equation \cite{bobenko2017discrete,fu17}. It is noted that  Krichever and Zabrodin introduced the so-called \textit{constrained Toda (C-Toda) hierarchy} as a certain subhierarchy of the 2D Toda lattice in the recent work  \cite{krichever2022constrained}. It remains unclear on the difference between these two ``C-Toda" lattices. In Section \ref{Cauchy}, we will clarify that they coincide with each other.

The concept of partial-skew-OPs was introduced in \cite{chang2018partial}, with the motivation by the study of a random matrix model called the Bures random ensemble \cite{forrester2016relating} as well as the hints from the formulation of the Novikov peakon solution in terms of Pfaffians \cite{chang2018application}. It is shown that isospectral deformation of the partial-skew OPs are closely related to integrable lattices including Toda lattice of BKP-type (B-Toda) \cite{chang2018partial} and they also solve certain mixed Hermite--Pad\'e approximation problem \cite{chang2022hermite}. It is also noted that,  in a recent paper \cite{krichever2023toda},  Krichever and Zabrodin introduced the so-called \textit{Toda lattice with constraint of type B} as a certain subhierarchy of the 2D Toda lattice. A natural question is whether these two ``B-Toda'' lattices are related with each other, which will be clarified in Section \ref{Partial-skew}.

\section{Stationary reduction method based on nonisospectral deformation of OPs}\label{BM}
In the literature, there exist some methods for deriving d-P equations 
(see e.g. \cite{grammaticos1991integrable,fokas1993continuous,grammaticos2004discrete,grammaticos1999discrete,conte2012painleve,levi1992non,vanass2018}),
among which two effective methods are the \textit{compatibility method based on orthogonality} and the \textit{approach according to stationary reduction of nonisospectral flow}.

The compatibility method based on orthogonality \cite{vanass2018} typically involves the following steps. First one needs to construct a semi-classical weight function to derive a structure relation with the help of the Pearson equation. Then one can derive a system of difference equations by using the compatibility condition of the recurrence relation and the structure relation. Finally, a d-P equation can be obtained by eliminating a number of variables. This method has been successfully applied to some well-known orthogonalities so that different d-P equations have been derived. \textbf{However, this method fails for some novel orthogonalities, such as partial-skew orthogonality, Cauchy biorthogonality etc. One of the difficulties lies in introducing appropriate
semi-classical weight functions to derive the corresponding structure relations.}

The art of the approach according to stationary reduction of nonisospectral flow \cite{levi1992non} is as follows.
Consider a nonisospectral flow 
\begin{align}
\frac{d}{dt}A_{n}+A_{n}B_{n}-B_{n+1}A_{n}=0\label{lu2}
\end{align}
associated with the Lax pair
% which are obtained as compatibility condition of the system as follows:
\begin{subequations}\label{Laxn}
\begin{align}
\psi_{n+1}=A_{n}({\bf q},z)\psi_n,\\
\frac{d}{dt}\psi_{n}=B_{n}({\bf q},z)\psi_n.
\end{align}
\end{subequations}
Here ${\bf q}=(q_1,q_2,\ldots,q_N)$ is a vector function in $t$ and the spectral parameter $z$ depends on the time variable $t$ satisfying 
\begin{align*}
z_t+f=0,
\end{align*}
where $f$ is a scalar function in $t$ and $z$ satisfying $f\neq 0$.
By setting
\begin{align*}
A_n=F_n,\quad B_n=-fG_n,
\end{align*}
one will obtain a d-P equation 
\begin{align*}
F_{n,z}+F_{n}G_{n}-G_{n+1}F_{n}=0,
\end{align*}
by considering the stationary equation associated with~\eqref{lu2}. In addition, when the partial derivative of $\psi_{n}$ with respect to $t$ is zero, the Lax pair \eqref{Laxn} degenerates into 
\begin{subequations}
\begin{align*}
\psi_{n+1}=F_{n}({\bf q},z)\psi_n,\\
\psi_{n,z}=G_{n}({\bf q},z)\psi_n.
\end{align*}
\end{subequations}
which gives the Lax pair of the d-P equation.
\textbf{It is evident that this method involves a direct performance of the stationary reduction on the nonisospectral equation. Unfortunately, it cannot provide the solution to the d-P equations, nor can it show the realization of the the stationary reduction from the perspective of the solution.}

% it be proved that the stationary reduction can be achieved from the perspective of the solution.

Due to the limitations of these two methods, a natural question arises: Can one derive d-P equations and their solutions without constructing the semi-classical weight functions? Furthermore, it is important to understand how the stationary reduction mechanism operates in this context. To this end, we propose a new method for deriving d-P equations---\textit{the stationary reduction method based on nonisospectral deformation of OPs}. The research route is illustrated in the following figure (Fig. \ref{fig:enter-label2}).
\begin{figure}[htbp]
        \centering
        \includegraphics[width=0.9\linewidth]{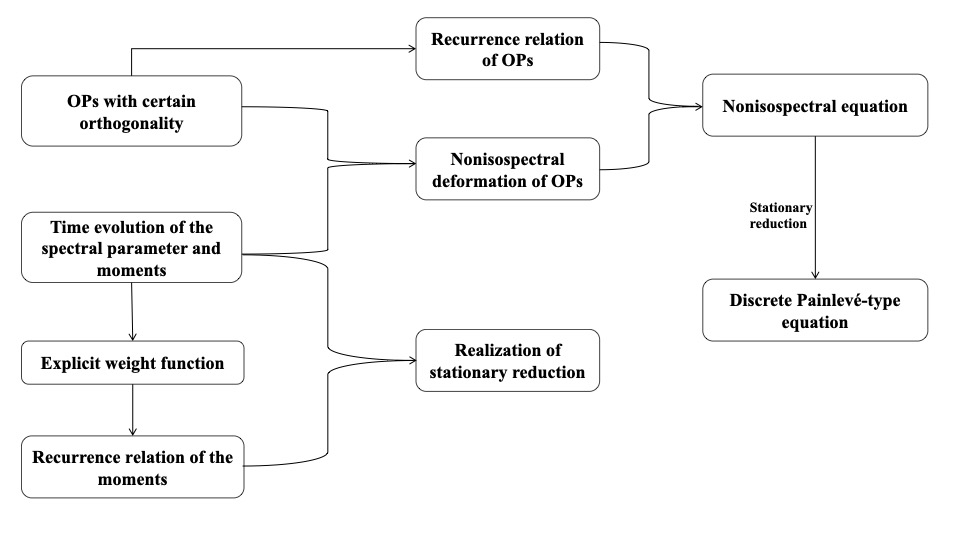}
        \caption{\small Research route }
        \label{fig:enter-label2}
    \end{figure}

More precisely, we start from a family of OPs without giving a specific weight function and first perform nonisospectral deformation on the OPs by introducing the measure $\mu(z;t)$, where the spectral parameter $z(t)$ depends on the time variable $t$ and satisfies
\begin{align}
 \frac{d}{dt}z(t)=\alpha z(t).\label{lu3}
\end{align}
%where $\alpha\neq 0$. For convenience, we will abbreviate $\lambda(t)$ as $\lambda$ below.
Subsequently, we utilize the compatibility condition for the deformed OPs to derive the nonisospectral integrable equations. By implementing stationary reduction, we are able to obtain d-P equations or integrable difference systems along with explicit expressions of their particular solutions. This approach enables us to rigorously demonstrate the achievability of stationary reduction. As we will see, this new method can be applied to various recently proposed new orthogonalities, including Cauchy biorthogonality, partial-skew orthogonality, etc. As a result, different classes of d-P-type equations or integrable difference systems with potential relations with new d-P-type equations are derived.

%, and in general, it is set as $$\frac{d}{dt}\lambda(t)=\sum_{j=0}^{m}\alpha_j\lambda(t)^j.$$ In this work, our primary focus is based on the evolution of the spectral parameter $\lambda$ satisfying

\section{Nonisospectral deformation of ordinary OPs and d-P}\label{OOP}
It is well known that, upon introducing a suitable time deformation of the measure of the OPs, one can derive the time evolution equation for the OPs by using the evolution relation of the measure $\mu(z;t)$ with respect to the time variable $t$ along with the recurrence relation and orthogonality of the OPs. By considering the compatibility condition of the recurrence relation and the time evolution relation satisfied by the OPs, one will obtain integrable equations. For example, the Toda lattice can be obtained  by considering ordinary OPs with $d\mu(z;t)=e^{tx}d\mu(z)$; the Lotka-Volterra lattice can be derived by considering ordinary OPs with symmetric measure and $d\mu(z;t)=e^{tz^2}d\mu(z)$ (see e.g. \cite{aptekarev1997toda,peherstorfer2007toda,van2022orthogonal}). Similarly, one can also derive nonisospectral integrable equations by using the compatibility condition of the recurrence relation and the time evolution relation satisfied by OPs \cite{berezansky1994nonisospectral,chang2018moment}. In this case, it is required that not only the measures $\mu(z;t)$ depends on the time variable $t$, but the spectral parameter $z$ is also related to $t$.  

In this section we shall take the nonisospectral deformation of the ordinary OPs (and those with symmetric measure), together with nonisospectral Toda and Lotka-Volterra lattices and the corresponding d-P-type equations, as examples to illustrate the process of our approach.

\subsection{Ordinary OPs}
Here the ordinary OPs denote a family of polynomials $\{P_n(z)\}_{n\in \mathbb{N}}$ satisfying the ``ordinary'' orthogonality condition
\begin{align}
\int P_n(z)P_m(z)d\mu(z)=h_n\delta_{nm},\label{3-1-1}
\end{align}
where each $P_n(z)$ is a monic polynomial of degree $n$ in $z$ and $\mu$ is a positive measure for which all the moments
\begin{align*}
c_i=\int z^i d\mu(z), \quad i=0,1,2,\ldots
\end{align*}
exist. In many cases, \eqref{3-1-1} can be written as
\begin{align}
\int P_n(z)P_m(z)w(z)dz=h_n\delta_{nm},\nonumber
\end{align}
where $w(z)$ is the corresponding weight function. From the orthogonality condition \eqref{3-1-1}, one can get the three-term recurrence relation of the form
\begin{align}
z P_n(z)=P_{n+1}(z)+b_nP_{n}(z)+a_n^2P_{n-1}(z),\label{3-1-2}
\end{align}
together with the initial values $P_{-1}(z)=0, P_0(z)=1$. The determinant expressions of the polynomials $\{P_n(z)\}_{n\in \mathbb{N}}$ and recurrence coefficients $\{a_n^2\}_{n\geq0}$ and $\{b_n\}_{n\geq0}$ can be obtained by using the orthogonality condition, that is,
\begin{align}
&P_n(z)=\frac{1}{\Delta_n}
\det\left(
\begin{array}{cccc}
c_{i+j}\\
z^j \\
\end{array}
\right)_{\substack{i=0,\ldots,n-1\\ j=0,\ldots,n\,\,\,\,\,\,\,\,}},\label{3-1-3}\\
&a_n^2=\frac{\Delta_{n+1}\Delta_{n-1}}{\Delta_{n}^2},\quad b_n=\frac{\Delta_{n+1}^{\ast}}{\Delta_{n+1}}-\frac{\Delta_{n}^{\ast}}{\Delta_{n}},
\end{align}
%\begin{align}
%P_n(\lambda)=\frac{1}{\Delta_n}
%\left|
%\begin{array}{cccc}
% c_0 & c_1 & \cdots  & c_n \\
% c_{1} & c_2 & \cdots  & c_{n+1} \\
% \vdots  & \vdots  & \ddots & \vdots  \\
% c_{n-1} & c_{n} & \cdots  & c_{2n-1} \\
% 1 & \lambda & \cdots  & \lambda^n \\
%\end{array}
%\right|,\label{3-1-3}
%\end{align}
where the Hankel determinant denotes $\Delta_n=\det(c_{i+j})_{i,j=0,1,\ldots,n-1}$
%\begin{align}
%\Delta_n=\det(c_{i+j})_{i,j=0,1,\ldots,n-1}
%%\left|
%%\begin{array}{cccc}
%% c_0 & c_1 & \cdots  & c_{n-1} \\
%% c_{1} & c_2 & \cdots  & c_{n} \\
%% \vdots  & \vdots  & \ddots & \vdots  \\
%% c_{n-1} & c_{n} & \cdots  & c_{2n-2} 
%%\end{array}
%%\right|\neq 0,\label{3-1-5}
%\end{align}
and the determinant $\Delta_{n}^{\ast}$ is obtained from $\Delta_{n}$ by replacing the last column $(c_{n-1} , c_{n},\ldots, c_{2n-2})^\top$ by $(c_{n} , c_{n+1},\ldots  , c_{2n-1})^\top$. In addition, $\{h_n\}_{n\in \mathbb{N}}$ admit the expressions
$$\quad\quad h_n=\frac{\Delta_{n+1}}{\Delta_n}.$$

\subsection{Nonisospectral Toda lattice}
Consider the monic OPs $\{P_n(z;t)\}_{n\in \mathbb{N}}$ in \eqref{3-1-3} satisfying the three-term recurrence relation~\eqref{3-1-2}.
 Suppose that the spectral parameter $z$ evloves according to the time evolution \eqref{lu3}.  We now choose a weight function such that the moments satisfy the time evolution
\begin{align}
 \frac{d}{dt}c_j(t)=\alpha jc_j(t)+\alpha_1 c_{j+1}(t)+\alpha_2 c_{j+2}(t).\label{3-5-1}
\end{align}
Since
\begin{align*}
 \frac{d}{dt}c_j(t)=&\int  z^j\left(\alpha j w(z;t)+\frac{d}{dt}w(z;t)+\alpha w(z;t)\right)dz \nonumber \\
=&\alpha jc_j(t)+\int  z^j\left(\frac{d}{dt}w(z;t)+\alpha w(z;t)\right)dz,
\end{align*}
we have
\begin{align*}
\int  z^j\left( \frac{d}{dt}w(z;t)+\alpha w(z;t)\right)dz=\alpha_1 c_{j+1}(t)+\alpha_2 c_{j+2}(t).
\end{align*}
Therefore, it is required that
\begin{align*}
 \frac{d}{dt}w(z;t)+\alpha w(z;t)=(\alpha_1 z+\alpha_2 z^2)w(z;t),
\end{align*}
based on which, concrete moments will be given in \eqref{3-7-2}.

\begin{lemma}\label{th4}
Under the assumption of  \eqref{lu3} and \eqref{3-5-1}, the monic OPs \eqref{3-1-3} satisfy the time evolution
\begin{align}
\frac{d}{dt}P_n(z;t)=&n\alpha P_n(z;t)-a^2_n(\alpha_1+\alpha_2(b_{n-1}+b_n))P_{n-1}(z;t)-\alpha_2a^2_{n-1}a^2_nP_{n-2}(z;t).\label{3-5-2}
\end{align}
\end{lemma}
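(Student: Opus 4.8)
The plan is to expand the total time derivative $\frac{d}{dt}P_n(\lambda;t)$ in the orthogonal basis $\{P_k\}_{k=0}^{n}$ and to pin down the expansion coefficients by orthogonality. Write $D=\frac{d}{dt}$ for the total derivative taken along the characteristic $\frac{d\lambda}{dt}=\alpha\lambda$ of \eqref{lu3}. The first observation is that $DP_n$ is again a polynomial of degree $n$ in $\lambda$: since $P_n$ is monic, $\frac{\partial P_n}{\partial t}$ has degree at most $n-1$, while $\alpha\lambda\,\frac{\partial P_n}{\partial\lambda}$ contributes the top term $n\alpha\lambda^n$, so $DP_n$ has leading coefficient $n\alpha$. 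Hence there are coefficients $\gamma_{n,k}=\gamma_{n,k}(t)$ with $DP_n=\sum_{k=0}^{n}\gamma_{n,k}P_k$ and $\gamma_{n,n}=n\alpha$, which already produces the first term on the right-hand side of \eqref{3-5-2}. It remains to compute $\gamma_{n,k}$ for $k\le n-1$.

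Before doing so I would record how the moment functional evolves. Differentiating $c_j=\int\lambda^j\,d\mu$ and using $D(\lambda^j)=\alpha j\lambda^j$ together with the prescribed evolution \eqref{3-5-1} gives $\int\lambda^j\,D(d\mu)=\alpha_1 c_{j+1}+\alpha_2 c_{j+2}$ for every $j$, so that as a moment-functional identity $D(d\mu)=(\alpha_1\lambda+\alpha_2\lambda^2)\,d\mu$. The key step is then to differentiate the orthogonality relation \eqref{3-1-1}: for $m<n$ its right-hand side vanishes, and the product rule yields
\[
0=\int (DP_n)P_m\,d\mu+\int P_n(DP_m)\,d\mu+\int P_nP_m(\alpha_1\lambda+\alpha_2\lambda^2)\,d\mu .
\]
Since $DP_m$ has degree $m<n$, the middle integral vanishes by orthogonality, leaving
\[
\gamma_{n,m}h_m=\int(DP_n)P_m\,d\mu=-\int P_nP_m(\alpha_1\lambda+\alpha_2\lambda^2)\,d\mu .
\]

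It then remains to evaluate the right-hand side using the three-term recurrence \eqref{3-1-2}. Expanding $\lambda P_m$ once and $\lambda^2P_m$ twice in the basis $\{P_k\}$ shows that $(\alpha_1\lambda+\alpha_2\lambda^2)P_m$ is supported on $P_{m-2},\dots,P_{m+2}$, so orthogonality against $P_n$ forces $\gamma_{n,m}=0$ unless $m\in\{n-1,n-2\}$. For $m=n-1$ the coefficient of $P_n$ in $(\alpha_1\lambda+\alpha_2\lambda^2)P_{n-1}$ equals $\alpha_1+\alpha_2(b_{n-1}+b_n)$, and for $m=n-2$ the coefficient of $P_n$ in $\alpha_2\lambda^2P_{n-2}$ equals $\alpha_2$. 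Substituting these, together with the ratios $h_n/h_{n-1}=a_n^2$ and $h_n/h_{n-2}=a_{n-1}^2a_n^2$ which follow from $h_n=\Delta_{n+1}/\Delta_n$ and \eqref{3-1-4}, reproduces exactly the coefficients $-a_n^2(\alpha_1+\alpha_2(b_{n-1}+b_n))$ and $-\alpha_2 a_{n-1}^2a_n^2$ in \eqref{3-5-2}. I expect the part requiring the most care to be the double application of the recurrence to $\lambda^2P_m$ and the correct interpretation of the total derivative $D$ — in particular the contribution of the flowing spectral parameter both to the degree count for $DP_n$ and to the $\alpha j c_j$ term implicit in \eqref{3-5-1}; once these are handled, the remainder is routine bookkeeping.
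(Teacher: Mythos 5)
Your proposal is correct and follows essentially the same route as the paper's proof: differentiate the orthogonality relation, use the induced evolution $D(d\mu)=(\alpha_1\lambda+\alpha_2\lambda^2)\,d\mu$, expand $\frac{d}{dt}P_n$ in the basis $\{P_k\}$ with $\gamma_{n,n}=n\alpha$ from the monic leading term, and pin down the remaining coefficients via the three-term recurrence. The only (immaterial) difference is that you apply the recurrence to $(\alpha_1\lambda+\alpha_2\lambda^2)P_m$ and use the ratios $h_n/h_{n-1}=a_n^2$, $h_n/h_{n-2}=a_{n-1}^2a_n^2$, whereas the paper applies it to $(\alpha_1\lambda+\alpha_2\lambda^2)P_n$ and matches coefficients directly; both computations yield the same $\gamma_{n,n-1}$ and $\gamma_{n,n-2}$.
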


\begin{proof}
We rewrite the orthogonality condition \eqref{3-1-1} as
\begin{align}
\int P_n(z;t)P_m(z;t)w(z;t)dz=0,\quad\quad m=0,1,\ldots,n-1.\label{3-5-3}
\end{align}
By taking derivation of both sides of \eqref{3-5-3} with respect to $t$ for $m=0,1,\ldots,n-1$, we are led to 
\begin{align}
0=&\int \left(\frac{d}{dt}P_n(z;t)P_m(z;t)+P_n(z;t)\frac{d}{dt}P_m(z;t)\right)w(z;t)dz\nonumber\\
&+\int P_n(z;t)P_m(z;t)\left(\frac{d}{dt}w(z;t)+\alpha w(z;t)\right)dz\nonumber\\
=&\int \left(\frac{d}{dt}P_n(z;t)+(\alpha_1 z+\alpha_2 z^2)P_n(z;t)\right)P_m(z;t)w(z;t)dz.\nonumber
\end{align}
Then, by using the recurrence relation~\eqref{3-1-2}, we obtain
\begin{align}
0=&\int \left(\frac{d}{dt}P_n(z;t)+\alpha_1a^2_nP_{n-1}\right)P_m(z;t)w(z;t)dz\nonumber\\
+&\int \alpha_2a^2_n\left((b_{n-1}+b_n)P_{n-1}(z;t)+a^2_{n-1}P_{n-2}(z;t)\right)P_m(z;t)w(z;t)dz.\nonumber
\end{align}
Since the polynomial $\frac{d}{dt}P_n(z;t)$ is of degree $n$, we can express it as a linear combination of the OPs $P_i$ with $0\leq i\leq n$, that is,
$$\frac{d}{dt}P_n(z;t)=\sum_{i=0}^n\gamma_iP_i(z;t),$$
using which, we get
\begin{align*}
&\frac{d}{dt}P_n(z;t)+\alpha_1a^2_nP_{n-1}(z;t)+\alpha_2a^2_n[(b_{n-1}+b_n)P_{n-1}(z;t)+a^2_{n-1}P_{n-2}(z;t)]\nonumber\\
=&\gamma_nP_n(z;t)+(\gamma_{n-1}+a^2_n(\alpha_1+\alpha_2(b_{n-1}+b_n)))P_{n-1}(z;t)\nonumber\\
&+(\gamma_{n-2}+\alpha_2a^2_{n-1}a^2_n)P_{n-2}(z;t)+\sum_{i=0}^{n-3}\gamma_iP_i(z;t).
\end{align*}
By comparing the highest degree of $z$ on both sides, it is evident from \eqref{lu3} that
$$\gamma_n= n\alpha .$$
According to the orthogonality condition, it is not hard to see that $\gamma_0=\gamma_1= \cdots=\gamma_{n-3}=0$ and 
\begin{align}
&\gamma_{n-2}=-\alpha_2a^2_{n-1}a^2_n,\qquad \gamma_{n-1}=-a^2_n(\alpha_1+\alpha_2(b_{n-1}+b_n)).\nonumber
\end{align}
Therefore, the conclusion \eqref{3-5-2} follows.
%\begin{align}
%\frac{d}{dt}P_n(\lambda;t)=&n\alpha P_n(\lambda;t)-%a^2_n(\alpha_1+\alpha_2(b_{n-1}+b_n))P_{n-1}(\lambda;t)\nonumber\\&-%\alpha_2a^2_{n-1}a^2_nP_{n-2}(\lambda;t).
%\end{align}
\end{proof}

The compatibility condition between the recurrence relation~\eqref{3-1-2} and the time evolution~\eqref{3-5-2} can be employed to produce a nonisospectral Toda lattice. In fact, we have the following theorem.
\begin{theorem}
Under the assumption of \eqref{lu3} and \eqref{3-5-1}, the recurrence coefficients $\{a_n^2\}$ and $\{b_n\}$ for the monic OPs \eqref{3-1-3} satisfy the nonisospectral Toda lattice
\begin{subequations}\label{3-5-9}
\begin{align}
&\frac{da^2_n}{dt}
=2\alpha a^2_n+\alpha_1a^2_n(b_n-b_{n-1})+\alpha_2a^2_n(a^2_{n+1}-a^2_{n-1}+b^2_n-b^2_{n-1}),\label{3-5-10}\\
&\frac{db_n}{dt}
=\alpha b_n+\alpha_1(a^2_{n+1}-a^2_n)+\alpha_2(a^2_{n+1}(b_{n+1}+b_n)-a^2_n(b_{n-1}+b_n)).\label{3-5-11}
\end{align}
\end{subequations}
\end{theorem}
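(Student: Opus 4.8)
The plan is to exploit the compatibility between the three-term recurrence relation \eqref{3-1-2} and the time evolution \eqref{3-5-2} established in Lemma \ref{th4}, together with the nonisospectral condition \eqref{lu3}. Concretely, I would differentiate the recurrence relation $\lambda P_n = P_{n+1} + b_n P_n + a_n^2 P_{n-1}$ with respect to $t$, being careful that $\lambda$ itself now depends on $t$ through $\frac{d}{dt}\lambda = \alpha\lambda$. This produces $\alpha\lambda P_n + \lambda\frac{d}{dt}P_n$ on the left and the $t$-derivatives of the three right-hand terms, including $\frac{d a_n^2}{dt}P_{n-1}$ and $\frac{db_n}{dt}P_n$, on the right.

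The next step is to eliminate all derivatives of polynomials by substituting \eqref{3-5-2} for each $\frac{d}{dt}P_k$, and then to reduce every product $\lambda P_k$ back to the basis $\{P_j\}$ using \eqref{3-1-2} once more. After this substitution every term is a finite linear combination of $P_{n+1},P_n,P_{n-1},P_{n-2},\ldots$. Since $\{P_j\}$ is a basis for polynomials, matching the coefficient of $P_n$ yields the equation \eqref{3-5-11} for $\frac{db_n}{dt}$, while matching the coefficient of $P_{n-1}$ yields \eqref{3-5-10} for $\frac{d a_n^2}{dt}$.

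A cleaner bookkeeping device, which I would actually use to organize the computation, is to recast both relations in operator form. Writing $\psi=(P_0,P_1,\dots)^{\top}$, the recurrence reads $L\psi=\lambda\psi$ with $L$ the tridiagonal operator having entries $L_{n,n+1}=1$, $L_{n,n}=b_n$, $L_{n,n-1}=a_n^2$, while \eqref{3-5-2} reads $\frac{d}{dt}\psi=B\psi$ with $B_{n,n}=n\alpha$, $B_{n,n-1}=-a_n^2(\alpha_1+\alpha_2(b_{n-1}+b_n))$ and $B_{n,n-2}=-\alpha_2 a_{n-1}^2 a_n^2$. Differentiating $L\psi=\lambda\psi$ and using $\frac{d}{dt}\lambda=\alpha\lambda$ together with the fact that $B$ commutes with scalar multiplication by $\lambda$ (its coefficients are independent of $\lambda$) gives the nonisospectral Lax equation $\frac{d}{dt}L=\alpha L+[B,L]$. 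Equations \eqref{3-5-10} and \eqref{3-5-11} are then precisely the $(n,n-1)$ and $(n,n)$ entries of this identity.

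The computation is essentially bookkeeping, so the main obstacle is keeping the index shifts and signs straight when expanding $[B,L]=BL-LB$; the telescoping combinations that generate $a_{n+1}^2-a_{n-1}^2$ and $b_n^2-b_{n-1}^2$ in \eqref{3-5-10} are where errors most easily creep in. I would use the superdiagonal entry as a built-in consistency check: the $(n,n+1)$ entry of $[B,L]$ equals $-\alpha$, which must be cancelled exactly by the nonisospectral contribution $\alpha L_{n,n+1}=\alpha$ so that $\frac{d}{dt}L_{n,n+1}=0$, consistent with $L$ having constant superdiagonal $1$. This cancellation also confirms that the term $\alpha L$ is exactly what the nonisospectral deformation contributes beyond the classical Toda Lax equation $\frac{d}{dt}L=[B,L]$.
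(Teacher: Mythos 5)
Your proposal is correct and follows essentially the same route as the paper: differentiate the recurrence \eqref{3-1-2} using $\frac{d\lambda}{dt}=\alpha\lambda$, substitute the evolution \eqref{3-5-2} from Lemma \ref{th4}, reduce $\lambda P_k$ back via \eqref{3-1-2}, and match coefficients of $P_n$ and $P_{n-1}$ (the paper also records the automatically satisfied coefficients of $P_{n+1}$, $P_{n-2}$, $P_{n-3}$, which play the role of your consistency checks). Your operator reformulation $\frac{dL}{dt}=\alpha L+[B,L]$ is just the same coefficient comparison written entrywise for the semi-infinite matrices, and indeed its $(n,n)$ and $(n,n-1)$ entries reproduce \eqref{3-5-11} and \eqref{3-5-10} while the $(n,n+1)$, $(n,n-2)$, $(n,n-3)$ entries vanish identically, consistent with the matrix Lax pair \eqref{lax_ntoda} the paper writes down immediately afterwards.
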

\begin{proof}
Differentiating ~\eqref{3-1-2}, we first have
\begin{align}
&\alpha z P_n(z;t)+z \frac{d}{dt}P_n(z;t)\nonumber\\
=&\frac{d}{dt}P_{n+1}(z;t)+\frac{db_n}{dt}P_{n-1}(z;t)+b_n\frac{d}{dt}P_{n-1}(z;t)
+\frac{da^2_n}{dt}P_{n-1}(z;t)+a^2_n\frac{d}{dt}P_{n-1}(z;t).\label{3-5-6}
\end{align}
By substituting the corresponding expressions ~\eqref{3-5-2} for $\frac{d}{dt}P_{n+1}(z;t),\ \frac{d}{dt}P_n(z;t),\ \frac{d}{dt}P_{n-1}(z;t)$ into both sides of~\eqref{3-5-6}, we then obtain 
\begin{align}
&\text{LHS of} ~\eqref{3-5-6}\nonumber\\
%=&\alpha\lambda P_n(\lambda;t)+\lambda \frac{d}{dt}P_n(\lambda;t)\nonumber\\
=&(n+1)\alpha (P_{n+1}(z;t)+b_nP_n(z;t)+a^2_nP_{n-1}(z;t))\nonumber\\
&-a^2_n(\alpha_1+\alpha_2(b_{n-1}+b_n))(P_n(z;t)+b_{n-1}P_{n-1}(z;t)+a^2_{n-1}P_{n-2}(z;t))\nonumber\\
&-\alpha_2a^2_{n-1}a^2_n(P_{n-1}(z;t)+b_{n-2}P_{n-2}(z;t)+a^2_{n-2}P_{n-3}(z;t)),\label{3-5-7}
\end{align}
\begin{align}
&\text{RHS of} ~\eqref{3-5-6}\nonumber\\
%=&\frac{d}{dt}P_{n+1}(\lambda;t)+\frac{db_n}{dt}P_{n-1}(\lambda;t)+b_n\frac{d}{dt}P_{n-1}(\lambda;t)+\frac{da^2_n}{dt}P_{n-1}(\lambda;t)+a^2_n\frac{d}{dt}P_{n-1}(\lambda;t)\nonumber\\
=&(n+1)\alpha P_{n+1}(z;t)-a^2_{n+1}(\alpha_1+\alpha_2(b_{n+1}+b_n))P_n(z;t)-\alpha_2a^2_{n+1}a^2_nP_{n-1}(z;t)\nonumber\\
&+\frac{db_n}{dt}P_n(z;t)+b_n(n\alpha P_n(z;t)-a^2_n(\alpha_1+\alpha_2(b_{n-1}+b_n))P_{n-1}(z;t)\nonumber\\
&-\alpha_2a^2_{n-1}a^2_nP_{n-2}(z;t))+\frac{da^2_n}{dt}P_{n-1}(z;t)+a^2_n((n-1)\alpha P_{n-}(z;t)\nonumber\\&-a^2_{n-1}(\alpha_1+\alpha_2(b_{n-2}+b_{n-1}))P_{n-2}(z;t)-\alpha_2a^2_{n-2}a^2_{n-1}P_{n-3}(z;t)),\label{3-5-8}
\end{align}
where \eqref{3-1-2} is used to replace $z P_n(z;t),\ z P_{n-1}(z;t),\ z P_{n-2}(z;t)$ in~\eqref{3-5-7}.

Finally, we obtain the desired equation \eqref{3-5-9} by comparing the coefficients of $P_{n+1}(z;t),$ $ P_n(z;t),$ $ P_{n-1}(z;t),\ P_{n-2}(z;t)$ and $P_{n-3}(z;t)$ in~\eqref{3-5-7} and ~\eqref{3-5-8}. 
\end{proof}

\begin{remark}
It is worth noting that, when $\alpha\neq0$,~\eqref{3-5-9} represents a generalized nonisospectral Toda lattice that encompasses both the first and second flows of the isospectral Toda hierarchy. 
When $\alpha=\alpha_2=0$, \eqref{3-5-9} reduces to the first isospectral  Toda flow. Similarly, it yields the second isospectral Toda flow when $\alpha=\alpha_1=0$.
\end{remark}

\subsection{Asymmetric d-$\text{P}_\text{I}$ related to nonisospectral Toda}
The three-term recurrence relation~\eqref{3-1-2} and the time evolution~\eqref{3-5-2} constitute the Lax pair of the nonisospectral Toda lattice \eqref{3-5-9}, which we can write in matrix form as
\begin{align}\label{lax_ntoda}
\psi_{n+1}=A_n\psi_n,\quad \frac{d\psi_n}{dt}=B_n\psi_n,
\end{align}
where $\psi_n=(P_{n-1}(z;t),P_n(z;t))^\top$ and
\begin{align*}
A_n=\left(
\begin{array}{cc}
 0 & 1 \\
 -a_n^2 & z-b_n  \\
\end{array}
\right),
\end{align*}
\begin{align*}
B_n=\left(
\begin{array}{cc}
 \alpha _2 a_{n-1}^2-\left(z -b_{n-1}\right) \left(\alpha _2 z +\alpha _1+\alpha _2 b_{n-1}\right)+\alpha  (n-1) & \alpha _2 z +\alpha _1+\alpha _2 b_{n-1} \\
 -a_n^2 \left(\alpha _2 z +\alpha _1+\alpha _2 b_n\right) & \alpha _2 a_n^2+\alpha  n \\
\end{array}
\right).
\end{align*}
We claim that the compatibility condition of the linear system
\begin{align}
\psi_{n+1}=F_n\psi_n,\quad \frac{\partial\psi_n}{\partial z}=G_n\psi_n,\label{3-6-1}
\end{align}
with
$$F_n=A_n,\quad G_n=B_n/\left(\frac{dz}{dt}\right)=\frac{1}{\alpha z }B_n$$
yields a d-P-type equation.

\begin{theorem}
Under the assumption of  \eqref{lu3} and \eqref{3-5-1} as well as the stationary reduction, the recurrence coefficients $\{a_n^2\}_{n\in \mathbb{N}}$ and $\{b_n\}_{n\in \mathbb{N}}$  for the monic OPs \eqref{3-1-3} satisfy the asymmetric d-P$_{\text{I}}$ 
\begin{subequations}\label{3-6-8}
\begin{align}
&\alpha_2u_n(u_n+v_{n+1}+v_n)+\alpha_1u_n+\alpha n+e_1=0,\\
&\alpha_2v_n(u_n+u_{n-1}+v_n)+\alpha_1v_n+\alpha (n-1)+f_1=0,
\end{align}
\end{subequations}
with the Lax pair \eqref{3-6-1}, where
\begin{subequations}
\begin{align*}
&u_n=\frac{\alpha_2a^2_{n+1}+\alpha n+e_1}{-\alpha_2b_n-\alpha_1},\qquad v_n=\frac{\alpha_2a^2_n+\alpha (n-1)+f_1}{-\alpha_2b_n-\alpha_1},
%&c_1=e_1+f_1,\quad d_1=-e_1f_1.
\end{align*}
\end{subequations}
with some arbitrary constants $e_1$ and $f_1$.
\end{theorem}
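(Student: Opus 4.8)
The plan is to read the stationary reduction as the condition $\partial\psi_n/\partial t=0$ and then to extract the difference equation from the compatibility of the degenerate linear system \eqref{3-6-1}. First I would note that, imposing that $\psi_n$ depends on $t$ only through $\lambda(t)$ and using $\frac{d}{dt}\lambda=\alpha\lambda$ from \eqref{lu3}, the total derivative becomes $\frac{d}{dt}=\alpha\lambda\,\partial_\lambda$, so the evolution equation $\frac{d}{dt}\psi_n=V_n\psi_n$ from \eqref{lax_ntoda} turns into $\partial_\lambda\psi_n=\frac{1}{\alpha\lambda}V_n\psi_n=Q_n\psi_n$. Together with $\psi_{n+1}=U_n\psi_n=P_n\psi_n$ this is exactly \eqref{3-6-1} with $P_n=U_n$ and $Q_n=V_n/(\alpha\lambda)$.

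Next I would write out the compatibility condition of \eqref{3-6-1}. Differentiating $\psi_{n+1}=U_n\psi_n$ in $\lambda$ and eliminating $\partial_\lambda\psi_n$ gives $\partial_\lambda U_n=Q_{n+1}U_n-U_nQ_n$, that is, $\alpha\lambda\,\partial_\lambda U_n=V_{n+1}U_n-U_nV_n$. The commutator on the right is precisely the matrix already evaluated in the proof of the nonisospectral Toda theorem, whose nontrivial $(2,1)$ and $(2,2)$ entries encode the time derivatives $\frac{da_n^2}{dt}$ and $\frac{db_n}{dt}$ appearing in \eqref{3-5-10}--\eqref{3-5-11}. Since here $\partial_\lambda U_n=\diag(0,1)$, matching entries shows that the stationary compatibility is equivalent to $\frac{da_n^2}{dt}=0$ and $\frac{db_n}{dt}=0$, i.e. to the two scalar difference equations obtained by setting the right-hand sides of \eqref{3-5-9} to zero. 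This is the stationary nonisospectral Toda system.

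Then I would integrate this reduced system once in $n$ in order to introduce $u_n,v_n$. Dividing the vanishing of \eqref{3-5-10} by $a_n^2$ shows that its left-hand side is a total difference in $n$, so the quantity $\alpha_2(a_{n+1}^2+a_n^2)+\alpha_2 b_n^2+\alpha_1 b_n+2\alpha n$ is $n$-independent; this is the first integral of the $a_n^2$-equation. The definitions of $u_n$ and $v_n$ are essentially the two halves $\alpha_2 a_{n+1}^2+\alpha n+e_1$ and $\alpha_2 a_n^2+\alpha(n-1)+f_1$ of this conserved quantity, normalised by the common denominator $-(\alpha_2 b_n+\alpha_1)$, with the splitting constants $e_1,f_1$ as free parameters. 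Using the definitions one checks that each equation of \eqref{3-6-8} is equivalent, respectively, to $u_n(u_n+v_{n+1}+v_n-b_n)=a_{n+1}^2$ and $v_n(u_n+u_{n-1}+v_n-b_n)=a_n^2$; I would then substitute $\alpha_2 a_{n+1}^2=-u_n(\alpha_2 b_n+\alpha_1)-\alpha n-e_1$ and its analogue for $\alpha_2 a_n^2$ into the vanishing of \eqref{3-5-11}, and use the first integral to rewrite the combination $u_n+v_n-b_n$, collapsing the system into \eqref{3-6-8}. The Lax pair assertion then needs no extra work: \eqref{3-6-1} is, by construction, the linear system whose compatibility is \eqref{3-6-8}.

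The main obstacle is this last algebraic reduction. While the passage from \eqref{3-6-1} to the stationary Toda pair is immediate, recasting that pair as \eqref{3-6-8} forces one to handle the three neighbouring denominators $\alpha_2 b_{n-1}+\alpha_1$, $\alpha_2 b_n+\alpha_1$, $\alpha_2 b_{n+1}+\alpha_1$ carried by $u_{n-1},v_n,u_n,v_{n+1}$, to clear them in the correct order, and to verify that the quadratic term $\alpha_2 u_n(u_n+v_{n+1}+v_n)$ exactly reproduces the bilinear Toda cross terms such as $\alpha_2 a_{n+1}^2(b_{n+1}+b_n)$. Keeping the index shifts straight, and confirming that $e_1,f_1$ enter only as free parameters so that no hidden constraint is imposed on the stationary coefficients, is where the care is required.
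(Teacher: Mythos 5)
Your setup is sound and coincides with the paper's: the compatibility condition of \eqref{3-6-1} is exactly the stationary nonisospectral Toda system \eqref{3-6-2}; the first stationary equation telescopes (after dividing by $a_n^2$) to the first integral $\alpha_2(a_{n+1}^2+a_n^2+b_n^2)+\alpha_1 b_n+(2n-1)\alpha+c_1=0$; and, given the splitting $c_1=e_1+f_1$, this first integral together with the definitions of $u_n,v_n$ yields $b_n=u_n+v_n$, so that the two equations of \eqref{3-6-8} are indeed equivalent to the product relations $a_{n+1}^2=u_nv_{n+1}$ and $a_n^2=u_{n-1}v_n$. All of that is correct and is what the paper's B\"{a}cklund transformation encodes.

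The gap is in your final step. The product relation $a_n^2=u_{n-1}v_n$ reads, after clearing denominators,
\begin{align*}
a_n^2(\alpha_2 b_{n-1}+\alpha_1)(\alpha_2 b_n+\alpha_1)
=\bigl(\alpha_2 a_n^2+(n-1)\alpha+e_1\bigr)\bigl(\alpha_2 a_n^2+(n-1)\alpha+f_1\bigr),
\end{align*}
which is precisely the paper's second relation in \eqref{3-6-7} with $d_1=-e_1f_1$. This is a \emph{second} first-integral of the stationary system, not a consequence of pointwise substitution: plugging your expressions into the vanishing of \eqref{3-5-11} and invoking the first integral only shows that the quantity $Y_n:=a_n^2(\alpha_2 b_{n-1}+\alpha_1)(\alpha_2 b_n+\alpha_1)-(\alpha_2a_n^2+(n-1)\alpha)^2-c_1(\alpha_2a_n^2+(n-1)\alpha)$ satisfies $Y_{n+1}=Y_n$, not that it vanishes. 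A second summation is unavoidable; it produces a second integration constant $d_1$ (this is what the paper's multiplication of \eqref{3-6-5} by the operator matrix, recasting the whole system as $(E-1)(\cdots)=0$ in \eqref{3-6-6}, accomplishes in one stroke), and one must then take $e_1,f_1$ to be the two roots of $\xi^2-c_1\xi-d_1=0$. In particular your closing remark --- that one should confirm ``$e_1,f_1$ enter only as free parameters so that no hidden constraint is imposed'' --- has the situation backwards: both the sum $e_1+f_1=c_1$ and the product $e_1f_1=-d_1$ are fixed by the solution, so the pair is determined up to swapping, and for a generic splitting of $c_1$ the quantity $u_{n-1}v_n-a_n^2$ is a nonzero constant and \eqref{3-6-8} fails.
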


\begin{proof}
The compatibility condition in \eqref{3-6-1} gives a difference system in matrix form
\begin{align*}
F_{n,z}+F_nG_n-G_{n+1}F_n=0,
\end{align*}
 the explicit form of which reads
% the d-P-type equation 
\begin{subequations}\label{3-6-2}
\begin{align}
&2\alpha +\alpha_1(b_n-b_{n-1})+\alpha_2(a^2_{n+1}-a^2_{n-1}+b^2_n-b^2_{n-1})=0,\label{3-6-3}\\
&\alpha b_n+\alpha_1(a^2_{n+1}-a^2_n)+\alpha_2(a^2_{n+1}(b_{n+1}+b_n)-a^2_n(b_{n-1}+b_n))=0.\label{3-6-4}
\end{align}
\end{subequations}
Now we are ready to simplify \eqref{3-6-2} to obtain the asymmetric d-P$_{\text{I}}$ (see e.g. \cite{gordoa2007new}).
It is noted that \eqref{3-6-2} can be equivalently written as
\begin{align}
\left(
\begin{array}{cc}
 Ea^2_n-a^2_nE^{-1} & b_n(E-1) \\
(1-E^{-1})b_n & (E-E^{-1})  \\
\end{array}
\right)
\left(
\begin{array}{cc}
 \alpha_2b_n+\alpha_1 \\
 \alpha_2a^2_n+(n-1)\alpha  \\
\end{array}
\right)=
\left(
\begin{array}{cc}
 0 \\
 0 \\
\end{array}
\right),\label{3-6-5}
\end{align}
where $E$ is the shift operator on $n$, $E^kf_n=f_{n+k}$. Multiplying~\eqref{3-6-5} from the left by 
\begin{align}
\left(
\begin{array}{cc}
0 & E \\
 \alpha_2b_n+\alpha_1 & (\alpha_2a_{n+1}^2+n\alpha)E \\
\end{array}
\right), \nonumber
\end{align}
we can get
\begin{align}
(E-1)\left(
\begin{array}{cc}
 b_nR_{1,n}+(E+1)S_{1,n} \\
a^2_nR_{1,n-1}R_{1,n}+b_nR_{1,n}S_{1,n}+S_{1,n+1}S_{1,n}  \\
\end{array}
\right)=
\left(
\begin{array}{cc}
 0 \\
 0 \\
\end{array}
\right),\label{3-6-6}
\end{align}
where
\begin{align}
&R_{1,n}=\alpha_2b_n+\alpha_1,\nonumber\\
&S_{1,n}=\alpha_2a^2_n+(n-1).\nonumber
\end{align}
Expanding the expression \eqref{3-6-6} explicitly, we have 
\begin{subequations}\label{3-6-7}
\begin{align}
&\alpha_2(a^2_{n+1}+a^2_n+b_n^2)+\alpha_1b_n+(2n-1)\alpha+c_1=0,\\
&a^2_n(\alpha_2b_n+\alpha_1)(\alpha_2b_{n-1}+\alpha_1)-(\alpha_2a^2_n+(n-1)\alpha)^2-c_1(\alpha_2a^2_n+(n-1)\alpha)+d_1=0,
\end{align}
\end{subequations}
where $c_1, d_1$ are two arbitrary constants.
Finally, we can derive the asymmetric d-P$_{\text{I}}$ \eqref{3-6-8}
from \eqref{3-6-7} via the B\"{a}cklund transformation 
%\begin{subequations}
\begin{align*}
&b_n=u_n+v_n,&&a^2_n=u_{n-1}v_n,\\
&u_n=\frac{\alpha_2a^2_{n+1}+\alpha n+e_1}{-\alpha_2b_n-\alpha_1},&&v_n=\frac{\alpha_2a^2_n+\alpha (n-1)+f_1}{-\alpha_2b_n-\alpha_1},\\
&c_1=e_1+f_1,\quad d_1=-e_1f_1.
\end{align*}
\end{proof}

\begin{remark}
It is noted that the asymmetric d-P$_{\text{I}}$ can be obtained from the d-P$_{\text{I}}$
\begin{align*}
&\alpha_2w_n(w_{n+1}+w_n+w_{n-1})+\alpha_1w_n+\frac{1}{2}\alpha n-\frac{1}{4}(3\alpha-2e_1-2f_1)\nonumber\\&-(-1)^n\frac{1}{4}(\alpha-2e_1-2f_1)=0,
\end{align*}
by considering the cases of odd and even values of $n$ as a coupled system and then setting $w_{2n}=v_n, w_{2n+1}=u_n$ (see. e.g.  \cite{grammaticos1998continuous}). 
\end{remark}
\begin{remark}
Obviously, \eqref{3-6-2} is the stationary reduction (i.e. setting $\frac{da^2_n}{dt}=\frac{db_n}{dt}=0$) of the nonisospectral Toda lattice~\eqref{3-5-9}. This means that 
the asymmetric d-P$_{\text{I}}$ \eqref{3-6-8} together with its Lax pair \eqref{3-6-1} can be deduced from stationary reduction of the Lax pair \eqref{lax_ntoda} of the nonisospectral Toda lattice \eqref{3-5-9}.
\end{remark}

\subsection{Realization of stationary reduction}
In this subsection, we would like to demonstrate the
feasibility of the stationary reduction from the perspective of solution.
To this end, we will introduce a concrete weight function to define the corresponding moments.
%for nonisospectral equation and the corresponding Lax pair. Now we will construct an explicit weight function and rigorously demonstrate the feasibility of the stationary reduction from the perspective of solution.

%Next, we immediately give a simple time evolution of moments as the following theorem.

\begin{lemma}\label{th5}
Under the assumption of  \eqref{lu3} together with $\frac{\alpha_2}{\alpha}<0$, define the moments as
\begin{align}
c_j(t)=\int^{+\infty}_0z^j(0)e^{j\alpha t}e^{\frac{\alpha_1}{\alpha}z(0)e^{\alpha t}+\frac{\alpha_2}{2\alpha}z^2(0)e^{2\alpha t}}dz(0).\label{3-7-2}
\end{align}
Then the moments simultaneously satisfy the time evolution \eqref{3-5-1} and
\begin{align}
\frac{d}{dt}c_j(t)=-\alpha c_j(t).\label{3-7-3}
\end{align}
\end{lemma}

\begin{proof}
First, let's seek for an appropriate weight function so that the moments satisfy the evolution relation \eqref{3-5-1}. It follows from \eqref{lu3} that
\begin{align*}
z(t)=z(0)e^{\alpha t},
\end{align*}
where $z(0)$ is the spectral parameter at initial time. In the case of the integral interval $[0,+\infty)$, the moments can be written as
\begin{align*}
c_j(t)=&\int^{+\infty}_0z^j(t)d\mu(z;t)=\int^{+\infty}_0z^j(0)e^{j\alpha t}f(z(0);t)dz(0),
\end{align*}
where $e^{j\alpha t}f(z(0);t)$ is the deformed weight function that needs to be determined. Differentiating the above expressions for the moments  produces
\begin{align*}
\frac{d}{dt}c_j(t)=\alpha jc_j(t)+\int^{+\infty}_0z^j(0)e^{j\alpha t}\frac{df(z(0);t)}{dt}dz(0),
\end{align*}
from which we can obtain
\begin{align*}
\frac{d}{dt}f(z(0);t)=(\alpha_1 z(0)e^{\alpha t}+\alpha_2 z^2(0)e^{2\alpha t})f(z(0);t),
\end{align*}
with the help of the evolution relation ~\eqref{3-5-1}.
This implies that it is reasonable to take
$$f(z(0);t)=e^{\frac{\alpha_1}{\alpha}z(0)e^{\alpha t}+\frac{\alpha_2}{2\alpha}z^2(0)e^{2\alpha t}},$$
which yields the expressions of the moments \eqref{3-7-2}.

Next we aim to verify the evolution relation \eqref{3-7-3} satisfied by \eqref{3-7-2}.
Integration by parts gives
\begin{align*}
c_j(t)=&-\int^{+\infty}_0z(0)jz^{j-1}(0)e^{j\alpha t+\frac{\alpha_1}{\alpha}z(0)e^{\alpha t}+\frac{\alpha_2}{2\alpha}z^2(0)e^{2\alpha t}}dz(0)\notag\\
&-\int^{+\infty}_0\frac{\alpha_1}{\alpha}z^{j+1}(0)e^{(j+1)\alpha t}e^{\frac{\alpha_1}{\alpha}z(0)e^{\alpha t} +\frac{\alpha_2}{2\alpha}z^2(0)e^{2\alpha t}}dz(0)\notag\\
&-\int^{+\infty}_0\frac{\alpha_2}{\alpha}z^{j+2}(0)e^{(j+2)\alpha t}e^{\frac{\alpha_1}{\alpha}z(0)e^{\alpha t} +\frac{\alpha_2}{2\alpha}z^2(0)e^{2\alpha t}}dz(0)\notag\\
=&-jc_j-\frac{\alpha_1}{\alpha}c_{j+1}-\frac{\alpha_2}{\alpha}c_{j+2},
\end{align*}
where we used the fact at the boundary 
\begin{align*}
\lim_{z(0)\to 0}z^{j+1}(0)f(z(0);t)=\lim_{z(0)\to +\infty}z^{j+1}(0)f(z(0);t)=0.
\end{align*}
Therefore we have
\begin{align*}
c_{j+2}=-(j+1)\frac{\alpha}{\alpha_2}c_j-\frac{\alpha_1}{\alpha_2}c_{j+1},
\end{align*}
inserting which into Eq.~\eqref{3-5-1}, we obtain Eq.~\eqref{3-7-3}. The proof is completed
\end{proof}
Finally, we show that under time evolution~\eqref{3-7-3}, the nonisospectral  Toda lattice and the corresponding Lax pair can indeed be stationary. In fact, the following result immediately follows from the above lemma.

\begin{theorem}\label{th:stationary_toda}
  Under the definition of the moments \eqref{3-7-2}, we have
  \begin{align*}
    \frac{da^2_n(t)}{dt}=\frac{db_n}{dt}=0, \qquad \text{and}\qquad \frac{d\gamma_{n,j}(t)}{dt}=0,
\end{align*}
%and
%$$\frac{d\gamma_{n,j}(t)}{dt}=0,$$
where $\{a_n^2\}$ and $\{b_n\}$ are the recurrence coefficients for the monic OPs \eqref{3-1-3} and $\{\gamma_{n,j}\}$ are the coefficients in the expansion $P_n(z;t)=\sum_{j=0}^n\gamma_{n,j}(t)z(t)^j$.
\end{theorem}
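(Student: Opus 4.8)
The plan is to exploit the fact, established in Lemma~\ref{th5}, that under the moment definition~\eqref{3-7-2} every moment obeys the \emph{same} linear evolution~\eqref{3-7-3}, namely $\frac{d}{dt}c_j(t)=-\alpha c_j(t)$. Integrating this gives $c_j(t)=e^{-\alpha t}c_j(0)$ simultaneously for all indices $j$, so the entire moment sequence is rescaled by the common factor $e^{-\alpha t}$ as time evolves. The key observation is then that each of $a_n^2$, $b_n$, and $\gamma_{n,j}$ is \emph{homogeneous of degree zero} in the moments, hence invariant under a uniform rescaling of the whole sequence; therefore they do not depend on $t$ and their $t$-derivatives vanish.

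To make this precise I would first record the scaling behaviour of the relevant determinants. Since $\Delta_n$ in~\eqref{3-1-5} is an $n\times n$ determinant whose entries are moments, multiplying every moment by $e^{-\alpha t}$ multiplies $\Delta_n$ by $(e^{-\alpha t})^n=e^{-n\alpha t}$; thus $\Delta_n(t)=e^{-n\alpha t}\Delta_n(0)$. The same reasoning applies to $\Delta_n^{\ast}$, which is again an $n\times n$ determinant, giving $\Delta_n^{\ast}(t)=e^{-n\alpha t}\Delta_n^{\ast}(0)$.

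Next I would substitute these into the expressions~\eqref{3-1-4}. For $a_n^2=\Delta_{n+1}\Delta_{n-1}/\Delta_n^2$ the exponential prefactors combine as $e^{-(n+1)\alpha t}e^{-(n-1)\alpha t}/e^{-2n\alpha t}=1$, so $a_n^2(t)=a_n^2(0)$; for $b_n=\Delta_{n+1}^{\ast}/\Delta_{n+1}-\Delta_n^{\ast}/\Delta_n$ each ratio pairs objects of the same size, so the prefactors cancel ratio by ratio and $b_n(t)=b_n(0)$. Differentiating in $t$ then yields $\frac{da_n^2}{dt}=\frac{db_n}{dt}=0$. For the polynomial coefficients, I would expand the determinant~\eqref{3-1-3} along its last row $(1,\lambda,\dots,\lambda^n)$: the coefficient of $\lambda^j$ is $\gamma_{n,j}=(-1)^{n+j}M_{n,j}/\Delta_n$, where $M_{n,j}$ is the $n\times n$ minor of moments obtained by deleting that last row and the $(j+1)$-th column. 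Both $M_{n,j}$ and $\Delta_n$ scale as $e^{-n\alpha t}$, so the ratio is time-independent and $\frac{d\gamma_{n,j}}{dt}=0$.

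Since the argument is a direct scaling computation once Lemma~\ref{th5} is available, there is no serious analytic obstacle; the only point requiring care is the bookkeeping of the homogeneity degrees — in particular recognizing that the $\gamma_{n,j}$ arise as ratios of an $n\times n$ minor to $\Delta_n$ (so numerator and denominator carry the identical power $e^{-n\alpha t}$) and that $\Delta_n^{\ast}$ has the same size as $\Delta_n$. I would verify the borderline cases, such as $\gamma_{n,n}=1$ (whose minor is exactly $\Delta_n$, forcing time-independence automatically), to confirm the degree count is correct.
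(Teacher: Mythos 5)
Your proof is correct and takes essentially the same approach as the paper: both arguments rest on Lemma~\ref{th5} (the uniform evolution $\frac{d}{dt}c_j=-\alpha c_j$ of \eqref{3-7-3}), deduce that $\Delta_n$, $\Delta_n^{\ast}$ and the $n\times n$ minors $T_{n,j}$ all carry the common factor $e^{-n\alpha t}$ (the paper phrases this differentially as $\frac{d}{dt}\Delta_n=-n\alpha\Delta_n$, $\frac{d}{dt}\Delta_n^{\ast}=-n\alpha\Delta_n^{\ast}$, $\frac{d}{dt}T_{n,j}=-n\alpha T_{n,j}$), and then conclude from the degree-zero homogeneity of $a_n^2$, $b_n$ and $\gamma_{n,j}$ in the moments. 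The only cosmetic difference is that you integrate the moment ODE to an explicit scaling $c_j(t)=e^{-\alpha t}c_j(0)$ before invoking homogeneity, whereas the paper works directly with the derivative relations and the quotient rule.
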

\begin{proof}
The main argument involves the utilization of \eqref{3-7-3}, which holds based on the definition \eqref{3-7-2} of the moments. It obviously follows from~\eqref{3-7-3} that  
$$\frac{d}{dt}\Delta_n=-n\alpha\Delta_n,\ \frac{d}{dt}\Delta_n^{*}=-n\alpha\Delta_n^{*}.$$
% where $\Delta_n^{*}$ is the determinant obtained from the matrix of $\Delta_n$ by replacing the last column $(c_{n-1},c_n,\ldots,c_{2n-2})^T$ by $(c_{n},c_{n+1},\ldots,c_{2n-1})^T$. 
Then we easily get
\begin{align*}
\frac{da^2_n}{dt}=\frac{d}{dt}\left(\frac{\Delta_{n+1}\Delta_{n-1}}{\Delta_n^2}\right)
=0,\qquad \frac{db_n}{dt}=\frac{d}{dt}\left(\frac{\Delta^{*}_{n+1}}{\Delta_{n+1}}-\frac{\Delta^{*}_n}{\Delta_n}\right)
=0.
\end{align*}

Furthermore, recall that the coefficients of the expansion for $P_n(z)$ are written as
$$\gamma_{n,j}=\frac{T_{n,j}}{\Delta_n},\quad  j=0,\ldots,n,$$
where $T_{n,j}$ denotes the $(n+1,j+1)$ cofactor of the numerator of $P_n(z)$, i.e.
\begin{align*}
T_{n,j}=(-1)^{n+j}\det\begin{pmatrix}
c_{p+q}
\end{pmatrix}_{q=0,1,\ldots,n,\,q\neq j}^{p=0,1,\ldots,n-1}.
%\left|
%\begin{array}{ccccccc}
% c_0 & c_1 & \cdots & c_{j-1}  & c_{j+1}& \cdots & c_n \\
% c_1 & c_2 & \cdots  & c_j & c_{j+2} & \cdots & c_{n+1}\\
% \vdots  & \vdots & \ddots& \vdots  &\vdots & \ddots& \vdots\\ 
% c_{n-1} & c_n & \cdots & c_{n+j-2} & c_{n+j} & \cdots & c_{2n-1} \\
%\end{array}
%\right|.
\end{align*}
By employing the evolution relation~\eqref{3-7-3}, one can obtain 
$$\frac{d}{dt}T_{n,j}=-n\alpha T_{n,j},$$
from which it is not hard to  conclude that $\gamma_{n,j}$ is also independent of $t$.
%$$\frac{d\gamma_{n,j}}{dt}=0.$$
\end{proof}

In summary, under the definition~\eqref{3-7-2} of moments, the nonisospectral Toda lattice \eqref{3-5-9} does indeed exhibit the stationary reduction. Besides, we also have demonstrated that the coefficients $\gamma_{n,j}$ in the expansion of the monic OPs don't depend on time $t$, which means that $\frac{\partial\psi_n}{\partial t}=0$ so that the Lax pair \eqref{lax_ntoda} of the nonisospectral Toda lattice does indeed exhibit the stationary reduction under the definition of the moments \eqref{3-7-2}, from which the asymmetric d-P$_{\text{I}}$ \eqref{3-6-8} together with its Lax pair \eqref{3-6-1} arises.

%the nonisospectral Toda lattice~\eqref{3-5-9} can be stationary reduced to the asymmetric d-P$_{\text{I}}$ \eqref{3-6-8}. 

%Since the determinant representations of orthogonal polynomial coefficients $\gamma_{n,j},\ j=0,\ldots,n-1$ are the same as which in the Section \ref{lvasr}, we can get
%$$\frac{d}{dt}\gamma_{n,j}=0,$$ 
%by time evolution~\eqref{3-7-3}. This shows that the OPs do not explicitly contain the time variable $t$, so the Lax pair of nonisospectral Toda lattice can be stationary reduced.

\subsection{Nonisospectral Lotka-Volterra lattice}\label{nlv}
Now we consider the monic OPs $\{P_n(z;t)\}_{n\in \mathbb{N}}$ \eqref{3-1-3} with symmetric measure.
Let the weight function $w(z;t)$ be an even function with respect to $z$ and the integral interval be symmetric. In fact, in this case, it is obvious that the moments satisfy
$$c_{2n}\neq0, \qquad c_{2n+1}=0,$$
from which it is not hard to see $b_n=0$ and the recurrence relation \eqref{3-1-2} becomes
\begin{align}
z P_n(z;t)=P_{n+1}(z;t)+a_n^2P_{n-1}(z;t).\label{3-2-1}
\end{align}

Suppose that the spectral parameter $z$ satisfies the time evolution \eqref{lu3}. We now seek for a weight function such that the moments satisfy the time evolution
\begin{align}
\frac{d}{dt}c_j(t)=\alpha jc_j(t)+\alpha_1 c_{j+2}(t)+\alpha_2 c_{j+4}(t).\label{3-2-2}
\end{align}
Since 
\begin{align*}
\frac{d}{dt}c_j(t)=&\int  z^j\left(\alpha j w(z;t)+\frac{d}{dt}w(z;t)+\alpha w(z;t)\right)dz \nonumber\\
=&\alpha jc_j(t)+\int  z^j\left(\frac{d}{dt}w(z;t)+\alpha w(z;t)\right)dz,
\end{align*}
we have
\begin{align*}
\int  z^j\left(\frac{d}{dt}w(z;t)+\alpha w(z;t)\right)dz=\alpha_1 c_{j+2}(t)+\alpha_2 c_{j+4}(t),
\end{align*}
from which it suffices to set
\begin{align}
\frac{d}{dt}w(z;t)+\alpha w(z;t)=(\alpha_1 z^2+\alpha_2 z^4)w(z;t).\label{3-2-3}
\end{align}
This implies that such a weight function exists and one can choose the expressions \eqref{3-4-2} for the moments.
\begin{lemma}\label{th1}
Under the assumption of  \eqref{lu3} and \eqref{3-2-2}, the monic OPs \eqref{3-1-3} satisfy the time evolution
\begin{align}
\frac{d}{dt}P_n(z;t)=&n\alpha P_n(z;t)-a^2_{n-1}a^2_n(\alpha_1+\alpha_2(a^2_{n-2}+a^2_{n-1}+a^2_n\nonumber\\&+a^2_{n+1}))P_{n-2}(z;t)
-\alpha_2a^2_{n-3}a^2_{n-2}a^2_{n-1}a^2_nP_{n-4}(z;t).\label{3-2-4}
\end{align}
\end{lemma}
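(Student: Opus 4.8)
The plan is to mirror the argument of Lemma~\ref{th4} for the nonisospectral Toda case, with the three-term recurrence \eqref{3-1-2} replaced by its symmetric counterpart \eqref{3-2-1} (so $b_n=0$) and the multiplier $\alpha_1\lambda+\alpha_2\lambda^2$ replaced by $\alpha_1\lambda^2+\alpha_2\lambda^4$. First I would start from the orthogonality relations $\int P_n(\lambda;t)P_m(\lambda;t)w(\lambda;t)\,d\lambda=0$ for $m=0,1,\ldots,n-1$ and differentiate them with respect to $t$. Since $\lambda$ depends on $t$ through \eqref{lu3}, the differentiation produces the extra factor $\alpha w$; invoking \eqref{3-2-3} in the form $\frac{d}{dt}w+\alpha w=(\alpha_1\lambda^2+\alpha_2\lambda^4)w$, and observing that $\frac{d}{dt}P_m$ has degree at most $m<n$ so that $\int P_n\frac{d}{dt}P_m\,w\,d\lambda=0$ by orthogonality, reduces the identity to
\begin{align*}
0=\int\Bigl(\tfrac{d}{dt}P_n(\lambda;t)+(\alpha_1\lambda^2+\alpha_2\lambda^4)P_n(\lambda;t)\Bigr)P_m(\lambda;t)w(\lambda;t)\,d\lambda,\quad m=0,\ldots,n-1.
\end{align*}
In other words, $\frac{d}{dt}P_n+(\alpha_1\lambda^2+\alpha_2\lambda^4)P_n$ is orthogonal to $P_0,\ldots,P_{n-1}$.

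The core computation is to expand $(\alpha_1\lambda^2+\alpha_2\lambda^4)P_n$ in the basis $\{P_k\}$ using \eqref{3-2-1}. Applying \eqref{3-2-1} twice gives
\begin{align*}
\lambda^2P_n=P_{n+2}+(a_{n+1}^2+a_n^2)P_n+a_n^2a_{n-1}^2P_{n-2},
\end{align*}
and applying \eqref{3-2-1} twice more to each of the three terms yields an expansion of $\lambda^4P_n$ in $P_{n+4},P_{n+2},P_n,P_{n-2},P_{n-4}$. Summing the contributions to the lower-index terms, I expect the $P_{n-2}$-coefficient of $\lambda^4P_n$ to be $a_n^2a_{n-1}^2(a_{n+1}^2+a_n^2+a_{n-1}^2+a_{n-2}^2)$ and the $P_{n-4}$-coefficient to be $a_n^2a_{n-1}^2a_{n-2}^2a_{n-3}^2$. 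Writing $\frac{d}{dt}P_n=\sum_{i=0}^n\gamma_iP_i$ and imposing the orthogonality just derived forces $\gamma_i=0$ for every $i\le n-1$ except where the expansion of $(\alpha_1\lambda^2+\alpha_2\lambda^4)P_n$ contributes, namely $i=n-2$ and $i=n-4$; matching those two coefficients gives
\begin{align*}
\gamma_{n-2}=-a_{n-1}^2a_n^2\bigl(\alpha_1+\alpha_2(a_{n-2}^2+a_{n-1}^2+a_n^2+a_{n+1}^2)\bigr),\qquad \gamma_{n-4}=-\alpha_2a_{n-3}^2a_{n-2}^2a_{n-1}^2a_n^2.
\end{align*}

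Finally I would fix $\gamma_n$ by comparing the leading power of $\lambda$: writing $P_n=\sum_j\gamma_{n,j}(t)\lambda^j$ with $\gamma_{n,n}=1$ and using $\dot\lambda=\alpha\lambda$ from \eqref{lu3}, the coefficient of $\lambda^n$ in $\frac{d}{dt}P_n$ equals $n\alpha$, whence $\gamma_n=n\alpha$. Substituting $\gamma_n,\gamma_{n-2},\gamma_{n-4}$ back into $\frac{d}{dt}P_n=\sum_i\gamma_iP_i$ produces exactly \eqref{3-2-4}. The main obstacle is the bookkeeping in the iterated application of the recurrence to compute $\lambda^4P_n$: one must carefully collect the $P_{n-2}$ and $P_{n-4}$ contributions coming from the three intermediate terms of $\lambda^2P_n$, and verify the two sums above; everything else is a direct transcription of the Toda argument of Lemma~\ref{th4}.
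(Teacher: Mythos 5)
Your proposal is correct and follows essentially the same route as the paper's own proof: differentiate the orthogonality relation, use \eqref{3-2-3} to replace $\frac{d}{dt}w+\alpha w$ by $(\alpha_1\lambda^2+\alpha_2\lambda^4)w$, expand $\lambda^2 P_n$ and $\lambda^4 P_n$ via the symmetric recurrence \eqref{3-2-1}, and determine the coefficients $\gamma_i$ in $\frac{d}{dt}P_n=\sum_i\gamma_iP_i$ by orthogonality together with the leading-coefficient comparison $\gamma_n=n\alpha$. Your computed $P_{n-2}$- and $P_{n-4}$-coefficients of $\lambda^4P_n$ agree with the paper's, so the bookkeeping you flag as the main obstacle works out exactly as you anticipate.
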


\begin{proof}
For $m=0,1,\ldots,n-1$, by taking the derivative to the orthogonality condition \eqref{3-1-1} with respect to time $t$, we have
\begin{align*}
0=&\int \left(\frac{d}{dt}P_n(z;t)P_m(z;t)+P_n(z;t)\frac{d}{dt}P_m(z;t)\right)w(z;t)dz \nonumber\\
&+\int P_n(z;t)P_m(z;t)(\frac{d}{dt}w(z;t)+\alpha w(z;t))dz,
\end{align*}
which immediately results in
\begin{align*}
\int \left(\frac{d}{dt}P_n(z;t)+(\alpha_1 z^2+\alpha_2 z^4)P_n(z;t)\right)P_m(z;t)w(z;t)dz=0
\end{align*}
by virtue of \eqref{3-2-3}. With the help of the recurrence relation \eqref{3-2-1}, we obtain
\begin{align}
0=&\int \left(\frac{d}{dt}P_n(z;t)+\alpha_1a^2_{n-1}a^2_n P_{n-2}\right)P_m(z;t)w(z;t)dz\nonumber\\
&+\int \alpha_2a^2_{n-1}a^2_n(a^2_{n-2}+a^2_{n-1}+a^2_n+a^2_{n+1})P_{n-2}(z;t)P_m(z;t)w(z;t)dz\nonumber\\
&+\int \alpha_2a^2_{n-1}a^2_na^2_{n-3}a^2_{n-2}P_{n-4}(z;t)P_m(z;t)w(z;t)dz\nonumber.
\end{align}
Upon setting $\frac{d}{dt}P_n(z;t)=\sum_{i=0}^n\gamma_iP_i(z;t)$, we have
\begin{align}
&\frac{d}{dt}P_n(z;t)+\alpha_1a^2_{n-1}a^2_nP_{n-2}(z;t)\nonumber\\
&+\alpha_2a^2_{n-1}a^2_n((a^2_{n-2}+a^2_{n-1}+a^2_n+a^2_{n+1})P_{n-2}(z;t)+a^2_{n-3}a^2_{n-2}P_{n-4}(z;t))\nonumber\\
=&\gamma_nP_n(z;t)+\gamma_{n-1}P_{n-1}(z;t)\nonumber\\
&+(\gamma_{n-2}+\alpha_1a^2_{n-1}a^2_n+\alpha_2a^2_{n-1}a^2_n(a^2_{n-2}+a^2_{n-1}+a^2_n
+a^2_{n+1}))P_{n-2}(z;t)\nonumber\\
&+\gamma_{n-3}P_{n-3}(z;t)+(\gamma_{n-4}+\alpha_2a^2_{n-3}a^2_{n-2}a^2_{n-1}a^2_n)P_{n-4}(z;t)+\sum_{i=0}^{n-5}\gamma_iP_i(z;t).\label{3-2-5}
\end{align}
Comparing the highest power of $z$ on both sides of the equation ~\eqref{3-2-5} immediately gives $\gamma_n= n\alpha $. By making use of the orthogonality condition, it is not hard to get
$$\gamma_0=\gamma_1= \cdots=\gamma_{n-5}=\gamma_{n-3}=\gamma_{n-1}=0$$ and
\begin{align}
&\gamma_{n-4}=-\alpha_2a^2_{n-3}a^2_{n-2}a^2_{n-1}a^2_n,\qquad \gamma_{n-2}=-a^2_{n-1}a^2_n(\alpha_1+\alpha_2(a^2_{n-2}+a^2_{n-1}+a^2_n+a^2_{n+1})).\nonumber
\end{align}
Therefore,  the time evolution \eqref{3-2-4} of $\{P_n(z;t)\}_{n\in \mathbb{N}}$ is verified.
\end{proof}
The compatibility condition of the three-term recurrence relation ~\eqref{3-2-1} and the time evolution \eqref{3-2-4} enables us to derive a differential equation. In fact, we have the following theorem.
\begin{theorem} Under the assumption of  \eqref{lu3} and \eqref{3-2-2}, the recurrence coefficients $\{a_n^2\}_{n\in\mathbb{N}}$ for the monic OPs \eqref{3-1-3} satisfy the nonisospectral Lotka-Volterra lattice
\begin{align}
\frac{da^2_n}{dt}=&2\alpha a^2_n+\alpha_1a^2_n(a^2_{n+1}-a^2_{n-1})\nonumber\\&+\alpha_2 a^2_n(a^2_{n+1}(a^2_n+a^2_{n+1}+a^2_{n+2})-a^2_{n-1}(a^2_{n-2}+a^2_{n-1}+a^2_n)).\label{3-2-9}
\end{align}
\end{theorem}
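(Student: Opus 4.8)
The plan is to reproduce, \emph{mutatis mutandis}, the compatibility argument already used for the nonisospectral Toda lattice: differentiate the three-term recurrence \eqref{3-2-1} in $t$, substitute the polynomial time evolution \eqref{3-2-4} from Lemma \ref{th1}, and then read off the evolution of $a_n^2$ by matching coefficients in the fixed basis $\{P_k(\lambda;t)\}$. To keep the bookkeeping manageable I would abbreviate the two nontrivial coefficients in \eqref{3-2-4} as $A_n=a_{n-1}^2a_n^2(\alpha_1+\alpha_2(a_{n-2}^2+a_{n-1}^2+a_n^2+a_{n+1}^2))$ and $B_n=\alpha_2a_{n-3}^2a_{n-2}^2a_{n-1}^2a_n^2$.

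First I would differentiate \eqref{3-2-1} with respect to $t$. Because $\lambda$ itself evolves by \eqref{lu3}, the left side becomes $\alpha\lambda P_n+\lambda\frac{d}{dt}P_n$, while the right side becomes $\frac{d}{dt}P_{n+1}+\frac{da_n^2}{dt}P_{n-1}+a_n^2\frac{d}{dt}P_{n-1}$. Next I would insert \eqref{3-2-4} for each of $\frac{d}{dt}P_{n+1},\ \frac{d}{dt}P_n,\ \frac{d}{dt}P_{n-1}$. Since \eqref{3-2-4} couples $P_n$ to $P_{n-2}$ and $P_{n-4}$ and the recurrence \eqref{3-2-1} is a two-step relation, the left side then contains the products $\lambda P_n,\ \lambda P_{n-2},\ \lambda P_{n-4}$; I would collapse these once more via \eqref{3-2-1} so that both sides are expressed purely in the basis $P_{n+1},P_{n-1},P_{n-3},P_{n-5}$.

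The final step is to compare coefficients level by level. The $P_{n+1}$ coefficient matches automatically as $(n+1)\alpha$; the $P_{n-5}$ coefficient reduces to the trivial identity $a_{n-4}^2B_n=a_n^2B_{n-1}$; and the $P_{n-3}$ coefficient reduces to $a_{n-2}^2A_n+B_n=a_n^2A_{n-1}+B_{n+1}$, which I expect to verify directly from the explicit forms of $A_n$ and $B_n$ (the two sides differ only through the telescoping $B_{n+1}-B_n=\alpha_2 a_{n-2}^2a_{n-1}^2a_n^2(a_{n+1}^2-a_{n-3}^2)$). The one nontrivial relation comes from the $P_{n-1}$ coefficient, which yields $\frac{da_n^2}{dt}=2\alpha a_n^2+A_{n+1}-A_n$.

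The main obstacle, and the step that generates the Lotka--Volterra structure, is the simplification of $A_{n+1}-A_n$. Upon expanding, the cross terms $a_{n-1}^2a_{n+1}^2$ cancel, leaving exactly $\alpha_1 a_n^2(a_{n+1}^2-a_{n-1}^2)+\alpha_2 a_n^2\bigl(a_{n+1}^2(a_n^2+a_{n+1}^2+a_{n+2}^2)-a_{n-1}^2(a_{n-2}^2+a_{n-1}^2+a_n^2)\bigr)$, which is precisely the right-hand side of \eqref{3-2-9}. I therefore anticipate that the difficulty is entirely one of careful index bookkeeping across the shifted quantities $A_{n\pm1},B_{n\pm1}$ rather than any conceptual subtlety, and that the consistency of the $P_{n-3}$ and $P_{n-5}$ levels serves as a useful internal check on the computation.
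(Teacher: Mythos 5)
Your proposal is correct and follows essentially the same route as the paper: differentiate the recurrence \eqref{3-2-1} in $t$, substitute the evolution \eqref{3-2-4} from Lemma \ref{th1}, collapse the resulting $\lambda P_k$ terms back through \eqref{3-2-1}, and read off \eqref{3-2-9} from the coefficient of $P_{n-1}$ (your abbreviations $A_n$, $B_n$ and the computed difference $A_{n+1}-A_n$ match the paper's expressions exactly). The only difference is cosmetic: you also verify explicitly that the $P_{n-3}$ and $P_{n-5}$ levels close via the identities $a_{n-2}^2A_n+B_n=a_n^2A_{n-1}+B_{n+1}$ and $a_{n-4}^2B_n=a_n^2B_{n-1}$, a consistency check the paper leaves implicit.
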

\begin{proof}
%We consider the compatibility condition of ~\eqref{3-2-1} and \eqref{3-2-4} .
Taking the derivative of ~\eqref{3-2-1} with respect to time $t$, we have
\begin{align}
&\alpha z P_n(z;t)+z \frac{d}{dt}P_n(z;t)=\frac{d}{dt}P_{n+1}(z;t)+\frac{da^2_n}{dt}P_{n-1}(z;t)+a^2_n\frac{d}{dt}P_{n-1}(z;t).\label{3-2-6}
\end{align}
By inserting the three-term recurrence relation ~\eqref{3-2-1} and the time evolution ~\eqref{3-2-4} into both sides of the equation~\eqref{3-2-6}, we obtain
\begin{align}
&\text{LHS of} ~\eqref{3-2-6}\nonumber\\
%=&\alpha\lambda P_n(\lambda;t)+\lambda \frac{d}{dt}P_n(\lambda;t)\nonumber\\
=&\alpha (P_{n+1}(z;t)+a^2_nP_{n-1}(z;t))+n\alpha(P_{n+1}(z;t)+a^2_nP_{n-1}(z;t))\nonumber\\
&-a^2_{n-1}a^2_n(\alpha_1+\alpha_2(a^2_{n-2}+a^2_{n-1}+a^2_n+a^2_{n+1}))(P_{n-1}(z;t)+a^2_{n-2}P_{n-3}(z;t))\nonumber\\
&-\alpha_2a^2_{n-3}a^2_{n-2}+a^2_{n-1}a^2_n(P_{n-3}(z;t)+a^2_{n-4}P_{n-5}(z;t)),\label{3-2-7}
\end{align}
\begin{align}
&\text{RHS of} ~\eqref{3-2-6}\nonumber\\
%=&\frac{d}{dt}P_{n+1}(\lambda;t)+\frac{da^2_n}{dt}P_{n-1}(\lambda;t)+a^2_n\frac{d}{dt}P_{n-1}(\lambda;t)\nonumber\\
=&(n+1)\alpha P_{n+1}(z;t)-a^2_{n+1}a^2_n(\alpha_1+\alpha_2(a^2_{n-1}+a^2_n+a^2_{n+1}+a^2_{n+2}))P_{n-1}(z;t)\nonumber\\
&-\alpha_2a^2_{n-2}a^2_{n-1}a^2_na^2_{n+1}P_{n-3}(z;t)+\frac{da^2_n}{dt}P_{n-1}(z;t)+(n-1)\alpha a^2_n P_{n-1}(z;t)\nonumber\\
&-\alpha_2a^2_{n-4}a^2_{n-3}a^2_{n-2}a^2_{n-1}a^2_nP_{n-5}(z;t)\nonumber\\
&-a^2_{n-2}a^2_{n-1}a^2_n(\alpha_1+\alpha_2(a^2_{n-3}+a^2_{n-2}+a^2_{n-1}+a^2_n))P_{n-3}(z;t).\label{3-2-8}
\end{align}
Comparing the right hand sides of~\eqref{3-2-7} and ~\eqref{3-2-8}, one will see that the differential equation \eqref{3-2-9} follows from the coefficient of $P_{n-1}(z;t)$. 
%with respect to the OPs $P_{n+1}(\lambda;t), P_{n-1}(\lambda;t), P_{n-3}(\lambda;t)$ and $P_{n-5}(\lambda;t)$, it is easy to find out the coefficients of ~\eqref{3-2-7} and ~\eqref{3-2-8} about $P_{n+1}(\lambda;t), P_{n-3}(\lambda;t)$ and
%$ P_{n-5}(\lambda)$ are exactly equal, while the coefficient about $P_{n-1}(\lambda;t)$ satisfies the differential equation:
\end{proof}
%which is the nonisospectral Lotka-Volterra. 
\begin{remark}
 It is noted that \eqref{3-2-9} can be identified as a nonisosepctral generalization that incorporates both the first and second flows of the isospectral Lotka-Volterra hierarchy when $\alpha\neq0$. In fact, when $\alpha=\alpha_2=0$, it reduces to the first isospectral Lotka-Volterra  flow, and  when $\alpha=\alpha_1=0$, it gives the second isospectral Lotka-Volterra flow.
\end{remark}

\subsection{d-P$_{\text{I}}$ related to nonisospectral Lotka-Volterra}
In the previous subsection, we utilized the orthogonality condition \eqref{3-1-1} to derive the time evolution \eqref{3-2-4} of the OPs, which together with the recurrence relation ~\eqref{3-2-1} constitute the Lax pair of the nonisospectral Lotka-Volterra \eqref{3-2-9}.
As we will see, we can obtain a d-P-type equation from a stationary Lax pair by employing the method introduced in Section \ref{BM}.

The recurrence relation~\eqref{3-2-1} and time evolution~\eqref{3-2-4} can be rewritten in matrix form as
\begin{align}\label{lax_nlv}
\psi_{n+1}=A_n\psi_n,\quad \frac{d\psi_n}{dt}=B_n\psi_n,
\end{align}
where $\psi_n=(P_{n-1}(z;t),P_{n}(z;t))^\top$, 
\begin{align*}
A_n=\left(
\begin{array}{cc}
 0 & 1 \\
 -a_n^2 & z  \\
\end{array}
\right),
\end{align*}
\begin{align*}
B_n=\left(
\begin{array}{cc}
 -\alpha _2 z ^4-z ^2 \left(\alpha _2 a_n^2+\alpha _1\right)+\omega_n+\alpha  (n-1) & \alpha _2 z ^3+z  \left(\alpha _2 \left(a_{n-1}^2+a_n^2\right)+\alpha _1\right) \\
 -\alpha _2 z ^3 a_n^2-z  a_n^2 \left(\alpha _2 \left(a_{n+1}^2+a_n^2\right)+\alpha _1\right) & \alpha _2 z ^2 a_n^2+\omega_{n+1}+\alpha  n \\
\end{array}
\right),
\end{align*}
with $\omega_n=a_{n-1}^2 \left(\alpha _2 \left(a_{n-2}^2+a_{n-1}^2+a_n^2\right)+\alpha _1\right)$. Now we claim that the compatibility condition of the linear system
\begin{align}
\psi_{n+1}=F_n\psi_n,\quad \frac{\partial\psi_n}{\partial z}=G_n\psi_n,\label{oddp}
\end{align}
with
$$F_n=A_n,\quad G_n=B_n/\left(\frac{dz}{dt}\right)=\frac{1}{\alpha z }B_n$$
will yield a d-P-type equation. In fact, we have the following theorem.

\begin{theorem}
Under the assumption of  \eqref{lu3} and \eqref{3-2-2} as well as the stationary reduction, the recurrence coefficients $\{a_n^2\}_{n\in \mathbb{N}}$ for the monic OPs \eqref{3-1-3}  satisfy the following d-P$_{\text{I}}$
\begin{align}
\alpha_1 a^2_{n+1}(a^2_n+a^2_{n+1}+a^2_{n+2}+\frac{\alpha_2}{\alpha_1})+(n+\frac{1}{2})\alpha-(-1)^{n}\beta_0-\tilde{c}=0,\label{3-3-2}
\end{align}
with the Lax pair \eqref{oddp}, where
\begin{align}
&\beta_0=\frac{\alpha}{2}+\alpha_1a^2_1+\alpha_2a^2_1(a^2_1+a^2_2), \qquad c=-\alpha_1a^2_1-\alpha_2 a^2_1(a^2_1+a^2_2),\nonumber
\end{align}
and $\tilde{c}=-c$ when $n$ is odd; $\tilde{c}=0$ when $n$ is even. 
\end{theorem}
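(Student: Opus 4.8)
The plan is to mirror the treatment of the asymmetric d-P$_{\text{I}}$ in the preceding subsection, simply replacing the Toda data by the Lotka--Volterra data. First I would write out the compatibility condition $P_{n,\lambda}+P_nQ_n-Q_{n+1}P_n=0$ of the stationary linear system \eqref{oddp} with $P_n=U_n$ and $Q_n=\frac{1}{\alpha\lambda}V_n$. Since \eqref{3-2-1} and \eqref{3-2-4} are already known to form the Lax pair of the nonisospectral Lotka--Volterra lattice \eqref{3-2-9}, this matrix identity must collapse, after matching the coefficients of the various powers of $\lambda$ entry by entry, to exactly the stationary reduction of \eqref{3-2-9}. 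Dividing through by the nonvanishing $a_n^2$, this stationary relation reads
\begin{align*}
2\alpha+\alpha_1(a^2_{n+1}-a^2_{n-1})+\alpha_2\bigl(a^2_{n+1}(a^2_n+a^2_{n+1}+a^2_{n+2})-a^2_{n-1}(a^2_{n-2}+a^2_{n-1}+a^2_n)\bigr)=0.
\end{align*}
Verifying that the (degree-four-in-$\lambda$) matrix equation reduces to this single scalar relation together with its shifts is the first, essentially bookkeeping, step.

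The key observation is that this stationary relation is an \emph{exact first difference}. Introducing
\begin{align*}
\Phi_n=\alpha_1 a_n^2+\alpha_2 a_n^2\bigl(a^2_{n-1}+a^2_n+a^2_{n+1}\bigr),
\end{align*}
the stationary equation becomes simply $\Phi_{n+1}-\Phi_{n-1}=-2\alpha$, equivalently $\bigl(\Phi_{n+1}+\alpha(n+1)\bigr)-\bigl(\Phi_{n-1}+\alpha(n-1)\bigr)=0$. Hence $\Phi_n+\alpha n$ is invariant under $n\mapsto n+2$ and therefore depends only on the parity of $n$; I would record this once-integrated form as $\Phi_n+\alpha n=\bar K+(-1)^n\tilde K$ for two constants $\bar K,\tilde K$. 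Substituting back the expression for $\Phi_{n+1}$ reproduces the left-hand side of \eqref{3-3-2}: the bracketed combination $a^2_n+a^2_{n+1}+a^2_{n+2}+\tfrac{\alpha_2}{\alpha_1}$ arises by factoring the appropriate coefficient out of $\Phi_{n+1}$, while the $(-1)^n$ term carries the alternating constant $\tilde K$.

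It then remains to identify $\bar K,\tilde K$ with $A_0$, $c$ and $\tilde c$ from the initial data. Using the boundary convention $a_0^2=0$ (so that $\Phi_0=0$) together with the value $\Phi_1=\alpha_1 a_1^2+\alpha_2 a_1^2(a_1^2+a_2^2)$, the two parity constants are completely pinned down; this is precisely how $A_0=\tfrac{\alpha}{2}+\alpha_1a_1^2+\alpha_2a_1^2(a_1^2+a_2^2)$ and $c=-\alpha_1a_1^2-\alpha_2a_1^2(a_1^2+a_2^2)$ enter, and the parity-dependent shift $\tilde c$ merely packages the even/odd split of the $(-1)^n$ term into the single stated formula. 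For the Lax pair I would invoke the realization-of-stationary-reduction argument of Theorem \ref{th:stationary_toda}: under the concrete symmetric moments (the even analogue of \eqref{3-7-2}, compatible with \eqref{lu3} and \eqref{3-2-2}) one has $\partial_t\psi_n=0$, so the Lax pair \eqref{lax_nlv} degenerates into \eqref{oddp}, whose compatibility is exactly the stationary Lotka--Volterra equation just integrated.

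The main obstacle I anticipate is organizational rather than conceptual: correctly tracking the parity of $n$ through the integration. Because the surviving dynamics couples $a_{n+1}^2$ only to $a_{n-1}^2$, the even- and odd-indexed subsequences are governed by separate copies of the first integral, and the $(-1)^n$/$\tilde c$ bookkeeping must be carried out carefully so that one formula \eqref{3-3-2} captures both parities with the same $A_0$. A secondary consistency check is to confirm that the residual constant $\alpha(n+1)$ is correctly rewritten as $(n+\tfrac12)\alpha$ by absorbing $\tfrac12\alpha$ into $\bar K$, and that the coefficient structure of the bracket in \eqref{3-3-2} is indeed the one produced by factoring the leading coefficient out of $\Phi_{n+1}$ (up to the relabeling of $\alpha_1,\alpha_2$).
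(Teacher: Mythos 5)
Your proposal is correct and follows essentially the same route as the paper: the compatibility condition of \eqref{oddp} yields the stationary Lotka--Volterra relation \eqref{3-3-1}, which is then integrated once (the paper telescopes to $2n\alpha+\Phi_{n+1}+\Phi_n+c=0$, while you equivalently observe that $\Phi_n+\alpha n$ is $2$-periodic and split by parity) with the constants fixed by $a_0^2=0$ and $\Phi_1$. Your side remark that the bracket in \eqref{3-3-2} matches $\Phi_{n+1}$ only after interchanging $\alpha_1$ and $\alpha_2$ is accurate --- the printed coefficient placement in \eqref{3-3-2} is a typo, since the constants $A_0$ and $c$ use the correct combination $\alpha_1 a_1^2+\alpha_2 a_1^2(a_1^2+a_2^2)$.
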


\begin{proof}
By considering the compatibility condition of \eqref{oddp}, i.e.
$$F_{n,z}+F_nG_n-G_{n+1}F_n=0,$$
we can obtain
\begin{align}
2\alpha +\alpha_1(a^2_{n+1}-a^2_{n-1})+\alpha_2 (a^2_{n+1}(a^2_n+a^2_{n+1}+a^2_{n+2})-a^2_{n-1}(a^2_{n-2}+a^2_{n-1}+a^2_n))=0, \label{3-3-1}
\end{align}
which can be further simplified to get the d-P$_{\text{I}}$. In fact, summing the above equation results in
\begin{align*}
2n\alpha +\alpha_1(a^2_{n+1}+a^2_n)+\alpha_2 (a^2_{n+1}(a^2_n+a^2_{n+1}+a^2_{n+2})+a^2_n(a^2_{n-1}+a^2_{n+1}+a^2_n))+c=0,
\end{align*}
from which, \eqref{3-3-2}  follows.
\end{proof}

\begin{remark}
It is worth noting that the equation ~\eqref{3-3-1} is the stationary reduction, i.e. setting $\frac{da^2_n}{dt}=0$, of the nonisospectral Lotka-Volterra equation ~\eqref{3-2-9}. This means that 
the d-P$_{\text{I}}$ \eqref{3-3-2} together with its Lax pair \eqref{oddp} can be deduced from the stationary reduction of the Lax pair \eqref{lax_nlv} for the nonisospectral Lotka-Volterra equation.
\end{remark}

\subsection{Realization of stationary reduction}\label{lvasr}
In the previous section, we derive the d-P$_{\text{I}}$ by formally applying the stationary reduction to the nonisospectral Lotka-Volterra equation. Now we will construct an explicit weight function and rigorously demonstrate the feasibility of the stationary reduction from the perspective of solution.

% without giving the specific weight function and used the connection between the nonisospectral equation and the d-P-type equation, i.e. the Lax pair for Painlev\'{e}-type equations are established by stationary reduction of the Lax pair for nonisospectral equations, to obtain the d-P$_{\text{I}}$. Moreover, we also found that the stationary reduction of the nonisospectral Lotka-Volterra is the d-P-type equation.

\begin{lemma}\label{th2}
Under the assumption of  \eqref{lu3} together with $\frac{\alpha_2}{\alpha}<0$, define the moments as
\begin{align}
c_j(t)=\int^{+\infty}_{-\infty}z^j(0)e^{j\alpha t}e^{\frac{\alpha_1}{2\alpha}z^2(0)e^{2\alpha t}+\frac{\alpha_2}{4\alpha}z^4(0)e^{4\alpha t}}dz(0).\label{3-4-2}
\end{align}
Then the moments simultaneously satisfy the time evolution \eqref{3-2-2} and
\begin{align}
\frac{d}{dt}c_j(t)=-\alpha c_j(t).\label{3-4-1}
\end{align}
\end{lemma}

\begin{proof}
First, let's seek for an appropriate weight function so that the moments satisfy the evolution relation~\eqref{3-2-2}. Since it follows from  \eqref{lu3} that
\begin{align*}
z(t)=z(0)e^{\alpha t},
\end{align*}
where $z(0)$ is the spectral parameter at the initial time, in the case of the integral interval $(-\infty,+\infty)$,
the moments can be written as 
\begin{align*}
c_j(t)=&\int^{+\infty}_{-\infty}z^j(t)d\mu(z;t)=\int^{+\infty}_{-\infty}z^j(0)e^{j\alpha t}f(z(0);t)dz(0),
\end{align*}
where $e^{j\alpha t}f(z(0);t)$ is the deformed weight function that needs to be determined. By taking the derivative of the moments, we have
\begin{align*}
\frac{d}{dt}c_j(t)=\alpha jc_j(t)+\int^{+\infty}_{-\infty}z^j(0)e^{j\alpha t}\frac{df(z(0);t)}{dt}dz(0).
\end{align*}
It can be seen from the evolution relation~\eqref{3-2-2} that $f(z(0);t)$ needs to satisfy the time evolution
\begin{align*}
\frac{d}{dt}f(z(0);t)=(\alpha_1 z^2(0)e^{2\alpha t}+\alpha_2 z^4(0)e^{4\alpha t})f(z(0);t),
\end{align*}
so it is reasonable to set
$$f(z(0);t)=e^{\frac{\alpha_1}{2\alpha}z^2(0)e^{2\alpha t}+\frac{\alpha_2}{4\alpha}z^4(0)e^{4\alpha t}},$$
from which we have the explicit expression \eqref{3-4-2} for the moments.
%\begin{align}
%c_j(t)=\int^{+\infty}_{-\infty}\lambda^j(0)e^{j\alpha t}e^{\frac{\alpha_1}{2\alpha}\lambda^2(0)e^{2\alpha t}+\frac{\alpha_2}{4\alpha}\lambda^4(0)e^{4\alpha t}}d\lambda(0).\label{3-4-2}
%\end{align}
Now it remains to be confirmed that the evolution relation \eqref{3-4-1} is satisfied by the moments in \eqref{3-4-2}.

Using integration by parts and the fact at the boundary
\begin{align*}
\lim_{z(0)\to -\infty}z^{j+1}(0)f(z(0);t)=\lim_{z\to +\infty}z^{j+1}(0)f(z(0);t)=0,
\end{align*}
we have
\begin{align*}
c_j(t)=&-\int^{+\infty}_{-\infty}z(0)jz^{j-1}(0)e^{j\alpha t+\frac{\alpha_1}{2\alpha}z^2(0)e^{2\alpha t}+\frac{\alpha_2}{4\alpha}z^4(0)e^{4\alpha t}}dz(0)\notag\\
&-\int^{+\infty}_{-\infty}\frac{\alpha_1}{\alpha}z^{j+2}(0)e^{(j+2)\alpha t}e^{\frac{\alpha_1}{2\alpha}z^2(0)e^{2\alpha t} +\frac{\alpha_2}{4\alpha}z^4(0)e^{4\alpha t}}dz(0)\notag\\
&-\int^{+\infty}_{-\infty}\frac{\alpha_2}{\alpha}z^{j+4}(0)e^{(j+4)\alpha t}e^{\frac{\alpha_1}{2\alpha}z^2(0)e^{2\alpha t} +\frac{\alpha_2}{4\alpha}z^4(0)e^{4\alpha t}}dz(0)\notag\\
=&-jc_j-\frac{\alpha_1}{\alpha}c_{j+2}-\frac{\alpha_2}{\alpha}c_{j+4},
\end{align*}
which results in
$$c_{j+4}=-(j+1)\frac{\alpha}{\alpha_2}c_j-\frac{\alpha_1}{\alpha_2}c_{j+2}.$$
It is evident that ~\eqref{3-4-1} immediately follows from substituting the above relation into the evolution relation~\eqref{3-2-2}.
Therefore, we complete the proof.
\end{proof}

The above lemma implies the following result.
\begin{theorem}
Under the definition of the moments \eqref{3-4-2}, we have 
\begin{align*}
\frac{da^2_n(t)}{dt}=0 \qquad \text{and}\qquad \frac{d\gamma_{n,j}(t)}{dt}=0,
\end{align*}
%and 
%$$\frac{d\gamma_{n,j}(t)}{dt}=0,$$
where $\{a_n^2\}$ are the recurrence coefficients for the monic OPs \eqref{3-1-3} and $\{\gamma_{n,j}\}$ are the coefficients of OPs in the expansions $P_n(z;t)=\sum_{j=0}^n\gamma_{n,j}(t)z(t)^j$.
\end{theorem}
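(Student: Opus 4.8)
The plan is to exploit the fact that Lemma \ref{th2} has already reduced everything to the single homogeneous scaling relation \eqref{3-4-1}, namely $\frac{d}{dt}c_j(t)=-\alpha c_j(t)$, which is exactly the relation \eqref{3-7-3} used in the Toda setting. Since both $a_n^2$ and the expansion coefficients $\gamma_{n,j}$ are ratios of determinants built solely from the moments $c_j$, the argument will run in parallel to that of Theorem \ref{th:stationary_toda}: I will show that every such determinant scales homogeneously in $t$ at a rate proportional to its size, and then observe that in the relevant ratios these rates cancel.

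First I would differentiate the Hankel determinant \eqref{3-1-5}. Because $\Delta_n$ is an $n\times n$ determinant whose every entry is a moment $c_{i+j}$ obeying \eqref{3-4-1}, differentiating column-by-column (multilinearity of the determinant) replaces one column by $-\alpha$ times itself, so each of the $n$ columns contributes a copy of $-\alpha\Delta_n$. Hence $\frac{d}{dt}\Delta_n=-n\alpha\Delta_n$, equivalently $\frac{d}{dt}\log\Delta_n=-n\alpha$. The same computation applies verbatim to the cofactor determinants $T_{n,j}$ appearing in $\gamma_{n,j}=T_{n,j}/\Delta_n$, since these are again $n\times n$ determinants of moments, giving $\frac{d}{dt}T_{n,j}=-n\alpha T_{n,j}$.

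With these scalings in hand the conclusions are immediate. From \eqref{3-1-4}, $a_n^2=\Delta_{n+1}\Delta_{n-1}/\Delta_n^2$, so $\frac{d}{dt}\log a_n^2=-(n+1)\alpha-(n-1)\alpha+2n\alpha=0$, whence $\frac{da_n^2}{dt}=0$. Likewise $\frac{d}{dt}\log\gamma_{n,j}=-n\alpha-(-n\alpha)=0$, so $\frac{d\gamma_{n,j}}{dt}=0$. One checks that the symmetric-measure structure causes no trouble: the odd moments vanish identically, for which \eqref{3-4-1} holds trivially, while $b_n\equiv 0$ simply means there is no $b_n$-statement to prove, so only the $a_n^2$ and $\gamma_{n,j}$ claims survive, exactly as in the statement.

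The genuine content here all sits in Lemma \ref{th2}; once \eqref{3-4-1} is granted, this theorem is pure determinant bookkeeping, so I do not anticipate a real obstacle. The one structural point worth emphasizing is the uniformity of the scaling rate: it is crucial that every moment scales at the same rate $-\alpha$, independent of $j$. This is precisely what makes each determinant of size $m$ scale like $e^{-m\alpha t}$ and thus forces the degree-zero ratios $a_n^2$ and $\gamma_{n,j}$ to be constant in $t$. If the moments instead obeyed a relation with $j$-dependent rates, the cancellation in the ratios would fail and the stationary reduction would not be realized.
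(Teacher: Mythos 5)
Your proposal is correct and follows essentially the same route as the paper: both use \eqref{3-4-1} to obtain the homogeneous scalings $\frac{d}{dt}\Delta_n=-n\alpha\Delta_n$ and $\frac{d}{dt}T_{n,j}=-n\alpha T_{n,j}$ (the paper asserts these, you justify them by column-wise differentiation), and both conclude by cancellation of rates in the ratios $a_n^2=\Delta_{n+1}\Delta_{n-1}/\Delta_n^2$ and $\gamma_{n,j}=T_{n,j}/\Delta_n$. One cosmetic caution: in the symmetric-measure case many $\gamma_{n,j}$ vanish identically, so the final step should be phrased via the quotient rule, $\frac{d}{dt}\bigl(T_{n,j}/\Delta_n\bigr)=\bigl(-n\alpha T_{n,j}\Delta_n+n\alpha T_{n,j}\Delta_n\bigr)/\Delta_n^2=0$, rather than through $\frac{d}{dt}\log\gamma_{n,j}$, which is undefined at those zeros.
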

\begin{proof}
The argument is mainly based on the time evolution~\eqref{3-4-1}, which holds for the moments definied in \eqref{3-4-2}.
It follows from the time evolution~\eqref{3-4-1} of the moments that the Hankel determinant $\Delta_n$ satisfies the evolution $\frac{d}{dt}\Delta_n=-n\alpha\Delta_n$. Therefore we have
\begin{align*}
\frac{da^2_n(t)}{dt}=\frac{d}{dt}\left(\frac{\Delta_{n+1}\Delta_{n-1}}{\Delta_n^2}\right)=0.
\end{align*}
Furthermore, using the same argument in Theorem \ref{th:stationary_toda}, one can also conclude that $\gamma_{n,j}$ is also independent of $t$.
%Furthermore, recall that the coefficients of OPs are written as
%$$\gamma_{n,j}=\frac{T_{n,j}}{\Delta_n},\quad  j=0,\ldots,n$$
%with
%\begin{align*}
%T_{n,j}=(-1)^{n+j}\left|
%\begin{array}{ccccccc}
% c_0 & c_1 & \cdots & c_{j-1}  & c_{j+1}& \cdots & c_n \\
% c_1 & c_2 & \cdots  & c_j & c_{j+2} & \cdots & c_{n+1}\\
% \vdots  & \vdots & \ddots& \vdots  &\vdots & \ddots& \vdots\\ 
% c_{n-1} & c_n & \cdots & c_{n+j-2} & c_{n+j} & \cdots & c_{2n-1} \\
%\end{array}
%\right|.
%\end{align*}
%By utilizing the evolution relation~\eqref{3-4-1}, one gets
%$$\frac{d}{dt}T_{n,j}=-n\alpha T_{n,j},$$
%based on which, it is not hard to see
%$$\frac{d}{dt}\gamma_{n,j}=0.$$
\end{proof}

It follows from the above theorem that, for the moments defined in ~\eqref{3-4-2}, the nonisospectral Lotka-Volterra equation \eqref{3-2-9} does indeed exhibit the stationary reduction. Besides, we also have demonstrated that the coefficients $\gamma_{n,j}$ of OPs don't depend on time $t$, implying $\frac{\partial\psi_n}{\partial t}=0$, which means that
 the Lax pair \eqref{lax_nlv} of the nonisospectral Lotka-Volterra equation  does indeed allow the stationary reduction under the definition of the moments \eqref{3-4-2}, from which the d-P$_{\text{I}}$ \eqref{3-3-2} together with its Lax pair  \eqref{oddp} arises.

%In addition, from the time evolution part of the Lax pair of nonisospectral Lotka-Volterra, we know that when $\frac{\partial\psi}{\partial t}=0$, the Lax pair of nonisospectral Lotka-Volterra can be converted to the Lax pair of d-P-type equation. While $\frac{\partial\psi}{\partial t}=0$ means that the expressions of the OPs don't explicitly contain the time $t$, which can be achieved by making the coefficients of the OPs independent of the time $t$. Next, let's prove this.

%As for the monic OPs with the expressions \eqref{3-1-4}, we can rewrite them as $P_n(\lambda)=\sum_{j=0}^n\gamma_{n,j}\lambda^j$, where
%$\frac{da^2_n(t)}{dt}=\frac{d}{dt}\left(\frac{\Delta_{n+1}\Delta_{n-1}}{\Delta_n^2}\right)$
%
%Therefore, we have demonstrated that the coefficients $\gamma_{n,j}$ of OPs don't depend on time $t$, which implies that the Lax pair of the nonisospectral Lotka-Volterra equation can indeed have the stationary reduction under the definition of the moments \eqref{3-4-2}. 
%

\section{Nonisospectral deformation of $(1,m)$-type bi-OPs and d-P}\label{1moP}
In this section, we consider the corresponding problems related to the $(1,m)$-type bi-OPs \cite{chang2015about,maeda2013orthogonal}, which are generalizations of the ordinary OPs discussed in the previous section and associated with the Muttalib-Borodin random ensemble \cite{borodin1998biorthogonal,muttalib1995random}. Firstly, the time evolution of the $(1,m)$-type bi-OPs is deduced by utilizing the biorthogonality condition. Subsequently, the nonisospectral integrable equations are derived based on the compatibility condition between the recurrence relation and the obtained time evolution. Then, by considering stationary reduction of the Lax pair of the nonisospectral integrable equations, we obtain a family of d-P-type equations together with the corresponding Lax pair. Lastly, we construct an explicit weight function that ensures the validity of the aforementioned process.

\subsection{$(1,m)$-type bi-OPs}
For a fixed $m\in\mathbb{N}^+$, define the inner product $\langle\cdot|\cdot\rangle_{m}$ according to
\begin{align*}
	\langle f(z)|w(z)dz|g(z) \rangle_{m}
	=\int f(z)g(z^m)w(z)dz,
\end{align*}
where $w(z)$ is a formal weight function so that all the moments
$$c_i=\int z^iw(z)dz, \ i\in \mathbb{N}$$ exist.
Two families of monic polynomials  $\{P_n(z)\}_{n\in \mathbb{N}}$ and $\{Q_n(z)\}_{n\in \mathbb{N}}$ with an exact degree $n$ for each $P_n(z)$ and $Q_n(z)$  are called $(1,m)$-type bi-OPs with respect to the bilinear $2$-form $\langle\cdot|\cdot\rangle_{m}$ if they satisfy biorthogonality condition
\begin{align}\label{4-1-1}
	\langle P_n(z)|w(z)dz|Q_s(z)\rangle_{m}=h_n \delta_{ns},\quad h_n\neq 0.
\end{align}
Obviously, when $m=1$, the biorthogonality condition reduces to the case for the ordinary OPs in Section \ref{OOP}. 

It is noted that the biorthogonality condition~\eqref{4-1-1} can be equivalently written as 
\begin{align*}
	\langle P_n(z)|w(z)dz|z^s\rangle_{m}=0,\quad s=0,\,1,\,\ldots,\,n-1,
\end{align*}
for the sequence of polynomials $\{P_n(z)\}_{n\in\mathbb{N}}$ and
\begin{align*}
	\langle z^s|w(z)dz|Q_n(z)\rangle_{m}=0,\quad s=0,\,1,\,\ldots,\,n-1
\end{align*}
for the sequence of polynomials $\{Q_n(z)\}_{n\in\mathbb{N}}$.
Based on the biorthogonality, we have the following determinant expressions for the $(1,m)$-type bi-OPs
%\begin{subequations}\label{exp:mbop}
\begin{align}
P_n(z)=\frac{1}{\tau_n}
\det\left(
\begin{array}{cccc}
c_{mi+j}\\
z^j \\
\end{array}
\right)_{\substack{i=0,\ldots,n-1\\ j=0,\ldots,n\,\,\,\,\,\,\,\,}},\label{exp:mbop1}
%\left|
%\begin{array}{ccccc}
% c_0 & c_1 & \cdots  & c_n \\
% c_m & c_{1+m} & \cdots  & c_{n+m} \\
% \vdots  & \vdots  & \ddots & \vdots  \\
% c_{(n-1)m} & c_{1+(n-1)m} & \cdots  & c_{n+(n-1)m} \\
% 1 & \lambda & \cdots  & \lambda^n \\
%\end{array}
%\right|,
\qquad
Q_n(z)=\frac{1}{\tau_n}
\det\left(
\begin{array}{cccc}
c_{mi+j}&z^j \\
\end{array}
\right)_{\substack{i=0,\ldots,n\,\,\,\,\,\,\,\\ j=0,\ldots,n-1}},
%\left|
%\begin{array}{ccccc}
% c_0 & c_1 & \cdots  & c_{n-1}& 1 \\
% c_m & c_{1+m} & \cdots  & c_{m+n-1}& z \\
% \vdots  & \vdots  & \ddots & \vdots & \vdots \\
% c_{nm} & c_{1+nm} & \cdots  & c_{n-1+nm}&\lambda^n
%\end{array}
%\right|,
\end{align}
%\end{subequations}
under the assumption of $\tau_n=\det(c_{mi+j})_{i,j=0}^{n-1}\neq 0$. In addition, $\{h_n\}_{n\in \mathbb{N}}$ admit the determinant representations
$$\quad\quad h_n=\frac{\tau_{n+1}}{\tau_n}.$$
It can also be proved that $P_n(z)$ and $Q_n(z)$ respectively satisfy the following $(m+2)$-term recurrence relations
%\begin{subequations}
\begin{align}		
&z^mP_n(z)=\sum_{j=-1}^mb_{n,j}P_{n+j}(z)
%b_{n,m}P_{n+m}(\lambda)+b_{n,m-1}P_{n+m-1}(\lambda)+b_{n,m-2}P_{n+m-2}(\lambda)+\cdots+b_{n,-1}P_{n-1}(\lambda)
,\label{4-1-2}
&z Q_n(z)=\sum_{j=-m}^1a_{n,j}Q_{n+j}(z),
%=a_{n,n+1}Q_{n+1}(\lambda)+a_{n,n}Q_{n}(\lambda)+a_{n,n-1}Q_{n-1}(\lambda)+\cdots+a_{n,n-m}Q_{n-m}(\lambda)
%,\label{4-1-3}
\end{align}
%\end{subequations}
where $b_{n,m}=a_{n,n+1}=1$ and $b_{n,j}=0,j>m$ or $j<-1$. Since $P_n(z)$ and $Q_n(z)$ have similar structures, we shall maintain our focus on $\{P_n(z)\}_{n\in \mathbb{N}}$.

In order to facilitate our discussion, we  rewrite its recurrence relation ~\eqref{4-1-2} in matrix form as
\begin{equation}
z^mP=JP,\label{4-1-4}
\end{equation}
where
\begin{equation}
		P=\begin{pmatrix}
			P_0\\
			P_1\\
			P_2\\
			\vdots
\end{pmatrix},\quad
J=\begin{pmatrix}
			b_{0,0}&b_{0,1}&b_{0,2}&\cdots&b_{0,m}&0&0&0&\cdots\\
			b_{1,-1}&b_{1,0}&b_{1,1}&\cdots&b_{1,m-1}&b_{1,m}&0&0&\cdots\\0&b_{2,-1}&b_{2,0}&\cdots&b_{2,m-2}&b_{2,m-1}&b_{2,m}&0&\cdots\\
			\vdots&\vdots&\vdots&\ddots&\vdots&\vdots&\vdots&\vdots&\ddots&
\end{pmatrix}.\nonumber
\end{equation}

\subsection{Nonisospectral  Blaszak-Marciniak lattice}\label{nbm}
In this subsection, we consider nonisospectral deformation of the $(1,m)$-type bi-OPs, and use the compatibility condition to derive nonisospectral integrable equations. 

%To this end, we not only introduce a time variable $t$ into the weight function $w(\lambda;t)$, but also establish a relationship between the spectral parameter $\lambda$ and $t$.

Under the condition that the spectral parameter $z$ satisfies the time evolution~\eqref{lu3}, we seek for an appropriate weight function to ensure that the moments satisfy the time evolution
\begin{align}
\frac{d}{dt}c_j(t)=\alpha jc_j(t)+\alpha_1 c_{j+m}(t)+\alpha_2 c_{j+2m}(t).\label{4-2-1}
\end{align}
Based on 
\begin{align*}
\frac{d}{dt}c_j(t)=&\int  z^j(t)\left(\alpha j w(z;t)+\frac{d}{dt}w(z;t)+\alpha w(z;t)\right)dz\\
=&\alpha jc_j(t)+\int  z^j(t)\left(\frac{d}{dt}w(z;t)+\alpha w(z;t)\right)dz,
\end{align*}
we thus obtain
\begin{align*}
\int z^j(t)\left(\frac{d}{dt}w(z;t)+\alpha w(z;t)\right)dz=\alpha_1 c_{j+m}(t)+\alpha_2 c_{j+2m}(t),
\end{align*}
which implies that it is reasonable to set
\begin{align}
\frac{d}{dt}w(z;t)+\alpha w(z;t)=(\alpha_1 z^m+\alpha_2 z^{2m})w(z;t).\label{4-2-2}
\end{align}
Such a weight function is given in \eqref{4-4-3}.
\begin{lemma}\label{th6}
Under the assumption of  \eqref{lu3} and \eqref{4-2-1}, the $(1,m)$-type bi-OPs $\{P_n(z)\}_{n\in \mathbb{N}}$.
in \eqref{exp:mbop1} satisfy the time evolution
\begin{align}
\frac{d}{dt}P_n(z;t)=&n\alpha P_n(z;t)-\left(\alpha_1b_{n,-1}+\alpha_2b_{n,-1}(b_{n-1,0}+b_{n,0})\right)P_{n-1}(z;t)\nonumber\\&-\alpha_2b_{n,-1}b_{n-1,-1}P_{n-2}(z;t).\label{4-2-3}
\end{align}
\end{lemma}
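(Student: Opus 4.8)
The plan is to mirror the proof of Lemma~\ref{th4}, now using the biorthogonality \eqref{4-1-1} in place of ordinary orthogonality and the $(m+2)$-term recurrence \eqref{4-1-2} in place of the three-term recurrence. The natural test functions are the dual polynomials $Q_s$, so I would start from $\langle P_n|w(\lambda)d\lambda|Q_s\rangle_m=0$ for $s=0,1,\dots,n-1$ and differentiate it in $t$. As before, the $t$-derivative splits into a contribution from $\frac{d}{dt}P_n$, one from $\frac{d}{dt}Q_s$, and one from the measure $w\,d\lambda$, the last supplying the extra factor $\alpha w$ through $\frac{d}{dt}d\lambda=\alpha\,d\lambda$ coming from \eqref{lu3}. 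Since $\frac{d}{dt}Q_s$ has degree at most $s<n$, its pairing with $P_n$ vanishes by orthogonality, and invoking the prescribed weight evolution \eqref{4-2-2} the identity collapses to
\begin{align*}
0=\int\left(\frac{d}{dt}P_n(\lambda;t)+(\alpha_1\lambda^m+\alpha_2\lambda^{2m})P_n(\lambda;t)\right)Q_s(\lambda^m)w(\lambda;t)\,d\lambda,\quad s=0,1,\dots,n-1.
\end{align*}

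Next I would expand $\frac{d}{dt}P_n=\sum_{i=0}^n\gamma_iP_i$. Because $P_n$ is monic of degree $n$ in $\lambda=\lambda(t)$ and $\frac{d}{dt}\lambda^n=n\alpha\lambda^n$ by \eqref{lu3}, the leading coefficient of $\frac{d}{dt}P_n$ equals $n\alpha$, whence $\gamma_n=n\alpha$. Feeding this expansion into the displayed identity and using the biorthogonality $\langle P_k|w(\lambda)d\lambda|Q_s\rangle_m=h_k\delta_{ks}$, each equation diagonalises to $\gamma_sh_s+c_sh_s=0$, where $c_s$ denotes the coefficient of $P_s$ in the expansion of $(\alpha_1\lambda^m+\alpha_2\lambda^{2m})P_n$ in the basis $\{P_k\}$. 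Thus $\gamma_s=-c_s$ for all $s<n$, and the whole problem reduces to extracting those expansion coefficients of index below $n$.

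This final extraction is where the $(1,m)$ structure genuinely enters, and I expect it to be the main obstacle. Applying \eqref{4-1-2} once gives $\lambda^mP_n=\sum_{k=-1}^{m}b_{n,k}P_{n+k}$, whose only term of degree below $n$ is $b_{n,-1}P_{n-1}$; applying \eqref{4-1-2} a second time yields the double sum
\begin{align*}
\lambda^{2m}P_n=\sum_{k=-1}^{m}\sum_{j=-1}^{m}b_{n,k}b_{n+k,j}P_{n+k+j}.
\end{align*}
Here the delicate point is to collect precisely the terms of degree below $n$, namely those with $k+j\le-1$: since $k,j\ge-1$ forces $k+j\ge-2$, these are exhausted by the single pair $(k,j)=(-1,-1)$, giving $b_{n,-1}b_{n-1,-1}P_{n-2}$, and the two pairs $(-1,0)$ and $(0,-1)$, giving $b_{n,-1}(b_{n-1,0}+b_{n,0})P_{n-1}$. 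Consequently $c_s=0$ for $s<n-2$, $c_{n-1}=\alpha_1b_{n,-1}+\alpha_2b_{n,-1}(b_{n-1,0}+b_{n,0})$ and $c_{n-2}=\alpha_2b_{n,-1}b_{n-1,-1}$; substituting $\gamma_s=-c_s$ together with $\gamma_n=n\alpha$ back into $\frac{d}{dt}P_n=\sum_i\gamma_iP_i$ then reproduces exactly the claimed evolution \eqref{4-2-3}.
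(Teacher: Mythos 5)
Your proposal is correct and follows essentially the same route as the paper's proof: differentiate the biorthogonality relation $\langle P_n|w\,d\lambda|Q_s\rangle_m=0$ for $s<n$, discard the $\frac{d}{dt}Q_s$ term by orthogonality, invoke the weight evolution \eqref{4-2-2}, expand $\frac{d}{dt}P_n=\sum_i\gamma_iP_i$ with $\gamma_n=n\alpha$, and use the recurrence \eqref{4-1-2} (applied twice for $\lambda^{2m}P_n$) to isolate the sub-leading coefficients $\gamma_{n-1}$ and $\gamma_{n-2}$. Your explicit bookkeeping of the double sum via the constraint $k+j\le-1$ is just a cleaner organization of the same substitution the paper performs under the integral, so there is nothing further to flag.
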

\begin{proof}
Recall that the biorthogonality condition ~\eqref{4-1-1} gives
\begin{align*}
\int P_n(z;t)Q_s(z^m;t)w(z;t)dz=0,\quad\quad 
\end{align*}
for $s=0,1,\ldots,n-1$. By taking derivation of the above relation with respect to $t$, we obtain
\begin{align}
0=&\int \left(\frac{d}{dt}P_n(z;t)Q_s(z^m;t)+P_n(z;t)\frac{d}{dt}Q_s(z^m;t)\right)w(z;t)dz\nonumber\\
&+\int P_n(z;t)Q_s(z^m;t)\left(\frac{d}{dt}w(z;t)+\alpha w(z;t)\right)dz.\label{4-2-4}
\end{align}
Then, inserting Eq.~\eqref{4-2-2} into Eq.~\eqref{4-2-4} gives
\begin{align}
0=&\int \left(\frac{d}{dt}P_n(z;t)+(\alpha_1 z^m+\alpha_2 z^{2m})P_n(z;t)\right)Q_s(z^m;t)w(z;t)dz\nonumber\\
=&\int \left(\frac{d}{dt}P_n(z;t)+\alpha_1b_{n,-1}P_{n-1}(z;t)\right)Q_s(z^m;t)w(z;t)dz\nonumber\\
&+\alpha_2b_{n,-1}\left(\int (b_{n-1,0}+b_{n,0})P_{n-1}(z;t)Q_s(z^m;t)w(z;t)dz\nonumber\right.\\
&\left.+\int b_{n-1,-1}P_{n-2}(z;t)Q_s(z^m;t)w(z;t)dz\right),\nonumber
\end{align}
where we used the recurrence relation ~\eqref{4-1-2} to replace the expressions $z^mP_n(z;t)$ and $z^{2m}P_n(z;t)$.

Upon the reasonable setting
$$\frac{d}{dt}P_n(z;t)=\sum_{i=0}^n\gamma_iP_i(z;t),$$
we have
\begin{align}
&\frac{d}{dt}P_n(z;t)+\alpha_1b_{n,-1}P_{n-1}(z;t)+\alpha_2b_{n,-1}[(b_{n-1,0}+b_{n,0})P_{n-1}(z;t)+b_{n-1,-1}P_{n-2}(z;t)]\nonumber\\
=&\gamma_nP_n(z)+(\gamma_{n-1}+\alpha_1b_{n,-1}+\alpha_2b_{n,-1}(b_{n-1,0}+b_{n,0}))P_{n-1}(z;t)\nonumber\\
&+(\gamma_{n-2}+\alpha_2b_{n,-1}b_{n-1,-1})P_{n-2}(z;t)+\sum_{i=0}^{n-3}\gamma_iP_i(z;t).\label{4-2-5}
\end{align}
Comparing the highest power of $z$ on both sides of the equation ~\eqref{4-2-5} shows that $\gamma_n= n\alpha $.  According to the biorthogonality condition for $ s=0,1,\ldots,n-1 $, we conclude from \eqref{4-2-5} that $\gamma_0=\gamma_1= \cdots=\gamma_{n-3}=0$, and
\begin{align}
&\gamma_{n-2}=-\alpha_2b_{n,-1}b_{n-1,-1},\qquad \gamma_{n-1}=-\alpha_1b_{n,-1}-\alpha_2b_{n,-1}(b_{n-1,0}+b_{n,0}).\nonumber
\end{align}
%\begin{align}
%\frac{d}{dt}P_n(\lambda;t)=&n\alpha P_n(\lambda;t)-\left(\alpha_1b_{n,-1}+\alpha_2b_{n,-1}(b_{n-1,0}+b_{n,0})\right)P_{n-1}(\lambda;t)\nonumber\\&-\alpha_2b_{n,-1}b_{n-1,-1}P_{n-2}(\lambda;t).
%\end{align}
Therefore ~\eqref{4-2-3} follows.
\end{proof}

In the following theorem, we will derive a nonisospectral Blaszak-Marciniak (BM) lattice by means of the compatibility condition of the recurrence relation ~\eqref{4-1-2} and the time evolution ~\eqref{4-2-3}.
\begin{theorem}
    Under the assumption of  \eqref{lu3} and \eqref{4-2-1}, the recurrence coefficients $\{b_{n,l}\}$ in \eqref{4-1-2}  for the $(1,m)$-type bi-OPs satisfy the following differential system
    \begin{align}
\frac{db_{n,l}}{dt}=&(m-l)\alpha b_{n,l}+\alpha_1(b_{n,l+1}b_{n+l+1,-1}-b_{n-1,l+1}b_{n,-1})\nonumber\\
&+\alpha_2(b_{n,l+2}b_{n+l+2,-1}b_{n+l+1,-1}+b_{n,l+1}b_{n+l+1,-1}(b_{n+l+1,0}+b_{n+l,0})\nonumber\\
&\qquad-b_{n-1,l+1}b_{n,-1}(b_{n-1,0}+b_{n,0})-b_{n,-1}b_{n-1,-1}b_{n-2,l+2}),\nonumber\\
&l=-1,0,\ldots,m-1
%\qquad\qquad\qquad\qquad\qquad\qquad\qquad\qquad\qquad\qquad
\label{4-2-9}
\end{align}
 with $b_{n,m}=1$ and $b_{n,j}=0$ for $j>m$ or $j<-1$.
\end{theorem}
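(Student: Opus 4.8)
The plan is to obtain the system \eqref{4-2-9} as the compatibility (Lax) condition of the two linear relations satisfied by the column vector $P=(P_0,P_1,P_2,\ldots)^\top$: the recurrence \eqref{4-1-4}, namely $\lambda^m P=JP$, and the time evolution of Lemma \ref{th6}, which I would record in matrix form as $\frac{d}{dt}P=BP$. Here $J$ is the banded matrix of \eqref{4-1-4} with entries $J_{n,n+l}=b_{n,l}$ for $-1\le l\le m$ (and zero otherwise), while $B$ is the lower-triangular matrix read off from \eqref{4-2-3}, whose only nonzero entries lie on three diagonals, $B_{n,n}=n\alpha$, $B_{n,n-1}=-(\alpha_1 b_{n,-1}+\alpha_2 b_{n,-1}(b_{n-1,0}+b_{n,0}))$ and $B_{n,n-2}=-\alpha_2 b_{n,-1}b_{n-1,-1}$.

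First I would differentiate $\lambda^m P=JP$ with respect to $t$. Since \eqref{lu3} gives $\frac{d}{dt}\lambda^m=m\alpha\lambda^m$, the left-hand side is $m\alpha\lambda^m P+\lambda^m BP$. The essential point is that the entries of $B$ do not depend on $\lambda$, so $\lambda^m(BP)=B(\lambda^m P)=BJP$; combined with $\lambda^m P=JP$ this rewrites the left-hand side as $(m\alpha J+BJ)P$, while the right-hand side is $(\frac{dJ}{dt}+JB)P$. Because $\{P_n\}$ are linearly independent and both $J$ and $B$ are banded (each row carries only finitely many entries), I may equate the coefficient matrices and arrive at the Lax equation
\[
\frac{dJ}{dt}=m\alpha J+[B,J].
\]

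It then remains to read off the $(n,n+l)$ entry. With $B$ supported on the diagonal and the first two subdiagonals and $J$ supported on the band $-1\le l\le m$, only finitely many products contribute to $[B,J]_{n,n+l}=\sum_k(B_{nk}J_{k,n+l}-J_{nk}B_{k,n+l})$: the first sum picks up $k=n,n-1,n-2$ and the second $k=n+l,n+l+1,n+l+2$. The two diagonal contributions combine with the $m\alpha\,b_{n,l}$ term to yield the coefficient $(m-l)\alpha\,b_{n,l}$, and substituting the explicit subdiagonal entries of $B$ and collecting by $\alpha_1$ and $\alpha_2$ reproduces the remaining terms of \eqref{4-2-9} verbatim. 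I expect the main obstacle to be exactly this index bookkeeping---keeping the shifted arguments of $B$ and $J$ aligned and invoking the convention $b_{n,j}=0$ for $j>m$ or $j<-1$ at the boundary values $l=-1$ and $l=m-1$ so that the spurious terms cancel. As a check that parallels the earlier sections, the same identity can be derived by differentiating the scalar recurrence \eqref{4-1-2} directly, substituting \eqref{4-2-3} for each $\frac{d}{dt}P_{n+j}$, re-expanding the $\lambda^m$ factors by \eqref{4-1-2}, and comparing coefficients of the $P_{n+j}$; this is the present computation written out in full, with the same bookkeeping as its crux.
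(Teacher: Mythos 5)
Your proposal is correct: the Lax equation $\frac{dJ}{dt}=m\alpha J+[B,J]$ follows exactly as you argue, and reading off its $(n,n+l)$ entry with $B_{n,n}=n\alpha$, $B_{n,n-1}=-b_{n,-1}\bigl(\alpha_1+\alpha_2(b_{n,0}+b_{n-1,0})\bigr)$, $B_{n,n-2}=-\alpha_2 b_{n,-1}b_{n-1,-1}$ gives verbatim the right-hand side of \eqref{4-2-9}. Outside the band the identity is vacuous under the convention $b_{n,j}=0$ for $j>m$ or $j<-1$: at $l=-2$ the $\alpha_1$-contributions cancel in pairs and the $\alpha_2$-contributions sum to $\alpha_2 b_{n,-1}b_{n-1,-1}\bigl[(b_{n-2,0}-b_{n,0})+(b_{n,0}-b_{n-2,0})\bigr]=0$, and for $l\geq m+1$ every term contains a vanishing $b_{\cdot,j}$, so the matrix equation carries no spurious constraints.

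The difference from the paper's proof is one of packaging rather than of substance, but the comparison is worth recording. The paper differentiates the scalar recurrence \eqref{4-1-2}, substitutes \eqref{4-2-3} for each $\frac{d}{dt}P_{n+j}$, and compares coefficients of $P_{n+m},\ldots,P_{n-3}$, producing the equations for $b_{n,m-1},b_{n,m-2},\ldots,b_{n,-1}$ one at a time and only then observing that they assemble into the unified form \eqref{4-2-9}. You instead encode both linear relations as infinite banded matrices and obtain \eqref{4-2-9} in one stroke as the generic band entry of the commutator identity; your $J$ and $B$ are exactly the matrices the paper itself later denotes $B$ and $M$ in the Lax pair \eqref{4-3-1}. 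Your formulation also makes explicit the two justifications the paper leaves tacit — that $B$ is $\lambda$-independent, so $\lambda^m$ passes through it, and that bandedness together with the linear independence of the $P_n$ (each of exact degree $n$) permits equating coefficient matrices row by row — and it dovetails with the stationary reduction \eqref{4-3-2}, which the paper carries out precisely in this matrix language. What the paper's scalar computation buys in exchange is that no manipulation of infinite matrices needs justifying at all; as your closing sentence correctly notes, it is the same computation written out termwise.
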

\begin{proof}
First of all, take the derivative of ~\eqref{4-1-2} with respect to time $t$ to get
\begin{align}
&\alpha mz^m P_n(z;t)+z^m \frac{d}{dt}P_n(z;t)\nonumber\\=&\frac{d}{dt}P_{n+m}(z;t)+\frac{db_{n,m-1}(t)}{dt}P_{n+m-1}(z;t)+b_{n,m-1}(t)\frac{d}{dt}P_{n+m-1}(z;t)\nonumber\\&+\frac{db_{n,m-2}(t)}{dt}P_{n+m-2}(z;t)+b_{n,m-2}(t)\frac{d}{dt}P_{n+m-2}(z)
+\cdots+\frac{db_{n,-1}(t)}{dt}P_{n-1}(z;t)\nonumber\\&+b_{n,-1}(t)\frac{d}{dt}P_{n-1}(z;t)\quad n\geq0.\label{4-2-6}
\end{align}
Substituting the recurrence relation ~\eqref{4-1-2} and the time evolution ~\eqref{4-2-3} into the both sides of~\eqref{4-2-6}, and then comparing the coefficients of polynomials $P_{n+m}(z;t),$$P_{n+m-1}(z;t)$, $\ldots,$ $P_{n-3}(z;t)$ of the both sides, we finally obtain \eqref{4-2-9} after a lengthy calculation.

\end{proof}
\begin{remark}
 When $\alpha\neq0$, \eqref{4-2-9} is a generalized nonisospectral BM lattice that incorparates both the first and second flows of the isospectral BM hierarchy \cite{blaszak1994r}. 
    % When $\alpha=\alpha_2=0$, the above equation becomes the first flow of the first member of the BM lattice; when $\alpha=\alpha_1=0$, the above equation becomes the second flow of the first member of the BM lattice.
In the case of $\alpha=\alpha_2=0$, it is reduced to the first isospectral BM flow, while it becomes the second isospectral BM flow when $\alpha=\alpha_1=0$.

\end{remark}

\subsection{d-P related to nonisospectral BM}
The recurrence relation ~\eqref{4-1-2} and the time evolution ~\eqref{4-2-3} form the Lax pair of the nonisospectral BM lattice, which can be expressed in matrix form as
\begin{align}
z^mP=BP, \qquad \frac{d}{dt}P=MP , \label{4-3-1}
\end{align}
where $P(z;t)=(P_0(z;t),P_1(z;t),P_2(z;t),\ldots)^\top$,
\begin{align*}
% &A=\left(
% \begin{array}{ccccc}
%  1 &  &  &     \\
%   & 1 &  &     \\
%   &  & 1 &     \\
%   &  &   & \ddots  \ \\
% \end{array}
% \right),\quad
&B=\left(
\begin{array}{ccccccccc}
 b_{0,0}  &\cdots  &b_{0,m-1}  &1 &  \\
b_{1,-1}  & b_{1,0}&\cdots  &b_{1,m-1}  &1  &   \\
   &b_{2,-1} &b_{2,0} &\cdots  &b_{2,m-1}  &1  &   \\
  && \ddots & \ddots  & & \ddots& \ddots \\
  \end{array}
\right),\\
&M=\left(
\begin{array}{ccccc}
 0 &  &  &     \\
 f_1 & \alpha &  &     \\
 -\alpha_2b_{2,-1}b_{1,-1} &f_2  & 2\alpha &     \\
  & -\alpha_2b_{3,-1}b_{2,-1} & f_3  & 3\alpha   \\
   &  &  \ddots & \ddots & \ddots \ \\
\end{array}
\right),\quad 
\end{align*}
with $f_n=-b_{n,-1}(\alpha_1+\alpha_2(b_{n,0}+b_{n-1,0}))$.

By following the process in Section \ref{BM}, we can consider the stationary reduction of the Lax pair
\begin{align}
z^mP=\widehat{B}P, \qquad \frac{\partial}{\partial z}P=\widehat{M}P, \label{4-3-2}
\end{align}
where
$$\widehat{B}=B,\ \widehat{M}=M/\left(\frac{dz}{dt}\right)=\frac{1}{\alpha z }M.$$
Its compatibility condition gives
$$mz^{m-1}I+z^m\widehat{M}=\frac{\partial}{\partial z}B+\widehat{B}\widehat{M},$$
which can be equivalently written as the following explicit form
\begin{align}
&(m-l)\alpha b_{n,l}+\alpha_1(b_{n,l+1}b_{n+l+1,-1}-b_{n-1,l+1}b_{n,-1})\nonumber\\
&+\alpha_2[b_{n,l+2}b_{n+l+2,-1}b_{n+l+1,-1}+b_{n,l+1}b_{n+l+1,-1}(b_{n+l+1,0}+b_{n+l,0})\nonumber\\
&-b_{n-1,l+1}b_{n,-1}(b_{n-1,0}+b_{n,0})-b_{n,-1}b_{n-1,-1}b_{n-2,l+2}]=0,\nonumber\\
&\quad\quad\quad\quad\quad\quad\quad\quad\quad\quad\quad\quad\quad\quad l=-1,0,\ldots,m-1.\label{4-3-3}
\end{align}
Here $I$ denotes the identity matrix. The above equation corresponds to the stationary reduction of the nonisospectral BM lattice and it can be associated with a class of d-P-type equations.

\begin{theorem}
    Under the assumption of  \eqref{lu3} and \eqref{4-2-1} as well as the stationary reduction, the recurrence coefficients $\{b_{n,0},b_{n,-1}\}$ in \eqref{4-1-2} for the $(1,m)$-type bi-OPs  satisfy the following difference system
    \begin{subequations}\label{4-3-20}
\begin{align}
		&my_{n+1}b_{n,1}+my_nb_{n-1,1}\nonumber\\
  =&\frac{2m^2\alpha}{\alpha_2x_n}(-\frac{\alpha_1}{\sqrt{2}m\alpha}+\frac{1}{x_n})+(m+1-m(b_{n,1}+b_{n-1,1}))n+\frac{\beta_0}{\alpha}-mb_{n,1},\label{4-3-21}\\
		-&\frac{\alpha_2 x_nx_{n-1}}{2m^2\alpha}
  =\frac{(y_n+n)b_{n-1,1}}{H_n},\label{4-3-22}
\end{align}
\end{subequations}
 where
\begin{align}
H_n=&mb^2_{n-1,1}y^2_n+m(1-(n+1)b_{n,1}+nb_{n-1,1})b_{n-1,1}y_n+\frac{1}{2}(m+1\nonumber\\ &-2mb_{n,1}b_{n-1,1})n^2+\frac{1}{2}(-m+1+2mb_{n-1,1}-2mb_{n,1}b_{n-1,1})n\nonumber\\
&-((m+1-m(b_{n,1}+b_{n-1,1}))n+1-mb_{n,1})(y_n+n)b_{n-1,1}\nonumber\\
&-\frac{\sqrt{2}m^2\alpha}{\alpha_2x_n}(y_n+n)\times(y_{n-1}+n-1)b_{n-2,2}-\frac{\sqrt{2}m^2\alpha}{\alpha_2}(y_n+n)(y_{n+1}+n+1)\nonumber\\
&\times\left(\frac{1}{x_{n-1}}+\frac{1}{x_n}+\frac{1}{x_{n+1}}-\frac{\alpha_1}{\sqrt{2}m\alpha}\right)b_{n-1,2}+\frac{m^2\alpha}{\alpha_2}(y_n+n)(y_{n+1}+n+1)\nonumber\\
&\times((y_{n+2}+n+2)b_{n-1,3}+(y_{n-1}+n-1)b_{n-2,3}),\nonumber
\end{align}
\begin{equation}
	\begin{split}
		b_{n,m-l}=&\frac{1}{l!}\sum_{p=0}^{[\frac{l}{2}]}\sum_{i=0}^{l}(-1)^iC^i_l\prod_{j=n+m-l+1}^{n+m}\left(\frac{\sqrt{2}m^2\alpha}{\alpha_2}(y_j+j)\left(\frac{1}{x_j}+\frac{1}{x_{j-1}}-\frac{\alpha_1}{2\sqrt{2}m}\right)\right)\\
		&\times\prod_{k=0}^{i-1}\frac{(y_{n-k}+n-k)\left(\frac{1}{x_{n-k}}+\frac{1}{x_{n-k-1}}-\frac{\alpha_1}{2\sqrt{2}m}\right)}{(y_{n+m-k}+n+m-k)\left(\frac{1}{x_{n+m-k}}+\frac{1}{x_{n+m-k-1}}-\frac{\alpha_1}{2\sqrt{2}m}\right)}\\
		&\times \left(\sum_{D_p}\prod_{j\in J_{p_1}}-\frac{\alpha_2(m-i-j+1)}{2m^2\alpha\left(\frac{1}{x_{n+j}}+\frac{1}{n+x_{j-1}}-\frac{\alpha_1}{2\sqrt{2}m}\right)\left(\frac{1}{x_{n+j-1}}+\frac{1}{x_{n+j-2}}-\frac{\alpha_1}{2\sqrt{2}m}\right)}\right.\\
		&\left.\times\prod_{j\in K_{p_2}}-\frac{\alpha_2(-i+1-j)}{2m^2\alpha\left(\frac{1}{x_{n+j}}+\frac{1}{x_{n+j-1}}-\frac{\alpha_1}{2\sqrt{2}m}\right)\left(\frac{1}{x_{n+j-1}}+\frac{1}{n+x_{j-2}}-\frac{\alpha_1}{2\sqrt{2}m}\right)}\right),\\
		&\quad l=m-3,m-2,m-1.\nonumber
	\end{split}
\end{equation}
and
\begin{align*}
&x_n=\frac{\sqrt{2}m\alpha}{\alpha_1+\alpha_2b_{n,0}},\quad y_n=-\frac{\alpha_2}{\alpha m}b_{n,-1}-n,\\
&\beta_0= -\alpha_2(b_{1,-1}b_{0,1}+b_{0,-1}b_{-1,1}+b^2_{0,0})-\alpha_1b_{0,0}. 
\end{align*}
Here $D_p=\{J_{p_1}\cup K_{p_2}\mid J_{p_1}$ indicates the integer set of $p_1$-element with elements belonging in the set $\left[m-l+2, m-i\right]$ separated by at least 2, $K_{p_2}$ indicates the integer set of $p_2$-element with elements belonging in the set $\left[-i+2, 0\right]$ separated by at least 2, $p_1+p_2=p\}$.
\end{theorem}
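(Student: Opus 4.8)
The plan is to begin from the stationary reduction \eqref{4-3-3}, which is simultaneously the compatibility condition of the Lax pair \eqref{4-3-2} and the stationary limit (setting every $\frac{d}{dt}b_{n,l}=0$) of the nonisospectral BM lattice \eqref{4-2-9}, and to eliminate the intermediate band coefficients $b_{n,1},\ldots,b_{n,m-1}$ in favour of the two distinguished families $b_{n,0}$ and $b_{n,-1}$. The substitutions $x_n=\sqrt{2}m\alpha/(\alpha_1+\alpha_2 b_{n,0})$ and $y_n=-(\alpha_2/\alpha m)b_{n,-1}-n$ are engineered precisely so that the two combinations $\alpha_1+\alpha_2 b_{n,0}$ and $b_{n,-1}$ that recur throughout \eqref{4-3-3} become $\sqrt{2}m\alpha/x_n$ and $-(\alpha m/\alpha_2)(y_n+n)$; this linearizes the way these quantities enter and renders the resulting bookkeeping combinatorial rather than merely nonlinear.

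First I would read \eqref{4-3-3} as a triangular recursion in the band index $l$. For each fixed $n$ the equation labelled by $l$ contains the diagonal term $(m-l)\alpha b_{n,l}$ together with contributions built only from the higher band coefficients $b_{\cdot,l+1}$ and $b_{\cdot,l+2}$ (at shifted values of the first index), all occurrences of $b_{\cdot,0}$ and $b_{\cdot,-1}$ being regarded as the retained data. Because the diagonal coefficient $(m-l)\alpha$ is nonzero for $l=m-1,m-2,\ldots,1$, these equations can be solved in descending order of $l$: the $l=m-1$ equation (using $b_{\cdot,m}=1$ and $b_{\cdot,m+1}=0$) expresses $b_{n,m-1}$ as an explicit rational function of $x_\cdot,y_\cdot$; feeding this into the $l=m-2$ equation yields $b_{n,m-2}$; and iterating all the way down produces $b_{n,1}$. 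Since the final system \eqref{4-3-20} and the auxiliary quantity $H_n$ only involve coefficients with second index $1,2,3$, it is the three instances $b_{n,1},b_{n,2},b_{n,3}$ (the cases $l=m-1,m-2,m-3$ of the displayed formula) that must ultimately be recorded, but obtaining $b_{n,1}$ forces one to traverse the entire tower.

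The core step is to verify, by induction running up the tower, that this descending recursion has the advertised closed form for $b_{n,m-l}$. The denominators $\alpha_1+\alpha_2 b_{\cdot,0}$ accumulated at each stage become the $\frac{1}{x_j}+\frac{1}{x_{j-1}}-\frac{\alpha_1}{2\sqrt{2}m}$ factors, the factors $b_{\cdot,-1}$ become the $(y_j+j)$ factors, the alternating binomial weight $(-1)^iC^i_l$ arises from repeatedly clearing the diagonal $(m-l)\alpha$ against the shifted $\alpha_1$-terms, and the index sets $D_p=J_{p_1}\cup K_{p_2}$ bookkeep which of the two quadratic $\alpha_2$-branches of \eqref{4-3-3} (the $b_{\cdot,l+2}b_{\cdot,-1}b_{\cdot,-1}$ term versus the $b_{\cdot,-1}b_{\cdot,-1}b_{\cdot,l+2}$ term) is selected at each level, the ``separated by at least $2$'' constraint encoding that a quadratic insertion consumes two adjacent slots of the band. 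I would set up the induction so that the hypothesis simultaneously carries the product of fresh factors $\prod_{j=n+m-l+1}^{n+m}(\cdots)$ and the ratio-correction $\prod_{k=0}^{i-1}(\cdots)$ compensating the index mismatch between forward and backward shifts; the inductive step is then a check that appending one more band level reproduces both products and extends the admissible $D_p$ by exactly one slot.

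With $b_{n,1},b_{n,2},b_{n,3}$ so expressed, the last move is to substitute them into the two remaining equations of \eqref{4-3-3}. The $l=0$ equation, after substitution, is a total first difference in $n$, i.e. of the form $(E-1)\Phi_n=0$ with $E$ the shift operator, so $\Phi_n$ is a constant of the motion equal to its value $A=-\alpha_2(b_{1,-1}b_{0,1}+b_{0,-1}b_{-1,1}+b^2_{0,0})-\alpha_1 b_{0,0}$ at the base point $n=0$ (the same first-integral mechanism that produced the integration constants via \eqref{3-6-5}--\eqref{3-6-6} in the Toda reduction and via summation of \eqref{3-3-1} in the Lotka--Volterra case); rewritten in $x_n,y_n$ this first integral is \eqref{4-3-21}. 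The $l=-1$ equation, under the same substitution, rearranges directly into \eqref{4-3-22}, with $H_n$ being nothing other than the collected $b_{n-1,1},b_{n-1,2},b_{n-2,2},b_{n-1,3},b_{n-2,3}$ contributions delivered by the closed-form formula. The main obstacle is, without question, the inductive verification of the combinatorial expression for $b_{n,m-l}$: correctly tracking the nested products and the admissible index sets $D_p$ through the full descent is where essentially all of the difficulty resides, whereas the first integral of the $l=0$ equation and the final rearrangement into \eqref{4-3-20} are routine once the tower has been resolved.
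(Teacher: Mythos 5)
Your overall skeleton---solve the $l=m-1,\ldots,1$ members of \eqref{4-3-3} as a descending tower so that every $b_{n,j}$ is expressed through $b_{n,0}$ and $b_{n,-1}$ (this is the paper's recursion \eqref{4-3-18}), then extract two relations that become \eqref{4-3-21}--\eqref{4-3-22} in the variables $x_n,y_n$---is the right frame, but the way you produce the two final equations contains two concrete errors. First, the first integral carrying the constant $A$ does not come from the $l=0$ equation. The $l=0$ member of \eqref{4-3-3} contains the diagonal term $m\alpha b_{n,0}$, which is not a total difference; summing that equation therefore produces the cumulative sum $m\alpha\sum_{j=0}^{n-1}b_{j,0}$ (the paper's \eqref{4-3-10}), not a constant of motion, so your claim that it has the form $(E-1)\Phi_n=0$ fails. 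The telescoping equation is the $l=-1$ member: dividing \eqref{4-3-4} by $b_{n,-1}$ leaves the constant $(m+1)\alpha$ plus perfect differences, and summation yields \eqref{4-3-9}, whose transcription in $x_n,y_n$ is exactly \eqref{4-3-21}.

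Second, and more seriously, \eqref{4-3-22} is not a ``direct rearrangement'' of the $l=-1$ equation after substituting the tower. In the original variables the stationary equations \eqref{4-3-3} contain no explicit $n$, and the tower substitution introduces none either (note that $(y_j+j)=-\tfrac{\alpha_2}{\alpha m}b_{j,-1}$ is free of explicit index dependence); yet $H_n$ carries genuine polynomial-in-$n$ terms such as $\tfrac12(m+1-2mb_{n,1}b_{n-1,1})n^2$, and the identity underlying \eqref{4-3-22}, namely \eqref{4-3-17}, contains $\tfrac{n(n-1)m\alpha^2}{2}((m+1)\alpha+A)$ together with the constant $A$. Such terms can only be generated by summation, and in fact by a second layer of it: the paper multiplies \eqref{4-3-4} by $b_{n-1,1}$ and sums (giving \eqref{4-3-11}), then evaluates the resulting cumulative sums by invoking the summed $l=0$ relation \eqref{4-3-10}, the $l=1$ equation \eqref{4-3-6}, and the first integral \eqref{4-3-9}, arriving at \eqref{4-3-17}, which only then is rewritten as \eqref{4-3-22}. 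This double-summation mechanism---not an algebraic rearrangement---is the essential missing idea in your proposal; without it you cannot account for either the explicit $n$-dependence or the appearance of $A$ inside $H_n$. (By contrast, your insistence on an inductive proof of the closed form for $b_{n,m-l}$ is extra work the paper does not even carry out in detail; the gap lies elsewhere.)
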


%Now we will continue to simplify the d-P-type equation ~\eqref{4-3-3}.
\begin{proof}
Write the equations in \eqref{4-3-3} for $l=-1\ 0,\ 1$ as follows
\begin{subequations}
\begin{align}
&(m+1)\alpha b_{n,-1}+\alpha_1b_{n,-1}(b_{n,0}-b_{n-1,0})
+\alpha_2b_{n,-1}[b_{n,1}b_{n+1,-1}+b_{n,0}(b_{n,0}+b_{n-1,0})\nonumber\\
&-b_{n-1,0}(b_{n-1,0}+b_{n,0})-b_{n-1,-1}b_{n-2,1}]=0,
\label{4-3-4}\\
&m\alpha b_{n,0}+\alpha_1(b_{n,1}b_{n+1,-1}-b_{n-1,1}b_{n,-1})+\alpha_2[b_{n,2}b_{n+2,-1}b_{n+1,-1}+b_{n,1}b_{n+1,-1}(b_{n+1,0}+b_{n,0})\nonumber\\
&-b_{n-1,1}b_{n,-1}(b_{n-1,0}+b_{n,0})-b_{n,-1}b_{n-1,-1}b_{n-2,2}]=0,\label{4-3-5}\\
&(m-1)\alpha b_{n,1}+\alpha_1(b_{n,2}b_{n+2,-1}-b_{n-1,2}b_{n,-1})+\alpha_2[b_{n,3}b_{n+3,-1}b_{n+2,-1}-b_{n,-1}b_{n-1,-1}b_{n-2,3}\nonumber\\
&+b_{n,2}b_{n+2,-1}(b_{n+2,0}+b_{n+1,0})-b_{n-1,2}b_{n,-1}(b_{n-1,0}+b_{n,0})]=0.\label{4-3-6}
\end{align}
\end{subequations}
Dividing both sides of \eqref{4-3-4} by $b_{n,-1}$ 
% \begin{align}
% &(m+1)\alpha +\alpha_1(b_{n,0}-b_{n-1,0})
% +\alpha_2[b_{n,1}b_{n+1,-1}+b_{n,0}(b_{n,0}+b_{n-1,0})\nonumber\\
% &-b_{n-1,0}(b_{n-1,0}+b_{n,0})-b_{n-1,-1}b_{n-2,1}]=0.\label{4-3-7}
% \end{align}
and summing it for $n$ gives
\begin{align*}
&-\alpha_2(b_{n+1,-1}b_{n,1}+b_{n,-1}b_{n-1,1}+b^2_{n,0})-\alpha_1 b_{n,0}\nonumber\\
=&-\alpha_2(b_{1,-1}b_{0,1}+b_{0,-1}b_{-1,1}+b^2_{0,0})-\alpha_1 b_{0,0}+(m+1)n\alpha.
\end{align*}
% Based on the recurrence relation of moments, it is easy to calculate 
% \begin{equation}
% -\alpha_2(b_{1,-1}b_{0,1}+b_{0,-1}b_{-1,1}+b^2_{0,0})-\al _1b_{0,0}=\theta+1,\label{bu7-0-3}
% \end{equation}
Upon setting
$$ -\alpha_2(b_{1,-1}b_{0,1}+b_{0,-1}b_{-1,1}+b^2_{0,0})-\alpha_1b_{0,0} \triangleq \beta_0,$$
we obtain
\begin{equation}
		-\alpha_2(b_{n+1,-1}b_{n,1}+b_{n,-1}b_{n-1,1}+b^2_{n,0})-\alpha_1b_{n,0}=(m+1)n\alpha+\beta_0.\label{4-3-9}
	\end{equation}
After implementing the summation on ~\eqref{4-3-5}, we can get
\begin{align}
&-\alpha_2b_{n,-1}(b_{n+1,-1}b_{n-1,2}+b_{n-1,-1}b_{n-2,2}+b_{n-1,1}(b_{n,0}+b_{n-1,0}))\nonumber\\
&\qquad\qquad\qquad\qquad\qquad\qquad\qquad\qquad\qquad\qquad\quad-\alpha_1b_{n,-1}b_{n-1,1}=m\alpha \sum_{j=0}^{n-1}b_{j,0}.\label{4-3-10}
\end{align}
Multiplying both sides of ~\eqref{4-3-4} by $b_{n-1,1}$ and summing it for $n$, we have
\begin{equation}
		\begin{split}
			&X_{n+1}b_{n,1}b_{n-1,1}\\
   =&(m+1)\alpha\sum_{j=0}^{n}b_{j,-1}b_{j-1,1}+\sum_{j=0}^{n}W_jb_{j-1,1}(b_{j-1,0}-b_{j,0})\\
			=&(m+1)\alpha\sum_{j=0}^{n}b_{j,-1}b_{j-1,1}+\sum_{j=0}^{n-1}(W_{j+1}b_{j,1}-W_jb_{j-1,1})b_{j,0}-W_nb_{n-1,1}b_{n,0},\label{4-3-11}
   \end{split}
	\end{equation}
  where 
 \begin{align}
 &X_{n}=-\alpha_2b_{n,-1}b_{n-1,-1},\label{4-3-12}\\
 &W_n=-b_{n,-1}(\alpha_2(b_{n,0}+b_{n-1,0})+\alpha_1).\label{4-3-13}
 \end{align}
Subsequently, by virtue of ~\eqref{4-3-5}, we obtain
\begin{align}
&X_{n+1}b_{n,1}b_{n-1,1}\nonumber\\
=&(m+1)\alpha\sum_{j=0}^{n}b_{j,-1}b_{j-1,1}+\sum_{j=0}^{n-1}(m\alpha b_{j,0}+X_jb_{j-2,2}-X_{j+2}b_{j,2})b_{j,0}-W_nb_{n-1,1}b_{n,0}\nonumber\\
			=&m\alpha\sum_{j=0}^{n-1}(b_{j+1,-1}b_{j,1}+b_{j,-1}b_{j-1,1}+b^2_{j,0})+m\alpha b_{n,-1}b_{n-1,1}-W_nb_{n-1,1}b_{n,0}\nonumber\\
			&-(m-1)\alpha\sum_{j=0}^{n}b_{j,-1}b_{j-1,1}+\sum_{j=0}^{n-1}(X_jb_{j-2,2}-X_{j+2}b_{j,2})b_{j,0}.\label{4-3-14}
	\end{align}
% \begin{equation}
% 		\begin{split}
% 			&-\alpha_2b_{n+1,-1}b_{n,-1}b_{n,1}b_{n-1,1}\\
%    =&(m+1)\alpha\sum_{j=0}^{n}b_{j,-1}b_{j-1,1}+\sum_{j=0}^{n}-b_{j,-1}(\alpha_2(b_{j,0}+b_{j-1,0})+\alpha_1)b_{j-1,1}(b_{j-1,0}-b_{j,0})\\
% 			=&(m+1)\alpha\sum_{j=0}^{n}b_{j,-1}b_{j-1,1}+\sum_{j=0}^{n-1}(W_{j+1}b_{j,1}-W_jb_{j-1,1})b_{j,0}-W_nb_{n-1,1}b_{n,0}\\
% 			=&(m+1)\alpha\sum_{j=0}^{n}b_{j,-1}b_{j-1,1}+\sum_{j=0}^{n-1}(mb_{j,0}+X_jb_{j-2,2}-X_{j+2}b_{j,2})b_{j,0}-W_nb_{n-1,1}b_{n,0}\\
% 			=&m\sum_{j=0}^{n-1}(b_{j+1,-1}b_{j,1}+b_{j,-1}b_{j-1,1}+b^2_{j,0})+mb_{n,-1}b_{n-1,1}-W_nb_{n-1,1}b_{n,0}\\
% 			&-(m-1)\sum_{j=0}^{n}b_{j,-1}b_{j-1,1}+\sum_{j=0}^{n-1}(X_jb_{j-2,2}-X_{j+2}b_{j,2})b_{j,0},\label{7-0-3-17}
% 		\end{split}
% 	\end{equation}
Based on ~\eqref{4-3-9} and ~\eqref{4-3-10}, we can conclude
\begin{align}
&m\alpha \sum_{j=0}^{n-1}(b_{j+1,-1}b_{j,1}+b_{j,-1}b_{j-1,1}+b^2_{j,0})\nonumber\\
=&-\frac{m\alpha}{\alpha_2}\sum_{j=0}^{n-1}(mj\alpha+j\alpha+\beta_0+\alpha_1b_{j,0})\nonumber\\
=&-\frac{n(n-1)m\alpha}{2\alpha_2}((m+1)\alpha+\beta_0))-\frac{\alpha_1}{\alpha_2}(X_{n+1}b_{n-1,2}+X_nb_{n-2,2}+W_nb_{n-1,1}).\label{4-3-15}
\end{align}
Furthermore, we can deduce from ~\eqref{4-3-6}, ~\eqref{4-3-12} and ~\eqref{4-3-13}
\begin{align}
&-(m-1)\alpha\sum_{j=0}^{n}b_{j,-1}b_{j-1,1}+\sum_{j=0}^{n-1}(X_jb_{j-2,2}-X_{j+2}b_{j,2})b_{j,0}\nonumber\\
=&-\sum_{j=0}^{n}b_{j,-1}(X_{j+2}b_{j-1,3}-X_{j-1}b_{j-3,3}-W_{j-1}b_{j-2,2}+W_{j+1}b_{j-1,2})\nonumber\\
&+\sum_{j=0}^{n-1}(X_jb_{j-2,2}-X_{j+2}b_{j,2})b_{j,0}\nonumber\\	
=&-\alpha_1(\sum_{j=0}^{n}b_{j,-1}b_{j-1,-1}b_{j-2,2}-\sum_{j=1}^{n}b_{j+1,-1}b_{j,-1}b_{j-1,2})\nonumber\\
&+\sum_{j=0}^{n-1}X_jb_{j,0}b_{j-2,2}-\sum_{j=0}^{n}X_{j+1}b_{j+1,0}b_{j-1,2}+\sum_{j=0}^{n}X_jb_{j-1,0}b_{j-2,2}\nonumber\\
&-\sum_{j=0}^{n}X_{j+1}b_{j,0}b_{j-1,2}+\sum_{j=0}^{n}X_jb_{j-2,0}b_{j-2,2}-\sum_{j=0}^{n-1}X_{j+2}b_{j,0}b_{j,2}\nonumber\\
&-\alpha_2\sum_{j=0}^{n}(b_{j-2,-1}b_{j-1,-1}b_{j,-1}b_{j-3,3}-b_{j,-1}b_{j+1,-1}b_{j+2,-1}b_{j-1,3})\nonumber\\		
=&\alpha_1b_{n,-1}b_{n+1,-1}b_{n-1,2}+\alpha_2b_{n,-1}b_{n+1,-1}(b_{n+2,-1}b_{n-1,3}\nonumber\\
&+b_{n-1,-1}b_{n-2,3})-X_nb_{n,0}b_{n-2,2}-X_{n+1}b_{n-1,2}(b_{n-1,0}+b_{n,0}+b_{n+1,0}). \label{4-3-16}
\end{align}
By inserting \eqref{4-3-12}, \eqref{4-3-13}, \eqref{4-3-15} and \eqref{4-3-16} into  \eqref{4-3-14}, the resulting expression becomes
\begin{align}
&\alpha^2_2b_{n,-1}b_{n+1,-1}b_{n,1}b_{n-1,1}\nonumber\\
=&\frac{n(n-1)m\alpha^2}{2}((m+1)\alpha+\beta_0)-\alpha_2\alpha mb_{n,-1}b_{n-1,1}\nonumber\\		&-\alpha_2(b_{n,-1}b_{n-1,-1}b_{n-2,2}(\alpha_2b_{n,0}+\alpha_1)+b_{n+1,-1}b_{n,-1}b_{n-1,2}\nonumber\\
&\times(\alpha_2(b_{n+1,0}+b_{n,0}+b_{n-1,0})+2\alpha_1))-\alpha^2_2b_{n,-1}b_{n+1,-1}(b_{n+2,-1}b_{n-1,3}\nonumber\\
&+b_{n-1,-1}b_{n-2,3})-b_{n,-1}b_{n-1,1}(\alpha_2b_{n,0}+\alpha_1)(\alpha_2(b_{n,0}+b_{n-1,0})+\alpha_1).\label{4-3-17}
\end{align}
In addition, by taking $l=1,2,\ldots, m-1$ from \eqref{4-3-3}, we have
\begin{align}
b_{n,l}=&-\frac{\alpha_2}{\alpha (m-l)}(b_{n+l+2,-1}b_{n+l+1,-1}b_{n,l+2}-b_{n,-1}b_{n-1,-1}b_{n-2,l+2}\nonumber\\& -b_{n,-1}b_{n-1,l+1}(b_{n,0}+b_{n-1,0}+\frac{\alpha_1}{\alpha_2})\nonumber\\&+b_{n+l+1,-1}b_{n,l+1}(b_{n+l+1,0}+b_{n+l,0}+\frac{\alpha_1}{\alpha_2})), \quad l=1,2\cdots m-1,\label{4-3-18}
\end{align}
from which it follows that all $b_{n,j}$ can be represented by $b_{n,0}$ and $b_{n,-1}$, by noting that $b_{n,m}=1$ and $b_{n,j}=0,j>m$ or $j<-1$. 

The desired equation \eqref{4-3-20} is obtained by \eqref{4-3-9}, \eqref{4-3-17} with the help of the transformation
\begin{align*}
x_n=\frac{\sqrt{2}m\alpha}{\alpha_1+\alpha_2b_{n,0}},\quad y_n=-\frac{\alpha_2}{\alpha m}b_{n,-1}-n 
\end{align*}
in conjunction with \eqref{4-3-18}.
    
\end{proof}

\begin{remark}
    It's worth noting that, when $m=1$, we get from \eqref{4-3-20}
\begin{equation*}
		\begin{cases}
			y_{n+1}+y_{n}=\frac{2\alpha}{\alpha_2x_n}(-\frac{\alpha_1}{\sqrt{2}\alpha}+\frac{1}{x_n})+\frac{\beta_0}{\alpha}-1,\\
			\frac{-\alpha_2}{2\alpha}x_nx_{n-1}=\frac{y_n+n}{y^2_n},
		\end{cases}
\end{equation*}
which can be identified as the asymmetric d-P$_{\text{I}}$ via a Miura transform \cite{ramani2000quadratic}  and can also be related to an asymmetric d-P$_{\text{IV}}$  via a rational transformation and a limiting process \cite[p.65]{van2007discrete}. 
In a more general sense, \eqref{4-3-20} can be regarded as a general class of d-P equations including the asymmetric d-P$_{\text{I}}$ and d-P$_{\text{IV}}$. For example, when $m=2$, one can also derive a discrete system with a relatively complicated form from \eqref{4-3-20}. We mention that, in a recent paper \cite{wang2025hard}, Wang and Xu obtained new limiting correlation kernels for the case of $m=2$, which results in the Chazy I equation that has a P$_\text{IV}$ reduction.
\end{remark}

\subsection{Realization of stationary reduction}\label{1masr}
In the previous two subsections, we performed a nonisospectral deformation on the $(1,m)$-type bi-OPs, resulting in the nonisospectral BM lattice. Then, we obtained  a family of d-P-type equations together with the Lax pair by applying the stationary reduction to the Lax pair for the nonisospectral BM lattice.

In the following, we will construct an explicit weight function under the assumption of the time evolution \eqref{4-2-1} for the moments, with the purpose of demonstrating the validity of the above mentioned stationary reduction process.

\begin{lemma}
Under the assumption of  \eqref{lu3} together with $\frac{\alpha_2}{\alpha}<0$, define the moments as
\begin{align}
c_j(t)=\int^{+\infty}_0z^j(0)e^{j\alpha t}e^{\frac{\alpha_1}{m\alpha}z^m(0)e^{m\alpha t}+\frac{\alpha_2}{2m\alpha}z^{2m}(0)e^{2m\alpha t}}dz(0).\label{4-4-3}
\end{align}
Then the moments simultaneously satisfy the time evolution \eqref{4-2-1} and
\begin{align}
\frac{d}{dt}c_j(t)=-\alpha c_j(t).\label{4-4-4}
\end{align}
\end{lemma}
\begin{proof}
    Firstly, we attempt to find a weight function so that the moments satisfy the
evolution relation \eqref{4-2-1}.
Recall that \eqref{lu3} gives
\begin{align*}
z=z(0)e^{\alpha t},
\end{align*}
where $z(0)$ represents the spectral parameter at the initial time. In the case of the integral interval $(0,+\infty)$, the moments can be written as
\begin{align*}
c_j(t)=&\int^{+\infty}_0z^j(t)d\mu(z;t)=\int^{+\infty}_0z^j(0)e^{j\alpha t}f(z(0);t)dz(0),
\end{align*}
from which, we have
\begin{align*}
\frac{d}{dt}c_j(t)=\alpha jc_j(t)+\int^{+\infty}_0z^j(0)e^{j\alpha t}\frac{df(z(0);t)}{dt}dz(0).
\end{align*}
It is not hard to see that the moments satisfy the time evolution relation \eqref{4-2-1} as long as $f(z(0);t)$ satisfies the time evolution
\begin{align}
\frac{d}{dt}f(z(0);t)=(\alpha_1 z^m(0)e^{m\alpha t}+\alpha_2 z^{2m}(0)e^{2m\alpha t})f(z(0);t).
\end{align}
Consequently, it is reasonable to set
$$f(z(0);t)=e^{\frac{\alpha_1}{m\alpha}z^m(0)e^{m\alpha t}+\frac{\alpha_2}{2m\alpha}z^{2m}(0)e^{2m\alpha t}}$$
so that the concrete moments are given as \eqref{4-4-3}. The remaining part is to confirm that the concrete moments given by \eqref{4-4-3} also satisfy the evolution relation \eqref{4-4-4}.

By observing the fact at the boundary
\begin{align*}
\lim_{z(0)\to 0}z(0)f(z(0);t)=\lim_{z(0)\to +\infty}z(0)f(z(0);t)=0,
\end{align*}
we perform integration by parts to get
\begin{align*}
c_j(t)
=&-\int^{+\infty}_0z(0)jz^{j-1}(0)e^{j\alpha t}f(z(0);t)dz(0)\notag\\
&-\int^{+\infty}_0z^{j+1}(0)\frac{\alpha_1}{\alpha}z^{m-1}(0)e^{m\alpha t}
e^{j\alpha t}f(z(0);t)dz(0)\notag\\
&-\int^{+\infty}_0z^{j+1}(0)\frac{\alpha_2}{\alpha}z^{2m-1}(0)e^{2m\alpha t}
e^{j\alpha t}f(z(0);t)dz(0)\notag\\
=&-jc_j-\frac{\alpha_1}{\alpha}c_{j+m}-\frac{\alpha_2}{\alpha}c_{j+2m},
\end{align*}
which yields
$$c_{j+2m}=\frac{1}{\alpha_2}(-\alpha (j+1)c_j-\alpha_1c_{j+m}).$$
By inserting the above equation into \eqref{4-2-1}, we can arrive at the evolution relation \eqref{4-4-4}.
\end{proof}

Using the above lemma, we immediately have the following conclusion.
\begin{theorem}\label{th7}
Under the definition of the moments \eqref{4-4-3}, we have
\begin{align*}
\frac{d}{dt}b_{n,j}(t)=0,\qquad \text{and} \qquad
\frac{d}{dt}\gamma_{n,j}(t)=0,
\end{align*}
where $\{b_{n,j}\}$ are the recurrence coefficients in \eqref{4-1-2} for the monic $(1,m)$-type bi-OPs  and $\{\gamma_{n,j}\}$ are the coefficients of $(1,m)$-type bi-OPs with the expansions $P_n(z)=\sum_{j=0}^n\gamma_{n,j}z^j$.
\end{theorem}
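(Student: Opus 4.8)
The plan is to mimic the proofs of Theorem~\ref{th:stationary_toda} and its Lotka--Volterra analogue, exploiting the single structural fact supplied by the preceding lemma: for the moments defined in~\eqref{4-4-3} one has $\frac{d}{dt}c_j(t)=-\alpha c_j(t)$, which integrates to $c_j(t)=e^{-\alpha t}c_j(0)$. Thus \emph{every} moment carries the same time factor $e^{-\alpha t}$, and it is precisely this uniform scaling that forces all quantities built from the moments to be stationary. The argument then splits into a determinantal computation of the expansion coefficients $\gamma_{n,j}$ and a reduction of the recurrence coefficients $b_{n,j}$ to those.

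First I would record the key consequence of the uniform scaling. Since $\tau_n=\det(c_{mi+j})_{i,j=0}^{n-1}$ is an $n\times n$ determinant each of whose entries scales by the common factor $e^{-\alpha t}$, multilinearity in the rows gives
\begin{align*}
\tau_n(t)=e^{-n\alpha t}\,\tau_n(0),\qquad\text{so}\qquad \frac{d}{dt}\tau_n=-n\alpha\,\tau_n,
\end{align*}
and the identical reasoning shows that \emph{any} $k\times k$ determinant whose entries are moments scales as $e^{-k\alpha t}$. Applying this to the expansion coefficients: expanding the $(n+1)\times(n+1)$ determinant in~\eqref{exp:mbop1} along its last row $(1,\lambda,\dots,\lambda^n)$ yields $\gamma_{n,j}=(-1)^{n+j}\widehat{M}_{n,j}/\tau_n$, where $\widehat{M}_{n,j}$ is the $n\times n$ minor obtained by deleting the last row and the $(j+1)$-st column, an $n\times n$ determinant of moments. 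Both $\widehat{M}_{n,j}$ and $\tau_n$ are proportional to $e^{-n\alpha t}$, so the ratio is constant and $\frac{d}{dt}\gamma_{n,j}=0$.

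For the recurrence coefficients $b_{n,j}$ I would avoid hunting for an explicit Hankel-type formula for each $j$ in the general $(1,m)$ setting. Instead, once the $\gamma_{n,k}$ are known to be time-independent, the polynomials $P_n(\lambda;t)=\sum_k\gamma_{n,k}\lambda^k$ have constant coefficients for every index. Substituting these into the recurrence~\eqref{4-1-2} and matching coefficients of powers of $\lambda$ produces, starting from the normalization $b_{n,m}=1$, a triangular linear system: the coefficient of $\lambda^{n+m-1}$ determines $b_{n,m-1}$, that of $\lambda^{n+m-2}$ determines $b_{n,m-2}$, and so on down to $b_{n,-1}$, each step introducing exactly one new unknown through the (monic) leading term of the corresponding $P_{n+j}$. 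Hence every $b_{n,j}$ is a fixed rational expression in the constant $\gamma$'s, giving $\frac{d}{dt}b_{n,j}=0$. As an independent check for the lowest coefficient one may note $b_{n,-1}=h_n/h_{n-1}=\tau_{n+1}\tau_{n-1}/\tau_n^2$, which is manifestly homogeneous of degree zero in the moments and therefore stationary.

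The main obstacle is precisely this recurrence-coefficient step: unlike the ordinary-OP case, the $(1,m)$ recurrence~\eqref{4-1-2} has no single clean determinantal expression covering all $b_{n,j}$, so a brute-force computation for each $j$ would be unwieldy. The cleanest route is therefore the reduction above---establishing stationarity of the $\gamma_{n,j}$ first and then invoking the uniqueness of the coefficients in~\eqref{4-1-2}---rather than a direct determinantal treatment of every recurrence coefficient.
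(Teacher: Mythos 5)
Your proposal is correct, and for the expansion coefficients $\gamma_{n,j}$ it coincides with the paper's argument: both of you write $\gamma_{n,j}$ as a ratio of an $n\times n$ minor of moments over $\tau_n$ and use the uniform scaling $c_j(t)=e^{-\alpha t}c_j(0)$ to conclude that numerator and denominator carry the same factor $e^{-n\alpha t}$. Where you genuinely diverge is the treatment of the recurrence coefficients. The paper pairs the recurrence \eqref{4-1-2} with the dual polynomials $Q_{n+j}(\lambda^m)$, obtaining $b_{n,j}h_{n+j}=\frac{1}{\tau_n}\sum_{l=0}^{n+j}\beta_{n+j,l}M_{n+1,l}$ with $M_{n+1,l}$ an $(n+1)\times(n+1)$ determinant of moments, and then combines $\frac{d}{dt}M_{n+1,l}=-(n+1)\alpha M_{n+1,l}$, $\frac{d}{dt}h_n=-\alpha h_n$, and the separately proved constancy of the $Q$-coefficients $\beta_{n+j,l}$. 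You instead bypass the dual family entirely: since the $P_{n}$ now have constant coefficients, matching the coefficients of $\lambda^{n+m},\lambda^{n+m-1},\dots,\lambda^{n-1}$ on the two sides of \eqref{4-1-2} gives a triangular system with unit diagonal (coming from the monic leading terms) that determines each $b_{n,j}$ as a fixed polynomial in the constant $\gamma$'s. Your route is more elementary and has a small practical advantage: it handles all coefficients $b_{n,-1},\dots,b_{n,m-1}$ in one stroke, whereas the paper's displayed computation runs only over $j=0,\dots,m-1$, so the case $j=-1$ is left to the same argument with $Q_{n-1}$ or to your closed form $b_{n,-1}=h_n/h_{n-1}=\tau_{n+1}\tau_{n-1}/\tau_n^2$, which is manifestly of scaling degree zero. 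What the paper's longer computation buys is explicit determinantal representations of the $b_{n,j}$, which fit its Pfaffian/determinant-centered narrative; your argument certifies only constancy, which is all the theorem asks for.
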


\begin{proof}
Using the evolution relation \eqref{4-4-4} satisfied by the moments \eqref{4-4-3}, we will now prove that the coefficients and the recurrence coefficients of $(1,m)$-type bi-OPs are not dependent of time $t$. Firstly, it can be easily shown that 
$$\frac{d}{dt}\tau_n=-n\alpha\tau_n,$$
and 
$$\frac{d}{dt}h_n=\frac{d}{dt}\left(\frac{\tau_{n+1}}{\tau_n}\right)=-\alpha h_n.$$
Assume that 
\begin{align}
&P_n(z)=z^n+\gamma_{n,n-1}z^{n-1}+\gamma_{n,n-2}z^{n-2}+\cdots+\gamma_{n,0},\nonumber\\
&Q_n(z^{m})=(z^{m})^n+\beta_{n,n-1}(z^{m})^{n-1}+\beta_{n,n-2}(z^{m})^{n-2}+\cdots+\beta_{n,0}.\nonumber
\end{align}
Then the determinant representation of $P_n(z)$ provided in the previous context tells us
\begin{align*}
\gamma_{n,j}=\frac{T_{n,j}}{\tau_n},\quad j=0,\ldots,n-1,
\end{align*}
where
\begin{align*}
T_{n,j}=(-1)^{n+j}\det\begin{pmatrix}
c_{mp+q}
\end{pmatrix}_{q=0,1,\ldots,n,\,q\neq j}^{p=0,1,\ldots,n-1}.
%\left|
%\begin{array}{ccccccccc}
% c_0 & c_1  & \cdots& c_{j-1} &c_{j+1}& \cdots&c_n \\
% c_m & c_{1+m} & \cdots& c_{j-1+m} &c_{j+1+m}& \cdots&c_{n+m} \\
%   \vdots  & \vdots  && \vdots& \vdots & \ddots& \vdots   \\
%c_{(n-1)m} & c_{1+(n-1)m}  & \cdots& c_{j-1+(n-1)m} &c_{j+1+(n-1)m}& \cdots&c_{n+(n-1)m}
%\end{array}
%\right|.
\end{align*}
By making use of \eqref{4-4-4}, we can deduce that
$$\frac{d}{dt}T_{n,j}=-n \alpha T_{n,j}.$$
from which, it follows that $\gamma_{n,j}$ is independent of $t$.
%\begin{align*}
%\frac{d}{dt}\gamma_{n,j}=0.
%\end{align*}

Similarly, it can also be inferred that $\frac{d}{dt}\beta_{n,j}=0$.

Furthermore, by taking the inner product of both sides of the recurrence relation \eqref{4-1-2} with $Q_{n+j}(z^m)$, we have
\begin{align}
b_{n,j}h_{n+j}=&\int^{+\infty}_0z^mP_n(z)Q_{n+j}(z^m)w(z(t);t)dz,\nonumber\\
=&\frac{1}{\tau_n}\sum_{l=0}^{n+j}\beta_{n+j,l}M_{n+1,l},\nonumber
\end{align}
for $j=0,1,\ldots,m-1$,
where
\begin{align*}
M_{n+1,l}=\det\left(
\begin{array}{c}
c_{mp+q}\\
c_{(l+1)m+q}
\end{array}
\right)_{q=0,1,\ldots,n}^{p=0,1,\ldots,n-1}.
%\left|
%\begin{array}{ccccccc}
% c_0 & c_1 & \cdots &c_{n-1}  & c_n \\
% c_m & c_{1+m}& \cdots &c_{n-1+m}  & c_{n+m} \\
% \vdots  & \vdots & \ddots& \vdots  & \vdots  \\
% c_{(n-1)m} & c_{1+(n-1)m}& \cdots &c_{n-1+(n-1)m}  & c_{n+(n-1)m} \\
% c_{(l+1)m} & c_{1+(l+1)m}& \cdots  & c_{n-1+(l+1)m}  & c_{n+(l+1)m}
%\end{array}
%\right|.
\end{align*}
It is not hard to obtain from \eqref{4-4-4}  $$\frac{d}{dt}M_{n+1,l}=-(n+1) \alpha M_{n+1,l},$$ from which we have
\begin{align*}
\frac{d}{dt}b_{n,j}=\frac{d}{dt}\left(\frac{1}{h_{n+j}\tau_n}\sum_{l=0}^{n+j}\beta_{n+j,l}M_{n+1,l}\right)=0.
\end{align*}
\end{proof}

In summary, since the coefficients are independent of time $t$, we have $\frac{\partial }{\partial t}P_n(z)=\frac{\partial }{\partial t}Q_n(z^m)=0$, the stationary reduction of the Lax pair of the nonisospectral BM equation is valid so that the Lax pair of the family of d-P-type equations is derived. And, we also conclude that the nonisospectral equation \eqref{4-2-9} does indeed admit the stationary reduction to \eqref{4-3-3} related to the  family of d-P-type equations ~\eqref{4-3-20}.

\section{Nonisospectral deformation of generalized Laurent bi-OPs and d-P}\label{Laurent}

Laurent bi-OPs (LBOPs) appear in problems related to the two-point Pad\'{e} approximations \cite{jones2006survey}, which can be regarded as a generalization of the OPs on the unit circle \cite{kharchev1997faces} whose moments satisfy certain symmetry conditions. There has been much literatures on the connection between LBOPs and integrable systems \cite{kharchev1997faces,tsujimoto2009elliptic,vinet2001spectral,zhedanov1998classical}. A link between LBOPs and d-P equations was established in \cite{yue2022laurent}. Recently, generalizations of LBOPs were proposed and the corresponding integrable equations were also investigated in  \cite{wang2022generalization}, which was partially motivated by the study of the so-called coupled pentagram map in \cite{wang2025pentagram}.
%Very recently, the pentagram map has been generalized from one polygon to a pair of polygons \cite{wang2022pentagram}, in which there is an evidence that the corresponding maps can be associated with a class of generalized bi-OPs and certain generalizations of the discrete relativistic Toda lattice. 
%\todo{pentagram}
See also more recent results \cite{gharakhloo2023modulated,ito2023generalized} related to the generalized LBOPs.
In this section, we study the related problem of nonisospectral deformation of the generalized LBOPs and d-P-type equations.

\subsection{Generalized Laurent bi-OPs}\label{introductionLaurent}
Let's first give a brief review on the generalized LBOPs proposed in \cite{wang2022generalization}. For a fixed $k\in\mathbb{N}^+$, we define the inner product
\begin{align}\label{5-1-1-1}
	\langle f(z)|d\mu|g(z) \rangle_{k}
	=\int f(z)g\left(\frac{1}{z^k}\right)d \mu(z),
\end{align}
where $\mu$ is a formal measure such that all moments
$$c_i=\int z^id \mu(z), \ i\in \mathbb{Z}$$
exist.
In many cases, \eqref{5-1-1-1} can be written in the following form
\begin{align}\label{5-1-1}
	\langle f(z)|d\mu|g(z) \rangle_{k}
	=\int f(z)g\left(\frac{1}{z^k}\right)w(z)dz,
\end{align}
and the moments can be rewritten as
$$c_i=\int z^iw(z)d z, \ i\in \mathbb{Z},$$
where $w(z)$ is the weight function.

Two families of monic polynomials $\{P_n(z)\}_{n\in\mathbb{N}}$ and $\{Q_m(z)\}_{m\in\mathbb{N}}$ with $\deg(P_n(z))=\deg(Q_n(z))$ $=n$ are called generalized LBOPs with respect to the bilinear 2-form $\langle \cdot| \cdot\rangle_{k}$, if they satisfy biorthogonality condition
\begin{align}\label{5-1-2}
	\langle P_n(z)|d\mu|Q_m(z)\rangle_{k}=h_n \delta_{nm},\quad h_n\neq 0.
\end{align}	
It is not difficult to find that the biorthogonality condition \eqref{5-1-2} can be equivalently written as
\begin{align*}
	\langle P_n(z)|d\mu|z^m\rangle_{k}=0,\quad m=0,\,1,\,\ldots,\,n-1,
\end{align*}
for $\{P_n(z)\}_{n\in\mathbb{N}}$ and
\begin{align*}
	\langle z^m|d\mu|Q_n(z)\rangle_{k}=0.\quad m=0,\,1,\,\ldots,\,n-1,
\end{align*}
for $\{Q_n(z)\}_{n\in\mathbb{N}}$.

From the biorthogonality condition, it can be deduced that the generalized LBOPs admit the following determinant expressions
%\begin{subequations}
    \begin{align}\label{lbop}
	&P_n(z)=\frac{1}{\tau_n^{(0)}}
	\det\begin{pmatrix}
c_{i-kj}\quad  z^i
\end{pmatrix}_{j=0,1,\ldots,n-1}^{i=0,1,\ldots,n},\quad
%	\begin{vmatrix}
%		c_0&c_{-k}&c_{-2k}&\cdots&c_{-(n-1)k}&1\\
%		c_1&c_{1-k}&c_{1-2k}&\cdots&c_{1-(n-1)k}&\lambda\\
%		c_{2}&c_{2-k}&c_{2-2k}&\cdots&c_{2-(n-1)k}&\lambda^2\\
%		\vdots&\vdots&\vdots & \ddots& \vdots&\vdots\\
%		c_{n}&c_{n-k}&c_{n-2k}&\cdots&c_{n-(n-1)k}&\lambda^n
%	\end{vmatrix},\label{lbop1}\\
	&Q_n(z)=\frac{1}{\tau_n^{(0)}}
	\det\left(
	\begin{array}{c}
c_{i-kj}\\
  z^j
\end{array}
\right)_{j=0,1,\ldots,n}^{i=0,1,\ldots,n-1},
%	\begin{vmatrix}
%		c_0&c_{-k}&c_{-2k}&\cdots&c_{-nk}\\
%		c_1&c_{1-k}&c_{1-2k}&\cdots&c_{1-nk}\\
%		c_{2}&c_{2-k}&c_{2-2k}&\cdots&c_{2-nk}\\
%		\vdots&\vdots&\vdots & \ddots& \vdots\\
%		c_{n-1}&c_{n-1-k}&c_{n-1-2k}&\cdots&c_{n-1-nk}\\
%		1&\lambda&\lambda^2&\cdots&\lambda^n
%	\end{vmatrix},
 \end{align}
 %\end{subequations}
where $\tau_n^{(l)}=\det(c_{l+i-kj})_{i,j=0}^{n-1}\neq 0$.
Moreover, the generalized LBOPs satisfy the recurrence relation of $(k+2)$-term as follows
\begin{align}\label{5-1-3}
z^k(P_n(z)+a_nP_{n-1}(z))
=P_{n+k}(z)+b_{n,k-1}P_{n+k-1}(z)+\cdots+b_{n,0}P_n(z),
\end{align}
with 
$$
a_n=-\frac{\langle P_{n}(z)|d\mu|z^{-1}\rangle_k}{\langle P_{n-1}(z)|d\mu|z^{-1}\rangle_k}=\frac{\tau_{n-1}^{(0)}\tau_{n+1}^{(k)}}{\tau_{n}^{(k)}\tau_n^{(0)}}.
$$
In the following we mainly focus on the family of polynomials $\{P_n(z)\}_{n\in\mathbb{N}}$.

\subsection{Nonisospectral generalized mixed relativistic Toda lattice }\label{nrt}
In \cite{wang2022generalization}, it is shown that isospectral deformation of the generalized LBOPs lead to two different generalizations of the relativistic Toda (rToda) lattice. In this section we consider nonisospectral deformation to the generalized LBOPs, which actually lead to a nonisospectral generalized mixed rToda lattice.

Suppose that the spectral parameter $z$ satisfies the time evolution \eqref{lu3}, and we select a weight function $w(z;t)$ such that the moments satisfy the evolution relation
\begin{align}
\frac{d}{dt}c_j(t)=\alpha jc_j(t)+\alpha_1 c_{j+k}(t)+\alpha_2 c_{j-k}(t).\label{5-2-1}
\end{align}
Since we have 
\begin{align*}
\frac{d}{dt}c_j(t)=&\int  z^j\left(\alpha j w(z;t)+\frac{d}{dt}w(z;t)+\alpha w(z;t)\right)dz\nonumber\\
=&\alpha jc_j(t)+\int  z^j\left(\frac{d}{dt}w(z;t)+\alpha w(z;t)\right)dz,
\end{align*}
by taking the derivative of the moments with respect to $t$, it is reasonable to set
\begin{align}
\frac{d}{dt}w(z;t)+\alpha w(z;t)=(\alpha_1 z^k+\alpha_2 z^{-k})w(z;t).\label{5-2-2}
\end{align}
It is noted that a weight function satisfying the above evolution relation is constructed in \eqref{lbop-mom}.

\begin{lemma}\label{buthLBOPs}
Under the assumption of  \eqref{lu3} and \eqref{5-2-2}, the monic generalized LBOPs $\{P_n(z)\}_{n\in\mathbb{N}}$ in \eqref{lbop} satisfy the time evolution
\begin{align}
&\frac{d}{dt}P_n(z;t)+a_n\frac{d}{dt}P_{n-1}(z;t)\nonumber\\
=&n\alpha P_n(z;t)-\alpha_2\frac{a_{n-1}a_n}{b_{n-1,0}}P_{n-2}(z;t)\nonumber\\
&-\left(-(n-1)\alpha a_n+\alpha_1\left(\prod^{n}_{j=n-k}(-a_j)+\sum_{i=1}^k\prod^n_{j=n-i+1}(-a_j)b_{n-i,i-1}\right)+\alpha_2\frac{a_n}{b_{n,0}}\right)P_{n-1}(z;t).\label{5-2-4}
\end{align}
\end{lemma}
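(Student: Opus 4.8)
The plan is to follow the pattern of Lemma \ref{th6}: differentiate the biorthogonality in $t$, use the weight evolution \eqref{5-2-2}, and reduce the problem to determining the expansion coefficients of the degree-$n$ polynomial $D_n:=\frac{d}{dt}P_n+a_n\frac{d}{dt}P_{n-1}$ in the basis $\{P_i\}$. Writing $D_n=\sum_{i=0}^n\gamma_iP_i$ and invoking the biorthogonality $\langle P_i|d\mu|Q_m\rangle_k=h_i\delta_{im}$, I would extract each coefficient via $\gamma_m=h_m^{-1}\langle D_n|d\mu|Q_m\rangle_k$. Pairing against the biorthogonal partners $Q_m$ rather than the bare monomials $\lambda^m$ is the crucial choice, since $\langle P_i|d\mu|\lambda^m\rangle_k$ does \emph{not} vanish for $m>i$ and would otherwise spoil the triangular extraction.

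First I would establish the differentiated identity. Starting from $\int P_n(\lambda;t)\lambda^{-ks}w(\lambda;t)\,d\lambda=0$ for $s=0,\dots,n-1$, differentiating in $t$, noting that $\frac{d}{dt}(\lambda^{-ks})=-ks\alpha\lambda^{-ks}$ is annihilated by orthogonality and that $\frac{d}{dt}w+\alpha w=(\alpha_1\lambda^k+\alpha_2\lambda^{-k})w$ by \eqref{5-2-2}, I obtain
\begin{align*}
\langle\tfrac{d}{dt}P_n|d\mu|\lambda^s\rangle_k=-\alpha_1\langle P_n|d\mu|\lambda^{s-1}\rangle_k-\alpha_2\langle P_n|d\mu|\lambda^{s+1}\rangle_k,\quad s=0,\dots,n-1,
\end{align*}
and the analogue for $P_{n-1}$ valid up to $s=n-2$. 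The elementary shift identities $\langle F|d\mu|\lambda^{s\mp1}\rangle_k=\langle\lambda^{\pm k}F|d\mu|\lambda^s\rangle_k$, summed against the coefficients of $Q_m$, recast these pairings as $\langle\lambda^k P_n|d\mu|Q_m\rangle_k$ and $\langle\lambda^{-k}P_n|d\mu|Q_m\rangle_k$.

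The main obstacle is the negative-power term $\alpha_2\lambda^{-k}$: whereas $\lambda^kR_n$ is expanded by the $(k+2)$-term recurrence \eqref{5-1-3} into the polynomial $\rho_n:=\lambda^kR_n=P_{n+k}+\sum_{j=0}^{k-1}b_{n,j}P_{n+j}$, the object $\lambda^{-k}P_n$ is a genuine Laurent polynomial that \eqref{5-1-3} cannot reduce. This is exactly why the combination $R_n=P_n+a_nP_{n-1}$ must be used. For $m\le n-2$ the two differentiated relations combine so that $\gamma_m h_m=-\alpha_1\langle R_n|d\mu|\lambda^{-1}Q_m\rangle_k-\alpha_2\langle R_n|d\mu|\lambda Q_m\rangle_k$. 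The first pairing equals $\langle\rho_n|d\mu|Q_m\rangle_k$, which vanishes for all $m\le n-1$ because $\rho_n$ is supported on $P_n,\dots,P_{n+k}$; for the second I would expand $\lambda Q_m$ in the $Q$-basis and use $\langle R_n|d\mu|Q_l\rangle_k=h_n\delta_{nl}+a_nh_{n-1}\delta_{n-1,l}$. This forces $\gamma_m=0$ for $m\le n-3$, and since $\lambda Q_{n-2}$ is monic of degree $n-1$ it gives $\gamma_{n-2}h_{n-2}=-\alpha_2 a_nh_{n-1}$. Pairing \eqref{5-1-3} with $\lambda^n$ produces the structure relation $b_{n,0}h_n=a_nh_{n-1}$, hence $h_{n-1}/h_{n-2}=a_{n-1}/b_{n-1,0}$ and $\gamma_{n-2}=-\alpha_2 a_{n-1}a_n/b_{n-1,0}$, as stated.

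It remains to handle $\gamma_n$ and $\gamma_{n-1}$, which lie outside the combined range. Comparing leading terms, $\frac{d}{dt}P_n=n\alpha\lambda^n+\cdots$ using $\frac{d}{dt}\lambda=\alpha\lambda$ from \eqref{lu3}, so $\gamma_n=n\alpha$. For $\gamma_{n-1}$ I would treat the two pieces of $D_n$ separately: $a_n\frac{d}{dt}P_{n-1}$ has leading coefficient $a_n(n-1)\alpha$ and contributes $a_n(n-1)\alpha h_{n-1}$ to $\langle D_n|d\mu|Q_{n-1}\rangle_k$, while the relation for $P_n$ (valid up to $s=n-1$) contributes $-\alpha_1\langle\lambda^k P_n|d\mu|Q_{n-1}\rangle_k-\alpha_2\langle P_n|d\mu|\lambda Q_{n-1}\rangle_k=-\alpha_1 h_{n-1}[\lambda^k P_n]_{n-1}-\alpha_2 h_n$, where $[\lambda^k P_n]_{n-1}$ is the coefficient of $P_{n-1}$ in $\lambda^k P_n$. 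Iterating $\lambda^k P_n=\rho_n-a_n\lambda^k P_{n-1}$ yields $\lambda^k P_n=\sum_{l\ge0}(-1)^l a_na_{n-1}\cdots a_{n-l+1}\,\rho_{n-l}$, and reading off the $P_{n-1}$-coefficient gives precisely $\prod_{j=n-k}^n(-a_j)+\sum_{i=1}^k\prod_{j=n-i+1}^n(-a_j)b_{n-i,i-1}$. Dividing by $h_{n-1}$ and using $h_n/h_{n-1}=a_n/b_{n,0}$ assembles $\gamma_{n-1}$ into the stated coefficient, establishing \eqref{5-2-4}. I expect the most delicate bookkeeping to be this last iteration producing the $\alpha_1$-coefficient, together with keeping the separate $m=n-1$ computation consistent with the combined $m\le n-2$ one.
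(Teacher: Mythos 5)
Your proposal is correct, and its skeleton coincides with the paper's: differentiate the biorthogonality in $t$, use the weight evolution \eqref{5-2-2}, and exploit the $(k+2)$-term recurrence \eqref{5-1-3} to identify the few surviving coefficients. The differences are in the bookkeeping, and they are worth noting. The paper differentiates $\int P_nQ_m(\lambda^{-k};t)w\,d\lambda=0$ directly and then iterates the recurrence \emph{inside} the integrals: the $\alpha_2\lambda^{-k}P_n$ term is resolved by inverting the recurrence, $\lambda^{-k}P_n=\tfrac{1}{b_{n,0}}\bigl(P_n+a_nP_{n-1}-\lambda^{-k}(P_{n+k}+\cdots+b_{n,1}P_{n+1})\bigr)$, which produces the coefficient $a_n/b_{n,0}$ immediately and without any reference to the normalizations $h_n$; the paper first obtains a standalone expression for $\frac{d}{dt}P_n$ (still containing the non-basis term $\lambda^kP_{n-k}$) and only at the very end adds $a_n\frac{d}{dt}P_{n-1}$, the telescoping of the leftover $\lambda^k$-terms into $\rho_{n-k}$ being left implicit. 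You instead work with the combination $D_n=\frac{d}{dt}P_n+a_n\frac{d}{dt}P_{n-1}$ from the outset, extract coefficients against the dual family $Q_m$, and handle the $\alpha_2$ term by moving $\lambda^{-k}$ to the $Q$-side and expanding $\mu Q_m(\mu)$ in the monic $Q$-basis, which then requires the structure relation $b_{n,0}h_n=a_nh_{n-1}$ (which you correctly derive by pairing \eqref{5-1-3} with $\lambda^n$) to convert $h_n/h_{n-1}$ into $a_n/b_{n,0}$. Your organization buys transparency: the vanishing $\gamma_m=0$ for $m\le n-3$ and the exact sources of $\gamma_{n-2}$, $\gamma_{n-1}$, $\gamma_n$ are isolated cleanly, and the step the paper waves through (``in conjunction with the corresponding expression for $a_n\frac{d}{dt}P_{n-1}$'') becomes explicit; the paper's route buys economy, never needing the $Q$-side expansion or the $h$-ratios. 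Your computation of $[\lambda^kP_n]_{n-1}$ by iterating $\lambda^kP_n=\rho_n-a_n\lambda^kP_{n-1}$ reproduces exactly the paper's $\alpha_1$-coefficient, so the two proofs agree term by term.
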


\begin{proof}
For $m=0,1,\ldots,n-1$, the biorthogonality condition \eqref{5-1-2} gives
\begin{align*}
\int P_n(z;t)Q_m(z^{-k};t)w(z;t)dz=0.
\end{align*}
By taking the derivative of the above equation with respect to time $t$, we have
\begin{align}
0=&\int \left(\frac{d}{dt}(P_n(z;t))Q_m(z^{-k};t)+P_n(z;t)\frac{d}{dt}Q_m(z^{-k};t)\right)w(z;t)dz\nonumber\\
&+\int P_n(z;t)Q_m(z^{-k};t)\left(\frac{d}{dt}w(z;t)+\alpha w(z;t)\right)dz\nonumber\\
=&\int \left(\frac{d}{dt}(P_n(z;t))+(\alpha_1 z^k+\alpha_2 z^{-k})P_n(z;t)\right)Q_m(z^{-k};t)w(z;t)dz\nonumber\\
%=&\int \left(\frac{d}{dt}P_n(\lambda;t)-\alpha_1a_n\lambda^k P_{n-1}(\lambda;t)\right)Q_m(\lambda^{-k};t)w(\lambda;t)d\lambda\nonumber\\
%&+\frac{\alpha_2}{b_{n,0}}\int \frac{P_{n+k}(\lambda;t)+b_{n,k-1}P_{n+k-1}(\lambda;t)+\cdots+b_{n,0}P_n(\lambda;t)}{\lambda^k}Q_m(\lambda^{-k};t)w(\lambda;t)d\lambda\nonumber\\
=&\int \left(\frac{d}{dt}P_n(z;t)-\alpha_1a_nz^k P_{n-1}\right)Q_m(z^{-k};t)w(z;t)dz\nonumber\\
&+\frac{\alpha_2}{b_{n,0}}\int_0^{+\infty}a_nP_{n-1}(z;t)Q_m(z^{-k};t)w(z;t)dz\nonumber\\
=&\int \left(\frac{d}{dt}P_n(z;t)+\alpha_1a_na_{n-1}z^k P_{n-2}(z;t))\right)Q_m(z^{-k};t)w(z;t)dz\nonumber\\
&+\alpha_1(-a_n)b_{n-1,0}\int P_{n-1}(z;t)Q_m(z^{-k};t)w(z;t)dz\nonumber\\
&+\frac{\alpha_2}{b_{n,0}}\int a_nP_{n-1}(z;t)Q_m(z^{-k};t)w(z;t)dz\nonumber\\
% =\nonumber\\
=&\quad \cdots\nonumber\\
=&\int \left(\frac{d}{dt}P_n(z;t)-\alpha_1\prod^n_{j=n-k+1}(-a_j)(-z^k P_{n-k}(z;t))\right)Q_m(z^{-k};t)w(z;t)dz\nonumber\\
&+\alpha_1\sum_{i=1}^{k-1}\sum_{l=1}^i\prod^n_{j=n-i+1}(-a_j)b_{n-i,i-l}\int P_{n-l}Q_m(z^{-k};t)w(z;t)dz\nonumber\\
&+\frac{\alpha_2}{b_{n,0}}\int a_n P_{n-1}(z;t)Q_m(z^{-k};t)w(z;t)dz\nonumber\\
=&\int \left(\frac{d}{dt}P_n(z;t)-\alpha_1\prod^n_{j=n-k+1}(-a_j)(P_n(z;t)-z^k P_{n-k}(z;t))\right)Q_m(z^{-k};t)w(z;t)dz\nonumber\\
&+\alpha_1\sum_{i=1}^{k-1}\sum_{l=1}^i\prod^n_{j=n-i+1}(-a_j)b_{n-i,i-l}\int P_{n-l}(z;t)Q_m(z^{-k};t)w(z;t)dz\nonumber\\
&+\frac{\alpha_2}{b_{n,0}}\int a_n P_{n-1}(z;t)Q_m(z^{-k};t)w(z;t)dz,\label{5-2-3}
\end{align}
where we used the recurrence relation \eqref{5-1-3} and \eqref{5-2-2}.

Upon setting
$$\frac{d}{dt}P_n(z;t)=\sum_{i=0}^n\gamma_iP_i(z;t),$$
where $\gamma_i$ are some coefficients to be determined,
we obtain
\begin{align}
&\frac{d}{dt}P_n(z;t)-\alpha_1\prod^n_{j=n-k+1}(-a_j)(P_n(z;t)-z^k P_{n-k}(z;t))\nonumber\\
&+\alpha_1\sum_{i=1}^{k-1}\sum_{l=1}^i\prod^n_{j=n-i+1}(-a_j)b_{n-i,i-l}P_{n-l}(z;t)+\frac{\alpha_2}{b_{n,0}}a_nP_{n-1}(z;t)\nonumber\\
=&\gamma_nP_n(z;t)+\sum_{i=0}^{n-1}\gamma_iP_i(z;t)-\alpha_1\prod^n_{j=n-k+1}(-a_j)(P_n(z;t)-z^k P_{n-k}(z;t))\nonumber\\
&+\alpha_1\sum_{i=1}^{k-1}\sum_{l=1}^i\prod^n_{j=n-i+1}(-a_j)b_{n-i,i-l}P_{n-l}(z;t)+\frac{\alpha_2}{b_{n,0}}a_nP_{n-1}(z;t)\nonumber.
\end{align}

Comparing the coefficients of the highest powers of $z$ on both sides of the above equation shows that $\gamma_n = n\alpha $. Since the formula \eqref{5-2-3} hold for $ m=0,1,\ldots,n-1 $, according to the orthogonality condition, it is not hard to see
\begin{align}
&\frac{d}{dt}P_n(z;t)\nonumber\\
=&n\alpha P_n(z;t)-\frac{\alpha_2}{b_{n,0}}a_n P_{n-1}(z;t)\nonumber\\
&+\alpha_1\left(\prod^n_{j=n-k+1}(-a_j)(P_n(z;t)-z^k P_{n-k}(z;t))-\sum_{i=1}^{k-1}\sum_{l=1}^i\prod^n_{j=n-i+1}(-a_j)b_{n-i,i-l}P_{n-l}(z;t)\right),\nonumber
\end{align}
from which in conjunction with the corresponding expression for $a_n\frac{d}{dt}P_{n-1}(z;t)$,
%In conjunction with the fact
%\begin{align}
%&a_n\frac{d}{dt}P_{n-1}(\lambda;t)\nonumber\\
%=&(n-1)\alpha a_nP_{n-1}(\lambda;t)-\frac{\alpha_2}{b_{n-1,0}}a_na_{n-1}P_{n-2}(\lambda;t)\nonumber\\
%&+\alpha_1\left(\prod^{n-1}_{j=n-k}(-a_j)a_n(P_{n-1}(\lambda;t)-\lambda^k P_{n-k-1}(\lambda;t))-a_n\sum_{i=1}^{k-1}\sum_{l=1}^i\prod^{n-1}_{j=n-i}(-a_j)b_{n-i-1,i-l}P_{n-l-1}%(\lambda;t)\right),\nonumber
%\end{align}
we can eventually arrive at \eqref{5-2-4}.
%\begin{align}
%&\frac{d}{dt}P_n(\lambda;t)+a_n\frac{d}{dt}P_{n-1}(\lambda;t)\nonumber\\
%=&n\alpha P_n(\lambda;t)-\alpha_2\frac{a_{n-1}a_n}{b_{n-1,0}}P_{n-2}(\lambda;t)\nonumber\\
%&-\left(-(n-1)\alpha a_n+\alpha_1\left(\prod^{n}_{j=n-k}(-a_j)+\sum_{i=1}^k\prod^n_{j=n-i+1}(-a_j)b_{n-i,i-1}\right)+\alpha_2\frac{a_n}{b_{n,0}}\right)P_{n-1}(\lambda;t).\nonumber
%\end{align}
\end{proof}

Now we are ready to derive an integrable lattice from the compatibility condition between \eqref{5-1-3} and \eqref{5-2-4}. 
\begin{theorem}
    Under the assumption of  \eqref{lu3} and \eqref{5-2-2}, the recurrence coefficients $\{a_n\}$ and $\{b_{n,j}\}$ in \eqref{5-1-3} for the monic generalized LBOPs satisfy the following ODE system
    \begin{subequations}\label{5-2-6}
\begin{align}
% (a_n)_t&=\alpha a_n+\alpha_1(a_ns_n-f_n)+\alpha_2 a_n\left(\frac{1}{b_{n,0}}-\frac{1}{b_{n-1,0}}\right),\\
% (b_{n,0})_t&=k \alpha b_{n,0}+\alpha_1(b_{n,0}s_n-g_{n,0}f_{n+1})+\alpha_2(\frac{a_{n+1}}{b_{n+1,0}}b_{n,1}-\frac{a_n}{b_{n-1,0}}b_{n-1,1}),\\
% (b_{n,i})_t&=(k-i) \alpha b_{n,i}+\alpha_1(b_{n,i}s_n-g_{n,i}f_{n+i+1}+(1-g_{n,k-1})\sum_{k=0}^{i-1}\prod_{j=i-l+1}^{k}(-a_{n+j})b_{n+i-l,l})\nonumber\\
% &+\alpha_2(\frac{a_{n+i+1}}{b_{n+i+1,0}}b_{n,i+1}-\frac{a_n}{b_{n-1,0}}b_{n-1,i+1})\quad
% i=1,\ldots,k-2,\\
% (b_{n,k-1})_t&=\alpha b_{n,k-1}+\alpha_1(b_{n,k-1}s_n-g_{n,k-1}f_{n+k}+(1-g_{n,k-1})\sum_{k=0}^{k-2}\prod_{j=k-l}^{k}(-a_{n+j})b_{n+k-l-1,l})\nonumber\\
% &+\alpha_2(\frac{a_{n+k}}{b_{n+k,0}}b_{n,i+1}-\frac{a_n}{b_{n-1,0}}b_{n-1,i+1})
\frac{d a_n}{dt}=&\alpha a_n+\alpha_1(a_ns_n-\widetilde{f}_n)+\alpha_2 a_n\left(\frac{1}{b_{n,0}}-\frac{1}{b_{n-1,0}}\right),\\
\frac{d b_{n,0}}{dt}=&k \alpha b_{n,0}+\alpha_1(b_{n,0}s_n-g_{n,0}\widetilde{f}_{n+1})+\alpha_2\left(\frac{a_{n+1}}{b_{n+1,0}}b_{n,1}-\frac{a_n}{b_{n-1,0}}b_{n-1,1}\right),\\
\frac{d b_{n,i}}{dt}=&(k-i) \alpha b_{n,i}+\alpha_1\left(b_{n,i}s_n-g_{n,i}\widetilde{f}_{n+i+1}+(1-g_{n,k-1})\sum_{k=0}^{i-1}\prod_{j=i-l+1}^{k}(-a_{n+j})b_{n+i-l,l}\right)\nonumber\\
&+\alpha_2\left(\frac{a_{n+i+1}}{b_{n+i+1,0}}b_{n,i+1}-\frac{a_n}{b_{n-1,0}}b_{n-1,i+1}\right),\quad
i=1,\ldots,k-2,\\
\frac{d b_{n,k-1}}{dt}=&\alpha b_{n,k-1}+\alpha_1\left(b_{n,k-1}s_n-g_{n,k-1}\widetilde{f}_{n+k}+(1-g_{n,k-1})\sum_{k=0}^{k-2}\prod_{j=k-l}^{k}(-a_{n+j})b_{n+k-l-1,l}\right)\nonumber\\
&+\alpha_2\left(\frac{a_{n+k}}{b_{n+k,0}}-\frac{a_n}{b_{n-1,0}}\right)
\end{align}
\end{subequations}
with
\begin{align*}
% &f_n=(n-1)\alpha a_n-\alpha_1\left(\prod^{n}_{j=n-k}(-a_j)+\sum_{i=1}^k\prod^n_{j=n-i+1}(-a_j)b_{n-i,i-1}\right),\\
% &g_{n,i}=-\sum_{l=0}^{i}\prod_{j=l+1}^{i+1}\left(-\frac{1}{a_{n+j}}\right)b_{n,l},\\
% &s_n=\prod_{j=1}^k(-a_{n+j})+\sum_{l=0}^{k-1}\prod_{j=1}^l(-a_{n+j})b_{n,l}.
&\widetilde{f}_n=-\left(\prod^{k}_{j=0}(-a_{n-j})+\sum_{i=0}^{k-1}\prod^i_{j=0}(-a_{n-j})b_{n-i-1,i}\right),\\
&g_{n,i}=-\sum_{l=0}^{i}\prod_{j=l+1}^{i+1}\left(-\frac{1}{a_{n+j}}\right)b_{n,l},\\
&s_n=\prod_{j=1}^k(-a_{n+j})+\sum_{l=0}^{k-1}\prod_{j=1}^l(-a_{n+j})b_{n,l}.
\end{align*}
\end{theorem}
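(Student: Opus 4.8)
The plan is to follow the differentiate--substitute--compare strategy already used in the paper for the nonisospectral Toda and Blaszak--Marciniak lattices, now applied to the $(k+2)$-term recurrence \eqref{5-1-3}. First I would differentiate \eqref{5-1-3} in $t$. Because \eqref{lu3} gives $\frac{d}{dt}\lambda^k=k\alpha\lambda^k$, the left-hand side becomes
\begin{align*}
k\alpha\lambda^k\bigl(P_n+a_nP_{n-1}\bigr)+\lambda^k\frac{da_n}{dt}P_{n-1}+\lambda^k\Bigl(\frac{d}{dt}P_n+a_n\frac{d}{dt}P_{n-1}\Bigr),
\end{align*}
and the right-hand side becomes $\frac{d}{dt}P_{n+k}+\sum_{j=0}^{k-1}\bigl(\frac{db_{n,j}}{dt}P_{n+j}+b_{n,j}\frac{d}{dt}P_{n+j}\bigr)$. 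The crucial point is that the bracketed combination $\frac{d}{dt}P_n+a_n\frac{d}{dt}P_{n-1}$ on the left is exactly the quantity whose three-term closed form is furnished by Lemma \ref{buthLBOPs}, so I would organize the whole computation around these combinations rather than the individual derivatives.

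Next I would eliminate every time derivative in favour of the $P$-basis. On the right-hand side I would recover the individual derivatives from Lemma \ref{buthLBOPs} through the first-order relation $\frac{d}{dt}P_m=R_m-a_m\frac{d}{dt}P_{m-1}$, where $R_m$ denotes the right-hand side of \eqref{5-2-4}; iterating this relation telescopes the $a$'s and is one origin of the products $\prod(-a_j)$. On the left-hand side the factor $\lambda^k$ must be removed: applying \eqref{5-1-3} in the form $\lambda^kP_m=\sum_j b_{m,j}P_{m+j}-a_m\lambda^kP_{m-1}$ and iterating again telescopes through the $a$'s. These two telescopings are precisely what generate the auxiliary quantities $\widetilde{f}_n$, $g_{n,i}$ and $s_n$ appearing in the statement. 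After the reductions, both sides are expressed in the fixed basis $\{P_{n+k},P_{n+k-1},\dots\}$.

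Then I would equate coefficients, working from the highest index downward. Matching $P_{n+k}$ is automatic. Matching $P_{n+k-1},\dots,P_{n+1}$ delivers the evolution equations for $b_{n,k-1},\dots,b_{n,1}$, that is, the boundary case $i=k-1$ together with the generic cases $i=1,\dots,k-2$; matching $P_n$ delivers the equation for $b_{n,0}$; and the lowest relevant coefficient isolates $\frac{da_n}{dt}$, which enters through the $\lambda^k\frac{da_n}{dt}P_{n-1}$ term on the left. Assembling these recovers the system \eqref{5-2-6} term by term.

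The main obstacle will be the combinatorial bookkeeping rather than any conceptual difficulty. Because the recurrence has variable length $k$ and both the reduction of $\lambda^kP_m$ and the unravelling of $\frac{d}{dt}P_m$ produce nested sums of telescoping products, the challenge is to confirm that the collected coefficients assemble into the stated closed forms for $\widetilde{f}_n$, $g_{n,i}$ and $s_n$ \emph{uniformly} in $i$ and $k$, not merely in the low-order cases $k=1,2$. I expect that an induction on the recurrence depth, or a generating-function encoding of the products $\prod(-a_j)$, would be needed to make this uniform matching fully rigorous.
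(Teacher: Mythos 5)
Your proposal is correct and follows essentially the same route as the paper: both are compatibility computations between the $(k+2)$-term recurrence \eqref{5-1-3} and the evolution \eqref{5-2-4} of Lemma \ref{buthLBOPs}, followed by comparison of coefficients in the $P_j$-basis. The only real difference is organizational: the paper writes the two relations in matrix form $\lambda^k AP=BP$, $A\frac{dP}{dt}=LP$, with $A$ the lower bidiagonal matrix carrying the $a_n$'s, and reads the system \eqref{5-2-6} off the matrix compatibility condition $\frac{d}{dt}(A^{-1}B)=(k\alpha I+A^{-1}L)A^{-1}B-A^{-1}BA^{-1}L$. The telescoping iterations you describe --- unravelling $\frac{d}{dt}P_m=R_m-a_m\frac{d}{dt}P_{m-1}$ and $\lambda^kP_m=\cdots-a_m\lambda^kP_{m-1}$ --- are precisely the entrywise computation of $A^{-1}L$ and $A^{-1}B$, since the entries of $A^{-1}$ are exactly the products $\prod(-a_j)$; this is where $\widetilde f_n$, $g_{n,i}$ and $s_n$ come from in both treatments. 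This observation also disposes of the obstacle you flag at the end: no separate induction on recurrence depth or generating-function device is needed for uniformity in $i$ and $k$, because the closed forms fall out of the single matrix identity once $A^{-1}$ is written down, which is how the paper compresses the ``lengthy calculation'' into one step.
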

\begin{proof}
 \eqref{5-1-3} and \eqref{5-2-4} can be written in matrix form as
\begin{align}
z^k AP=BP,\quad A\frac{dP}{dt}=LP,\label{5-2-5}
\end{align}
where $P(z;t)=(P_0(z;t),P_1(z;t),P_2(z;t),\ldots)^T$ and 
\begin{align*}
&
 A=\left(
\begin{array}{ccccc}
 1 &  &  &     \\
 a_1 & 1 &  &     \\
  & a_2 & 1 &     \\
  &  &  \ddots & \ddots  \ \\
\end{array}
\right), \qquad
L=\left(
\begin{array}{ccccc}
 0 &  &  &     \\
 f_1-\frac{\alpha_2a_1}{b_{1,0}} & \alpha &  &     \\
 -\frac{\alpha_2a_1a_2}{b_{1,0}} & f_2-\frac{\alpha_2a_2}{b_{2,0}} & 2\alpha &     \\
 & -\frac{\alpha_2a_2a_3}{b_{2,0}} & f_3-\frac{\alpha_2a_3}{b_{3,0}} & 3\alpha &     \\
  &   &  \ddots &  \ddots & \ddots  \ \\
\end{array}
\right),\\
&B=\left(
\begin{array}{ccccccccc}
 b_{0,0}  &\cdots  &b_{0,k-1}  &1 &  \\
  & b_{1,0}&\cdots  &b_{1,k-1}  &1  &   \\
   & & b_{2,0}&\cdots  &b_{2,k-1}  &1  &   \\
  &&  & \ddots  & & \ddots& \ddots \\
  \end{array}
\right)
\end{align*}
with $f_n=(n-1)\alpha a_n-\alpha_1\left(\prod^{k}_{j=0}(-a_{n-j})+\sum_{i=0}^{k-1}\prod^i_{j=0}(-a_{n-j})b_{n-i-1,i}\right)$. The compatibility condition of \eqref{5-2-5} yields
\begin{align*}
\frac{d}{d t}(A^{-1}B)=(k\alpha I+ A^{-1}L)A^{-1}B-A^{-1}BA^{-1}L,
\end{align*}
where $I$ represents the identity matrix. After a lengthy calculation, we get the desired explicit expression of the nonisospectral integrable lattice \eqref{5-2-6}.
   
\end{proof}

The integrable lattice \eqref{5-2-6} is a nonisospectral generalized mixed rToda lattice incorporating positive and negative flows. 
To get a better understanding on this generalized lattice, we present some special cases below.

When $k=1$, we obtain a nonisospectral generalization of the mixed rToda lattice 
\begin{align*}
\begin{split}
\frac{d a_n}{dt}&=\alpha a_n+\alpha_1 a_n(a_{n-1}-a_{n+1}+b_{n,0}-b_{n-1,0})+\alpha_2 a_n\left(\frac{1}{b_{n,0}}-\frac{1}{b_{n-1,0}}\right),\\
\frac{d b_{n,0}}{dt}&=\alpha b_{n,0}+\alpha_1 b_{n,0}(a_n-a_{n+1})+\alpha_2 \left(\frac{a_{n+1}}{b_{n+1,0}}-\frac{a_n}{b_{n-1,0}}\right),
\end{split}
\end{align*}
in the case of $\alpha\neq0$.
It is easy to see that when $\alpha=\alpha_2=0$, the above equation reduces to the first positive flow of the rToda lattice and it gives the first negative flow of the rToda lattice when $\alpha=\alpha_1=0$.

When $k=2$, we get the following integrable lattice
\begin{align*}
\begin{split}
\frac{d a_n}{dt}
&=\alpha a_n+\alpha_1 a_n((T^3-1)a_{n-1}a_{n-2}+(1-T^2)a_{n-1}b_{n-2,1}\\
&\quad +(T-1)b_{n-1,0})+\alpha_2 a_n\left(\frac{1}{b_{n,0}}-\frac{1}{b_{n-1,0}}\right),\\
\frac{d b_{n,0}}{dt}
&=2 \alpha b_{n,0}+\alpha_1b_{n,0}((T^2-1)a_{n-1}a_n+(1-T)a_nb_{n-1,1})\\
&\quad+\alpha_2\left(\frac{a_{n+1}}{b_{n+1,0}}b_{n,1}-\frac{a_n}{b_{n-1,0}}b_{n-1,1}\right),\\
\frac{d b_{n,1}}{dt}
&=\alpha b_{n,1}+\alpha_1\left(b_{n,1}a_{n+1}(a_{n+2}-a_n)+a_nb_{n,0}-a_{n+2}b_{n+1,0}\right)+\alpha_2\left(\frac{a_{n+2}}{b_{n+2,0}}-\frac{a_n}{b_{n-1,0}}\right),
\end{split}
\end{align*}
where $T$ is the shift operator on $n$. 

%This is a nonisospectral generalized rToda lattice incoparating the second flow of the positive flows of the rToda lattice; when $\alpha=\alpha_1=0$, the above equation is the second flow of negative flows of the rToda lattice.

When $k=3$, we have
\begin{align*}
\begin{split}
\frac{d a_n}{dt}
&=\alpha a_n-\alpha_1a_n((T^4-1)a_{n-1}a_{n-2}a_{n-3}+(1-T^3)a_{n-1}a_{n-2}b_{n-3,2}\\
&\quad+(T^2-1)a_{n-1}b_{n-2,1}+(1-T)b_{n-1,0})+\alpha_2 a_n\left(\frac{1}{b_{n,0}}-\frac{1}{b_{n-1,0}}\right),\\
\frac{d b_{n,0}}{dt}
&=3 \alpha b_{n,0}-\alpha_1 b_{n,0}((T^3-1)a_na_{n-1}a_{n-2}+(1-T^2)a_{n-1}a_{n}b_{n-2,2}\\
&\quad+(T-1)a_nb_{n-1,1})+\alpha_2\bigg(\frac{a_{n+1}}{b_{n+1,0}}b_{n,1}-\frac{a_n}{b_{n-1,0}}b_{n-1,1}\bigg),\\
\frac{d b_{n,1}}{dt}
&=2 \alpha b_{n,1}+\alpha_1(-b_{n,1}a_{n+1}(a_{n+2}a_{n+3}-a_{n-1}a_{n}-a_{n+2}b_{n,2}+a_nb_{n-1,2})\\
&\quad-a_{n+2}b_{n+1,0}(b_{n,2}-a_{n+3})+a_nb_{n,0}(b_{n-1,2}-a_{n-1}))\\
&\quad+\alpha_2\bigg(\frac{a_{n+2}}{b_{n+2,0}}b_{n,2}-\frac{a_n}{b_{n-1,0}}b_{n-1,2}\bigg),\\
\frac{d b_{n,2}}{dt}
&=\alpha b_{n,2}+\alpha_1(-b_{n,2}a_{n+1}a_{n+2}(a_{n+3}-a_n)+a_{n+2}a_{n+3}b_{n+1,1}-a_na_{n+1}b_{n,1}\\
&\quad+a_nb_{n,0}-a_{n+3}b_{n+2,0})+\alpha_2\bigg(\frac{a_{n+3}}{b_{n+3,0}}-\frac{a_n}{b_{n-1,0}}\bigg).
\end{split}
\end{align*}
%and when $\alpha=\alpha_2=0$, the above equation is the third flow of the positive flows of the rToda lattice; when $\alpha=\alpha_1=0$, the above equation is the third flow of negative flows of the rToda lattice.

\subsection{d-P related to nonisospectral generalized mixed rToda}\label{rt to p}
Based on the Lax pair of the nonisospectral generalized mixed rToda lattice \eqref{5-2-5}, we implement stationary reduction to get 
\begin{align}\label{nglopdp}
z^kAP=\widehat{B}P, \qquad \frac{\partial}{\partial z}P=\widehat{M}P,
\end{align}
where 
$$\widehat{B}=B, \ \widehat{M}=M/\left(\frac{dz}{dt}\right)=\frac{1}{\alpha z }M,\ M=A^{-1}L.$$
Its compatibility condition will yield a family of d-P-type equations.

\begin{theorem}
 Under the assumption of  \eqref{lu3} and \eqref{5-2-2}, the recurrence coefficients $\{a_n\}$ and $\{b_{n,j}\}$ in \eqref{5-1-3} for the monic generalized LBOPs satisfy the following integrable difference system
 \begin{subequations}\label{5-3-1}
\begin{align}
% &\alpha a_n+\alpha_1(a_ns_n-f_n)+\alpha_2 a_n\left(\frac{1}{b_{n,0}}-\frac{1}{b_{n-1,0}}\right)=0,\\
% &k \alpha b_{n,0}+\alpha_1(b_{n,0}s_n-g_{n,0}f_{n+1})+\alpha_2(\frac{a_{n+1}}{b_{n+1,0}}b_{n,1}-\frac{a_n}{b_{n-1,0}}b_{n-1,1})=0,\\
% &(k-i) \alpha b_{n,i}+\alpha_1(b_{n,i}s_n-g_{n,i}f_{n+i+1}+(1-g_{n,k-1})\sum_{k=0}^{i-1}\prod_{j=i-l+1}^{k}(-a_{n+j})b_{n+i-l,l})\nonumber\\
% &+\alpha_2(\frac{a_{n+i+1}}{b_{n+i+1,0}}b_{n,i+1}-\frac{a_n}{b_{n-1,0}}b_{n-1,i+1})=0\quad
% i=1,\ldots,k-2,\\
% &\alpha b_{n,k-1}+\alpha_1(b_{n,k-1}s_n-g_{n,k-1}f_{n+k}+(1-g_{n,k-1})\sum_{k=0}^{k-2}\prod_{j=k-l}^{k}(-a_{n+j})b_{n+k-l-1,l})\nonumber\\
% &+\alpha_2(\frac{a_{n+k}}{b_{n+k,0}}b_{n,i+1}-\frac{a_n}{b_{n-1,0}}b_{n-1,i+1})=0.
&\alpha a_n+\alpha_1(a_ns_n-\widetilde{f}_n)+\alpha_2 a_n\left(\frac{1}{b_{n,0}}-\frac{1}{b_{n-1,0}}\right)=0,\\
&k \alpha b_{n,0}+\alpha_1(b_{n,0}s_n-g_{n,0}\widetilde{f}_{n+1})+\alpha_2\bigg(\frac{a_{n+1}}{b_{n+1,0}}b_{n,1}-\frac{a_n}{b_{n-1,0}}b_{n-1,1}\bigg)=0,\\
&(k-i) \alpha b_{n,i}+\alpha_1\big(b_{n,i}s_n-g_{n,i}\widetilde{f}_{n+i+1}+(1-g_{n,k-1})\sum_{k=0}^{i-1}\prod_{j=i-l+1}^{k}(-a_{n+j})b_{n+i-l,l}\big)\nonumber\\
&+\alpha_2\bigg(\frac{a_{n+i+1}}{b_{n+i+1,0}}b_{n,i+1}-\frac{a_n}{b_{n-1,0}}b_{n-1,i+1}\bigg)=0,\quad
i=1,\ldots,k-2,\\
&\alpha b_{n,k-1}+\alpha_1\big(b_{n,k-1}s_n-g_{n,k-1}\widetilde{f}_{n+k}+(1-g_{n,k-1})\sum_{k=0}^{k-2}\prod_{j=k-l}^{k}(-a_{n+j})b_{n+k-l-1,l}\big)\nonumber\\
&+\alpha_2\bigg(\frac{a_{n+k}}{b_{n+k,0}}-\frac{a_n}{b_{n-1,0}}\bigg)=0,
\end{align}
\end{subequations}
with the Lax pair \eqref{nglopdp}.
\end{theorem}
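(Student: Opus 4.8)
The plan is to compute the compatibility condition of the stationary Lax pair \eqref{nglopdp} and to recognize it as the stationary reduction of the nonisospectral generalized mixed rToda lattice \eqref{5-2-6}, namely the system produced from \eqref{5-2-6} by imposing $\frac{da_n}{dt}=0$ and $\frac{db_{n,j}}{dt}=0$. The stationary Lax pair is built exactly as prescribed in Section \ref{BM}: the spectral member $\lambda^kAP=\widehat{B}P$ is kept unchanged ($\widehat{B}=B$), while the temporal member $A\frac{dP}{dt}=LP$ of \eqref{5-2-5} is converted into a $\lambda$-member by dividing by $\frac{d\lambda}{dt}=\alpha\lambda$, giving $\frac{\partial P}{\partial\lambda}=\widehat{M}P$ with $\widehat{M}=\frac{1}{\alpha\lambda}M$ and $M=A^{-1}L$. (The legitimacy of this replacement at the level of solutions, i.e.\ $\frac{\partial P}{\partial t}=0$ so that $\frac{dP}{dt}=\alpha\lambda\frac{\partial P}{\partial\lambda}$, is established separately in the next subsection by an explicit weight function.)

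First I would differentiate the spectral equation $\lambda^kAP=BP$ with respect to $\lambda$. Since the entries of $A$ and $B$ are assembled from the recurrence coefficients $a_n$ and $b_{n,j}$, they carry no $\lambda$-dependence, so
\begin{align*}
k\lambda^{k-1}AP+\lambda^kA\frac{\partial P}{\partial\lambda}=B\frac{\partial P}{\partial\lambda}.
\end{align*}
Substituting $\frac{\partial P}{\partial\lambda}=\frac{1}{\alpha\lambda}A^{-1}LP$, multiplying through by $\alpha\lambda$, using $\lambda^kAP=BP$ (together with the fact that the scalar $\lambda^k$ commutes with every matrix) to clear the powers of $\lambda$, and finally normalizing by $A^{-1}$ on the left, I would reach
\begin{align*}
\left(k\alpha I+A^{-1}L\right)A^{-1}B-A^{-1}BA^{-1}L=0,
\end{align*}
where the passage from an identity applied to the vector $P=(P_0,P_1,\ldots)^{\top}$ to the corresponding matrix identity relies on the linear independence of the polynomials $\{P_n(\lambda)\}_{n\in\mathbb{N}}$ as functions of $\lambda$.

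Comparing with the compatibility condition $\frac{d}{dt}(A^{-1}B)=(k\alpha I+A^{-1}L)A^{-1}B-A^{-1}BA^{-1}L$ of the nonisospectral Lax pair \eqref{5-2-5}, the displayed identity is precisely $\frac{d}{dt}(A^{-1}B)=0$; hence the compatibility of \eqref{nglopdp} coincides with the stationary reduction of \eqref{5-2-6}. It then remains to expand the matrix commutator $A^{-1}L\cdot A^{-1}B-A^{-1}B\cdot A^{-1}L$ entry by entry: reading off the main diagonal and the successive subdiagonals, and collecting the nested products of the $a_n$ and $b_{n,j}$ into the closed forms $\widetilde{f}_n$, $g_{n,i}$ and $s_n$, reproduces the four families of relations in \eqref{5-3-1}. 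This is the very bookkeeping that yielded \eqref{5-2-6}, now with the left-hand time derivatives set to zero, and the Lax pair is \eqref{nglopdp} by construction.

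I expect the main obstacle to be exactly this final component-wise expansion. The matrices $A$, $L$ and $B$ are infinite and banded with bandwidth growing in $k$, so the entries of $A^{-1}$ and of the commutator involve long alternating products of the $a_n$; disentangling them into the compact quantities $\widetilde{f}_n$, $g_{n,i}$, $s_n$ demands careful index tracking. The computation is nonetheless entirely mechanical once the structural identification with $\frac{d}{dt}(A^{-1}B)=0$ has been made, so no idea beyond the lattice derivation of \eqref{5-2-6} is required.
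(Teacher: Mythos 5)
Your proposal is correct and follows essentially the same route as the paper: the paper's proof likewise differentiates the spectral relation of \eqref{nglopdp}, substitutes $\widehat{M}=\frac{1}{\alpha\lambda}A^{-1}L$, and reads off that the resulting compatibility condition is exactly the stationary form $(k\alpha I+A^{-1}L)A^{-1}B-A^{-1}BA^{-1}L=0$ of the lattice \eqref{5-2-6}, whose entry-wise expansion is \eqref{5-3-1}. Your write-up merely fills in the intermediate algebra (clearing powers of $\lambda$ via the spectral equation and invoking linear independence of the $P_n$) that the paper leaves implicit.
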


\begin{proof}
From the compatibility condition of \eqref{nglopdp}, we obtain
$$kz^{k-1}I+z^k\widehat{M}=\frac{\partial}{\partial z}(A^{-1}\widehat{B})+A^{-1}\widehat{B}\widehat{M},$$
from which the following explicit expressions in \eqref{5-3-1} follows.
\end{proof}
When $k=1$, it follows from \eqref{5-3-1} that
\begin{subequations}
\begin{align}
&\alpha a_n+\alpha_1 a_n(a_{n-1}-a_{n+1}+b_{n,0}-b_{n-1,0})+\alpha_2 a_n\left(\frac{1}{b_{n,0}}-\frac{1}{b_{n-1,0}}\right)=0,\label{hxdlt1}\\
&\alpha b_{n,0}+\alpha_1 b_{n,0}(a_n-a_{n+1})+\alpha_2 \left(\frac{a_{n+1}}{b_{n+1,0}}-\frac{a_n}{b_{n-1,0}}\right)=0.\label{hxdlt2}
\end{align}
\end{subequations}
Dividing the both sides of \eqref{hxdlt1} by $a_n$ and making a summation, we  can get
\begin{align}
\alpha_1(b_{n,0}-a_{n+1}-a_{n})+\alpha_2 \frac{1}{b_{n,0}}+n\alpha+\beta_0=0,\label{hxdlt3}
\end{align}
where $\beta_0=\alpha_1(a_1-b_{0,0})-\alpha_2 \frac{1}{b_{0,0}}$. Similarly, dividing the both sides of \eqref{hxdlt2} and making a summation gives
\begin{align}
\alpha_1a_{n+1}-\alpha_2 \frac{a_{n+1}}{b_{n,0}b_{n+1,0}}-n\alpha-\gamma_0=0,\label{hxdlt4}
\end{align}
where $\gamma_0=\alpha_1a_1-\alpha_2 \frac{a_1}{b_{0,0}b_{1,0}}$. Observe that \eqref{hxdlt4} produces the following equations
\begin{align}
&a_{n+1}=\frac{(n\alpha+\gamma_0)b_{n,0}b_{n+1,0}}{\alpha_1 b_{n,0}b_{n+1,0}-\alpha_2},\nonumber\\
&a_{n}=\frac{((n-1)\alpha+\gamma_0)b_{n-1,0}b_{n,0}}{\alpha_1 b_{n-1,0}b_{n,0}-\alpha_2},\nonumber
\end{align}
plugging which into \eqref{hxdlt3}, we get
\begin{align*}
\alpha_1\left(b_{n,0}-\frac{(n\alpha+\gamma_0)b_{n,0}b_{n+1,0}}{\alpha_1 b_{n,0}b_{n+1,0}-\alpha_2}-\frac{((n-1)\alpha+\gamma_0)b_{n-1,0}b_{n,0}}{\alpha_1 b_{n-1,0}b_{n,0}-\alpha_2}\right)+\alpha_2 \frac{1}{b_{n,0}}+n\alpha+\beta_0=0.
\end{align*}
By further simplifying it, we finally obtain
\begin{align*}
\frac{\sqrt{-\frac{\alpha_1}{\alpha_2}}\frac{n\alpha+\gamma_0}{\alpha_1}}{1+\frac{\sqrt{-\frac{\alpha_1}{\alpha_2}}}{b_{n+1,0}}\frac{\sqrt{-\frac{\alpha_1}{\alpha_2}}}{b_{n,0}}}+\frac{\sqrt{-\frac{\alpha_1}{\alpha_2}}\frac{(n-1)\alpha+\gamma_0}{\alpha_1}}{1+\frac{\sqrt{-\frac{\alpha_1}{\alpha_2}}}{b_{n,0}}\frac{\sqrt{-\frac{\alpha_1}{\alpha_2}}}{b_{n-1,0}}}=-\frac{\sqrt{-\frac{\alpha_2}{\alpha_1}}}{b_{n,0}}+\sqrt{-\frac{\alpha_1}{\alpha_2}}b_{n,0}+\sqrt{-\frac{\alpha_1}{\alpha_2}}\frac{n\alpha+\beta_0}{\alpha_1},
\end{align*}
which is nothing but the alternate d-P$_{\text{II}}$ (alt d-P$_{\text{II}}$) appearing in \cite{nijhoff1996study}
\begin{align*}
\frac{z_n}{x_{n+1}x_n+1}+\frac{z_{n-1}}{x_nx_{n-1}+1}=-x_n+\frac{1}{x_n}+z_n+\mu
\end{align*}
with
\begin{align}
z_n=\sqrt{-\frac{\alpha_1}{\alpha_2}}\frac{n\alpha+\gamma_0}{\alpha_1},\quad x_n=\frac{\sqrt{-\frac{\alpha_1}{\alpha_2}}}{b_{n,0}},\quad \mu=-\sqrt{-\frac{\alpha_1}{\alpha_2}}\frac{\gamma_0-\beta_0}{\alpha_1}.\nonumber
\end{align} 

% Therefore, \eqref{5-3-1} is the generalized alt d-P$_{\text{II}}$.

\begin{remark}
We obtain a generalized family of d-P equations including  the alt d-P$_{\text{II}}$ from a nonisospectral deformation of the generalized LBOPs. In fact, \eqref{5-3-1} together with its Lax pair is the stationary form of nonisospectral generalized mixed rToda lattice ~\eqref{5-2-6} and its Lax pair, which is associated with a nonisospectral deformation of the generalized LBOPs. 
\end{remark}

\subsection{Realization of stationary reduction}\label{rtasr}
In the previous subsections, we have obtained the nonisospectral generalized mixed rToda lattice, based on which, the d-P-type equations are obtained as stationary reduction of the nonisospectral equations.
% Next, we will illustrate from two aspects of the nonisospectral equations and Lax pair that Painlev\'{e}-type equations can be obtained from nonisospectral equations.
In this subsection we will show the feasibility of stationary reduction from the perspective of solutions.

\begin{lemma}\label{th8}
Under the assumption of  \eqref{lu3} together with $\frac{\alpha_2}{\alpha}>0$ and $\frac{\alpha_1}{\alpha}<0$, define the moments as
\begin{align}
c_j(t)=\int^{+\infty}_0z^j(0)e^{j\alpha t}e^{\frac{\alpha_1}{k\alpha}z^k(0)e^{k\alpha t}-\frac{\alpha_2}{k\alpha}z^{-k}(0)e^{-k\alpha t}}dz(0).\label{lbop-mom}
\end{align}
Then the moments simultaneously satisfy the time evolution \eqref{5-2-1} and
\begin{align}
\frac{d}{dt}c_j(t)=-\alpha c_j(t).\label{5-4-1}
\end{align}
\end{lemma}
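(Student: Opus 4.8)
The plan is to follow the two-step pattern already used for Lemma \ref{th5} (the ordinary-OP case on $[0,+\infty)$): first I would construct the deformed weight so that the moments satisfy \eqref{5-2-1}, and then I would derive \eqref{5-4-1} from an integration-by-parts identity in the spectral variable, combined with \eqref{5-2-1}.

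First I would pass to the initial-time variable. Since \eqref{lu3} gives $\lambda(t)=\lambda(0)e^{\alpha t}$, on the interval $(0,+\infty)$ I would write
\begin{align*}
c_j(t)=\int_0^{+\infty}\lambda^j(t)\,d\mu(\lambda;t)=\int_0^{+\infty}\lambda^j(0)e^{j\alpha t}f(\lambda(0);t)\,d\lambda(0),
\end{align*}
with $e^{j\alpha t}f(\lambda(0);t)$ the unknown deformed weight. Differentiating under the integral sign gives $\frac{d}{dt}c_j(t)=\alpha j c_j(t)+\int_0^{+\infty}\lambda^j(0)e^{j\alpha t}\frac{d}{dt}f(\lambda(0);t)\,d\lambda(0)$, and matching the remaining integral against $\alpha_1 c_{j+k}(t)+\alpha_2 c_{j-k}(t)$ forces
\begin{align*}
\frac{d}{dt}f(\lambda(0);t)=\left(\alpha_1\lambda^k(0)e^{k\alpha t}+\alpha_2\lambda^{-k}(0)e^{-k\alpha t}\right)f(\lambda(0);t).
\end{align*}
Integrating this separable equation in $t$ reproduces exactly the exponential factor appearing in \eqref{lbop-mom}, which establishes that the moments \eqref{lbop-mom} satisfy \eqref{5-2-1}.

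Next I would prove \eqref{5-4-1}. The key is the integration-by-parts identity obtained with $v=\lambda(0)$ and $u=\lambda^j(0)e^{j\alpha t}f(\lambda(0);t)$, namely
\begin{align*}
c_j(t)=\bigl[\lambda^{j+1}(0)e^{j\alpha t}f\bigr]_0^{+\infty}-\int_0^{+\infty}\lambda(0)\frac{d}{d\lambda(0)}\!\left(\lambda^j(0)e^{j\alpha t}f\right)d\lambda(0).
\end{align*}
Using $\frac{d}{d\lambda(0)}f=\left(\frac{\alpha_1}{\alpha}\lambda^{k-1}(0)e^{k\alpha t}+\frac{\alpha_2}{\alpha}\lambda^{-k-1}(0)e^{-k\alpha t}\right)f$ and collecting powers of $\lambda(0)$, the surviving integral contributes $-jc_j-\frac{\alpha_1}{\alpha}c_{j+k}-\frac{\alpha_2}{\alpha}c_{j-k}$, yielding the moment recurrence $\alpha_1 c_{j+k}(t)+\alpha_2 c_{j-k}(t)=-(j+1)\alpha c_j(t)$. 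Substituting this into \eqref{5-2-1} collapses the right-hand side to $\alpha j c_j-(j+1)\alpha c_j=-\alpha c_j$, which is precisely \eqref{5-4-1}.

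The only delicate point, which I would treat most carefully, is the vanishing of the boundary term $\bigl[\lambda^{j+1}(0)e^{j\alpha t}f\bigr]_0^{+\infty}$ together with the convergence of the integrals. As $\lambda(0)\to+\infty$ the exponent is dominated by $\frac{\alpha_1}{k\alpha}\lambda^k(0)e^{k\alpha t}$, which tends to $-\infty$ exactly because of the hypothesis $\frac{\alpha_1}{\alpha}<0$, forcing super-polynomial decay; as $\lambda(0)\to 0^+$ the exponent is dominated by $-\frac{\alpha_2}{k\alpha}\lambda^{-k}(0)e^{-k\alpha t}$, which tends to $-\infty$ exactly because of $\frac{\alpha_2}{\alpha}>0$. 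Thus the two sign assumptions in the statement are precisely what render both endpoints harmless, and this simultaneous control at $0$ and at $+\infty$ — needed because the weight now carries both a positive power $\lambda^k(0)$ and a negative power $\lambda^{-k}(0)$ — is the genuinely new feature compared with the one-sided analysis in Lemma \ref{th5}.
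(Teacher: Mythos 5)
Your proof is correct and follows essentially the same route as the paper: construct $f(\lambda(0);t)$ from the separable ODE forced by \eqref{5-2-1}, then integrate by parts in $\lambda(0)$ to obtain the recurrence $\alpha_1 c_{j+k}(t)+\alpha_2 c_{j-k}(t)=-(j+1)\alpha c_j(t)$ and substitute it back into \eqref{5-2-1} to get \eqref{5-4-1}. If anything, your handling of the boundary term is slightly more precise than the paper's, which records the vanishing of $\lambda(0)f$ whereas the term actually arising is $\lambda^{j+1}(0)e^{j\alpha t}f$; your observation that the two sign hypotheses give super-exponential decay of $f$ at both endpoints, killing every power of $\lambda(0)$ (including the negative powers that occur for Laurent-type moments with $j<0$), is exactly the right justification.
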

\begin{proof}
    Based on \eqref{lu3}, we have
    \begin{align*}
z=z(0)e^{\alpha t},
\end{align*}
where $z(0)$ is the spectral parameter at initial time. In the case of the integral interval $(0,+\infty)$, the moments can be written in terms of $z(0)$ as
\begin{align*}
c_j(t)=&\int^{+\infty}_0z^jd\mu(z;t)=\int^{+\infty}_0z^j(0)e^{j\alpha t}f(z(0);t)dz(0),
\end{align*}
from which we get
\begin{align*}
\frac{d}{dt}c_j(t)=\alpha jc_j(t)+\int^{+\infty}_0z^j(0)e^{j\alpha t}\frac{df(z(0);t)}{dt}dz(0).
\end{align*}
In order to ensure that the moments satisfy the time evolution relation \eqref{5-2-1}, it is sufficient to set for $f(z(0); t)$ 
\begin{align*}
\frac{d}{dt}f(z(0);t)=(\alpha_1 z^k(0)e^{k\alpha t}+\alpha_2 z^{-k}(0)e^{-k\alpha t})f(z(0);t).
\end{align*}
Consequently, it is reasonable to define
$$f(z(0);t)=e^{\frac{\alpha_1}{k\alpha}z^k(0)e^{k\alpha t}-\frac{\alpha_2}{k\alpha}z^{-k}(0)e^{-k\alpha t}},$$
from which we get the specific expressions of the moments as \eqref{lbop-mom}.

The remaining part is to confirm the evolution relation \eqref{5-4-1}. Using the relations at the boundary
$$\lim_{z(0)\to0}z(0)f(z(0);t)=\lim_{z(0)\to +\infty}z(0)f(z(0);t)=0,$$
we integrate the moments by parts to obtain 
\begin{align*}
c_j(t)
=&-\int^{+\infty}_0z(0)jz^{j-1}(0)e^{j\alpha t}f(z(0);t)dz(0)\notag\\
&-\int^{+\infty}_0z^{j+1}(0)\frac{\alpha_1}{\alpha}z^{k-1}(0)e^{k\alpha t}
e^{j\alpha t}f(z(0);t)dz(0)\notag\\
&-\int^{+\infty}_0z^{j+1}(0)\frac{\alpha_2}{\alpha}z^{-k-1}(0)e^{-k\alpha t}e^{j\alpha t}f(z(0);t)dz(0)\notag\\
=&-jc_j-\frac{\alpha_1}{\alpha}c_{j+k}-\frac{\alpha_2}{\alpha}c_{j-k},
\end{align*}
combining which and the evolution equation ~\eqref{5-2-1}, we immediately get \eqref{5-4-1}. \end{proof}

\iffalse
By the aid of \eqref{5-4-1}, it is easy to get the determinants $\tau_n^{(l)}$ satisfy the evolution 
$$\frac{d}{dt}\tau_n^{(l)}=-n\alpha\tau_n^{(l)},$$
then
\begin{align}
\frac{da_n}{dt}=\frac{d}{dt}\left(\frac{\tau_{n-1}^{(0)}\tau_{n+1}^{(k)}}{\tau_{n}^{(k)}\tau_n^{(0)}}\right)=0.\label{5-4-2}
\end{align}
That is to say, we construct accurate moments that satisfy the set conditions, so that nonisospectral integrable equations can achieve stationary reduction.
\fi

\begin{theorem}
Under the definition of the moments \eqref{lbop-mom}, we have 
\begin{align*}
\frac{d}{dt}b_{n,j}(t)=\frac{d}{dt}a_n(t)=0\qquad \text{and} \qquad \frac{d}{dt}\gamma_{n,j}(t)=0,
\end{align*}
where $\{a_n\}$ and $\{b_{n,j}\}$ are the recurrence coefficients in \eqref{5-1-3} for the monic generalized LBOPs and  $\{\gamma_{n,j}\}$ are the coefficients of the generalized LBOPs with the expansions $P_n(z)=\sum_{j=0}^n\gamma_{n,j}z^j$.
\end{theorem}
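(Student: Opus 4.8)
The plan is to follow the template already established in the proofs of Theorem~\ref{th:stationary_toda} and Theorem~\ref{th7}: every object named in the statement is built as a ratio of moment determinants (or of inner products that expand into moments), and the evolution~\eqref{5-4-1} furnished by Lemma~\ref{th8} forces each ingredient to evolve by a pure scaling, so that all the ratios are constant in $t$. The whole argument therefore rests on~\eqref{5-4-1}, namely $\frac{d}{dt}c_j(t)=-\alpha c_j(t)$ for every integer index $j$.

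First I would establish the scaling law for the underlying determinants. Since every entry $c_{l+i-kj}$ of $\tau_n^{(l)}=\det(c_{l+i-kj})_{i,j=0}^{n-1}$ has time derivative equal to $-\alpha$ times itself, differentiating the $n\times n$ determinant row by row (each of the $n$ rows contributing one factor $-\alpha$) gives
\begin{align*}
\frac{d}{dt}\tau_n^{(l)}=-n\alpha\,\tau_n^{(l)}
\end{align*}
for all $n$ and all shifts $l$. It matters here that~\eqref{5-4-1} holds for negative indices as well, so that both $\tau_n^{(0)}$ and $\tau_n^{(k)}$ obey the same law. From the explicit formula $a_n=\tau_{n-1}^{(0)}\tau_{n+1}^{(k)}/(\tau_n^{(k)}\tau_n^{(0)})$ given in Section~\ref{introductionLaurent}, the logarithmic derivative telescopes, $-(n-1)\alpha-(n+1)\alpha+n\alpha+n\alpha=0$, so $\frac{d}{dt}a_n=0$.

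Next I would freeze the expansion coefficients. Writing $\gamma_{n,j}=T_{n,j}/\tau_n^{(0)}$, where $T_{n,j}$ is the $n\times n$ cofactor determinant extracted from the determinant representation~\eqref{lbop1} of $P_n$, each entry of $T_{n,j}$ is again a single moment, so $\frac{d}{dt}T_{n,j}=-n\alpha T_{n,j}$ and hence $\frac{d}{dt}\gamma_{n,j}=0$; the same computation applied to~\eqref{lbop} yields $\frac{d}{dt}\beta_{n,j}=0$ for the coefficients of $Q_n$. Moreover $h_n=\langle P_n(\lambda)|d\mu|Q_n(\lambda)\rangle_{k}$ expands into a finite combination, with the now time-independent coefficients $\gamma,\beta$, of moments, and therefore scales as $\frac{d}{dt}h_n=-\alpha h_n$.

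Finally, for the $b_{n,j}$ I would, exactly as in the proof of Theorem~\ref{th7}, pair the recurrence~\eqref{5-1-3} against $Q_{n+j}$ and use biorthogonality to isolate
\begin{align*}
b_{n,j}\,h_{n+j}=\langle \lambda^k(P_n(\lambda)+a_nP_{n-1}(\lambda))|d\mu|Q_{n+j}(\lambda)\rangle_{k},\qquad j=0,1,\ldots,k-1.
\end{align*}
Expanding the right-hand side in the already-frozen coefficients $\gamma,\beta$ and the constant $a_n$, it becomes a finite linear combination, with time-independent coefficients, of moments of the form $c_{k+l-kl'}$; it therefore scales by $-\alpha$, matching the scaling of $h_{n+j}$, and the ratio yields $\frac{d}{dt}b_{n,j}=0$. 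I expect the only genuinely delicate point to be this last step: one must check that the mixed-Laurent structure of~\eqref{5-1-3}, with $\lambda^k(P_n+a_nP_{n-1})$ on the left, produces only moments obeying~\eqref{5-4-1}, so that the single power of $-\alpha$ in the numerator cancels exactly that of $h_{n+j}$. The determinant scaling law underlying all the other quantities is, by contrast, entirely routine.
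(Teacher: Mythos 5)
Your proposal is correct and follows essentially the same route as the paper's proof: the scaling law $\frac{d}{dt}c_j=-\alpha c_j$ (valid for all integer indices, including negative ones) forces $\frac{d}{dt}\tau_n^{(l)}=-n\alpha\tau_n^{(l)}$ and $\frac{d}{dt}T_{n,j}=-n\alpha T_{n,j}$, which freezes $a_n$, $\gamma_{n,j}$, $\beta_{n,j}$, and then pairing the recurrence \eqref{5-1-3} with $Q_{n+j}$ and matching the $-\alpha$ scalings of both sides freezes $b_{n,j}$. The only (cosmetic) difference is that for $b_{n,j}h_{n+j}$ and $h_n$ you expand everything in the already-frozen polynomial coefficients times moments, whereas the paper packages the same quantities as explicit moment determinants $M_{n+1,l}$ divided by $\tau_n^{(0)}$; both yield the identical cancellation of scalings.
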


\begin{proof}
Let
\begin{align}
P_n(z;t)=&\gamma_{n,n}z^n+\gamma_{n,n-1}z^{n-1}+\gamma_{n,n-2}z^{n-2}+\cdots+\gamma_{n,0},\nonumber\\
% Q_n(\lambda^{-k};t)=&\beta_{n,n}(\lambda^{-k})^n+\beta_{n,n-1}(\lambda^{-k})^{n-1}+\beta_{n,n-2}(\lambda^{-k})^{n-2}+\cdots+\beta_{n,0},\nonumber
Q_n(z;t)=&\beta_{n,n}z^n+\beta_{n,n-1}z^{n-1}+\beta_{n,n-2}z^{n-2}+\cdots+\beta_{n,0},\nonumber
\end{align}
where $\gamma_{n,n}=\beta_{n,n}=1$. From the determinant expression of $P_n(z;t)$ given  in \eqref{lbop}, we have
$$\gamma_{n,j}=\frac{T_{n,j}}{\tau_n^{(0)}},\,\,\, j=0,\ldots,n-1,$$ where
\begin{align*}
T_{n,j}=(-1)^{n+j}
\det
\begin{pmatrix}
c_{p-kq}
\end{pmatrix}_{p=0,1,\ldots,n,\,p\neq j}^{q=0,1,\ldots,n-1}.
%\left|
%\begin{array}{ccccccc}
% c_0 & c_{-k} & c_{-2k} & \cdots & c_{-(n-1)k} \\
% c_1 & c_{1-k} &c_{1-2k} & \cdots & c_{1-(n-1)k} \\
% \vdots  & \vdots  & \vdots & \ddots& \vdots  \\
% c_{j-1} & c_{j-1-k}  & c_{j-1-2k} & \cdots & c_{j-1-(n-1)k} \\
% c_{j+1} & c_{j+1-k}  & c_{j+1-2k} & \cdots & c_{j+1-(n-1)k} \\
%  \vdots  & \vdots  & \vdots &\ddots &\vdots  \\
% c_{n} & c_{n-k}  & c_{n-2k} & \cdots & c_{n-(n-1)k}
%\end{array}
%\right|.\nonumber
\end{align*}
By the aid of \eqref{5-4-1}, it is easy to get 
\begin{align}
\frac{d}{dt}T_{n,j}=-n \alpha T_{n,j}, \quad \frac{d}{dt}\tau_n^{(l)}=-n\alpha\tau_n^{(l)},
\end{align}
from which $\frac{d}{dt}\gamma_{n, j}=0$ immediately follows. In a similar way, we also have $\frac{d}{dt}\beta_{n,j}=0$.

% By the aid of \eqref{5-4-1}, it is easy to get the determinants $\tau_n^{(l)}$ satisfy the evolution 
% $$\frac{d}{dt}\tau_n^{(l)}=-n\alpha\tau_n^{(l)},$$
% then
% \begin{align}
% \frac{da_n}{dt}=\frac{d}{dt}\left(\frac{\tau_{n-1}^{(0)}\tau_{n+1}^{(k)}}{\tau_{n}^{(k)}\tau_n^{(0)}}\right)=0.\label{5-4-2}
% \end{align}

By imposing the inner products with $Q_{n+j}(z;t),\ j=0,1,\ldots,k-1$ on the both sides of recurrence relation ~\eqref{5-1-3}, we have
\begin{align}
b_{n,j}h_{n+j}=&\int^{+\infty}_0z^k(P_n(z;t)+a_nP_{n-1}(z;t))Q_{n+j}(z^{-k};t)w(z;t)dz,\nonumber\\
=&\frac{1}{\tau_n^{(0)}}\sum_{l=0}^{n+j}\beta_{n+j,l}M_{n+1,l}+\frac{a_n}{\tau_{n-1}^{(0)}}\sum_{l=0}^{n+j}\beta_{n+j,l}M_{n,l},\nonumber
\end{align}
where
\begin{align*}
M_{n+1,l}=
\det
 \begin{pmatrix}
c_{p-kq}\quad c_{p-k(l-1)}
\end{pmatrix}_{q=0,1,\ldots,n-1}^{p=0,1,\ldots,n}.
%\left|
%\begin{array}{ccccccc}
% c_0 & c_{-k} & \cdots &c_{-(n-1)k}  & c_{-(l-1)k} \\
% c_1 & c_{1-k}& \cdots &c_{1-(n-1)k}  & c_{1-(l-1)k} \\
% \vdots  & \vdots & \ddots& \vdots  & \vdots  \\
% c_{n} & c_{n-k}& \cdots  & c_{n-(n-1)k}  & c_{n-(l-1)k}
%\end{array}
%\right|.
\end{align*}
With the help of ~\eqref{5-4-1}, we easily get
$$\frac{d}{dt}M_{n+1,l}=-(n+1) \alpha M_{n+1,l}, \quad \frac{d}{dt}h_n=-\alpha h_n,$$
resulting in 
$$\frac{d}{dt}b_{n,j}=0.$$
In addition, since we also get from ~\eqref{5-4-1}
$$\frac{d}{dt}\tau_n^{(l)}=-n\alpha\tau_n^{(l)},$$
we obtain
\begin{align}
\frac{d}{dt}a_n=\frac{d}{dt}\left(\frac{\tau_{n-1}^{(0)}\tau_{n+1}^{(k)}}{\tau_{n}^{(k)}\tau_n^{(0)}}\right)=0.\nonumber
\end{align}
\end{proof}

The above theorem implies that the nonisospectral equation ~\eqref{5-2-6} can indeed allow the stationary reduction so that a class of d-P-type equations \eqref{5-3-1} are obtained. Since the coefficients of the OPs are independent of the time $t$, it also shows that  $\frac{\partial }{\partial t}P_n(z;t)=0$, which means that the Lax pair of the nonisospectral generalized mixed rToda lattice can indeed admit stationary reduction so that a class of d-P-type equations are obtained as well as their Lax pairs.

%That is to say, we construct accurate moments that satisfy the set conditions, so that the Lax pair of the nonisospectral integrable equations can achieve stationary reduction.

\section{Nonisospectral deformation of Cauchy bi-OPs and d-P}\label{Cauchy}
Motivated by the result that the explicit expression of the multipeakon solutions of the Degasperis--Procesi equation are given by Cauchy bi-moment determinants \cite{lundmark2003multi,lundmark2005degasperis}, Cauchy bi-OPs were first proposed in \cite{bertola2010cauchy} and have been extensively investigated in the literature, including the inspired Cauchy two-matrix model,  Toda lattice of CKP-type (C-Toda), as well as their generalizations etc. (see e.g. \cite{bertola2009cauchy,bertola2009cubic,bertola2010cauchy,bertola2014cauchy,bertola2013strong,chang2018degasperis,chang2021two,li2019cauchy,forrester2021fox,lago2019mixed,bertola2014universality,chang2018degasperis,gonzalez2022strong}). It is noted that the C-Toda lattice can arise from isospectral deformation of the Cauchy bi-OPs \cite{chang2018degasperis}.
However, to the best of our knowledge, there is no existing literature that addresses Painlev\'{e}-type equations in relation to the Cauchy bi-OPs. In this section, we shall consider nonisospectral deformation of the Cauchy bi-OPs and derive a nonisospectral C-Toda lattice, from which we implement stationary reduction so that an integrable difference system is obtained as well as its Lax pair. Additionally, we construct a concrete weight function to guarantee the feasibility of the stationary reduction process.

%from the Lax pair of nonisospectral equations. In addition, we construct a concrete weight function to prove that the stationary reduction can be realized.

%Reference \cite{chang2018degasperis} introduced how to derive a finite Toda lattice of CKP-type (C-Toda) from Cauchy bi-OPs.

%assuming that the moments satisfy certain time evolution, and using the biorthogonality condition, 

\subsection{Cauchy bi-OPs}\label{introductionCauchy}
\begin{definition}
Define an inner product $\left \langle \cdot ,\cdot\right\rangle$ of the form
\begin{align}
\left \langle f(z) ,g(z)\right\rangle=\iint \frac{f(x)g(y)}{x+y}d\mu_1(x)d\mu_2(y),\label{6-1-1}
\end{align}
where $d\mu_1,d\mu_2$ are two positive measures defined on the integral interval (here the integral interval is omitted).
The Cauchy bi-OPs denote a pair of polynomials $\{P_n(z)\}_{n=0}^{\infty}$ and $\{Q_n(z)\}_{n=0}^{\infty}$ that are orthogonal with respect to the inner product ~\eqref{6-1-1}, that is, they satisfy the biorthogonality condition
\begin{align}
\left \langle P_n(z) ,Q_m(z)\right\rangle=h_n\delta_{n,m}.\label{6-1-2}
\end{align}
\end{definition}
Suppose that each $P_n(z)$ or $Q_n(z)$ is a monic polynomial of degree $n$ in $z$ and $\mu_1,\ \mu_2$ are two positive measures on $\mathbb{R}_+$ such that all the bimoments
\begin{align*}
I_{i,j}=\left \langle z^i ,z^j\right\rangle=\iint \frac{x^iy^j}{x+y}d\mu_1(x)d\mu_2(y),\quad i,j=0,1,\cdots
\end{align*}
and the single moments
\begin{align*}
\alpha_i=\int x^id\mu_1(x),\quad \beta_i=\int y^i d\mu_2(y), \quad i=0,1,\cdots
\end{align*}
exist. 
In this case, it can be shown that $\tau_n=\det(I_{i,j})_{i,j=0}^{n-1}\neq 0$, as a result of which, the monic Cauchy bi-OPs are uniquely determined by the biorthogonality condition \cite{bertola2010cauchy}. In fact, based on the biorthogonality condition ~\eqref{6-1-2}, the explicit determinant representations of the monic Cauchy bi-OPs are given by
\begin{align*}
	&P_n(z)=\frac{1}{\tau_n}
	\det\left(
\begin{array}{cccc}
I_{i,j}\\
z^j \\
\end{array}
\right)_{\substack{i=0,\ldots,n-1\\ j=0,\ldots,n\,\,\,\,\,\,\,\,}},\quad
%	\begin{vmatrix}
%		I_{0,0}&I_{0,1}&\cdots&I_{0,n}\\
%		I_{1,0}&I_{1,1}&\cdots&I_{1,n}\\
%		\vdots&\vdots & \ddots& \vdots&\\
%		I_{n-1,0}&I_{n-1,1}&\cdots&I_{n-1,n}\\
%        1&z&\cdots&z^n
%	\end{vmatrix},\quad
&Q_n(z)=\frac{1}{\tau_n}
\det
\begin{pmatrix}
I_{i,j}&z^i \\
\end{pmatrix}_{\substack{i=0,\ldots,n,\,\,\,\,\,\\ j=0,\ldots,n-1}},
%\begin{vmatrix}
%		I_{0,0}&\cdots&I_{0,n}&1\\
%		I_{1,0}&\cdots&I_{1,n}&z\\
%		\vdots & \ddots& \vdots&\vdots\\
%		I_{n,0}&\cdots&I_{n,n-1}&z^n
%	\end{vmatrix},
\end{align*}
and $h_n$ can be expressed as
\begin{align*}
h_n=\frac{\tau_{n+1}}{\tau_n}.
\end{align*}
It is noted that there holds the following relation for the moments
\begin{align}
I_{i+1,j}+I_{i,j+1}=\alpha_i\beta_j.\label{6-1-5}
\end{align}
Furthermore, the Cauchy bi-OPs satisfy the four-term recurrence relation  and admit the generalized Christoffel–Darboux formulation, and their zeros exhibit an interlacing pattern \cite{bertola2010cauchy}.

In the following, we focus on the Cauchy bi-OPs with symmetric measures, that is, $d\mu_1(x)=d\mu_2(x)=w(x)dx$. In such case, $P_n(z)=Q_n (z)$ and the biorthogonality condition ~\eqref{6-1-2} and moments can be rewritten as
\begin{align}
&\left \langle P_n(z) ,P_m(z)\right\rangle=\iint \frac{P_n(x)P_m(y)}{x+y}w(x)w(y)dxdy=h_n\delta_{n,m},\label{6-1-6}\\
&I_{i,j}=\iint \frac{x^iy^j}{x+y}w(x)w(y)dxdy,\quad i,j=0,1,\ldots,\label{6-1-7}\\
&\alpha_i=\beta_i=\int x^iw(x)dx,\quad i=0,1,\ldots,\label{6-1-8}
\end{align}
where $w(x)$ is the weight function.

By introducing
\begin{align}
a_n=-\frac{\int P_{n}w(x)dx}{\int P_{n-1}w(x)dx},\label{6-1-9}
\end{align}
and using the biorthogonality conditions ~\eqref{6-1-6}, it can be shown that the symmetric Cauchy bi-OPs satisfy the four-term recurrence relation
\begin{align}
z(P_n(z)+a_nP_{n-1}(z))
=P_{n+1}(z)+b_nP_n(z)+c_nP_{n-1}(z)+d_nP_{n-2}(z),\label{6-1-10}
\end{align}
with $P_{-1}(z)=0, \ P_{0}(z)=1$, where the recurrence coefficients $\{a_n,b_n,c_n,d_n\}$ can be expressed in terms of the variables $\{u_n,v_n\}$ as follows (see \cite{chang2018degasperis})
$$a_n=-\sqrt{\frac{v_nu_n}{v_{n-1}}},\ b_n=\frac{1}{2}v_n-\sqrt{\frac{v_nu_n}{v_{n-1}}},\ c_n=-u_n+\frac{1}{2}\sqrt{v_nu_nv_{n-1}},\ d_n=u_{n-1}\sqrt{\frac{v_nu_n}{v_{n-1}}},$$
with
\begin{align*}
u_n=\frac{\tau_{n+1}\tau_{n-1}}{(\tau_n)^2},\quad v_n=\frac{(\sigma_{n+1})^2}{\tau_{n+1}\tau_n},
\end{align*}
\begin{align*}
\sigma_n=
\det
\begin{pmatrix}
\beta_i&I_{i,j} \\
\end{pmatrix}_{\substack{i=0,\ldots,n,\,\,\,\,\,\\ j=0,\ldots,n-1}},
\quad
\tau_n=
\det
\begin{pmatrix}
I_{i,j} \\
\end{pmatrix}_{i,j=0}^{n-1}.
%\left|
%\begin{array}{ccccc}
% I_{0,0}  & I_{0,1}  &\cdots&I_{0,n-1} \\
% I_{1,0}   & I_{1,1}  &\cdots&I_{1,n-1}\\
%  \vdots   &\vdots   & \ddots& \vdots   \\
%I_{n-1,0}   & I_{n-1,1}  &\cdots&I_{n-1,n-1}
%\end{array}
%\right|.
\end{align*}

% The C-Toda lattice was obtained by imposing isospectral deformation on Cauchy bi-OPs \cite{chang2018degasperis}. Very recently, in \cite{krichever2022constrained},  Krichever and Zabrodin introduced the so-called \textit{constrained Toda (C-Toda) hierarchy} as a certain subhierarchy of the 2D Toda lattice. In fact, we can prove that two flows produced by isospectral deformation  of Cauchy bi-OPs coincide with the first two members in this hierarchy; see Remark \ref{remmark_sec6}.

% \begin{align}
% \iint_{\mathbb{R}_{+}^{2}}\frac{x^iy^j}{x+y}\left(\frac{d}{dt}w(x;t)w(y,t)+ w(x;t)\frac{d}{dt}w(y;t)+2\alpha w(x;t)w(y,t)\right)dxdy
% \end{align}

\subsection{Nonisospectral C-Toda lattice }\label{nct}
In this subsection, we derive a nonisospectral C-Toda lattice by considering nonisospectral deformation of the monic Cauchy bi-OPs. To this end, we introduce the time variable $t$ into the measure and suppose that the spectral parameters are also related to $t$. We then deduce a time evolution equation satisfied by $\{P_n(x;t)\}_{n=0}^{\infty}$. By investigating the compatibility condition between this evolution equation and the recurrence relation, we obtain the nonisospectral C-Toda lattice.
% we first use the biorthogonality condition to construct the Lax pair of nonisospectral C-Toda lattice through a large number of calculation, and then naturally obtain the nonisospectral C-Toda lattice.

We note that the C-Toda lattice was obtained by imposing isospectral deformation on Cauchy bi-OPs \cite{chang2018degasperis}. Very recently, in \cite{krichever2022constrained},  Krichever and Zabrodin introduced the so-called \textit{constrained Toda (C-Toda) hierarchy} as a certain subhierarchy of the 2D Toda lattice. In fact, we can prove that two flows produced by isospectral deformation  of Cauchy bi-OPs coincide with the first two members in this hierarchy; see Remark \ref{remmark_sec6}.

Suppose that the integral variables $x,y$ and the spectral parameters $z$ satisfy the time evolution
\begin{align}
 \frac{d}{dt}x=\alpha x,\quad \frac{d}{dt}y=\alpha y ,\quad \frac{d}{dt}z=\alpha z,\label{6-2-1}
\end{align}
and we seek for an appropriate weight function to ensure that the moments admit the time evolution
\begin{align}
\frac{d}{dt}I_{i,j}(t)=&\alpha(i+j-1) I_{i,j}(t)+\alpha_1(I_{i+1,j}(t)+I_{i,j+1}(t))\nonumber\\
&+\alpha_2 (I_{i+2,j}(t)+I_{i,j+2}(t)).\label{6-2-2}
\end{align}
Since taking the derivation of the moments \eqref{6-1-7} gives
\begin{align*}
\frac{d}{dt}I_{i,j}(t)
=&\langle\alpha(i+j-1)x^i,y^j\rangle\nonumber\\
&+\iint \frac{x^iy^j}{x+y}\left(\frac{d}{dt}w(x;t)w(y;t)+ w(x;t)\frac{d}{dt}w(y;t)+2\alpha w(x;t)w(y,t)\right)dxdy\nonumber\\
=&\alpha(i+j-1) I_{i,j}(t)\nonumber\\
&+\iint \frac{x^iy^j}{x+y}\left(\frac{d}{dt}w(x;t)w(y;t)+ w(x;t)\frac{d}{dt}w(y;t)+2\alpha w(x;t)w(y;t)\right)dxdy,
\end{align*}
we have
\begin{align*}
&\iint \frac{x^iy^j}{x+y}\left(\frac{d}{dt}w(x;t)w(y;t)+ w(x;t)\frac{d}{dt}w(y;t)+2\alpha w(x;t)w(y;t)\right)dxdy\nonumber\\
=&\alpha_1(I_{i+1,j}(t)+I_{i,j+1}(t))+\alpha_2 (I_{i+2,j}(t)+I_{i,j+2}(t)).
\end{align*}
Therefore it is reasonable to consider the weight function satisfying
\begin{align}
&\frac{d}{dt}(w(x;t))w(y;t)+ w(x;t)\frac{d}{dt}w(y;t)+2\alpha w(x;t)w(y;t)\nonumber\\
=&(\alpha_1(x+y)+\alpha_2(x^2+y^2))w(x;t) w(y;t).\label{6-2-3}
\end{align}
Such an explicit  weight function exist; see \eqref{6-4-2-3} for the corresponding moments.
\begin{lemma}\label{th9}
Under the assumption of  \eqref{6-2-1} and \eqref{6-2-2}, the monic Cauchy bi-OPs $\{P_n(z;t)\}_{n\in \mathbb{N}}$ satisfy the evolution relation
%\begin{align}
%&\frac{d}{dt}P_n(z;t)\nonumber\\
%=&n\alpha P_n(z;t)-(\alpha_1-\alpha_2a_{n+1})(zP_n(z;t)-P_{n+1}(z;t)+(a_n-b_n)P_n(z;t))\nonumber\\
%&+\left(\alpha_1\frac{d_{n+1}}{a_{n+1}}-\alpha_2(d_{n+1}+(b_n-a_n)c_n-\frac{d_{n+1}}{a_{n+1}}\left(\frac{c_{n+1}}{a_{n+1}}+\frac{c_n}{a_n}-\frac{d_{n+2}}{a_{n+1}a_{n+2}}-\frac{d_{n+1}}{a_na_{n+1}}\right)\right)P_{n-1}(z;t)\nonumber\\
%&\cdot +\alpha_2(a_n(b_n-a_n)-c_n+a_nb_{n-1}-a_{n-1}a_n)(zP_{n-1}(z;t)-P_n(z;t))\nonumber\\
%&-\alpha_2\left(\frac{d_nd_{n+1}}{a_na_{n+1}}+d_n(b_n-a_n)+x(d_n-a_nc_{n-1}+a_{n-1}a_nb_{n-2})\right)P_{n-2}(z;t)\nonumber\\
%&+\alpha_2 z a_n(z a_{n-2}a_{n-1}+d_{n-1}-a_{n-1}c_{n-2})P_{n-3}(z;t)-\alpha_2 z a_{n-1}a_nd_{n-2}P_{n-4}(z;t).\label{6-2-4}
%\end{align}
\begin{align}
&\frac{d}{dt}P_n(z;t)\nonumber\\
=&n\alpha P_n-(\alpha_1-\alpha_2a_{n+1})(zP_n-P_{n+1}+(a_n-b_n)P_n)\nonumber\\
&+\left(\alpha_1\frac{d_{n+1}}{a_{n+1}}-\alpha_2(d_{n+1}+(b_n-a_n)c_n-\frac{d_{n+1}}{a_{n+1}}\left(\frac{c_{n+1}}{a_{n+1}}+\frac{c_n}{a_n}-\frac{d_{n+2}}{a_{n+1}a_{n+2}}-\frac{d_{n+1}}{a_na_{n+1}}\right)\right)P_{n-1}\nonumber\\
&+\alpha_2(a_n(b_n-a_n)-c_n+a_nb_{n-1}-a_{n-1}a_n)(zP_{n-1}-P_n)\nonumber\\
&-\alpha_2\left(\frac{d_nd_{n+1}}{a_na_{n+1}}+d_n(b_n-a_n)+x(d_n-a_nc_{n-1}+a_{n-1}a_nb_{n-2})\right)P_{n-2}\nonumber\\
&+\alpha_2 z a_n(z a_{n-2}a_{n-1}+d_{n-1}-a_{n-1}c_{n-2})P_{n-3}-\alpha_2 z a_{n-1}a_nd_{n-2}P_{n-4}.\label{6-2-4}
\end{align}
\end{lemma}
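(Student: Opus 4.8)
The plan is to mirror the method of the earlier lemmas (Lemmas \ref{th4} and \ref{buthLBOPs}), the Laurent case \ref{buthLBOPs} being the closest template since its statement already has the $\tfrac{d}{dt}P_n+a_n\tfrac{d}{dt}P_{n-1}$ shape: differentiate the biorthogonality condition in $t$, turn the kernel/measure/weight contribution into multiplication by a polynomial via \eqref{6-2-3}, and then extract the expansion coefficients of the time derivative in the basis $\{P_k\}$. The symmetric setting is what makes this clean: by \eqref{6-1-6} we have genuine orthogonality $\langle P_n,P_m\rangle=h_n\delta_{nm}$, so $\langle f,P_m\rangle$ and $\langle P_n,g\rangle$ simply read off $h_m$, resp.\ $h_n$, times the relevant coefficient of $f$, resp.\ $g$.

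First I would differentiate $\langle P_n(z;t),P_m(z;t)\rangle=0$ for $m=0,\ldots,n-1$. The one genuinely new ingredient compared with the single-integral cases is the Cauchy kernel: from \eqref{6-2-1} one gets $\tfrac{d}{dt}(x+y)=\alpha(x+y)$, hence $\tfrac{d}{dt}\tfrac{1}{x+y}=-\tfrac{\alpha}{x+y}$, while the two measures $dx,dy$ each contribute a Jacobian factor $\alpha$. Feeding in the weight evolution \eqref{6-2-3}, all the geometric and weight contributions collapse and the differentiated identity becomes
\begin{align*}
0=\Bigl\langle \tfrac{d}{dt}P_n+(\alpha_1 z+\alpha_2 z^2)P_n,\,P_m\Bigr\rangle+\Bigl\langle P_n,\,\tfrac{d}{dt}P_m+(\alpha_1 z+\alpha_2 z^2)P_m\Bigr\rangle-\alpha\langle P_n,P_m\rangle,
\end{align*}
where $\tfrac{d}{dt}P_n$ is the total derivative (it carries the monomial scaling through $z(t)$, so its $P_n$-coefficient is $n\alpha$). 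For $m\le n-1$ the last bracket and $\langle P_n,\tfrac{d}{dt}P_m\rangle$ vanish by orthogonality and degree.

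The next step is to isolate the surviving pieces. In the second slot, $(\alpha_1 z+\alpha_2 z^2)P_m$ has a nonzero $P_n$-component only for $m=n-1$ (through $zP_{n-1}$ and $z^2P_{n-1}$) and $m=n-2$ (through $z^2P_{n-2}$); the required coefficients follow from the four-term recurrence \eqref{6-1-10}, e.g.\ the $P_n$-coefficient of $zP_{n-1}$ is $1$ and that of $z^2P_{n-1}$ is $b_n-a_n+b_{n-1}-a_{n-1}$. These pin down the coefficients of $\tfrac{d}{dt}P_n+(\alpha_1 z+\alpha_2 z^2)P_n$ at indices $n-1$ and $n-2$, up to the ratios $h_n/h_{n-1}$ and $h_n/h_{n-2}$. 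The decisive simplification is the determinantal identity $h_n/h_{n-1}=u_n=-d_{n+1}/a_{n+1}$, whence $h_n/h_{n-2}=u_nu_{n-1}=d_nd_{n+1}/(a_na_{n+1})$, which comes straight from the $u_n,v_n$ parametrization of $a_n,b_n,c_n,d_n$; this is exactly what produces the factors $d_{n+1}/a_{n+1}$ and $d_nd_{n+1}/(a_na_{n+1})$ seen in \eqref{6-2-4}. The coefficients at indices $n,n+1,n+2$ are read off directly from the leading behaviour of $(\alpha_1 z+\alpha_2 z^2)P_n$ via \eqref{6-1-10}.

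Finally I would form the combination $\tfrac{d}{dt}P_n+a_n\tfrac{d}{dt}P_{n-1}$. This combination is forced by the shape of \eqref{6-1-10}, whose left side is $z(P_n+a_nP_{n-1})$; taking it lets me recollect the answer into the mixed blocks such as $zP_n-P_{n+1}+(a_n-b_n)P_n$ and $zP_{n-1}-P_n$ that appear on the right of \eqref{6-2-4}, instead of a fully expanded basis form. The hard part will be the bookkeeping: expanding $z^2P_n$ and $z^2P_{n-1}$ through the four-term, $a_n$-twisted recurrence spawns many cross terms (products like $a_nc_n$ and $d_{n+1}c_{n+1}/a_{n+1}^2$, and chains descending to $P_{n-4}$), and these must be reorganized using \eqref{6-1-10} together with $d_{n+1}/a_{n+1}=-h_n/h_{n-1}$ and the bimoment identity \eqref{6-1-5} to reproduce the stated coefficients. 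The degree $n-3$ and $n-4$ terms, which only the $z^2$-action generates, are where the computation is most delicate and errors most likely.
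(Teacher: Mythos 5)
Your proposal is correct in outline and begins exactly as the paper's Appendix~\ref{cauchy6.1} does: differentiate $\langle P_n,P_m\rangle=0$, absorb the kernel derivative and the two Jacobian factors, and use \eqref{6-2-3} to reduce everything to the identity $0=\langle \tfrac{d}{dt}P_n+(\alpha_1x+\alpha_2x^2)P_n,P_m\rangle+\langle P_n,(\alpha_1y+\alpha_2y^2)P_m\rangle$ for $m\le n-1$. From that point on, however, you take a genuinely different route. The paper never separates the two slots: it treats $\alpha_1(x+y)$ and $\alpha_2(x^2+y^2)$ as units and repeatedly invokes the kernel-collapse identity \eqref{6-2-7}, namely $\langle(x+y)(P_{k+1}+a_{k+1}P_k),P_m\rangle=\int(P_{k+1}+a_{k+1}P_k)w\,dx\cdot\int P_mw\,dy=0$, which is nothing but the definition \eqref{6-1-9} of $a_k$; iterating this with the four-term recurrence \eqref{6-1-10} trades every $y$-multiplication for an $x$-multiplication and produces precisely the mixed blocks $xP_n-P_{n+1}+(a_n-b_n)P_n$, $xP_{n-1}-P_n$, etc., of \eqref{6-2-4}, after which the proof closes by observing that both sides minus $n\alpha P_n$ have degree at most $n-1$ and are orthogonal to all $P_m$ with $m\le n-1$. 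You instead split the slots and evaluate $\langle P_n,y^jP_m\rangle$ by expanding $y^jP_m$ in the basis and using $\langle P_n,P_k\rangle=h_n\delta_{nk}$: this kills all $m\le n-3$, and for $m=n-1,n-2$ only the top expansion coefficients survive (your computation $[y^2P_{n-1}]_n=(b_n-a_n)+(b_{n-1}-a_{n-1})$ is correct), giving a banded expansion of $\tfrac{d}{dt}P_n+(\alpha_1x+\alpha_2x^2)P_n$ whose two nontrivial coefficients involve $h_n/h_{n-1}=u_n=-d_{n+1}/a_{n+1}$ and $h_n/h_{n-2}=u_nu_{n-1}=d_nd_{n+1}/(a_na_{n+1})$; both ratio identities are right and both factors do appear in \eqref{6-2-4}, so your formula is consistent with the paper's. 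What your route buys is a shorter, more transparent argument that avoids the kernel trick entirely and makes the banded structure of $\tfrac{d}{dt}P_n+(\alpha_1x+\alpha_2x^2)P_n$ manifest, but it also requires the $u_n,v_n$ parametrization (for the $h$-ratios), which the paper's manipulations never need; conversely, the specific mixed form \eqref{6-2-4}, which is what later feeds the Lax matrix $V_n$, does not come out for free in your scheme — re-expanding $(\alpha_1x+\alpha_2x^2)P_n$ through the non-banded multiplication operator is essentially the same volume of bookkeeping the paper spends inside the inner products. One small correction: your final step of forming $\tfrac{d}{dt}P_n+a_n\tfrac{d}{dt}P_{n-1}$ is borrowed from the Laurent case (Lemma \ref{buthLBOPs}), but this lemma asserts a formula for $\tfrac{d}{dt}P_n$ alone, so no such combination is needed; you should simply rearrange your banded expression via \eqref{6-1-10}.
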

% For the relevant proof of theorem \ref{th9}, please see appendix \ref{cauchy6.1}.
\begin{proof}
Please refer to Appendix \ref{cauchy6.1} for the detailed proof.
\end{proof}

Using the recurrence relation \eqref{6-1-10} and the evolution relation \eqref{6-2-4}, we can expand the derivative formulae for $P_n,P_{n-1},P_{n-2}$ in terms of the linear combinations of themselves. In fact, we have
\begin{align*}
\frac{d}{dt}P_n(z;t)
=&(n\alpha-\alpha_1a_n+\alpha_2\left(a_n(a_{n+1}+a_n+a_{n-1}-b_n-b_{n-1}-z)+c_n)\right)P_n\nonumber\\
&+\left((\alpha_1-\alpha_2a_{n+1})(za_n-c_n)+\alpha_2(za_n-c_n)(z+b_n-a_n)+\frac{d_{n+1}}{a_{n+1}}\alpha_1\right.
\nonumber\\
&-\alpha_2d_{n+1}-\frac{d_{n+1}}{a_{n+1}}\left(\alpha_2\left(\frac{c_{n+1}}{a_{n+1}}+\frac{c_n}{a_n}-\frac{d_{n+2}}{a_{n+1}a_{n+2}}-\frac{d_{n+1}}{a_na_{n+1}}\right)\right)P_{n-1}\nonumber\\
&-d_n\left(\alpha_1-\alpha_2(a_{n+1}+a_n-\frac{d_{n+1}}{a_na_{n+1}}-b_n-z)\right)P_{n-2},\\%\label{6-2-13}
%\end{align}
%\begin{align}
\frac{d}{dt}P_{n-1}(z;t)
=&\left(\alpha_1-\alpha_2\left(a_n+a_{n-1}-\frac{d_n}{a_na_{n-1}}-b_{n-1}-z\right)\right)P_n\nonumber\\
&+\bigg((n-1)\alpha-(\alpha_1-\alpha_2a_n)(a_{n-1}-b_{n-1}+z)+\alpha_2\bigg(a_{n-1}(a_{n-1}+a_{n-2}\nonumber\\
&\qquad-2b_{n-1}-b_{n-2})+c_{n-1}-\frac{d_n}{a_na_{n-1}}z-z^2+\frac{b_{n-1}d_n}{a_na_{n-1}}+b_{n-1}^2\bigg)\bigg)P_{n-1}\nonumber\\
&+\left(\alpha_1\frac{d_n}{a_n}-\alpha_2d_n\left(1+\frac{c_n}{a_n^2}-\frac{d_{n+1}}{a_{n}^2a_{n+1}}-\frac{d_n}{a_{n}^2a_{n-1}}+\frac{z}{a_n}\right)\right)P_{n-2},\\%\label{6-2-14}
%\end{align}
%\begin{align}
\frac{d}{dt}P_{n-2}(z;t)
=&\bigg(-\frac{\alpha_1}{a_{n-1}}+\alpha_2\bigg(1+\frac{c_{n-1}}{a_{n-1}^2}-\frac{d_n}{a_{n-1}^2a_n}-\frac{d_{n-1}}{a_{n-1}^2a_{n-2}}+\frac{z}{a_{n-1}}\bigg)\bigg)P_n\nonumber\\
&+\bigg(\alpha_1-\alpha_2\bigg(a_{n-1}+a_{n-2}-b_{n-1}-b_{n-2}-\frac{d_{n-1}}{a_{n-2}a_{n-1}}\bigg)\nonumber\\
&\quad+(z-b_{n-1})\bigg(\frac{\alpha_1}{a_{n-1}}-\frac{\alpha_2}{a_{n-1}}\bigg(\frac{c_{n-1}}{a_{n-1}}-\frac{d_n}{a_na_{n-1}}-\frac{d_{n-1}}{a_{n-2}a_{n-1}}+z\bigg)\bigg)\bigg)P_{n-1}\nonumber\\
&+\bigg((n-2)\alpha-(\alpha_1-\alpha_2a_{n-1})(a_{n-2}-b_{n-2})+\alpha_2\bigg(a_{n-2}(a_{n-2}+a_{n-3}-2b_{n-2}\nonumber\\
&\quad-b_{n-3})+c_{n-2}-2z^2+\frac{b_{n-2}d_{n-1}}{a_{n-1}a_{n-2}}+b_{n-2}^2-z a_{n-1}\bigg(\frac{c_{n-1}}{a_{n-1}^2}-\frac{d_n}{a_na_{n-1}^2}\bigg)\bigg)\nonumber\\
&\quad-c_{n-1}\bigg(\frac{\alpha_1}{a_{n-1}}-\alpha_2(1+\frac{c_{n-1}}{a_{n-1}^2}-\frac{d_n}{a_{n-1}^2a_n}-\frac{d_{n-1}}{a_{n-1}^2a_{n-2}}+\frac{z}{a_{n-1}}\bigg)\bigg)\bigg)P_{n-2}.%\label{6-2-15}
\end{align*}
Consequently, we immediately derive the following overdetermined system in matrix form
\begin{align}
\varphi_{n+1}=A_n\varphi_n,\quad \frac{d\varphi_n}{dt}=B_n\varphi_n,\label{6-2-16}
\end{align}
where $\varphi_n(z;t)=(P_{n-2}(z;t),P_{n-1}(z;t),P_n(z;t))^T$, and
\begin{align*}
A_n=\left(
\begin{array}{ccc}
 0 & 1 & 0 \\
 0 & 0 & 1 \\
 -d_n & z  a_n-c_n & z -b_n \\
\end{array}
\right),\quad
B_n=\left(
\begin{array}{ccc}
 e_{11} & e_{12} & e_{13} \\
 e_{21} & e_{22} & e_{23} \\
 e_{31} & e_{32} & e_{33} \\
\end{array}
\right),
\end{align*}
with 
\begin{align}
e_{11}=&\alpha _2 \Big(\frac{b_{n-2} d_{n-1}}{a_{n-1} a_{n-2}}-a_{n-2} (2 b_{n-2}-a_{n-2}+b_{n-3})-z  a_{n-1} (\frac{c_{n-1}}{a_{n-1}^2}-\frac{d_n}{a_{n-1}^2 a_n})\nonumber\\
&\qquad+a_{n-3} a_{n-2}+b_{n-2}^2+c_{n-2}-2 z ^2\Big)-(a_{n-2}-b_{n-2}) (\alpha _1-\alpha _2 a_{n-1})\nonumber\\
&-c_{n-1} \Big(-\alpha _2 (\frac{c_{n-1}}{a_{n-1}^2}-\frac{d_{n-1}}{a_{n-1}^2 a_{n-2}}-\frac{d_n}{a_{n-1}^2 a_n}+1)-\frac{\alpha _2 z }{a_{n-1}}+\frac{\alpha _1}{a_{n-1}}\Big)+\alpha  (n-2),\nonumber\\
e_{12}=&( z -b_{n-1}) \Big(-\alpha _2 (\frac{c_{n-1}}{a_{n-1}^2}-\frac{d_{n-1}}{a_{n-1}^2 a_{n-2}}-\frac{d_n}{a_{n-1}^2 a_n}+\frac{ z }{a_{n-1}})+\frac{\alpha _1}{a_{n-1}}\Big)\nonumber\\
&+\alpha _2 (\frac{d_{n-1}}{a_{n-1} a_{n-2}}-a_{n-2}+b_{n-2}+b_{n-1})+(\alpha _1-\alpha _2 a_{n-1}),\nonumber\\
e_{13}=&\alpha _2 (\frac{c_{n-1}}{a_{n-1}^2}-\frac{d_{n-1}}{a_{n-1}^2 a_{n-2}}-\frac{d_n}{a_{n-1}^2 a_n}+\frac{z }{a_{n-1}}+1)-\frac{\alpha _1}{a_{n-1}},\nonumber\\
e_{21}=&-\alpha _2 (\frac{c_n d_n}{a_n^2}-\frac{d_n^2}{a_n^2 a_{n-1}}-\frac{d_{n+1} d_n}{a_n^2 a_{n+1}}+\frac{z   d_n}{a_n}+d_n)+\frac{\alpha _1 d_n}{a_n},\nonumber\\
e_{22}=&\alpha _2 \Big(\frac{b_{n-1} d_n}{a_n a_{n-1}}-a_{n-1}(2 b_{n-1}-a_{n-1})-a_{n-1} b_{n-2}-\frac{z  d_n}{a_{n-1} a_n}+a_{n-2} a_{n-1}\nonumber\\
&\qquad+b_{n-1}^2+c_{n-1}-z ^2\Big)-(\alpha _1-\alpha _2 a_n) (a_{n-1}-b_{n-1}+z )+\alpha  (n-1),\nonumber\\
e_{23}=&\alpha _2 (-a_{n-1}+b_{n-1}+z )+\frac{\alpha _2 d_n}{a_{n-1} a_n}+(\alpha _1-\alpha _2 a_n),\nonumber\\
e_{31}=&-d_n (\alpha _1-\alpha _2 a_{n+1})-\alpha _2 (d_n (-a_n+b_n+z )+\frac{d_{n+1} d_n}{a_n a_{n+1}}),\nonumber\\
e_{32}=&-\alpha _2 (c_n (b_n-a_n)+\frac{c_{n+1} d_{n+1}}{a_{n+1}^2}+\frac{c_n d_{n+1}}{a_n a_{n+1}}-\frac{d_{n+1} d_{n+2}}{a_{n+1}^2 a_{n+2}}-\frac{d_{n+1}^2}{a_{n+1}^2 a_n}-z ^2 a_n+d_{n+1})\nonumber\\
&+\alpha _2 z  (a_n \left(b_n-a_n\right)-c_n)+(\alpha _1-\alpha _2 a_{n+1}) (z  a_n-c_n)+\frac{\alpha _1 d_{n+1}}{a_{n+1}},\nonumber\\
e_{33}=&-\alpha _2 (a_n (b_n-a_n)+a_n b_{n-1}+z  a_n-a_{n-1} a_n-c_n)-a_n (\alpha _1-\alpha _2 a_{n+1})+\alpha  n.\nonumber
\end{align}
 The compatibility condition in~\eqref{6-2-16} yields a matrix representation of the integrable lattice 
\begin{align}
\frac{dA_n}{dt}=B_{n+1}A_n-A_nB_n,
\end{align}
from which, after a lengthy calculation, we obtain the explicit expression for the nonisospectral lattice. In summary, we have the following theorem.

\iffalse
Now we are ready to obtain the nonisospectral C-Toda lattice by making use of the compatibility condition of recurrence relation \eqref{6-1-10} and time evolution \eqref{6-2-13},\eqref{6-2-14}, and \eqref{6-2-15}. In matrix form, we have

After a lengthy calculation, we end up with the exact expressions for the nonisospectral C-Toda lattice:
\fi
\begin{theorem}
Under the assumption of  \eqref{6-2-1} and \eqref{6-2-2}, the variables $\{u_n(t),v_n(t)\}$ in the four-term recurrence relation \eqref{6-1-10}
for the monic Cauchy bi-OPs $\{P_n(x;t)\}_{n\in \mathbb{N}}$ satisfy the following generalized nonisospectral C-Toda lattice
\begin{subequations}\label{6-2-17}
% \begin{align}
% \frac{d}{dt}u_n=&2\alpha u_n+\alpha _1u_n(\tilde{b}_n-\tilde{b}_{n-1})-\alpha _2u_n(\frac{1}{2}(\tilde{b}_{n-1}^2-\tilde{b}_n^2)\nonumber\\
% &+2(u_{n+1}-u_{n-1})-2(\sqrt{\tilde{b}_n\tilde{b}_{n+1}u_{n+1}}-\sqrt{\tilde{b}_{n-2}\tilde{b}_{n-1}u_{n-1}})),\\
% \frac{d}{dt}\tilde{b}_n=&\alpha\tilde{b}_n+2\alpha _1(\sqrt{\tilde{b}_n\tilde{b}_{n+1}u_{n+1}}-\sqrt{\tilde{b}_{n-1}\tilde{b}_nu_n})-2\alpha _2(\frac{1}{2}(\tilde{b}_{n-1}\sqrt{\tilde{b}_{n-1}\tilde{b}_nu_n}-\tilde{b}_n\sqrt{\tilde{b}_n\tilde{b}_{n+1}u_{n+1}})\nonumber\\
% &-\sqrt{\tilde{b}_n\tilde{b}_{n+2}u_{n+1}u_{n+2}}+\sqrt{\tilde{b}_{n-2}\tilde{b}_nu_{n-1}u_n}-\frac{1}{2}(\tilde{b}_{n+1}\sqrt{\tilde{b}_n\tilde{b}_{n+1}u_{n+1}}-\tilde{b}_n\sqrt{\tilde{b}_{n-1}\tilde{b}_nu_n})),
% \end{align}
\begin{align}
\frac{d}{dt}u_n=&2\alpha u_n+\alpha _1u_n(v_n-v_{n-1})\nonumber\\
&-\frac{1}{2}\alpha _2u_n\bigg((v_{n-1}^2-v_n^2)+4(u_{n+1}-u_{n-1})\nonumber\\
&\qquad\qquad-4\left(\sqrt{v_nv_{n+1}u_{n+1}}-\sqrt{v_{n-2}v_{n-1}u_{n-1}}\right)\bigg),\\
\frac{d}{dt}v_n=&\alpha v_n+2\alpha _1\left(\sqrt{v_nv_{n+1}u_{n+1}}-\sqrt{v_{n-1}v_nu_n}\right)\nonumber\\
&\qquad-\alpha _2\bigg(\left(v_{n-1}\sqrt{v_{n-1}v_nu_n}-v_n\sqrt{v_nv_{n+1}u_{n+1}}\right)\nonumber\\
&\qquad\qquad-2\left(\sqrt{v_nv_{n+2}u_{n+1}u_{n+2}}-\sqrt{v_{n-2}v_nu_{n-1}u_n}\right)\nonumber\\
&\qquad\qquad-\left(v_{n+1}\sqrt{v_nv_{n+1}u_{n+1}}-v_n\sqrt{v_{n-1}v_nu_n}\right)\bigg),
\end{align}
\end{subequations}
with the Lax pair \eqref{6-2-16}.
\end{theorem}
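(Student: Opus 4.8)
The plan is to obtain \eqref{6-2-17} as the zero-curvature condition of the overdetermined linear system \eqref{6-2-16}, whose two halves are already in hand. The spatial operator $U_n$ is the companion matrix read off from the four-term recurrence relation \eqref{6-1-10} (solving it for $P_{n+1}$ and setting $\varphi_n=(P_{n-2},P_{n-1},P_n)^T$), while the temporal operator $V_n$ is exactly the matrix of coefficients produced by the three displayed expansions of $\frac{d}{dt}P_n$, $\frac{d}{dt}P_{n-1}$, $\frac{d}{dt}P_{n-2}$ immediately preceding the statement, which rest on the evolution relation \eqref{6-2-4} of Lemma \ref{th9}. Differentiating $\varphi_{n+1}=U_n\varphi_n$ along the flow and substituting $\frac{d\varphi_n}{dt}=V_n\varphi_n$ yields
\begin{align*}
\frac{dU_n}{dt}=V_{n+1}U_n-U_nV_n,
\end{align*}
so the whole problem reduces to expanding this $3\times 3$ matrix identity and reading off the dynamical equations.

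First I would compute $\frac{dU_n}{dt}$ carefully, remembering that $U_n$ depends on $t$ both through the recurrence coefficients $a_n,b_n,c_n,d_n$ and through the spectral variable $z=z(t)$. Since $\frac{d}{dt}z=\alpha z$ by \eqref{6-2-1}, the chain rule contributes an extra term $\alpha z\,\partial_z U_n$ whose only nonzero entries sit in the bottom row; this is precisely the origin of the explicit $\alpha$- and $2\alpha$-proportional (nonisospectral) terms in \eqref{6-2-17}. Next I would exploit the companion structure of $U_n$: its first two rows are constant, so the top two rows of the matrix identity impose only consistency relations among the entries of $V_n$ and $V_{n+1}$, which should hold automatically because $V_n$ was built from genuine $t$-derivatives of the polynomial sequence. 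The dynamical content therefore lives in the bottom row, giving scalar relations governing $\frac{d}{dt}d_n$, $\frac{d}{dt}(za_n-c_n)$ and $\frac{d}{dt}(z-b_n)$; since these must hold identically in $z$, matching the coefficients of $z^0$, $z^1$ and $z^2$ separates them into evolution equations for $a_n$, $b_n$, $c_n$ and $d_n$ alone.

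The final step is to translate the coefficient evolutions into the variables $u_n,v_n$ using the parametrization recorded immediately after \eqref{6-1-10}, namely $a_n=-\sqrt{v_nu_n/v_{n-1}}$, $b_n=\tfrac12 v_n-\sqrt{v_nu_n/v_{n-1}}$, $c_n=-u_n+\tfrac12\sqrt{v_nu_nv_{n-1}}$ and $d_n=u_{n-1}\sqrt{v_nu_n/v_{n-1}}$. The clean combination $b_n-a_n=\tfrac12 v_n$ immediately extracts $\frac{d}{dt}v_n$, while the relations $a_n^2=v_nu_n/v_{n-1}$ and $d_n=-a_nu_{n-1}$ recover $u_n$ and $u_{n-1}$, so one of the equations for $c_n$ or $d_n$ then yields $\frac{d}{dt}u_n$; the remaining coefficient equations must be shown to be consistent with these two, which is where the overdetermination of the system is resolved.

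I expect the main obstacle to be twofold and essentially computational. First, verifying that the top-row identities vanish identically — that the $V_n$ assembled from the derivative expansions is a bona fide Lax matrix — is a nontrivial bookkeeping check. Second, and more seriously, the square-root parametrization makes each of $\frac{d}{dt}a_n$, $\frac{d}{dt}c_n$, $\frac{d}{dt}d_n$ a nonlinear combination of $\frac{d}{dt}u_n$, $\frac{d}{dt}v_n$, $\frac{d}{dt}u_{n-1}$ and $\frac{d}{dt}v_{n-1}$, so isolating $\frac{d}{dt}u_n$ and $\frac{d}{dt}v_n$ and confirming that everything collapses to exactly \eqref{6-2-17} requires sustained and careful algebra, best organized by fixing the cleanest equation first and bootstrapping the rest.
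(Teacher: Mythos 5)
Your proposal is correct and follows essentially the same route as the paper: the paper likewise forms the Lax pair \eqref{6-2-16} from the recurrence relation and the re-expanded derivative formulae, writes the nonisospectral zero-curvature condition $\frac{dU_n}{dt}=V_{n+1}U_n-U_nV_n$ (with the total $t$-derivative accounting for $z(t)$ exactly as you describe), and obtains \eqref{6-2-17} "after a lengthy calculation." Your plan for organizing that calculation — top rows as automatic consistency conditions, bottom row matched in powers of $z$, then conversion to $(u_n,v_n)$ via $b_n-a_n=\tfrac12 v_n$ and $d_n=-a_nu_{n-1}$ — is a sound way to carry out what the paper leaves implicit.
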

% and when $\alpha=\alpha_2=0$, the above equation becomes the first flow of the C-Toda lattice; when $\alpha=\alpha_1=0$, the above equation becomes the second flow of the C-Toda lattice.
\begin{remark}\label{remmark_sec6}
When $\alpha\neq0$, the lattice \eqref{6-2-17} is a nonisosepctral generalization of the generalized C-Toda latice that incorporates the first and second flows of the C-Toda hierarchy. In the case of $\alpha=\alpha_2=0$, it reduces to the first flow of the isospectral C-Toda hierarchy, while it gives the second flow in the case of  $\alpha=\alpha_1=0$. The first flow of nonlinear form has been studied in \cite{chang2018degasperis}, while it seems that the explicit second flow is introduced for the first time. It is noted that these two flows coincide with the results obtained by Krichever and Zabrodin in \cite[eq. (2.40)]{krichever2022constrained}, where a combination of the first two flows was presented.
\end{remark}

% \subsection{d-P related to nonisospectral C-Toda}\label{ct to p}
\subsection{An integrable difference system related to nonisospectral C-Toda}\label{ct to p}
% In the previous section, we gave the Lax pair of nonisospectral C-Toda lattice. According to the idea of Levi \cite{levi1992non}, under the condition of known time evolution of spectral parameters, the Lax pair of the corresponding nonisospectral d-P-type equations can be derived：
In this subsection, by applying stationary reduction to the Lax pair for the nonisospectral C-Toda lattice obtained in the previous subsection, we derive an integrable difference system together with its Lax pair. 

First, we obtain the following stationary reduction from the Lax pair of the nonisospectral equations
\begin{align}\label{6-3-1}
\varphi_{n+1}=F_n\varphi_n,\quad \frac{\partial\varphi_n}{\partial z}=G_n\varphi_n,
\end{align}
where
$$F_n=A_n,\quad G_n=B_n/\left(\frac{dz}{dt}\right)=\frac{1}{\alpha z}B_n.$$
Then, by considering the compatibility condition in \eqref{6-3-1}, we have the following theorem.
\begin{theorem}
%Given the moments defined by \eqref{6-4-2-3}, u
Under the assumption of \eqref{6-2-1} and \eqref{6-2-2} as well as the stationary reduction,
the variables $\{u_n,v_n\}_{n\in \mathbb{N}}$  in the four-term recurrence relation \eqref{6-1-10}
for the monic Cauchy bi-OPs satisfy the following integrable difference system with the Lax pair \eqref{6-3-1} 
\begin{subequations}\label{pain-cbop}
\begin{align}
&\alpha_1v_n+2n\alpha+\beta_0, \nonumber\\ 
&+\frac{1}{2}\alpha_2\bigg(v_{n}^2-4u_{n+1}-4u_{n}+4\sqrt{v_nv_{n+1}u_{n+1}}+4\sqrt{v_nv_{n-1}u_{n}}\bigg)=0,\label{pain-cbop1}\\
&\alpha \sqrt{v_n}+\alpha_1(\sqrt{v_{n+1}u_{n+1}}-\sqrt{v_{n-1}u_n})\nonumber\\
&+\alpha _2((v_n+\frac{1}{2}v_{n+1})\sqrt{v_{n+1}u_{n+1}}-(v_n+\frac{1}{2}v_{n-1})\sqrt{v_{n-1}u_n})\nonumber\\
% &+\frac{1}{2}v_{n+1}\sqrt{v_{n+1}v_nu_{n+1}}-v_n\sqrt{v_{n-1}v_nu_n})\nonumber\\
&-\sqrt{\frac{u_{n+1}}{v_{n+1}}}(\alpha_2(-2u_{n+2}-2u_{n+1}+2\sqrt{v_nv_{n+1}u_{n+1}})+2(n+1)\alpha+\beta_0)\nonumber\\
&+\sqrt{\frac{u_{n}}{v_{n-1}}}(\alpha_2(-2u_{n}-2u_{n-1}+2\sqrt{v_nv_{n-1}u_{n}})+2(n-1)\alpha+\beta_0)=0,
\label{pain-cbop2}
\end{align}
\end{subequations}
where $\beta_0=-\alpha_1v_0-\frac{1}{2}\alpha_2\left(v_{0}^2-4u_{1}-4u_{0}+4\sqrt{v_0v_{1}u_{1}}\right)$.
\end{theorem}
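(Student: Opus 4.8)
The plan is to recognize that the compatibility condition of the stationary Lax pair \eqref{6-3-1} is exactly the stationary reduction of the nonisospectral C-Toda lattice \eqref{6-2-17}, and then to massage that stationary reduction into the stated form. Writing the compatibility of \eqref{6-3-1} as $P_{n,z}+P_nQ_n-Q_{n+1}P_n=0$ and inserting $P_n=U_n$, $Q_n=\frac{1}{\alpha z}V_n$ with $\frac{dz}{dt}=\alpha z$ from \eqref{6-2-1}, multiplication by $\alpha z$ gives $\alpha z\,U_{n,z}+U_nV_n-V_{n+1}U_n=0$. Because $U_n$ depends on $t$ only through $z(t)$ and through the coefficients $u_n,v_n$, one has $\frac{dU_n}{dt}=\alpha z\,U_{n,z}$ once $\frac{du_n}{dt}=\frac{dv_n}{dt}=0$; hence the nonisospectral matrix equation $\frac{dU_n}{dt}=V_{n+1}U_n-U_nV_n$ underlying \eqref{6-2-16} reduces precisely to the displayed stationary compatibility. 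It therefore suffices to set $\frac{du_n}{dt}=\frac{dv_n}{dt}=0$ in \eqref{6-2-17} and simplify.

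First I would treat the $u$-equation. Setting the first equation of \eqref{6-2-17} to zero and dividing by $u_n\neq0$ yields a relation that is a \emph{total difference} in $n$: with
\begin{align*}
\Phi_n=\alpha_1v_n+\tfrac{1}{2}\alpha_2\big(v_n^2-4u_{n+1}-4u_n+4\sqrt{v_nv_{n+1}u_{n+1}}+4\sqrt{v_nv_{n-1}u_n}\big),
\end{align*}
the mixed radical $\sqrt{v_nv_{n-1}u_n}$ cancels between $\Phi_n$ and $\Phi_{n-1}$, so the $u$-equation becomes $\Phi_n-\Phi_{n-1}=-2\alpha$. Summing from $1$ to $n$ and using $\Phi_0=-A_0$ (the mixed term at $n=0$ vanishes since $u_0=0$) gives $\Phi_n+2n\alpha+A_0=0$, which is exactly \eqref{pain-cbop1}.

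Next I would treat the $v$-equation. Setting the second equation of \eqref{6-2-17} to zero and factoring $\sqrt{v_n}\neq0$ out of every term leaves a relation in the quantities $\sqrt{v_{n\pm1}u}$ and $\sqrt{v_{n\pm2}u\,u}$. To reach \eqref{pain-cbop2} I would substitute \eqref{pain-cbop1}, evaluated at indices $n+1$ and $n-1$, into the two bracketed blocks $\alpha_2(-2u_{n+2}-2u_{n+1}+2\sqrt{v_nv_{n+1}u_{n+1}})+2(n+1)\alpha+A_0$ and its $n-1$ mirror; after this substitution the cross terms $2\alpha_2\sqrt{v_nv_{n\pm1}u}$ drop out and each block collapses to $-\alpha_1\sqrt{v_{n+1}u_{n+1}}-\tfrac{1}{2}\alpha_2v_{n+1}\sqrt{v_{n+1}u_{n+1}}-2\alpha_2\sqrt{v_{n+2}u_{n+1}u_{n+2}}$ (respectively its mirror). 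Collecting coefficients of $\sqrt{v_{n+1}u_{n+1}}$, $\sqrt{v_{n-1}u_n}$, $\sqrt{v_{n+2}u_{n+1}u_{n+2}}$ and $\sqrt{v_{n-2}u_{n-1}u_n}$ then shows, term by term, that the stationary $v$-equation divided by $\sqrt{v_n}$ is identical to \eqref{pain-cbop2}.

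The hard part will be the bookkeeping in this last step: the second equation of \eqref{6-2-17} carries a large number of radical terms of the shape $\sqrt{v_iv_ju_k}$ and $\sqrt{v_iu_ju_k}$, and one must track each of them through the extraction of $\sqrt{v_n}$ and through the \eqref{pain-cbop1}-substitution while verifying the exact cancellations (in particular that the $\sqrt{v_nv_{n\pm1}u}$ contributions cancel). A secondary point is to confirm that the single integration constant $A_0$ fixed by the initial data in deriving \eqref{pain-cbop1} is the same constant that appears in \eqref{pain-cbop2}; since both occurrences of $A_0$ are just the summation constant of the $u$-equation evaluated against the same initial data, this is automatic.
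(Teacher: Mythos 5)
Your proposal is correct and follows essentially the same route as the paper: the compatibility condition of the stationary Lax pair \eqref{6-3-1} gives the stationary reduction \eqref{6-3-2} of the nonisospectral C-Toda lattice \eqref{6-2-17}; dividing the $u$-equation by $u_n$ and summing the resulting telescoping relation yields \eqref{pain-cbop1}; and substituting \eqref{pain-cbop1} at indices $n\pm1$ to eliminate the radicals $\sqrt{v_nv_{n+2}u_{n+1}u_{n+2}}$ and $\sqrt{v_{n-2}v_nu_{n-1}u_n}$ in the $v$-equation yields \eqref{pain-cbop2}. Your explicit identification of the telescoping quantity $\Phi_n$ with $\Phi_n-\Phi_{n-1}=-2\alpha$, and your handling of the boundary term via $u_0=0$, merely make explicit what the paper leaves implicit in its choice of the constant $A_0$.
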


\begin{proof}
The compatibility condition in \eqref{6-3-1} yields
$$F_{n,x}+F_nG_n-G_{n+1}F_n=0,$$ 
from which we derive the following difference system
\begin{subequations}\label{6-3-2}
% \begin{align}
% \frac{d}{dt}u_n=&2\alpha u_n+\alpha _1u_n(\tilde{b}_n-\tilde{b}_{n-1})-\alpha _2u_n(\frac{1}{2}(\tilde{b}_{n-1}^2-\tilde{b}_n^2)\nonumber\\
% &+2(u_{n+1}-u_{n-1})-2(\sqrt{\tilde{b}_n\tilde{b}_{n+1}u_{n+1}}-\sqrt{\tilde{b}_{n-2}\tilde{b}_{n-1}u_{n-1}})),\\
% \frac{d}{dt}\tilde{b}_n=&\alpha\tilde{b}_n+2\alpha _1(\sqrt{\tilde{b}_n\tilde{b}_{n+1}u_{n+1}}-\sqrt{\tilde{b}_{n-1}\tilde{b}_nu_n})-2\alpha _2(\frac{1}{2}(\tilde{b}_{n-1}\sqrt{\tilde{b}_{n-1}\tilde{b}_nu_n}-\tilde{b}_n\sqrt{\tilde{b}_n\tilde{b}_{n+1}u_{n+1}})\nonumber\\
% &-\sqrt{\tilde{b}_n\tilde{b}_{n+2}u_{n+1}u_{n+2}}+\sqrt{\tilde{b}_{n-2}\tilde{b}_nu_{n-1}u_n}-\frac{1}{2}(\tilde{b}_{n+1}\sqrt{\tilde{b}_n\tilde{b}_{n+1}u_{n+1}}-\tilde{b}_n\sqrt{\tilde{b}_{n-1}\tilde{b}_nu_n})),
% \end{align}
\begin{align}
&2\alpha u_n+\alpha _1u_n(v_n-v_{n-1})\nonumber\\
&-\frac{1}{2}\alpha _2u_n\bigg((v_{n-1}^2-v_n^2)+4(u_{n+1}-u_{n-1})\nonumber\\
&\qquad\qquad-4\left(\sqrt{v_nv_{n+1}u_{n+1}}-\sqrt{v_{n-2}v_{n-1}u_{n-1}}\right)\bigg)=0,\label{nc1}\\
&\alpha v_n+2\alpha _1\left(\sqrt{v_nv_{n+1}u_{n+1}}-\sqrt{v_{n-1}v_nu_n}\right)\nonumber\\
&\qquad-\alpha _2\bigg(\left(v_{n-1}\sqrt{v_{n-1}v_nu_n}-v_n\sqrt{v_nv_{n+1}u_{n+1}}\right)\nonumber\\
&\qquad\qquad-2\left(\sqrt{v_nv_{n+2}u_{n+1}u_{n+2}}-\sqrt{v_{n-2}v_nu_{n-1}u_n}\right)\nonumber\\
&\qquad\qquad-\left(v_{n+1}\sqrt{v_nv_{n+1}u_{n+1}}-v_n\sqrt{v_{n-1}v_nu_n}\right)\bigg)=0.\label{nc2}
\end{align}
\end{subequations}
Eliminating $u_n$ in \eqref{nc1} and then summing for $n$, we readily obtain \eqref{pain-cbop1}. Furthermore, it is straightforward to obtain from \eqref{pain-cbop1} that
\begin{align*}
&2\alpha_2\sqrt{v_{n+1}v_{n+2}u_{n+2}}\\
=&-\alpha_1v_{n+1}-\frac{1}{2}\alpha_2(v_{n+1}^2-4(u_{n+2}+u_{n+1})+4\sqrt{v_nv_{n+1}u_{n+1}})-2(n+1)\alpha-\beta_0,\\
&2\alpha_2\sqrt{v_{n-2}v_{n-1}u_{n-1}}\\
=&-\alpha_1v_{n-1}-\frac{1}{2}\alpha_2(v_{n-1}^2-4(u_{n}+u_{n-1})+4\sqrt{v_{n-1}v_{n}u_{n}})-2(n-1)\alpha-\beta_0,
\end{align*}
using which, we can substitute the terms $\sqrt{v_nv_{n+2}u_{n+1}u_{n+2}}$ and $\sqrt{v_{n-2}v_nu_{n-1}u_n}$ in \eqref{nc2}, and thus obtain \eqref{pain-cbop2}.
\end{proof}

\begin{remark}
Obviously, \eqref{6-3-2} can be recognized as the stationary form of the nonisospectral C-Toda lattice equation \eqref{6-2-17}. Also we note that \eqref{pain-cbop} is integrable with a 3$\times3$ Lax pair and associated with the Toda hierarchy of CKP type.
% A further simplification will be considered in the future.
It might be related to a d-P-type equation
of high order and a further simplification will be considered in the future.

\end{remark}

% \todo{three-order $3\times3$ Lax pair, further simplify it!}

\subsection{Realization of stationary reduction}\label{ctasr}
Having derived the integrable difference system through the stationary reduction of the nonisospectral C-Toda lattice, we proceed to construct specific moments to guarantee the validity of the aforementioned stationary reduction.

First, based on the time evolution \eqref{6-2-1} resulting in
\begin{align*}
x=x(0)e^{\alpha t},\quad y=y(0)e^{\alpha t}, \quad z=z(0)e^{\alpha t},
\end{align*}
we observe that the moments can be rewritten as follows
\begin{align*}
I_{i,j}(t)=&\iint_{\mathbb{R}_{+}^{2}}\frac{x^i(t)y^j(t)}{x(t)+y(t)}w(x;t)w(y;t)dx(t)dy(t)\notag\\
=&\iint_{\mathbb{R}_{+}^{2}}\frac{x^i(0)y^j(0)e^{(i+j)\alpha t}}{(x(0)+y(0))e^{\alpha t}}f(x(0);t)f(y(0);t)dx(0)dy(0),
\end{align*}
% where $f(x(0);t)$ is the new weight function needs to be determined.
where $f(\cdot;t)$, that is related to the weight function, needs to be determined, and the integral interval $(0,+\infty)$ is considered.

Then we differentiate the above relation to derive
\begin{align*}
\frac{d}{dt}I_{i,j}(t)=&(i+j-1)\alpha I_{i,j}(t)\notag\\
&+\iint_{\mathbb{R}_{+}^{2}}\frac{x^i(0)y^j(0)e^{(i+j)\alpha t}}{(x(0)+y(0))e^{\alpha t}}\frac{d(f(x(0);t)f(y(0);t))}{dt}dx(0)dy(0).
\end{align*}
Since the moments satisfy the time evolution~\eqref{6-2-2}, we require $f(x(0);t),\ f(y(0);t)$ to obey
%\begin{subequations}
\begin{align*}
\frac{d}{dt}f(x(0);t)&=(\alpha _1x(0)e^{\alpha t}+\alpha _2x^2(0)e^{2\alpha t})f(x(0);t),\\
\frac{d}{dt}f(y(0);t)&=(\alpha _1y(0)e^{\alpha t}+\alpha _2y^2(0)e^{2\alpha t})f(y(0);t).
\end{align*}
%\end{subequations}
This means that it is reasonable to set
%\begin{subequations}
\begin{align*}
f(x(0);t)=e^{\frac{\alpha _1}{\alpha}x(0)e^{\alpha t}+\frac{\alpha _2}{2\alpha}x^2(0)e^{2\alpha t}},\qquad f(y(0);t)=e^{\frac{\alpha _1}{\alpha}y(0)e^{\alpha t}+\frac{\alpha _2}{2\alpha}y^2(0)e^{2\alpha t}},
\end{align*}
%\end{subequations}
from which the following exact expressions of the moments are obtained
%\begin{subequations}%\label{6-4-2-3}
\begin{align*}
&I_{i,j}(t)
=\iint_{\mathbb{R}_{+}^{2}}\frac{x^i(0)y^j(0)e^{(i+j)\alpha t}}{(x(0)+y(0))e^{\alpha t}}e^{\frac{\alpha _1}{\alpha}(x(0)+y(0)) e^{\alpha t}+\frac{\alpha _2}{2\alpha}(x^2(0)+y^2(0))e^{2\alpha t}}dx(0)dy(0),\\ 
&\beta_i(t)
=\int_{\mathbb{R}_{+}}x^i(0)e^{i\alpha t}e^{\frac{\alpha _1}{\alpha}x(0) e^{\alpha t}+\frac{\alpha _2}{2\alpha}x^2(0)e^{2\alpha t}}dx(0).%\label{6-4-3}
\end{align*}
%\end{subequations}
Furthermore, it is easy to show that single moments satisfy the time evolution
\begin{align}\label{evo_beta}
\frac{d}{dt}\beta_i(t)=i\alpha \beta_i(t)+\alpha_1\beta_{i+1}(t)+\alpha_2 \beta_{i+2}(t).
\end{align}
\begin{lemma}\label{th10}
Under the assumption of  \eqref{6-2-1} together with $\frac{\alpha_2}{\alpha}<0$, define the moments as
\begin{subequations}\label{6-4-2-3}
\begin{align}
&I_{i,j}(t)
=\iint_{\mathbb{R}_{+}^{2}}\frac{x^i(0)y^j(0)e^{(i+j)\alpha t}}{(x(0)+y(0))e^{\alpha t}}e^{\frac{\alpha _1}{\alpha}(x(0)+y(0)) e^{\alpha t}+\frac{\alpha _2}{2\alpha}(x^2(0)+y^2(0))e^{2\alpha t}}dx(0)dy(0),\label{6-4-2}\\ 
&\beta_i(t)
=\int_{\mathbb{R}_{+}}x^i(0)e^{i\alpha t}e^{\frac{\alpha _1}{\alpha}x(0) e^{\alpha t}+\frac{\alpha _2}{2\alpha}x^2(0)e^{2\alpha t}}dx(0).\label{6-4-3}
\end{align}
\end{subequations}
Then the moments simultaneously satisfy the time evolution \eqref{6-2-2} and 
%\begin{subequations}
\begin{align}\label{6-4-4-4}
&\frac{d}{dt}I_{i,j}(t)=-2\alpha I_{i,j}(t),\qquad \frac{d}{dt}\beta_i(t)=-\alpha \beta_i(t).
\end{align}
%\end{subequations}
\end{lemma}

\begin{proof} 
It is sufficient to confirm the relations in \eqref{6-4-4-4}.
In the case of $\frac{\alpha _2}{\alpha}<0$, we observe at the boundary
$$\lim_{x(0)\to0^+}x(0)f(x(0);t)=\lim_{x(0)\to +\infty}x(0)f(x(0);t)=0.$$
As for the exact expression ~\eqref{6-4-2} for bimoments, by using integration by parts with respect to $x(0)$, we have 
\begin{align*}
I_{i,j}(t)
=&\iint_{\mathbb{R}_{+}^{2}}\frac{x^i(0)y^j(0)e^{(i+j)\alpha t}}{(x(0)+y(0))e^{\alpha t}}f(x(0);t)f(y(0);t)dx(0)dy(0)\notag\\
=&-\iint_{\mathbb{R}_{+}^{2}}\frac{i x^i(0)y^j(0)e^{(i+j)\alpha t}}{(x(0)+y(0))e^{\alpha t}}f(x(0);t)f(y(0);t)dx(0)dy(0)\notag\\
&+\iint_{\mathbb{R}_{+}^{2}}\frac{ x^{i+1}(0)y^j(0)e^{(i+j+1)\alpha t}}{(x(0)+y(0))^2e^{2\alpha t}}f(x(0);t)f(y(0);t)dx(0)dy(0)\notag\\
&-\iint_{\mathbb{R}_{+}^{2}}\frac{x^{i+1}(0)y^j(0)e^{(i+j)\alpha t}}{(x(0)+y(0))e^{\alpha t}}\frac{\alpha _1}{\alpha} e^{\alpha t}f(x(0);t)f(y(0);t)dx(0)dy(0)\notag\\
&-\iint_{\mathbb{R}_{+}^{2}}\frac{x^{i+1}(0)y^j(0)e^{(i+j)\alpha t}}{(x(0)+y(0))e^{\alpha t}}\frac{\alpha _2}{\alpha}x(0)e^{2\alpha t}f(x(0);t)f(y(0);t)dx(0)dy(0)).
\end{align*}
Similarly,  by integrating the bimoments by parts with respect to $y(0)$, we have
\begin{align*}
% I_{i,j}(t)
% =&\iint_{\mathbb{R}_{+}^{2}}\frac{x^i(0)y^j(0)e^{(i+j)\alpha t}}{(x(0)+y(0))e^{\alpha t}}e^{\frac{\alpha _1}{\alpha}(x(0)+y(0)) e^{\alpha t}+\frac{\alpha _2}{2\alpha}(x^2(0)+y^2(0))e^{2\alpha t}}dx(0)dy(0)\notag\\
% =&-\iint_{\mathbb{R}_{+}^{2}}\frac{j x^i(0)y^j(0)e^{(i+j)\alpha t}}{(x(0)+y(0))e^{\alpha t}}e^{\frac{\alpha _1}{\alpha}(x(0)+y(0)) e^{\alpha t}+\frac{\alpha _2}{2\alpha}(x^2(0)+y^2(0))e^{2\alpha t}}dx(0)dy(0)\notag\\
% &+\iint_{\mathbb{R}_{+}^{2}}\frac{ x^i(0)y^{j+1}(0)e^{(i+j+1)\alpha t}}{(x(0)+y(0))^2e^{2\alpha t}}e^{\frac{\alpha _1}{\alpha}(x(0)+y(0)) e^{\alpha t}+\frac{\alpha _2}{2\alpha}(x^2(0)+y^2(0))e^{2\alpha t}}dx(0)dy(0)\notag\\
% &-\iint_{\mathbb{R}_{+}^{2}}\frac{x^i(0)y^{j+1}(0)e^{(i+j)\alpha t}}{(x(0)+y(0))e^{\alpha t}}\frac{\alpha _1}{\alpha} e^{\alpha t}e^{\frac{\alpha _1}{\alpha}(x(0)+y(0)) e^{\alpha t}+\frac{\alpha _2}{2\alpha}(x^2(0)+y^2(0))e^{2\alpha t}}dx(0)dy(0)\notag\\
% &-\iint_{\mathbb{R}_{+}^{2}}\frac{x^i(0)y^{j+1}(0)e^{(i+j)\alpha t}}{(x(0)+y(0))e^{\alpha t}}\frac{\alpha _2}{\alpha}y(0)e^{2\alpha t}e^{\frac{\alpha _1}{\alpha}(x(0)+y(0)) e^{\alpha t}+\frac{\alpha _2}{2\alpha}(x^2(0)+y^2(0))e^{2\alpha t}}dx(0)dy(0).\label{6-4-6}
I_{i,j}(t)
=&\iint_{\mathbb{R}_{+}^{2}}\frac{x^i(0)y^j(0)e^{(i+j)\alpha t}}{(x(0)+y(0))e^{\alpha t}}f(x(0);t)f(y(0);t)dx(0)dy(0)\notag\\
=&-\iint_{\mathbb{R}_{+}^{2}}\frac{j x^i(0)y^j(0)e^{(i+j)\alpha t}}{(x(0)+y(0))e^{\alpha t}}f(x(0);t)f(y(0);t)dx(0)dy(0)\notag\\
&+\iint_{\mathbb{R}_{+}^{2}}\frac{ x^i(0)y^{j+1}(0)e^{(i+j+1)\alpha t}}{(x(0)+y(0))^2e^{2\alpha t}}f(x(0);t)f(y(0);t)dx(0)dy(0)\notag\\
&-\iint_{\mathbb{R}_{+}^{2}}\frac{x^i(0)y^{j+1}(0)e^{(i+j)\alpha t}}{(x(0)+y(0))e^{\alpha t}}\frac{\alpha _1}{\alpha} e^{\alpha t}f(x(0);t)f(y(0);t)dx(0)dy(0)\notag\\
&-\iint_{\mathbb{R}_{+}^{2}}\frac{x^i(0)y^{j+1}(0)e^{(i+j)\alpha t}}{(x(0)+y(0))e^{\alpha t}}\frac{\alpha _2}{\alpha}y(0)e^{2\alpha t}f(x(0);t)f(y(0);t)dx(0)dy(0).
\end{align*}
Summing the above two formulae gives
\begin{align*}
(i+j+1)\alpha I_{i,j}(t)+\alpha_1(I_{i+1,j}(t)+I_{i,j+1}(t))+\alpha_2 (I_{i+2,j}(t)+I_{i,j+2}(t))=0,
\end{align*}
combining which with ~\eqref{6-2-2} immediately gives
\begin{align}
\frac{d}{dt}I_{i,j}(t)=-2\alpha I_{i,j}(t).\nonumber
\end{align}
Similarly, $\frac{d}{dt}\beta_i(t)=-\alpha \beta_i(t)$ can also be deduced.
\end{proof}

Based on the above lemma, it is not hard to conclude the following result.

\begin{theorem}
Given the moments in \eqref{6-4-2-3}, we have
\begin{align*}
\frac{d}{dt}u_n(t)=\frac{d}{dt}v_n(t)=0,\qquad \frac{d}{dt}\gamma_{n,j}(t)=0,
\end{align*}
%and 
%$$$$
where $\{u_n\}$ and $\{v_{n}\}$ are the variables  in the four-term recurrence relation \eqref{6-1-10} for the monic Cauchy bi-OPs and  $\{\gamma_{n,j}\}$ are the expansion coefficients of the monic Cauchy bi-OPs with $P_n(z)=\sum_{j=0}^n\gamma_{n,j}z^j$.
\end{theorem}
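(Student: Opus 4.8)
The plan is to follow exactly the template already used for the Toda and Lotka--Volterra cases (Theorem~\ref{th:stationary_toda} and its analogue in Section~\ref{lvasr}): reduce the whole statement to the scaling behaviour of the underlying determinants under the evolution relations \eqref{6-4-4-4} established in Lemma~\ref{th10}. The only inputs are $\frac{d}{dt}I_{i,j}=-2\alpha I_{i,j}$ from \eqref{6-4-4} and $\frac{d}{dt}\beta_i=-\alpha\beta_i$ from \eqref{bu6-4-4}, and the argument is just bookkeeping of how these propagate through the determinants $\tau_n$ and $\sigma_n$ that express $u_n$, $v_n$ and $\gamma_{n,j}$ in Section~\ref{introductionCauchy}.

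First I would compute $\frac{d}{dt}\tau_n$. Since $\tau_n=\det(I_{i,j})_{i,j=0}^{n-1}$ is an $n\times n$ determinant each of whose entries obeys \eqref{6-4-4}, differentiating column by column via multilinearity of the determinant gives $\frac{d}{dt}\tau_n=-2n\alpha\,\tau_n$. The determinant $\sigma_n$ needs slightly more care because its columns are inhomogeneous: its first column is $(\beta_0,\dots,\beta_{n-1})^\top$, scaling by $-\alpha$ through \eqref{bu6-4-4}, whereas its remaining $n-1$ columns consist of $I$-entries, each scaling by $-2\alpha$. Summing the per-column contributions yields $\frac{d}{dt}\sigma_n=\bigl(-\alpha+(n-1)(-2\alpha)\bigr)\sigma_n=-(2n-1)\alpha\,\sigma_n$.

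Next I would substitute these scalings into the closed forms $u_n=\tau_{n+1}\tau_{n-1}/\tau_n^2$ and $v_n=\sigma_{n+1}^2/(\tau_{n+1}\tau_n)$. For $u_n$ the numerator scales by $-2(n+1)\alpha-2(n-1)\alpha=-4n\alpha$ and the denominator by $2\cdot(-2n\alpha)=-4n\alpha$, so the logarithmic derivative vanishes and $\frac{d}{dt}u_n=0$. For $v_n$ the numerator scales by $2\cdot\bigl(-(2n+1)\alpha\bigr)=-(4n+2)\alpha$ and the denominator by $-2(n+1)\alpha-2n\alpha=-(4n+2)\alpha$, cancelling exactly, so $\frac{d}{dt}v_n=0$. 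For the expansion coefficients I would expand the determinant representation of $P_n(z)$ along its last row to write $\gamma_{n,j}=(-1)^{n+j}\widetilde{T}_{n,j}/\tau_n$, where $\widetilde{T}_{n,j}$ is the $n\times n$ minor obtained by deleting the last row and the column carrying $z^j$; this minor consists solely of $I$-entries, so $\frac{d}{dt}\widetilde{T}_{n,j}=-2n\alpha\,\widetilde{T}_{n,j}$, matching the scaling of $\tau_n$ and forcing $\frac{d}{dt}\gamma_{n,j}=0$.

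The only step demanding attention is the inhomogeneous column structure of $\sigma_n$ together with the precise arithmetic of the cancellation in $v_n$; once the $\beta$-column is correctly assigned its $-\alpha$ weight, the exponents in numerator and denominator match and every ratio becomes time-independent. As a consequence $\frac{\partial}{\partial t}P_n(z;t)=0$, which is precisely the statement that $\frac{\partial\varphi_n}{\partial t}=0$ for the Lax pair \eqref{6-2-16}, confirming that the nonisospectral C-Toda lattice \eqref{6-2-17} genuinely admits the stationary reduction to \eqref{6-3-2} and hence yields the d-P-type equation \eqref{pain-cbop} together with its Lax pair \eqref{6-3-1}.
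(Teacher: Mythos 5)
Your proposal is correct and follows essentially the same route as the paper's own proof: both derive the scalings $\frac{d}{dt}\tau_n=-2n\alpha\tau_n$ and $\frac{d}{dt}\sigma_n=-(2n-1)\alpha\sigma_n$ from the moment evolutions \eqref{6-4-4-4}, substitute them into $u_n=\tau_{n+1}\tau_{n-1}/\tau_n^2$ and $v_n=\sigma_{n+1}^2/(\tau_{n+1}\tau_n)$, and handle $\gamma_{n,j}=T_{n,j}/\tau_n$ via the identical scaling of the minor $T_{n,j}$. Your explicit per-column bookkeeping for the inhomogeneous determinant $\sigma_n$ is merely a spelled-out version of what the paper asserts as immediate.
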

\begin{proof}
By use of ~\eqref{6-4-4-4},  it is easy to see that the determinants $ \tau_n$ and $ \sigma_n$ evolve according to
\begin{align*}
\frac{d}{dt}\tau_n=-2n\alpha \tau_n,\quad \frac{d}{dt}\sigma_n=-(2n-1)\alpha \sigma_n,
\end{align*}
from which, we conclude that $\{u_n,v_n\}_{n\in \mathbb{N}}$ are independent of time $t$ by recalling the formulae \begin{align*}
u_n=\frac{\tau_{n+1}\tau_{n-1}}{(\tau_n)^2},\quad v_n=\frac{(\sigma_{n+1})^2}{\tau_{n+1}\tau_n}.
\end{align*}

As for the coefficients $\gamma_{n,j}$ in the expansion of $P_n(z)$, it follows from the determinant expression of $\{P_n(z)\}_{n=0}^{\infty}$ that
%Let
%\begin{align}
%P_n(z)=z^n+\gamma_{n,n-1}z^{n-1}+\gamma_{n,n-2}z^{n-%2}+\cdots+\gamma_{n,0},\nonumber
%\end{align}
$$\gamma_{n,j}=\frac{T_{n,j}}{\tau_n},$$
where 
\begin{align*}
T_{n,j}=(-1)^{n+j}
\det\begin{pmatrix}
I_{p,q}
\end{pmatrix}_{\substack{p=0,\ldots,n-1,\,\,\,\\\ q=0,\ldots,n,q,\neq j}}.
%\left|
%\begin{array}{ccccccccc}
% I_{0,0} & I_{0,1}  & \cdots& I_{0,j-1} &I_{0,j+1}& \cdots&I_{0,n} \\
% I_{1,0} & I_{1,1} & \cdots& I_{1,j-1} &I_{1,j+1}& \cdots&I_{1,n} \\
%  \vdots  & \vdots  &\ddots& \vdots&\vdots & \ddots& \vdots  \\
%I_{n-1,0} & I_{n-1,1}  & \cdots& I_{n-1,j-1} &I_{n-1,j+1}& \cdots&I_{n-1,n}
%\end{array}
%\right|.
\end{align*}
Due to the evolution relation \eqref{6-4-4-4} satisfied by moments, we have
$$\frac{d}{dt}\tau_n=-2n\alpha \tau_n,\quad \frac{d}{dt}T_{n,j}=-2n\alpha T_{n,j}.$$
which immediately leads to $\frac{d}{dt}\gamma_{n,j}=0$.  
% So the nonisospectral C-Toda lattice stationary gives d-P-type equations.

\end{proof}

%Moving forward, we will further explain the Lax pair of the nonisospectral C-Toda lattice can be stationary reduced, resulting in the Lax pair of the d-P-type equations, that is, the OPs $\{P_n(x)\}_{n=0}^{\infty}$ does not explicitly contain the time variable $t$.

In summary, we have demonstrated that the variables $\{u_n,v_n\}_{n\in \mathbb{N}}$ and the expansion coefficients $\gamma_j$ of OPs are independent of time $t$. Therefore, under the definition of the moments~\eqref{6-4-2-3}, the nonisospectral C-Toda lattice can indeed admit stationary reduction, so that a difference system associated with the Toda hierarchy of CKP-type and Cauchy bi-OPs arises. Furthermore, the Lax pair of the nonisospectral C-Toda lattice can also indeed allow stationary reduction so that the Lax pair of the difference system is obtained.

\section{Nonisospectral deformation of partial-skew OPs and d-P}\label{Partial-skew}
Recently, the concept of \textit{partial-skew OPs} was introduced in \cite{chang2018partial}, with the motivation by the study of a random matrix model called the Bures random ensemble \cite{forrester2016relating} as well as the hints from the formulation of the Novikov peakon solution in terms of Pfaffians \cite{chang2018application}. It is shown that isospectral deformation of the partial-skew OPs are closely related to integrable lattices \cite{chang2018partial} and they also solve certain mixed Hermite–Padé approximation problems \cite{chang2022hermite}. To the best of our knowledge, there is currently no existing literature that establishes a connection between the partial-skew OPs and  and Painlev\'{e}-type equations, which is the main objective of this section.

To achieve this objective, we first perform a nonisospectral deformation of the partial-skew OPs without specifying the weight function. It is noted that we employ the \textit{Pfaffian} as an algebraic tool (for an introduction on the Pfaffians, please refer to Appendix \ref{chap:chaapp}), and utilize the derivative rules and identities on Pfaffians to derive nonisospectral deformation of the partial-skew OPs and the nonisospectral B-Toda lattice. Then, by applying the stationary reduction to the Lax pair of the nonisospectral equation, we obtain an integrable difference system along with its Lax pair. And it is shown that the integrable difference system features a solution expressed in terms of Pfaffians.
% To our knowledge, this is the first instance of a d-P-type equation that features a solution expressed in terms of Pfaffians. 
Finally, we construct a specific weight function to demonstrate the viability of the aforementioned stationary reduction process.

\subsection{Partial-skew OPs}\label{introductionpartial}
In this subsection, let's briefly review the relevant basic knowledge about partial-skew OPs. For more details please refer to \cite{chang2018partial}.

\begin{definition}
    
Consider a skew symmetric inner product $\left \langle \cdot ,\cdot\right\rangle$ on the space of real coefficient polynomials, where $\left \langle \cdot ,\cdot\right\rangle$ is a bilinear 2-form $\mathbb{R}(z) \times \mathbb{R}(z) \rightarrow \mathbb{R}(z)$ satisfying the skew symmetric relation 
\begin{align*}
\left \langle f(z) ,g(z)\right\rangle=-\left \langle g(z) ,f(z)\right\rangle.
\end{align*}
Define the bimoment sequence $\{\mu_{i,j}\}_{i,j=0}^{\infty}$ as
\begin{align}
\mu_{i,j}=\left \langle z^i ,z^j\right\rangle=-\left \langle z^j ,z^i\right\rangle=-\mu_{j,i}.\label{7-1-2}
\end{align}
A family of monic polynomials $\{P_{n}(z)\}_{n=0}^{\infty}$ are called partial-skew OPs if they satisfy the orthogonality relation
%\begin{subequations}\label{7-1-3}
\begin{align*}
\left \langle P_{2n}(z) ,z^m\right\rangle=\frac{\tau_{2n+2}}{\tau_{2n}}\delta_{2n+1,m},\quad 0\leq m \leq 2n+1,\nonumber\\
\left \langle P_{2n+1}(z) ,z^m\right\rangle=-\frac{\tau_{2n+2}}{\tau_{2n+1}}\beta_{m},\quad 0\leq m \leq 2n+1,\nonumber
\end{align*}
%\end{subequations}
with
\begin{align}
\tau_{2n}=\Pf(0,1,\ldots,2n-1)\neq 0,\quad \tau_{2n+1}=\Pf(d_0,0,1,\ldots,2n)\neq 0,\nonumber
\end{align}
where $\beta_i$ are some constants and the Pfaffian entries are defined as 
\begin{subequations}\label{7-1-4}
\begin{align}
\Pf(i,j)=\mu_{i,j},\quad \Pf(i,z)=z^i,\\
\Pf(d_0,i)=\beta_i,\quad \Pf(d_0,z)=0.
\end{align}
\end{subequations}
\end{definition}

According to the orthogonality, it can be shown that the monic partial-skew OPs admit the following determinant expressions
%\begin{subequations}
    \begin{align*}
	&P_{2n}(z)=\frac{1}{det(\mu_{i,j})_{0\leq i,j\leq 2n-1}}
	\det\begin{pmatrix}
\mu_{i,j}\\
z^j
\end{pmatrix}_{i=0,\ldots,2n-1\,\,\,}^{j=0,\ldots,2n},\\
%	\begin{vmatrix}
%		\mu_{0,0}&\mu_{0,1}&\cdots&\mu_{0,2n}\\
%		\mu_{1,0}&\mu_{1,1}&\cdots&\mu_{1,2n}\\
%		\vdots&\vdots&\ddots&\vdots&\\
%		\mu_{2n-1,0}&\mu_{2n-1,1}&\cdots&\mu_{2n-1,2n}\\
%        1&z&\cdots&z^{2n}
%	\end{vmatrix},\\
&P_{2n+1}(z)=\frac{1}{det(C_{2n+2,2n+2})}
\det\begin{pmatrix}
\mu_{i,j}&-\beta_{i}\\
z^j&0
\end{pmatrix}_{i=0,\ldots,2n+1\,\,\,}^{j=0,\ldots,2n+1},
%\begin{vmatrix}
%		\mu_{0,0}&\mu_{0,1}&\cdots&\mu_{0,2n+1}&-\beta_0\\
%		\mu_{1,0}&\mu_{1,1}&\cdots&\mu_{1,2n+1}&-\beta_1\\
%		\vdots&\vdots&\ddots&\vdots&\vdots\\
%		\mu_{2n+1,0}&\mu_{2n+1,1}&\cdots&\mu_{2n+1,2n+1}&-\beta_{2n+1}\\
%        1&z&\cdots&z^{2n+1}&0
%	\end{vmatrix},
\end{align*}
%\end{subequations}
where
\begin{align*}
C_{k,l}=\left(\begin{array}{c|c}
 & \beta_0 \\
A_{k,l-1} &\vdots \\
 & \beta_{k-1} \\
\end{array}\right)
\end{align*}
$A_{k,l}=(\mu_{i,j})_{0\leq i\leq k-1,0\leq j \leq l-1}$.
It follows from \eqref{7-1-2} that the determinant $\det(\mu_{i,j})_{0\leq i,j\leq 2n-1}$ is a skew symmetric determinant of $2n$ order, implying that it can be written in terms of Pfaffians. 
% From~\eqref{7-1-2} we can know that $det(\mu_{i,j})_{0\leq i,j\leq 2n-1}$ is a skew symmetric determinant of size $2n$, so it can be written in terms of Pfaffians. 
In fact, in conjunction with \eqref{7-1-7} and~\eqref{7-1-8}, it is not hard to see that the Pfaffian representations of partial-skew OPs can be given by
\begin{subequations}\label{psop}
\begin{align}
&P_{2n}(z)=\frac{1}{\tau_{2n}}\Pf(0,1,\ldots,2n-1,2n,z),\label{7-1-9}\\
&P_{2n+1}(z)=\frac{1}{\tau_{2n+1}}\Pf(d_0,0,1,\ldots,2n,2n+1,z).\label{7-1-10}
\end{align}
\end{subequations}
% \begin{align}\label{intro7}
% det(A)=(\Pf(1,2,\ldots,2m))^2,
% \end{align}

In this work, we address the skew symmetric inner product of the following specific form
\begin{align}\label{skew_kernel}
\left \langle f(z) ,g(z)\right\rangle=\iint f(x)g(y)\frac{y-x}{x+y}w(x)w(y)dxdy,
\end{align}
in which $\frac{y-x}{x+y}$ is the skew symmetric integral kernel to guarantee the skew symmetric property and $w(x)$ is the weight function ensuring the bimoments 
\begin{align*}
\mu_{i,j}=\iint x^iy^j\frac{y-x}{x+y}w(x)w(y)dxdy=\Pf(i,j),
\end{align*}
and the single moments
\begin{align*}
&\beta_i=\int x^iw(x)dx=\Pf(d_0,i)
\end{align*}
exist.  In this case, it is not difficult to see that the Pfaffian entries satisfy
\begin{subequations}\label{7-2-7}
\begin{align}
&\Pf(i+1,j)+\Pf(i,j+1)=-\Pf(d_0,d_1,i,j),\\
&\Pf(d_0,i+1)=\Pf(d_1,i),
\end{align}
\end{subequations}
where $$\Pf(d_1,i)=\beta_{i+1},\qquad \Pf(d_0,d_1)=0.$$
It is worth noting that the specific partial-skew OPs satisfy a four-term recurrence relation
\begin{align}
z(P_n(z)-u_nP_{n-1}(z))=&P_{n+1}(z)+(b_{n+1}-b_n-u_n)P_{n}(z)\nonumber\\
&-u_n(b_{n}-b_{n+1}+u_{n+1})P_{n-1}(z)+u_n^2u_{n-1}P_{n-2}(z),\label{7-1-14}
\end{align}
where
\begin{align*}
u_n=\frac{\tau_{n+1}\tau_{n-1}}{\tau_{n}^2},\qquad b_{n}=\frac{\sigma_n}{\tau_n},
%\left\{\begin{aligned}
%		&\frac{\Pf(0,\ldots,2m-2,2m)}{\tau_{2m}},\quad n=2m,\\
%		&\frac{\Pf(d_0,0,\ldots,2m-1,2m+1)}{\tau_{2m+1}},\quad n=2m+1.
%	\end{aligned}\right.
\end{align*}
with $\tau_n$ denoting the Pfaffians with the Pfaffian entries \eqref{7-1-4}
and $\sigma_n$ denoting the following Pfaffians
$$\sigma_{2m}=\Pf(0,\ldots,2m-2,2m),\quad \sigma_{2m+1}=\Pf(d_0,0,\ldots,2m-1,2m+1).
$$
% \begin{align}
% u_n=\frac{\tau_{n+1}\tau_{n-1}}{\tau_n^2},\quad b_n=\frac{\partial}{\partial t_1}log \tau_n.\label{7-1-15}
% \end{align}

\subsection{Nonisospectral B-Toda lattice }\label{nbt}
In this subsection we perform nonisospectral deformation on partial-skew OPs. To this end, we suppose that the integral variables $x,y$ and the spectral parameter $z$ satisfy the following time evolution
\begin{align}
\frac{d}{dt}x=\alpha x,\quad \frac{d}{dt}y=\alpha y, \quad \frac{d}{dt}z=\alpha z\label{7-2-1}
\end{align}
% We also require that the weight function be chosen such that the moments satisfy the time evolution
and we require an appropriate weight function to ensure that the moments undergo the time evolution
\begin{subequations}\label{7-2-2-3}
\begin{align}
\frac{d}{dt}\mu_{i,j}(t)=&\alpha(i+j) \mu_{i,j}(t)+\alpha_1(\mu_{i+1,j}(t)+\mu_{i,j+1}(t))+\alpha_2 (\mu_{i+2,j}(t)+\mu_{i,j+2}(t)),\label{7-2-2}\\
\frac{d}{dt}\beta_i(t)=&i\alpha \beta_i(t)+\alpha_1 \beta_{i+1}(t)+\alpha_2 \beta_{i+2}(t).\label{7-2-3}
\end{align}
\end{subequations}
According to the definitions \eqref{7-1-4} of Pfaffian entries, the above relations can be equivalently written as
\begin{subequations}\label{7-2-4}
\begin{align*}
\frac{d}{dt}\Pf(i,j)=&\alpha(i+j) \Pf(i,j)+\alpha_1(\Pf(i+1,j)+\Pf(i,j+1))\nonumber\\&
+\alpha_2 (\Pf(i+2,j)+\Pf(i,j+2)),\\
\frac{d}{dt}\Pf(d_0,i)=&i\alpha \Pf(d_0,i)+\alpha_1 \Pf(d_0,i+1)+\alpha_2 \Pf(d_0,i+2).
\end{align*}
\end{subequations}

Under the assumption of \eqref{7-2-1} and \eqref{7-2-2-3},  we can obtain some derivative formulae for the Pfaffians and the specific partial-skew OPs by employing ~\eqref{pf1}-\eqref{pf6} in Appendix \ref{DerPfaffian}. For facilitate the presentation, we first introduce the intermediate variable $t_1$ and list some derivative formulae with respect to $t_1$. 

Introduce the evolution on $t_1$ according to
\begin{align*}
\frac{\partial}{\partial t_1}\Pf(i,j)=&(\Pf(i+1,j)+\Pf(i,j+1)),\\
\frac{\partial}{\partial t_1}\Pf(d_0,i)=& \Pf(d_0,i+1),
\end{align*}
which can actually be regarded as a special evolution in the case of $\alpha=\alpha_2=0,\alpha_1=1$.
Then it follows from ~\eqref{pf1} and ~\eqref{pf2} that
\begin{subequations}\label{7-2-15-16}

\begin{align}
&\frac{\partial}{\partial t_1}\Pf(i_1,\ldots, i_{2N})=\sum_{k=i_1}^{i_{2N}}\Pf(i_1,i_2,\ldots,i_{k}+1, \ldots, i_{2N}),\label{7-2-15}\\
&\frac{\partial}{\partial t_1}\Pf(a_0,i_1,\ldots, i_{2N-1})=\sum_{k=i_1}^{i_{2N-1}}\Pf(a_0,i_1,i_2,\ldots,i_{k}+1, \ldots, i_{2N-1}).\label{7-2-16}
\end{align}
\end{subequations}

Based on the results in \cite[Lemma 3.15, Lemma 3.20]{chang2018partial}, we immediately obtain the derivative formulae for $\tau_{2n},\ \tau_{2n+1},\ \tau_{2n}P_{2n}(z;t),\ \tau_{2n+1}P_{2n+1}(z;t)$ with respect to $t_1$. 

\begin{lemma}\label{lem:t1_wron}
There hold the following differential relations
    \begin{subequations}\label{7-2-t1}
\begin{align}
&\frac{\partial}{\partial t_1}\tau_{2n}=\Pf(0,\ldots,2n-2,2n),\label{7-2-17}\\
&\frac{\partial}{\partial t_1}\tau_{2n+1}=\Pf(d_0,0,\ldots,2n-1,2n+1),\label{7-2-18}\\
&\frac{\partial}{\partial t_1}(\tau_{2n}P_{2n}(z;t))
=\Pf(0,\ldots,2n-1,2n+1,z)-z \Pf(0,1,\ldots,2n,z),\label{7-2-21}\\
&\frac{\partial}{\partial t_1}(\tau_{2n+1}P_{2n+1}(z;t))
=\Pf(d_0,0,\ldots,2n,2n+2,z)-z \Pf(d_0,0,\ldots,2n+1,z).\label{7-2-22}
\end{align}
\end{subequations}
\end{lemma}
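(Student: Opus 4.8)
The plan is to read each of the four quantities off as a single Pfaffian and then differentiate termwise using the expansion formulae \eqref{7-2-15}--\eqref{7-2-16}, taking advantage of the fact that all numeric indices involved are consecutive integers. Thus I would begin from $\tau_{2n}=\Pf(0,1,\ldots,2n-1)$, $\tau_{2n+1}=\Pf(d_0,0,1,\ldots,2n)$ and, using \eqref{7-1-9}--\eqref{7-1-10}, from $\tau_{2n}P_{2n}(z;t)=\Pf(0,1,\ldots,2n,z)$ and $\tau_{2n+1}P_{2n+1}(z;t)=\Pf(d_0,0,1,\ldots,2n+1,z)$.

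The first two identities \eqref{7-2-17}--\eqref{7-2-18} are the routine part. Applying \eqref{7-2-15} to $\tau_{2n}$ (resp. \eqref{7-2-16} to $\tau_{2n+1}$) expresses $\partial_{t_1}\tau$ as the sum over raising each numeric index by one. Because the indices form the consecutive block $0,1,\ldots,2n-1$ (resp. $0,1,\ldots,2n$), raising any index except the largest creates two equal columns and the corresponding Pfaffian vanishes; only the top increment $2n-1\mapsto 2n$ (resp. $2n\mapsto 2n+1$) survives, which yields precisely $\Pf(0,\ldots,2n-2,2n)$ and $\Pf(d_0,0,\ldots,2n-1,2n+1)$.

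The main obstacle is \eqref{7-2-21}--\eqref{7-2-22}, where the Pfaffian carries the border column $z$ with the $t_1$-independent entries $\Pf(i,z)=z^i$. Here \eqref{7-2-15} cannot be applied verbatim, since raising a numeric row index $k$ simultaneously transforms the genuine moment entries (as it should) and the spurious entry $z^k\mapsto z^{k+1}=z\,\Pf(k,z)$, which does not evolve under $t_1$. My plan is to make this discrepancy explicit: in any perfect matching of $\{0,\ldots,2n,z\}$ the symbol $z$ is paired with exactly one numeric index, so each monomial of the Pfaffian expansion carries exactly one factor $z^i$; summing the index-raising formula over all rows therefore reproduces the true derivative plus one extra factor of $z$ per term, i.e. $\sum_{k=0}^{2n}\Pf(0,\ldots,k{+}1,\ldots,2n,z)=\partial_{t_1}\Pf(0,\ldots,2n,z)+z\,\Pf(0,\ldots,2n,z)$. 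The numeric sum on the left collapses (as above) to $\Pf(0,\ldots,2n-1,2n+1,z)$, and solving for $\partial_{t_1}$ gives the stated $-z\,\Pf(0,\ldots,2n,z)$ correction; the odd case \eqref{7-2-22} is identical, with $\Pf(d_0,z)=0$ ensuring the $d_0$-row contributes nothing. The cleanest rigorous route is to invoke \cite[Lemma 3.15, Lemma 3.20]{chang2018partial}, which already record these $z$-bordered derivative identities via the Pfaffian rules of Appendix~\ref{DerPfaffian}; I would nonetheless verify the signs on the small cases $n=0$ (both sides vanish) and $n=1$ (the $z^{2n+1}$ terms cancel) before trusting the general bookkeeping.
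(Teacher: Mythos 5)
Your proposal is correct, and it in fact supplies more than the paper does: the paper's entire ``proof'' of this lemma is the sentence immediately preceding it, namely that the four relations follow \emph{immediately} from \cite[Lemma 3.15, Lemma 3.20]{chang2018partial} --- exactly the citation you name as your fallback route. Your self-contained argument is sound and fills in what that citation hides. The routine part, \eqref{7-2-17}--\eqref{7-2-18}, is as you say: apply \eqref{7-2-15}--\eqref{7-2-16} to $\tau_{2n}=\Pf(0,\ldots,2n-1)$ and $\tau_{2n+1}=\Pf(d_0,0,\ldots,2n)$, and all raised terms except the top one vanish because raising an interior index of a consecutive block produces a repeated index. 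For the $z$-bordered cases you correctly identify the real obstacle (the entries $\Pf(i,z)=z^i$ do not obey the assumed entry evolution) and your correction-term bookkeeping is valid; one clean way to phrase it is via the formal ``raising'' derivation $\tilde\partial$ under which $\Pf(i,z)$ evolves as $\Pf(i+1,z)=z\,\Pf(i,z)$: the matching-bijection proof of \eqref{pf1}--\eqref{pf2} extends verbatim to Pfaffians carrying the extra non-raised symbols $z$ (and $d_0$), so $\tilde\partial\,\Pf(0,\ldots,2n,z)=\Pf(0,\ldots,2n-1,2n+1,z)$, while $\tilde\partial-\partial_{t_1}$ annihilates every numeric entry and multiplies the unique $z$-entry of each perfect matching by $z$ (in the odd case the matching pairing $z$ with $d_0$ contributes nothing since $\Pf(d_0,z)=0$), whence $(\tilde\partial-\partial_{t_1})\Pf=z\,\Pf$ and \eqref{7-2-21}--\eqref{7-2-22} follow. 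The only caveat worth recording is that this two-special-symbol extension is not literally one of the displayed rules \eqref{pf1}--\eqref{pf4}, so it needs the one-line justification above rather than a bare invocation of the appendix; your small-$n$ sign checks (at $n=0$ the right-hand side is $z-z=0$, and in general the $z^{2n+1}$ terms cancel, consistent with $\tau_{2n}P_{2n}$ having degree $2n$) are both correct.
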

By successively applying~\eqref{pf1} and ~\eqref{pf2}, we can also derive the following corollary on the second derivative with respect to $t_1$. 
% Using derivative formula ~\eqref{7-2-15} and ~\eqref{7-2-16}, we can also get the second derivative of $\tau_n P_{n}(z; t)$ in regard to the intermediate variable $t_1$, from which the following corollary is given.
\begin{coro}\label{co:7-2}
There hold
\begin{subequations}
\begin{align}
&\frac{\partial^2}{\partial t_1^2}\tau_{2n}
=\Pf(0,\ldots,2n-3,2n-1,2n)+\Pf(0,\ldots,2n-2,2n+1),\label{7-2-19}\\
&\frac{\partial^2}{\partial t_1^2}\tau_{2n+1}
=\Pf(d_0,0,\ldots,2n-2,2n,2n+1)+\Pf(d_0,0,\ldots,2n-1,2n+2),\label{7-2-20}\\
&\frac{\partial^2}{\partial t_1^2}(\tau_{2n}P_{2n}(z;t))\nonumber\\
=&-2z \Pf(0,\ldots,2n-1,2n+1,z)+\Pf(0,\ldots,2n-2,2n,2n+1,z)\nonumber\\
&+\Pf(0,\ldots,2n-1,2n+2,z)+z^2 \Pf(0,\ldots,2n,z),\label{7-2-23}\\
&\frac{\partial^2}{\partial t_1^2}(\tau_{2n+1}P_{2n+1}(z;t))\nonumber\\
=&-2z \Pf(d_0,0,\ldots,2n,2n+2,z)+\Pf(d_0,0,\ldots,2n,2n+3,z)\nonumber\\
&+\Pf(d_0,0,\ldots,2n-1,2n+1,2n+2,z)+z^2 \Pf(d_0,0,\ldots,2n+1,z).\label{7-2-24}
\end{align}
\end{subequations}
\end{coro}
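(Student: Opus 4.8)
The plan is to obtain each of the four identities by applying $\partial/\partial t_1$ once more to the corresponding first-derivative formula already established in Lemma~\ref{lem:t1_wron}, and then discarding every Pfaffian that contains a repeated index. I would use $\partial/\partial t_1$ in two guises: acting on a Pfaffian with purely numeric indices it increments each index in turn and sums the results, as in \eqref{7-2-15}--\eqref{7-2-16}; acting on a Pfaffian carrying the spectral column $z$ it produces, in addition to the numeric increments, a term $-z$ times the same Pfaffian, exactly the pattern visible in \eqref{7-2-21}--\eqref{7-2-22}. This last rule is itself a consequence of the entry conventions \eqref{7-1-4} together with the Pfaffian derivative identities \eqref{pf1}--\eqref{pf6}; since $z$ is the spectral parameter and is inert under the $t_1$-flow, it may be pulled outside $\partial/\partial t_1$ freely, so $\partial_{t_1}\bigl(z\,\Pf(\cdots)\bigr)=z\,\partial_{t_1}\Pf(\cdots)$.

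For the two $\tau$-identities \eqref{7-2-19}--\eqref{7-2-20} I would begin from $\partial_{t_1}\tau_{2n}=\Pf(0,\ldots,2n-2,2n)$ in \eqref{7-2-17} (resp.\ \eqref{7-2-18}) and apply \eqref{7-2-15} (resp.\ \eqref{7-2-16}). The decisive observation is that incrementing any index $i$ below the top two entries produces an index equal to the one immediately following it, so that Pfaffian vanishes; only the two boundary increments survive. For instance,
\begin{align*}
\frac{\partial^2}{\partial t_1^2}\tau_{2n}
=\frac{\partial}{\partial t_1}\Pf(0,\ldots,2n-2,2n)
=\Pf(0,\ldots,2n-3,2n-1,2n)+\Pf(0,\ldots,2n-2,2n+1),
\end{align*}
the first term coming from $2n-2\mapsto 2n-1$ (filling the gap) and the second from $2n\mapsto 2n+1$; the $\tau_{2n+1}$ case is identical with the inert spectator index $d_0$ carried along.

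For the polynomial identities \eqref{7-2-23}--\eqref{7-2-24} I would differentiate the right-hand sides of \eqref{7-2-21}--\eqref{7-2-22} term by term. Applying the $z$-column rule to the leading Pfaffian yields its two surviving boundary increments plus a $-z$ copy of itself, while differentiating the $-z\,\Pf(\cdots)$ summand reuses \eqref{7-2-21} (resp.\ \eqref{7-2-22}) and contributes a second $-z$ copy together with a $+z^2$ term. Collecting these gives precisely the coefficient $-2z$ on $\Pf(0,\ldots,2n-1,2n+1,z)$ and the $z^2\,\Pf(0,\ldots,2n,z)$ tail of \eqref{7-2-23}, and likewise for \eqref{7-2-24}. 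The main obstacle I anticipate is purely organizational: tracking which increments collide with a neighbouring index (and hence vanish) when the incremented index sits adjacent to a gap, and making sure the two separate sources of $-z$-weighted and $z^2$-weighted Pfaffians combine into exactly the stated coefficients, especially in the $P_{2n+1}$ case where the $d_0$ row must be kept fixed throughout.
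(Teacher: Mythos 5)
Your handling of \eqref{7-2-19}--\eqref{7-2-20} coincides with the paper's proof: a single application of the derivative formulae \eqref{7-2-15-16} to \eqref{7-2-17}--\eqref{7-2-18}, with every interior increment annihilated by an index collision. For \eqref{7-2-23}--\eqref{7-2-24}, however, your route is genuinely different from the paper's. The paper never differentiates a Pfaffian carrying the $z$ column as a single object; it expands $\Pf(0,\ldots,2n-1,2n+1,z)$ and $z\,\Pf(0,\ldots,2n,z)$ in powers of $z$ via \eqref{intro10}, differentiates the purely numeric coefficient Pfaffians with \eqref{7-2-15-16}, and then re-assembles the resulting sums into Pfaffians with a $z$ column --- that re-assembly is what fills Appendix \ref{pco1}. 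Your collection of terms (the $-2z$ coefficient and the $z^2$ tail, in both parities, with $d_0$ carried along inertly) is correct \emph{provided} your differentiation rule for Pfaffians containing the $z$ column holds.

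That rule is exactly where the gap lies. It is not, as you claim, a direct consequence of \eqref{7-1-4} together with \eqref{pf1}--\eqref{pf6}: those formulae require \emph{every} entry of the Pfaffian to evolve in the prescribed increment form, whereas the entries $\Pf(i,z)=z^i$ satisfy $\partial_{t_1}\Pf(i,z)=0$, which is not of that form. Moreover, the instances you pattern-match from, \eqref{7-2-21}--\eqref{7-2-22}, concern only the consecutive strings $0,1,\ldots,2n$, while the second differentiation must be applied to gapped Pfaffians such as $\Pf(0,\ldots,2n-1,2n+1,z)$ and $\Pf(d_0,0,\ldots,2n,2n+2,z)$; there the rule is true but requires proof, and the proof is nontrivial precisely because of the gap-filling increments you flag at the end of your proposal. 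Two ways to close the gap: (i) observe that $\Pf(i+1,z)-z\,\Pf(i,z)=z^{i+1}-z\cdot z^{i}=0$, so the $z$-column entries do evolve in an admissible ``one operator per index'' form (increment on numeric indices, multiplication by $-z$ attached to the $z$ slot, annihilation on $d_0$); since the $z$ index enters each monomial of a Pfaffian through exactly one entry, the induction behind \eqref{pf1} then yields your rule verbatim, also for gapped index sets; or (ii) expand along the $z$ column and re-collect as the paper does, where the collision terms cancel against part of the $-z$-weighted expansion only because adjacent coefficients in the expansion along $z$ carry opposite signs. Without one of these arguments, the central step of your proof is asserted rather than established --- indeed, proving that rule is essentially the mathematical content of the corollary.
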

\begin{proof}
See Appendix \ref{pco1}.
\end{proof}

Furthermore, observe that the derivative for the Pfaffian entries with respect to $t_1$ can be equivalently written as
\begin{align*}
\frac{\partial}{\partial t_1}\Pf(i,j)=&-\Pf(d_0,d_1,i,j),\\
\frac{\partial}{\partial t_1}\Pf(d_0,i)=& \Pf(d_1,i),
\end{align*}
by recalling \eqref{7-2-7}. By employing ~\eqref{pf3} and ~\eqref{pf4} with $\alpha=\alpha_2=0,\alpha_1=1$, similar to the results in \cite[Lemma 3.15, Corollary 3.16]{chang2018partial}, we also have the following evolution relations.
\begin{lemma}\label{lem:t1}
There hold the following differential relations
    \begin{subequations}\label{7-2-t1-2}
\begin{align}
&\frac{\partial}{\partial t_1}\tau_{2n}=-\Pf(d_0,d_1,0,\ldots,2n-1),\label{7-2-17-2}\\
&\frac{\partial}{\partial t_1}\tau_{2n+1}=\Pf(d_1,0,\ldots,2n),\label{7-2-18-2}\\
&\frac{\partial}{\partial t_1}(\tau_{2n}P_{2n}(z;t))
=-\Pf(d_0,d_1,0,\ldots,2n,z),\label{7-2-21-2}\\
&\frac{\partial}{\partial t_1}(\tau_{2n+1}P_{2n+1}(z;t))
=\Pf(d_1,0,\ldots,2n+1,z).\label{7-2-22-2}
\end{align}
\end{subequations}
\end{lemma}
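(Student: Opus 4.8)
The plan is to mirror the derivation of Lemma~\ref{lem:t1_wron}, but to feed the Pfaffian derivative formulas \eqref{pf3} and \eqref{pf4} of Appendix~\ref{DerPfaffian} with the \emph{alternative} form of the entry-level evolution rather than the index-shift form. Setting $\alpha=\alpha_2=0$ and $\alpha_1=1$ in \eqref{7-2-2-3} and rewriting the outcome through the moment relations \eqref{7-2-7}, the $t_1$-evolution of the Pfaffian entries becomes
\begin{align*}
\frac{\partial}{\partial t_1}\Pf(i,j)=-\Pf(d_0,d_1,i,j),\qquad
\frac{\partial}{\partial t_1}\Pf(d_0,i)=\Pf(d_1,i),
\end{align*}
while the columns carrying the spectral variable are inert, $\Pf(i,z)=z^i$ and $\Pf(d_0,z)=\Pf(d_1,z)=0$, all independent of $t_1$. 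Since by \eqref{7-2-7} this rule is identical to the shift rule used in Lemma~\ref{lem:t1_wron}, the resulting derivatives must agree; the point is that \eqref{pf3}--\eqref{pf4} now repackage them in the $d_1$-marked form claimed in \eqref{7-2-t1-2}.

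First I would dispose of the two Pfaffians that carry no special index, $\tau_{2n}=\Pf(0,\ldots,2n-1)$ and $\tau_{2n}P_{2n}(z;t)=\Pf(0,\ldots,2n,z)$. For these, formula \eqref{pf3} propagates the entry rule $\partial_{t_1}\Pf(i,j)=-\Pf(d_0,d_1,i,j)$ to the whole array, inserting the fixed pair $d_0,d_1$ and factoring out the overall sign; the $z$-column, being constant in $t_1$ and pairing with $d_0,d_1$ as zero, is simply transported. This yields \eqref{7-2-17-2} and \eqref{7-2-21-2} at once.

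Next I would treat the two Pfaffians already containing the marker $d_0$, namely $\tau_{2n+1}=\Pf(d_0,0,\ldots,2n)$ and $\tau_{2n+1}P_{2n+1}(z;t)=\Pf(d_0,0,\ldots,2n+1,z)$, for which \eqref{pf4} is the relevant formula. Here the differentiated $d_0$-entries $\Pf(d_0,i)$ produce $\Pf(d_1,i)$, i.e. the clean replacement $d_0\mapsto d_1$, whereas the terms coming from differentiating the ordinary entries $\Pf(i,j)$ would insert $d_0,d_1$ into a list that already contains $d_0$ and therefore vanish by the repeated-index antisymmetry of Pfaffians. The surviving $d_0\mapsto d_1$ substitution gives \eqref{7-2-18-2} and \eqref{7-2-22-2}, again carrying the inert $z$-column unchanged.

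The only real work, and hence the main obstacle, is the sign- and parity-bookkeeping inside \eqref{pf3}--\eqref{pf4}: one must check that all the index lists that appear are of even length, that the spectral column is correctly inert under $\partial_{t_1}$, and, most delicately, that in the $d_0$-carrying cases every ordinary-entry contribution genuinely cancels so that only the marker substitution remains. Once these verifications are in place, the four identities of Lemma~\ref{lem:t1} follow with no further computation, in complete parallel with \cite[Lemma 3.15, Corollary 3.16]{chang2018partial}.
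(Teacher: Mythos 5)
Your proposal is correct and takes essentially the same route as the paper's own (very brief) proof: the paper likewise rewrites the entry evolution in the $d_0,d_1$-marked form via \eqref{7-2-7} and then invokes \eqref{pf3}--\eqref{pf4} with $\alpha=\alpha_2=0$, $\alpha_1=1$, deferring the details to \cite[Lemma 3.15, Corollary 3.16]{chang2018partial}. The extra verifications you flag — the inert $z$-column with $\Pf(d_0,z)=\Pf(d_1,z)=0$, and the vanishing of the ordinary-entry contributions in the $d_0$-carrying Pfaffians by repeated-index antisymmetry (which also settles the apparent sign mismatch between the two entry rules) — are precisely the details the paper leaves to that reference.
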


After comparing the relations in Lemma \ref{lem:t1_wron} and Lemma \ref{lem:t1}, we immediately obtain the following identities.
%According to~\eqref{7-2-7} and derivative formulas~\eqref{pf1}-\eqref{pf4}, we find that there are some equivalent relationships between Pfaffians, which allows us to substitute Pfaffians as needed in the subsequent calculation process, so that we can flexibly use Pfaffian identities for simplification. We summarize these equivalent relations as the lemma \ref{le1}.
\begin{lemma}\label{le1}
As for the specific Pfaffians, there hold the following relations:
\begin{subequations}
\begin{align}
&\Pf(0,1,\ldots,2n-2,2n)=-\Pf(d_0,d_1,0,\ldots,2n-1),\label{7-2-8}\\
&-\Pf(d_0,d_1,0,\ldots,2n,z)
=\Pf(0,\ldots,2n-1,2n+1,z)-z \Pf(0,\ldots,2n,z),\label{7-2-9}\\
&\Pf(d_0,0,\ldots,2n-1,2n+1)=\Pf(d_1,0,\ldots,2n),\label{7-2-10}\\
&\Pf(d_1,0,\ldots,2n+1,z)
=\Pf(d_0,0,\ldots,2n,2n+2,z)-z \Pf(d_0,0,\ldots,2n+1,z).\label{7-2-11}
\end{align}
  
\end{subequations}
\end{lemma}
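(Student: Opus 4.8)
The plan is to prove Lemma \ref{le1} by simply comparing two independent computations of the same $t_1$-derivatives. Lemma \ref{lem:t1_wron} and Lemma \ref{lem:t1} each give expressions for the four quantities $\frac{\partial}{\partial t_1}\tau_{2n}$, $\frac{\partial}{\partial t_1}\tau_{2n+1}$, $\frac{\partial}{\partial t_1}(\tau_{2n}P_{2n}(z;t))$ and $\frac{\partial}{\partial t_1}(\tau_{2n+1}P_{2n+1}(z;t))$, but computed along two different routes. The first route (Lemma \ref{lem:t1_wron}) uses the ``shift-the-indices'' derivative rule \eqref{7-2-15}--\eqref{7-2-16} coming from the raw evolution $\frac{\partial}{\partial t_1}\Pf(i,j)=\Pf(i+1,j)+\Pf(i,j+1)$ on the Pfaffian entries, while the second route (Lemma \ref{lem:t1}) rewrites that same evolution using the identity \eqref{7-2-7}, namely $\frac{\partial}{\partial t_1}\Pf(i,j)=-\Pf(d_0,d_1,i,j)$ and $\frac{\partial}{\partial t_1}\Pf(d_0,i)=\Pf(d_1,i)$, and then applies the Pfaffian expansion formulae. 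Since $\tau_{2n},\tau_{2n+1},\tau_{2n}P_{2n},\tau_{2n+1}P_{2n+1}$ are fixed functions of $t_1$, their derivatives are unambiguous, so the two expressions must coincide. Equating them term by term yields precisely the four identities \eqref{7-2-8}--\eqref{7-2-11}.

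Concretely, I would proceed identity by identity. For \eqref{7-2-8}, I set the right-hand side of \eqref{7-2-17} equal to the right-hand side of \eqref{7-2-17-2}, giving $\Pf(0,\ldots,2n-2,2n)=-\Pf(d_0,d_1,0,\ldots,2n-1)$. For \eqref{7-2-10}, I compare \eqref{7-2-18} with \eqref{7-2-18-2} to obtain $\Pf(d_0,0,\ldots,2n-1,2n+1)=\Pf(d_1,0,\ldots,2n)$. For the polynomial-level identities, comparing \eqref{7-2-21} with \eqref{7-2-21-2} gives $\Pf(0,\ldots,2n-1,2n+1,z)-z\Pf(0,\ldots,2n,z)=-\Pf(d_0,d_1,0,\ldots,2n,z)$, which is \eqref{7-2-9}, and comparing \eqref{7-2-22} with \eqref{7-2-22-2} gives \eqref{7-2-11}. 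In each case the left- and right-hand sides are literally the two already-established expressions for one derivative, so no further manipulation is needed once both lemmas are in hand.

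Because both Lemma \ref{lem:t1_wron} and Lemma \ref{lem:t1} are assumed as given (the former ``immediately'' from the cited results in \cite{chang2018partial} together with the shift rules \eqref{7-2-15-16}, the latter analogously with the substituted entries), the proof of Lemma \ref{le1} is essentially a one-line observation: uniqueness of the derivative of a well-defined function forces the two representations to agree. The main conceptual point I want to make explicit is that the \emph{same} operator $\frac{\partial}{\partial t_1}$ is being expanded along two routes, and that the equivalence $\frac{\partial}{\partial t_1}\Pf(i,j)=\Pf(i+1,j)+\Pf(i,j+1)=-\Pf(d_0,d_1,i,j)$ (valid by \eqref{7-2-7}) is what links the index-shift picture to the $d_0,d_1$-augmented Pfaffian picture; everything downstream is just propagating this single entry-level identity through the expansion formulae, which the two preceding lemmas have already done.

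The step I expect to require the most care is not a calculation but a bookkeeping one: ensuring that the two lemmas are genuinely expansions of the identical derivative, i.e.\ that the normalizing prefactors $\tau_{2n}$, $\tau_{2n+1}$ are treated consistently so that one is comparing $\frac{\partial}{\partial t_1}(\tau_{2n}P_{2n})$ and not, say, $\tau_{2n}\frac{\partial}{\partial t_1}P_{2n}$. As long as both lemmas differentiate the unnormalized products $\tau_n P_n$ (which they do, by their statements), the matching is immediate and the four identities drop out with no residual terms. I would therefore present the proof as a direct term-by-term comparison, remarking only that it rests on the entry-level equivalence furnished by \eqref{7-2-7}.
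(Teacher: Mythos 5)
Your proposal is correct and coincides with the paper's own argument: the paper obtains Lemma \ref{le1} precisely by comparing the two expressions for $\frac{\partial}{\partial t_1}\tau_{2n}$, $\frac{\partial}{\partial t_1}\tau_{2n+1}$, $\frac{\partial}{\partial t_1}(\tau_{2n}P_{2n})$ and $\frac{\partial}{\partial t_1}(\tau_{2n+1}P_{2n+1})$ given in Lemma \ref{lem:t1_wron} and Lemma \ref{lem:t1}, exactly as you do. Your additional remarks on the entry-level equivalence via \eqref{7-2-7} and on differentiating the unnormalized products $\tau_nP_n$ are accurate and consistent with the paper's setup.
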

In general, it is not hard to obtain the derivative expressions for $\tau_n$ with respect to $t$ by using the derivative formulae~\eqref{pf1} and~\eqref{pf2}.
\begin{lemma}
Under the assumption of \eqref{7-2-1} and \eqref{7-2-2-3}, we have
\begin{subequations}\label{7-2-12}
\begin{align}
\frac{d}{dt}\tau_{2n}=&n(2n-1)\alpha\tau_{2n}+\alpha_1 \Pf(0,1,\ldots,2n-2,2n)\nonumber\\
&+\alpha_2 (-\Pf(0,1,\ldots,2n-3,2n-1,2n)+\Pf(0,1,\ldots,2n-2,2n+1)),\\
\frac{d}{dt}\tau_{2n+1}=&n(2n+1)\alpha\tau_{2n+1}+\alpha_1 \Pf(d_0,0,1,\ldots,2n-1,2n+1)\nonumber\\
&+\alpha_2 (-\Pf(d_0,0,1,\ldots,2n-2,2n+1,2n)+\Pf(d_0,0,1,\ldots,2n-1,2n+2)).
\end{align}
\end{subequations}
\end{lemma}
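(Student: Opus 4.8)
The plan is to compute $\frac{d}{dt}\tau_{2n}$ and $\frac{d}{dt}\tau_{2n+1}$ directly from their Pfaffian definitions by applying the derivative formulae \eqref{pf1} and \eqref{pf2} together with the prescribed evolution \eqref{7-2-2-3} of the Pfaffian entries. Recall $\tau_{2n}=\Pf(0,1,\ldots,2n-1)$ and $\tau_{2n+1}=\Pf(d_0,0,1,\ldots,2n)$. The key observation is that the evolution of each entry splits into three pieces: a diagonal scaling term proportional to $\alpha$, a first-shift term proportional to $\alpha_1$, and a second-shift term proportional to $\alpha_2$. By linearity of the Pfaffian-derivative rule, I expect $\frac{d}{dt}\tau$ to break into three corresponding contributions that I can treat one at a time.

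First I would handle the $\alpha$-contribution. Since $\frac{d}{dt}\Pf(i,j)$ contributes a factor $\alpha(i+j)\Pf(i,j)$ and $\frac{d}{dt}\Pf(d_0,i)$ contributes $i\alpha\Pf(d_0,i)$, differentiating the Pfaffian via \eqref{pf1}--\eqref{pf2} and summing the index weights over all entries simply multiplies $\tau$ by $\alpha$ times the total index sum. For $\tau_{2n}$ the indices run $0,1,\ldots,2n-1$, whose sum is $\frac{(2n-1)2n}{2}=n(2n-1)$; for $\tau_{2n+1}$ the indices $0,\ldots,2n$ sum to $\frac{2n(2n+1)}{2}=n(2n+1)$ (the $d_0$ slot carries weight $0$). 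This reproduces exactly the leading coefficients $n(2n-1)\alpha$ and $n(2n+1)\alpha$. Next I would handle the $\alpha_1$-contribution: applying the first-shift derivative to each entry and summing, using the telescoping-type identity \eqref{7-2-15} (equivalently \eqref{7-2-16} in the odd case) with $\alpha=\alpha_2=0$, $\alpha_1=1$, collapses all single-shift terms because raising one index to an already-present index kills the Pfaffian; only the boundary term survives, giving $\alpha_1\Pf(0,\ldots,2n-2,2n)$ and $\alpha_1\Pf(d_0,0,\ldots,2n-1,2n+1)$. This is precisely $\frac{\partial}{\partial t_1}\tau$ as recorded in Lemma \ref{lem:t1_wron}, so I can simply invoke \eqref{7-2-17}--\eqref{7-2-18}.

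The $\alpha_2$-contribution requires the second-shift derivative $\Pf(i+2,j)+\Pf(i,j+2)$ applied to each entry. I would express this as the second $t_1$-derivative already computed in Corollary \ref{co:7-2}: formulae \eqref{7-2-19}--\eqref{7-2-20} give $\frac{\partial^2}{\partial t_1^2}\tau_{2n}=\Pf(0,\ldots,2n-3,2n-1,2n)+\Pf(0,\ldots,2n-2,2n+1)$ and the analogous odd expression. However, the naive $\alpha_2$-term produced by shifting a single index by two is \emph{not} the full second $t_1$-derivative, which also contains cross terms from shifting two distinct entries by one each; those cross terms must cancel or be accounted for. The main obstacle, then, is reconciling the single-entry double-shift with the genuine second derivative: I expect that summing the double-shift over all entries yields $\Pf(0,\ldots,2n-2,2n+1)$ from raising the top index twice, minus $\Pf(0,\ldots,2n-3,2n-1,2n)$ from the two possible single raises near the top (the sign discrepancy between the two terms in \eqref{7-2-19} and the final answer is the delicate point). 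Carefully tracking these boundary Pfaffians and their signs—using that any repeated index annihilates the Pfaffian—should deliver the stated combination $\alpha_2(-\Pf(0,\ldots,2n-3,2n-1,2n)+\Pf(0,\ldots,2n-2,2n+1))$ for the even case and its counterpart for the odd case. Assembling the three contributions then gives \eqref{7-2-12}, completing the proof.
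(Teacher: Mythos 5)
Your overall strategy coincides with the paper's: the paper gives no detailed argument for this lemma, saying only that it follows from the derivative formulae \eqref{pf1} and \eqref{pf2}, and your plan --- split $\frac{d}{dt}\tau_n$ into the $\alpha$, $\alpha_1$, $\alpha_2$ contributions, compute the index sums $n(2n-1)$ and $n(2n+1)$, and exploit the fact that a Pfaffian with a repeated index vanishes --- is exactly how that computation is meant to go. Your $\alpha$ and $\alpha_1$ parts are correct as written, and invoking Lemma \ref{lem:t1_wron} (i.e.\ \eqref{7-2-17}--\eqref{7-2-18}) for the $\alpha_1$ part is legitimate.

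The gap is in the $\alpha_2$ step, which is the only place where anything nontrivial (the minus sign) happens, and your explanation of it would not survive being written out. In \eqref{pf1} the $\alpha_2$ contribution is $\sum_k \Pf(i_1,\ldots,i_k+2,\ldots,i_{2N})$, a sum of \emph{single-entry} shifts by two; your attribution of the term $-\Pf(0,\ldots,2n-3,2n-1,2n)$ to ``the two possible single raises near the top'' describes no mechanism present in this sum: raises by one belong to the $\alpha_1$ sum, and the genuine cross terms (two distinct entries each raised by one) belong to $\frac{\partial^2}{\partial t_1^2}\tau_{2n}$, not to the $\alpha_2$ sum, and carry a \emph{plus} sign. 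The correct accounting for $\tau_{2n}=\Pf(0,\ldots,2n-1)$ is: the shifts $k\mapsto k+2$ with $k\le 2n-3$ die because $k+2$ is already present; the shift of the entry $2n-2$ produces, in place, $\Pf(0,\ldots,2n-3,2n,2n-1)$, and the single adjacent transposition needed to sort it yields $-\Pf(0,\ldots,2n-3,2n-1,2n)$ --- that transposition is the sole source of the minus sign; and the shift of $2n-1$ gives $\Pf(0,\ldots,2n-2,2n+1)$ directly. This also explains why your detour through Corollary \ref{co:7-2} had to be abandoned: \eqref{7-2-19} equals the $\alpha_2$ sum \emph{plus} twice the cross term $\Pf(0,\ldots,2n-3,2n-1,2n)$, which is precisely the sign discrepancy you flagged, so it is not the object you need and no cancellation argument will make it so. Finally, when you carry out the same computation for $\tau_{2n+1}$ you will obtain the in-place term $\Pf(d_0,0,\ldots,2n-2,2n+1,2n)=-\Pf(d_0,0,\ldots,2n-2,2n,2n+1)$; be aware that the lemma as printed attaches the minus sign to the unsorted ordering $2n+1,2n$, which is inconsistent with the (sorted) even case --- the sorted-with-minus form is the one your computation should produce.
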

% Furthermore, we can compute two different expressions for the time derivative of $\tau_nP_{n}(z;t)$, which will lay the foundation for later calculations of the time evolution of the Partial-skew-OPs ${P_n(z;t)}_{n\in \mathbb{N}}$.

The expressions for the derivative of $\tau_nP_{n}(z;t)$ with respect to the $t$ can also be derived.
\begin{lemma}\label{th11}
Under the assumption of \eqref{7-2-1} and \eqref{7-2-2-3}, we have
\begin{subequations}
\begin{align}
&\frac{d}{dt}(\tau_{2n}P_{2n}(z;t))\nonumber\\
=&\alpha n(2n+1)\Pf(0,1,\ldots,2n,z)\nonumber\\
&+\alpha_1\Big(\Pf(0,\ldots,2n-1,2n+1,z)-z \Pf(0,1,\ldots,2n,z)\Big)\nonumber\\
&-\alpha_2\Big(+\Pf(0,1,\ldots,2n-2,2n,2n+1,z)\nonumber\\
&\qquad\quad-\Pf(0,1,\ldots,2n-1,2n+2,z)+z^2 \Pf(0,1,\ldots,2n,z)\Big)\label{7-2-13}\\
&\frac{d}{dt}(\tau_{2n+1}P_{2n+1}(z;t))\nonumber\\
=&\alpha (n+1)(2n+1)\Pf(d_0,0,1,\ldots,2n+1,z)\nonumber\\
&+\alpha_1\big(-z \Pf(d_0,0,\ldots,2n+1,z)+\Pf(d_0,0,\ldots,2n,2n+2,z)\big)\nonumber\\
&-\alpha_2\big(\Pf(d_0,0,\ldots,2n-1,2n+1,2n+2,z)\nonumber\\
&\qquad\quad-\Pf(d_0,0,\ldots,2n,2n+3,z)+z^2 \Pf(d_0,0,\ldots,2n+1,z)\big).\label{7-2-14}
%+\alpha_1(-z \Pf(d_0,0,1,\ldots,2n+1,z)\nonumber\\
%&+\Pf(d_0,0,1,\ldots,2n,2n+2,z))+\alpha_2(-\Pf(d_0,0,\ldots,2n-1,2n+1,2n+2,z)\nonumber\\
%&+\Pf(d_0,0,\ldots,2n,2n+3,z)-z^2 \Pf(d_0,0,\ldots,2n+1,z)).\label{7-2-14}
\end{align}
\end{subequations}
\end{lemma}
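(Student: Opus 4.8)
The plan is to differentiate the Pfaffian representations $\tau_{2n}P_{2n}(z;t)=\Pf(0,1,\ldots,2n,z)$ and $\tau_{2n+1}P_{2n+1}(z;t)=\Pf(d_0,0,1,\ldots,2n+1,z)$ from \eqref{psop} directly with respect to $t$, using the derivative rules \eqref{pf1}--\eqref{pf6} together with the entry evolution \eqref{7-2-2-3}. The essential observation is that the right-hand side of \eqref{7-2-2-3} is linear in the three coefficients $\alpha,\alpha_1,\alpha_2$: the $\alpha$-term is a scaling $\alpha(i+j)\Pf(i,j)$, the $\alpha_1$-term a shift-by-one $\Pf(i+1,j)+\Pf(i,j+1)$, and the $\alpha_2$-term a shift-by-two $\Pf(i+2,j)+\Pf(i,j+2)$ (with the analogous statements for the $\Pf(d_0,\cdot)$ entries). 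Because the derivative rule for a Pfaffian is itself linear in the entry-derivatives, the whole $t$-derivative splits as a sum of an $\alpha$-piece, an $\alpha_1$-piece and an $\alpha_2$-piece, which I will compute separately and then recombine to match \eqref{7-2-13}--\eqref{7-2-14}.

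First, the $\alpha$-piece is pure homogeneity. Assigning scaling weight $i$ to the integer element $i$ and weight $0$ to the auxiliary elements $d_0$ and $z$ (consistent with $\Pf(i,j)=\mu_{i,j}$ having weight $i+j$, $\Pf(d_0,i)=\beta_i$ weight $i$, and $\Pf(i,z)=z^i$ weight $i$ under $\frac{dz}{dt}=\alpha z$), the Pfaffian $\Pf(0,1,\ldots,2n,z)$ is homogeneous of total weight $0+1+\cdots+2n=n(2n+1)$, and $\Pf(d_0,0,\ldots,2n+1,z)$ of weight $(n+1)(2n+1)$. This immediately yields the $\alpha n(2n+1)$ and $\alpha(n+1)(2n+1)$ prefactors, consistent with the scaling weights of $\tau_{2n},\tau_{2n+1}$ already appearing in \eqref{7-2-12}. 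Second, the $\alpha_1$-piece is exactly the shift-by-one, i.e.\ the auxiliary $t_1$-flow introduced above, so it is read off verbatim from Lemma \ref{lem:t1_wron}, equations \eqref{7-2-21} and \eqref{7-2-22}.

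The substantive work is the $\alpha_2$-piece. Applying the same bump rule as for $t_1$ but shifting each index by two, one gets a sum of Pfaffians in which almost every term carries a repeated integer index and hence vanishes; only the two boundary bumps survive, namely $2n-1\mapsto 2n+1$ and $2n\mapsto 2n+2$ (respectively $2n\mapsto 2n+2$ and $2n+1\mapsto 2n+3$ in the odd case), each acquiring a reordering sign. To these I must add the contribution of the $z$-column, whose shift produces a $z^2\Pf(\ldots)$ term via $z\,\Pf(i,z)=\Pf(i+1,z)$. I will organize these surviving terms and their signs by comparison with the second $t_1$-derivatives recorded in Corollary \ref{co:7-2}: since $\partial_{t_1}^2\Pf(i,j)=[\Pf(i+2,j)+\Pf(i,j+2)]+2\Pf(i+1,j+1)$, the shift-by-two piece equals the second $t_1$-derivative minus twice the contribution of the cross bumps $\Pf(i+1,j+1)$, and \eqref{7-2-23}--\eqref{7-2-24} supply precisely the bookkeeping (the cross bumps themselves being again governed by repeated-index vanishing). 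Collecting the $\alpha$-, $\alpha_1$- and $\alpha_2$-pieces then reproduces \eqref{7-2-13} and \eqref{7-2-14}.

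I expect the $\alpha_2$-piece to be the main obstacle, for two reasons: the Pfaffian reordering signs that accompany the surviving index-bumps are easy to misplace, and the $z$-column must be handled consistently with the integer columns, since its shift is what generates both the $z^2$ terms and the mixed boundary Pfaffians. The safeguard is to treat the even and odd cases in parallel and to cross-check every surviving term against the corresponding second-derivative identity in Corollary \ref{co:7-2}, exactly as the derivative of $\tau_n$ in \eqref{7-2-12} was obtained.
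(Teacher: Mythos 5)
Your proposal is correct, and it takes a genuinely different --- and leaner --- route than the paper's own proof in Appendix \ref{pth11}. The paper first expands $\tau_{2n}P_{2n}=\Pf(0,\ldots,2n,z)$ along the $z$-column as $\sum_{j}(-1)^j\Pf(0,\ldots,\hat{j},\ldots,2n)z^j$, applies the bump rules \eqref{pf1}--\eqref{pf2} to each coefficient Pfaffian (where many bumped terms survive, because the index $j$ is absent from each coefficient), and then spends most of the argument resumming these $j$-sums back into Pfaffians containing $z$, tracking the boundary terms in the top powers of $z$ by hand; the odd case repeats the whole computation with the $d_0$-column. You instead differentiate the full Pfaffian with the $z$-column kept in place, split the derivative into its $\alpha$-, $\alpha_1$- and $\alpha_2$-pieces, and let repeated-index vanishing kill every bump except the boundary ones: the $\alpha$-piece is a one-line homogeneity (Euler) argument giving the prefactors $n(2n+1)$ and $(n+1)(2n+1)$, the $\alpha_1$-piece is literally \eqref{7-2-21}--\eqref{7-2-22} of Lemma \ref{lem:t1_wron}, and the $\alpha_2$-piece reduces to the two surviving bumps $2n-1\mapsto 2n+1$, $2n\mapsto 2n+2$ (resp.\ $2n\mapsto 2n+2$, $2n+1\mapsto 2n+3$) together with the single transposition sign $\Pf(\ldots,2n+1,2n,z)=-\Pf(\ldots,2n,2n+1,z)$. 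What your route buys is the complete avoidance of the resummation bookkeeping; what the paper's route buys is that it only ever invokes \eqref{pf1}--\eqref{pf2} exactly as stated, since its coefficient Pfaffians contain no $z$-entry.

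That last point is the one step you must make explicit, because \eqref{pf1}--\eqref{pf2} are proved only for Pfaffians \emph{all} of whose entries obey the given evolution, and the $z$-entries do not: $\frac{d}{dt}\Pf(i,z)=\alpha i\,\Pf(i,z)$ carries no $\alpha_1$- or $\alpha_2$-part. When you reorganize the sum of entry-derivatives into ``bump index $k$'' Pfaffians, each bumped Pfaffian therefore contains a spurious contribution from its $(k+\nu,z)$-entry that is absent from the true derivative; summing these spurious pieces over $k$ and using $\Pf(k+\nu,z)=z^{\nu}\Pf(k,z)$ reconstitutes exactly $z^{\nu}\Pf(0,\ldots,2n,z)$, so the corrected bump rule reads ``sum of bumps minus $z\Pf$'' for the $\alpha_1$-piece and ``sum of bumps minus $z^2\Pf$'' for the $\alpha_2$-piece --- precisely the $-z$ and $-z^2$ terms of \eqref{7-2-13}--\eqref{7-2-14}, with the right signs. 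Once this corrected rule is established (a two-line expansion-along-$z$ argument suffices), your surviving terms match the statement for both parities. By contrast, the appeal to Corollary \ref{co:7-2} is dispensable and, as phrased, slightly loose: the second $t_1$-derivative of a Pfaffian contains products of first derivatives of \emph{distinct} entries in addition to entrywise second derivatives, so it does not equal the shift-by-two piece plus cross bumps entry by entry; use it only as a consistency check, not as the engine of the computation.
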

\begin{proof}
See Appendix \ref{pth11} for the proof.
\end{proof}

% To simplify the notation used in the subsequent calculations, we introduce an intermediate variable $t_1$.

% $\{P_{n}(z)\}_{n=0}^{\infty}$ 

Now we are ready to present the time evolution equation of the partial-skew OPs $\{P_n(z;t)\}$. By employing \eqref{7-2-t1}, \eqref{7-2-12},  Lemma \ref{th11} and Corollary \ref{co:7-2} as well as the Pfaffian identities, we can derive the following theorem.  
% whose detailed proof is placed in Appendix \ref{pth12}.

\begin{theorem}\label{th12}
Under the assumption of \eqref{7-2-1} and \eqref{7-2-2-3}, the monic partial-skew OPs $\{P_n(z;t)\}_{n\in \mathbb{N}}$ in \eqref{psop} with the inner product \eqref{skew_kernel}  admit the time evolution 
\begin{align}
&\frac{d}{dt}P_n(z;t)\nonumber\\
=&n \alpha P_n+\alpha_1(P_{n+1}+(-z+b_{n+1}-b_n)P_n-u_nu_{n+1}P_{n-1})\nonumber\\
&+\alpha_2\Big(P_{n+2}+(b_{n+2}-b_n)P_{n+1}-u_{n-1}u_n^2u_{n+1}P_{n-2}+u_nu_{n+1}(b_n-b_{n+2})P_{n-1}\nonumber\\
&\qquad\quad+\big(-z^2+(b_{n+1}-b_n)^2-u_{n+1}u_{n+2}-u_{n+1}(b_n-b_{n+2})\nonumber\\
&\qquad\qquad\qquad\qquad\qquad\qquad\qquad-u_nu_{n+1}-u_n(b_{n-1}-b_{n+1})\big)P_n\Big).\label{7-2-25}
\end{align} 
\begin{proof}
    See Appendix \ref{pth12}.
\end{proof}
% \begin{align}
% u_n=\frac{\tau_{n+1}\tau_{n-1}}{\tau_{n}^2},\quad b_n=\frac{\partial}{\partial t_1}\log\tau_{n}.\label{7-2-26}
% \end{align}
\iffalse
\begin{align}\label{7-2-26}
u_n=\frac{\tau_{n+1}\tau_{n-1}}{\tau_{n}^2},\quad b_{n}=\left\{
	\begin{aligned}
		&\frac{\Pf(0,\ldots,2m-2,2m)}{\tau_{2m}},\quad n=2m,\\
		&\frac{\Pf(d_0,0,\ldots,2m-1,2m+1)}{\tau_{2m+1}},\quad n=2m+1.
	\end{aligned}\right.\nonumber
\end{align}
\fi
\end{theorem}

By utilizing the four-term recurrence relation \eqref{7-1-14}, we can rewrite the derivative expressions for $P_n(z;t)$. As a consequence, the Lax pair, consisting of the recurrence relation and the time evolution, can be expressed in matrix form.

In fact, it follows from the four-term recurrence relation that
%\begin{subequations}\label{7-2-27}
\begin{align*}
P_{n+2}=&z(P_{n+1}-u_{n+1}P_{n})-(b_{n+2}-b_{n+1}-u_{n+1})P_{n+1}\nonumber\\
&+u_{n+1}(-b_{n+2}+b_{n+1}+u_{n+2})P_{n}-u_nu_{n+1}^2P_{n-1},\\
P_{n+1}=&z(P_{n}-u_{n}P_{n-1})-(b_{n+1}-b_{n}-u_{n})P_{n}\nonumber\\
&+u_{n}(-b_{n+1}+b_{n}+u_{n+1})P_{n-1}-u_{n-1}u_{n}^2P_{n-2},\\
P_{n-2}=&-\frac{1}{u_{n-1}u_n^2}(P_{n+1}+(b_{n+1}-b_{n}-u_{n})P_{n}\nonumber\\
&-u_{n}(-b_{n+1}+b_{n}+u_{n+1})P_{n-1}-z(P_{n}-u_{n}P_{n-1})),\\
P_{n-3}=-&\frac{1}{u_{n-1}^2u_{n-2}}(P_{n}+(b_{n}-b_{n-1}-u_{n-1})P_{n-1}\nonumber\\
&-u_{n-1}(-b_{n}+b_{n-1}+u_{n})P_{n-2}-z(P_{n-1}-u_{n-1}P_{n-2})),
\end{align*}
%\end{subequations}
employing  which, we can derive the following evolution relations 
%\begin{subequations}\label{7-2-28}
\begin{align*}
&\frac{d}{dt}P_{n}(z;t)\nonumber\\
=&(n\alpha+\alpha_1u_n+\alpha_2u_n(-b_{n-1}+2b_{n+1}-b_n+z))P_{n}\nonumber\\
&+(-\alpha_1u_n(z+b_{n+1}-b_{n})\nonumber\\
&+\alpha_2u_n(-z^2+(b_{n+1}-b_{n})(-2z-b_{n+1}+b_{n})-u_{n+1}(b_{n+2}-b_{n})))P_{n-1}\\
&+u_{n-1}u_n^2(-\alpha_1-\alpha_2(z+b_{n+1}-b_n+2u_{n+1}))P_{n-2},\\
&\frac{d}{dt}P_{n-1}(z;t)\nonumber\\
=&(\alpha_1+\alpha_2(z+b_n-b_{n-1}+2u_{n}))P_{n}\nonumber\\
&+((n-1)\alpha+\alpha_1(b_n-b_{n-1}-z)\nonumber\\
&\,+\alpha_2(-z^2+(b_n-b_{n-1})^2-u_{n-1}(b_{n-2}-b_n)+2u_{n}(b_n-b_{n-1}-u_{n-1}-z)))P_{n-1}\nonumber\\
&+u_{n-1}u_{n}(-\alpha_1+\alpha_2(b_n-b_{n+1}-2u_{n}+z))P_{n-2},\\
&\frac{d}{dt}P_{n-2}(z;t)\nonumber\\
=&\frac{1}{u_{n-1}}(\alpha_1-\alpha_2(b_{n-1}-b_{n}-2u_{n-1}+z))P_{n}\nonumber\\
&+(\frac{1}{u_{n-1}}(\alpha_1+\alpha_2(-b_{n-1}+b_{n}-z))(b_{n}-b_{n-1}-z)+\alpha_2(b_n-b_{n-2}))P_{n-1}\nonumber\\
&+((n-2)\alpha+\alpha_1(b_n-b_{n-2}-u_{n})\nonumber\\
&\,+\alpha_2(b_n(b_n-2b_{n-1}+2u_{n-1}+u_{n})-2z^2+b_{n-1}(b_{n-1}+u_{n})+u_{n}(z-2u_{n-1})\nonumber\\
&\qquad+(b_{n-1}-b_{n-2})^2-u_{n-2}(b_{n-3}-b_{n-1})-2u_{n-1}(b_{n-2}+u_{n-2})))P_{n-2}.
\end{align*}
%\end{subequations}
As a result, the recurrence relation \eqref{7-1-14} and the above time evolution equations of the partial-skew OPs $\{P_n(z;t)\}_{n\in \mathbb{N}}$ can yield the following matrix form
% As a result, we can express the recursive relation (9-2-23) and the time evolution equation (9-3-55) of the Partial-skew-OPs ${P_n(z;t)}_{n\in \mathbb{N}}$ in the following matrix form:
\begin{align}
\varphi_{n+1}=A_n\varphi_n,\quad \frac{d\varphi_n}{dt}=B_n\varphi_n,\label{7-2-29}
\end{align}
where $\varphi(z;t)=(P_{n-2}(z;t),P_{n-1}(z;t),P_{n}(z;t))^T$ and
\begin{align*}
&A_n=\left(
\begin{array}{ccc}
 0 & 1 & 0 \\
 0 & 0 & 1 \\
 -u_n^2 u_{n-1} & u_n \left(-b_{n+1}+b_n+u_{n+1}\right)-z  u_n & -b_{n+1}+b_n+z +u_n \\
\end{array}
\right),\\
&B_n=\left(
\begin{array}{ccc}
 v_{11} & v_{12} & v_{13} \\
 v_{21} & v_{22} & v_{23} \\
 v_{31} & v_{32} & v_{33} \\
\end{array}
\right),
\end{align*}
with
\begin{align}
v_{11}=&(n-2)\alpha+\alpha_1(b_n-b_{n-2}-u_{n})+\alpha_2(b_n(b_n-2b_{n-1}+2u_{n-1}-u_{n})\nonumber\\
&+z(u_{n}-2z)+b_{n-1}(b_{n-1}+u_{n})-2u_{n-1}u_n+(b_{n-1}-b_{n-2})^2\nonumber\\
&-u_{n-2}(b_{n-3}-b_{n-1})-2u_{n-1}(b_{n-2}+u_{n-2})),\nonumber\\
v_{12}=&\frac{1}{u_{n-1}}(\alpha_1+\alpha_2(-b_{n-1}+b_{n}-z))(b_{n}-b_{n-1}-z)+\alpha_2(b_n-b_{n-2}),\nonumber\\
v_{13}=&\frac{1}{u_{n-1}}(\alpha_1-\alpha_2(b_{n-1}-b_{n}-2u_{n-1}+z)),\nonumber\\
v_{21}=&u_{n-1}u_{n}(-\alpha_1+\alpha_2(b_n-b_{n+1}-2u_{n}+z)),\nonumber\\
v_{22}=&(n-1)\alpha+\alpha_1(b_n-b_{n-1}-z)+\alpha_2(-z^2+(b_n-b_{n-1})^2-u_{n-1}(b_{n-2}-b_n)\nonumber\\
&+2u_{n}(b_n-b_{n-1}-u_{n-1}-z)),\nonumber\\
v_{23}=&\alpha_1+\alpha_2(z+b_n-b_{n-1}+2u_{n}),\nonumber\\
v_{31}=&u_{n-1}u_n^2(-\alpha_1-\alpha_2(z+b_{n+1}-b_n+2u_{n+1})),\nonumber\\
v_{32}=&-\alpha_1u_n(z+b_{n+1}-b_{n})+\alpha_2u_n(-z^2+(b_{n+1}-b_{n})(-2z-b_{n+1}+b_{n})-u_{n+1}(b_{n+2}-b_{n})),\nonumber\\
v_{33}=&n\alpha+\alpha_1u_n+\alpha_2u_n(-b_{n-1}+2b_{n+1}-b_n+z).\nonumber
\end{align}
The compatibility condition of the overdetermined system \eqref{7-2-29} yields a matrix representation of the nonisospectral integrable lattice
\begin{align*}
\frac{dA_n}{dt}=B_{n+1}A_n-A_nB_n,
\end{align*}
from which we can write down the explicit differential equations for $\{u_n,b_n\}$ so that the following theorem is derived.
\begin{theorem}
Under the assumption of \eqref{7-2-1} and \eqref{7-2-2-3}, the variables $\{u_n(t),b_n(t)\}$ in the four-term recurrence relation \eqref{7-1-14}
for the monic partial-skew OPs $\{P_n(z;t)\}_{n\in \mathbb{N}}$ in \eqref{psop} with the inner product \eqref{skew_kernel} satisfy the following generalized nonisospectral B-Toda lattice
\begin{subequations}\label{7-2-30}
% \begin{align}
% \frac{d}{dt}u_n=&\alpha u_{n}+\alpha_1u_{n}(b_{n+1}-2b_{n}+b_{n-1})\nonumber\\
% &+\alpha_2u_{n}(b_{n+1}^2-b_{n-1}^2-2b_{n}(b_{n+1}-b_{n-1})\nonumber\\
% &+u_{n-1}(2u_{n}-b_{n}+b_{n-2})-u_{n+1}(2u_{n}-b_{n+2}+b_{n})),\label{7-2-31}\\
% \frac{d}{dt}b_n=&\alpha b_{n}+\alpha_1u_{n}(-b_{n-1}+b_{n+1})+\alpha_2u_{n}((b_{n+1}-b_{n-1})^2\nonumber\\
% &+u_{n+1}(b_{n+2}-b_{n})+u_{n-1}(b_{n}-b_{n-2})),\label{7-2-32}
% \end{align}
\begin{align}
\frac{d}{dt}u_n=&\alpha u_{n}+\alpha_1u_{n}(b_{n+1}-2b_{n}+b_{n-1})\nonumber\\
&+\alpha_2u_{n}[b_{n+1}^2-b_{n-1}^2-2b_{n}(b_{n+1}-b_{n-1})\nonumber\\
&+u_{n-1}(2u_{n}-b_{n}+b_{n-2})-u_{n+1}(2u_{n}-b_{n+2}+b_{n})],\label{7-2-31}\\
\frac{d}{dt}b_n=&\alpha b_{n}+\alpha_1u_{n}(b_{n+1}-b_{n-1})+\alpha_2u_{n}[(b_{n+1}-b_{n-1})^2\nonumber\\
&+u_{n+1}(b_{n+2}-b_{n})+u_{n-1}(b_{n}-b_{n-2})],\label{7-2-32}
\end{align}
\end{subequations}
with the Lax pair \eqref{7-2-29}.
\end{theorem}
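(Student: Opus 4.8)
The plan is to obtain the system \eqref{7-2-30} as the compatibility (zero-curvature) condition of the matrix Lax pair \eqref{7-2-29}, in direct analogy with the derivations of the nonisospectral Toda and Lotka--Volterra lattices carried out above. All the ingredients are already available: the four-term recurrence \eqref{7-1-14} encodes the spatial part $\varphi_{n+1}=U_n\varphi_n$, while Theorem \ref{th12} supplies the time evolution \eqref{7-2-25} of the partial-skew OPs, which will furnish the temporal part $\frac{d\varphi_n}{dt}=V_n\varphi_n$.

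First I would pass from \eqref{7-2-25} to the three-dimensional frame $\varphi_n=(P_{n-2},P_{n-1},P_n)^\top$. Concretely, one repeatedly invokes the recurrence \eqref{7-1-14} to rewrite $P_{n+2}$, $P_{n+1}$, $P_{n-3}$ (and every surviving factor $zP_k$) as linear combinations of $P_{n-2},P_{n-1},P_n$, so that $\frac{d}{dt}P_n,\frac{d}{dt}P_{n-1},\frac{d}{dt}P_{n-2}$ close on the frame. This is exactly the computation that yields the explicit entries $v_{ij}$ recorded just before \eqref{7-2-29}; carrying it out produces the matrix $V_n$ displayed there, each of whose entries is a polynomial of degree at most two in $z$ with coefficients built from $u_\bullet,b_\bullet$ at shifted indices.

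Next I would impose $\frac{dU_n}{dt}=V_{n+1}U_n-U_nV_n$, where the total derivative $\frac{dU_n}{dt}$ incorporates the spectral evolution $\frac{dz}{dt}=\alpha z$ from \eqref{7-2-1}. Since the first two rows of $U_n$ are constant, the substantive content lives in the third row, supplemented by the requirement that the top two rows of $V_{n+1}U_n-U_nV_n$ vanish identically. Because the identity must hold for all $z$, I would expand the $3\times3$ product and match coefficients of powers of $z$: the left side is at most linear in $z$ whereas the right side is a priori cubic, so the cubic and quadratic coefficients must cancel---these are the automatic consistency relations---and the linear and constant coefficients then disentangle into the two evolution equations \eqref{7-2-31}--\eqref{7-2-32}. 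The leading scaling terms $\alpha u_n$ and $\alpha b_n$ arise precisely from the nonisospectral ingredients, namely the $\frac{dz}{dt}=\alpha z$ contribution to $\frac{dU_n}{dt}$ together with the diagonal $n\alpha$ entries of $V_n$, while the $\alpha_1$- and $\alpha_2$-parts reproduce the first and second isospectral B-Toda flows.

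The step I expect to be the main obstacle is purely the organization of this last expansion: the entries $v_{ij}$ are bulky rational expressions in $u_\bullet,b_\bullet$ over many shifted indices, and a single matrix entry mixes several derivatives ($\dot u_{n-1},\dot u_n,\dot u_{n+1},\dot b_n,\dot b_{n+1}$), so isolating clean equations for $\dot u_n$ and $\dot b_n$ demands careful use of the $z$-grading to separate the relevant linear combinations. A practical way to tame the verification is to split it additively according to the three parameters $\alpha,\alpha_1,\alpha_2$: the $\alpha$-part should collapse to the trivial scaling flow, the $\alpha_1$-part should reproduce the first B-Toda flow associated with the isospectral deformation studied in \cite{chang2018partial}, and only the genuinely new $\alpha_2$-part requires the full strength of the four-term recurrence and the degree-two structure of $V_n$.
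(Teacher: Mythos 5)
Your proposal follows essentially the same route as the paper: use Theorem \ref{th12} together with the four-term recurrence \eqref{7-1-14} to close the evolution of $P_n,P_{n-1},P_{n-2}$ on the frame $\varphi_n=(P_{n-2},P_{n-1},P_n)^\top$, thereby obtaining the explicit matrices $U_n,V_n$ of \eqref{7-2-29}, and then read off \eqref{7-2-30} from the zero-curvature condition $\frac{dU_n}{dt}=V_{n+1}U_n-U_nV_n$ with the nonisospectral contribution $\frac{dz}{dt}=\alpha z$ entering through $\frac{dU_n}{dt}$. Your organizational remarks (matching powers of $z$, splitting by $\alpha,\alpha_1,\alpha_2$) are just sensible ways of carrying out the lengthy calculation the paper performs implicitly, so the argument is correct and matches the paper's proof.
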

% and when $\alpha=\alpha_2=0$, the above equation represents the first flow of the B-Toda lattice; when $\alpha=\alpha_1=0$, the above equation becomes the second flow of the B-Toda lattice.
\begin{remark}
When $\alpha\neq0$, the above equation represents a nonisospectral generalization of the lattice  that incorporates the first and second flows of the B-Toda hierarchy. In the case of $\alpha=\alpha_2=0$, it corresponds to the first flow of the isospectral B-Toda hierarchy, while it gives the second flow of the isospectral B-Toda hierarchy in the case of $\alpha=\alpha_1=0$. The first flow was investigated in \cite{chang2018partial} etc, while it seems that the explicit form of the second flow has not been reported elsewhere. It is noted that the first and second flows of the B-Toda hierarchy given here coincide with those obtained by Krichever and Zabrodin \cite[eq. (1.3)]{krichever2023toda}, where a combination was presented.
\end{remark}

\subsection{An integrable difference system related to nonisospectral B-Toda}\label{bt to p}
% In the previous section, we provided the Lax pair for the nonisospectral B-Toda lattice. Following the ideas of Levi \cite{levi1992non}, under the time evolution condition with known spectral parameters, the Lax pair of the nonisospectral equation leads to the Lax pair of d-P-type equations:
In the previous subsection, we provided the Lax pair for the nonisospectral B-Toda lattice. By implementing stationary reduction, we can obtain 
 \begin{align}\label{7-3-00}
\varphi_{n+1}=F_n\varphi_n,\quad \frac{\partial\varphi_n}{\partial z}=G_n\varphi_n,
\end{align}
where
$$F_n=A_n,\quad G_n=B_n/\left(\frac{dz}{dt}\right)=\frac{1}{\alpha z}B_n.$$
 The compatibility condition of the above overdetermined system results in an integrable difference system. In summary, we have the following theorem.
\begin{theorem}
Under the assumption of \eqref{7-2-1} and \eqref{7-2-2-3} as well as stationary reduction, the variables $\{u_n(t),b_n(t)\}$ in the four-term recurrence relation \eqref{7-1-14}
for the monic partial-skew OPs $\{P_n(z;t)\}_{n\in \mathbb{N}}$ in \eqref{psop} with the inner product \eqref{skew_kernel}  satisfy the following integrable difference system with the Lax pair \eqref{7-3-00}
\begin{subequations}\label{7-3-0}
\begin{align}
&(n-2)\alpha+\alpha_1(b_n-b_{n-1})+\alpha_2[(b_n-b_{n-1})^2-2u_{n-1}u_n\nonumber\\
&+u_{n-1}(b_n-b_{n-2})+u_{n}(b_{n+1}-b_{n-1})]-\beta_0=0, \label{pain-posp1}\\
&\alpha(b_n-(2n-3)u_n)+2\alpha_2u_n((b_n-b_{n-1})(b_{n+1}-b_n)\nonumber\\
&+u_n(u_{n-1}+u_{n+1}-b_{n+1}+b_{n-1}))+2\beta_0u_n=0,\label{pain-posp2}
\end{align}
\end{subequations}
where $\beta_0=\alpha_1(b_2-b_1)+\alpha_2[(b_2-b_1)^2-2u_1u_2+u_2(b_3-b_1)+u_1(b_2-b_0)]$.
\end{theorem}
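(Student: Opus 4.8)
The plan is to follow the same template as the nonisospectral Toda, Lotka--Volterra and C-Toda derivations: realize \eqref{7-3-0} as the stationary reduction of the nonisospectral B-Toda lattice \eqref{7-2-30}, and then compress the two resulting scalar relations by one division, one telescoping summation, and one substitution. The starting point is the observation that the compatibility condition $P_{n,z}+P_nQ_n-Q_{n+1}P_n=0$ of the reduced pair \eqref{7-3-00} coincides with the stationary form of the zero-curvature equation $\frac{dU_n}{dt}=V_{n+1}U_n-U_nV_n$ attached to \eqref{7-2-29}. Indeed, under stationary reduction $u_n$ and $b_n$ are independent of $t$ while $\frac{dz}{dt}=\alpha z$, so $\frac{dU_n}{dt}=\alpha z\,\partial_z U_n$; dividing the zero-curvature relation by $\alpha z$ and using $Q_n=\frac{1}{\alpha z}V_n$ and $P_n=U_n$ returns precisely $P_{n,z}+P_nQ_n-Q_{n+1}P_n=0$. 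Hence the compatibility condition of \eqref{7-3-00} is equivalent to setting $\frac{du_n}{dt}=\frac{db_n}{dt}=0$ in \eqref{7-2-31}--\eqref{7-2-32}, which is where the real work begins.

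First I would produce \eqref{pain-posp1} from the stationary form of \eqref{7-2-31}. Dividing by $u_n\neq0$ strips the overall factor and leaves a relation of shape $\alpha+(E-1)H_n=0$, where $E$ is the shift $n\mapsto n+1$ and $H_n$ is exactly the bracket displayed in \eqref{pain-posp1}. The mechanism is to spot three telescoping patterns: $b_{n+1}-2b_n+b_{n-1}=(E-1)(b_n-b_{n-1})$; the quadratic piece $b_{n+1}^2-b_{n-1}^2-2b_n(b_{n+1}-b_{n-1})=(b_{n+1}-b_n)^2-(b_n-b_{n-1})^2=(E-1)(b_n-b_{n-1})^2$; and the $u$-terms, which after expansion regroup as $(E-1)[-2u_{n-1}u_n+u_{n-1}(b_n-b_{n-2})+u_n(b_{n+1}-b_{n-1})]$. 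Summing $\alpha+(E-1)H_n=0$ then gives $H_n=A_0-(n-2)\alpha$ with summation constant $A_0=H_2$; rewriting this as $(n-2)\alpha+H_n-A_0=0$ is \eqref{pain-posp1}, and evaluating $H_2$ reproduces the stated expression for $A_0$.

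Next I would obtain \eqref{pain-posp2} from the stationary form of \eqref{7-2-32}. Adding $H_n=A_0-(n-2)\alpha$ to its forward shift $H_{n+1}=A_0-(n-1)\alpha$ yields $H_n+H_{n+1}=2A_0-(2n-3)\alpha$, whose $\alpha_1$-contribution is exactly $\alpha_1(b_{n+1}-b_{n-1})$. Multiplying this identity by $u_n$ and using it to eliminate $\alpha_1u_n(b_{n+1}-b_{n-1})$ from the stationary \eqref{7-2-32} isolates the linear piece $\alpha(b_n-(2n-3)u_n)+2A_0u_n$. The remaining $\alpha_2u_n$ terms then cancel in pairs --- the $u_{n-1}(b_n-b_{n-2})$ and $u_{n+1}(b_{n+2}-b_n)$ contributions drop out --- and the surviving quadratic collapses through $(p+q)^2-p^2-q^2=2pq$ with $p=b_{n+1}-b_n$ and $q=b_n-b_{n-1}$, leaving $2\alpha_2u_n[(b_n-b_{n-1})(b_{n+1}-b_n)+u_n(u_{n-1}+u_{n+1}-b_{n+1}+b_{n-1})]$. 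Assembling the pieces gives exactly \eqref{pain-posp2}.

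The main obstacle is not these scalar manipulations but the claim underlying the first paragraph: verifying that the $3\times3$ compatibility condition of \eqref{7-2-29}, with its bulky entries $v_{ij}$, really collapses to the two equations \eqref{7-2-31}--\eqref{7-2-32}. I would handle this entrywise, repeatedly invoking the four-term recurrence \eqref{7-1-14} to re-expand every $P_{n+2}$, $P_{n+1}$, $P_{n-2}$, $P_{n-3}$ that surfaces so that each matrix entry becomes a relation purely among $\{u_n,b_n\}$; the expectation is that all but two entries are identically satisfied and only two carry the dynamical content. A secondary point demanding care is the summation bookkeeping in the first step: one must sum $(E-1)H_n=-\alpha$ over the correct range so that the boundary term is genuinely $H_2$, which is what pins $A_0$ to the form quoted in the theorem.
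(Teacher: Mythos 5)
Your proposal is correct and follows essentially the same route as the paper's proof: the compatibility condition of \eqref{7-3-00} is expanded into the stationary B-Toda relations \eqref{7-3-1}, the first of which telescopes under summation (after dividing by $u_n$) to give \eqref{pain-posp1} with $A_0=H_2$, and the second of which is reduced to \eqref{pain-posp2} by using \eqref{pain-posp1} together with its forward shift to eliminate the $u_{n-1}(b_n-b_{n-2})$ and $u_{n+1}(b_{n+2}-b_n)$ terms and then collapsing the quadratic via $(p+q)^2-p^2-q^2=2pq$. Your ``add $H_n+H_{n+1}$, multiply by $u_n$, subtract'' step is the same algebra as the paper's solve-and-substitute, and your preliminary identification of the compatibility condition with the stationary zero-curvature equation (via $\tfrac{d}{dt}U_n=\alpha z\,\partial_z U_n$) is a clean justification of what the paper only remarks.
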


\begin{proof}
By use of the compatibility condition
$$F_{n,z}+F_nG_n-G_{n+1}F_n=0,$$
we can obtain 
\begin{subequations}\label{7-3-1}
% \begin{align}
% &\alpha u_{n}+\alpha_1u_{n}(b_{n+1}-2b_{n}+b_{n-1})+\alpha_2u_{n}(b_{n+1}^2-b_{n-1}^2-2b_{n}(b_{n+1}-b_{n-1})\nonumber\\
% &+u_{n-1}(2u_{n}-b_{n}+b_{n-2})-u_{n+1}(2u_{n}-b_{n+2}+b_{n}))=0,\label{7-3-2}\\
% &\alpha b_{n}+\alpha_1u_{n}(-b_{n-1}+b_{n+1})+\alpha_2u_{n}((b_{n+1}-b_{n-1})^2\nonumber\\
% &+u_{n+1}(b_{n+2}-b_{n})+u_{n-1}(b_{n}-b_{n-2}))=0.\label{7-3-3}
% \end{align}
\begin{align}
&\alpha+\alpha_1(b_{n+1}-2b_{n}+b_{n-1})+\alpha_2[b_{n+1}^2-b_{n-1}^2-2b_{n}(b_{n+1}-b_{n-1})\nonumber\\
&+u_{n-1}(2u_{n}-b_{n}+b_{n-2})-u_{n+1}(2u_{n}-b_{n+2}+b_{n})]=0,\label{7-3-2}\\
&\alpha b_{n}+\alpha_1u_{n}(-b_{n-1}+b_{n+1})+\alpha_2u_{n}[(b_{n+1}-b_{n-1})^2\nonumber\\
&+u_{n+1}(b_{n+2}-b_{n})+u_{n-1}(b_{n}-b_{n-2})]=0.\label{7-3-3}
\end{align}
\end{subequations}
It is not hard to see that summing \eqref{7-3-2} for $n$ yields \eqref{pain-posp1}.
In addition, it is readily known from \eqref{pain-posp1} that
\begin{align*}
\alpha_2u_{n-1}(b_n-b_{n-2})=&-(n-2)\alpha-\alpha_1(b_n-b_{n-1})\\
&-\alpha_2[(b_n-b_{n-1})^2-2u_{n-1}u_n+u_{n}(b_{n+1}-b_{n-1})]+\beta_0,\nonumber\\
\alpha_2u_{n+1}(b_{n+2}-b_{n})=&-(n-1)\alpha-\alpha_1(b_{n+1}-b_{n})\\
&-\alpha_2[(b_{n+1}-b_{n})^2-2u_{n+1}u_n+u_{n}(b_{n+1}-b_{n-1})]+\beta_0,\nonumber
\end{align*}
which can be used to simplify \eqref{7-3-3} to arrive at \eqref{pain-posp2}.
\end{proof}
\begin{remark}
    It is evident that Eq.\eqref{7-3-1} corresponds to the stationary form of the nonisospectral B-Toda lattice \eqref{7-2-30}. Therefore, the difference system \eqref{7-3-0} 
is integrable with a 3×3 Lax pair and it is associated with the Toda hierarchy of BKP type.
Furthermore, the
argument also implies that it exhibits solutions with a Pfaffian structure. There are indications
that it might be related to a d-P-type equation of high order and a further simplification will
be investigated in the future work.
 % A further simplification will be investigated in the future work. Furthermore, the argument also implies that it exhibits solutions with a Pfaffian structure. 

%\todo{three-order? Further simplify it, }
\end{remark}

\subsection{Realization of stationary reduction}\label{btasr}
% Previously, we established the nonisospectral B-Toda lattice and obtained the correspondingly d-P-type equation by imposing the stationary condition on the nonisospectral equation. Moving forward, we will construct a concrete weight function based on the previously proposed moments satisfied time evolution relationship. This will further demonstrate that both the nonisospectral equation and the Lax pair can be subjected to the stationary condition, resulting in d-P-type equations.
In the previous subsections, we have derived the nonisospectral B-Toda lattice and obtained the difference system by applying stationary reduction. In this subsection, we shall construct concrete moments to demonstrate that both the nonisospectral equation and its Lax pair can indeed allow the stationary reduction.

From \eqref{7-2-1}, we obviously have 
\begin{align}
x=x(0)e^{\alpha t},\quad y=y(0)e^{\alpha t},\quad z=z(0)e^{\alpha t}.
\end{align}
If we let $x(0),\ y(0) \in \mathbb{R}_{+}$,
% where $x(0)$ and $y(0)$ are the initial value variables.
then the moments can be rewritten as 
\begin{align*}
\mu_{i,j}(t)=&\iint_{\mathbb{R}_{+}^{2}}x^i(t)y^j(t)\frac{y(t)-x(t)}{x(t)+y(t)}w(x;t)w(y;t)dx(t)dy(t)\notag\\
=&\iint_{\mathbb{R}_{+}^{2}}x^i(0)y^j(0)e^{(i+j)\alpha t}\frac{y(0)-x(0)}{x(0)+y(0)}f(x(0);t)f(y(0);t)dx(0)dy(0),
\end{align*}
where $f(\cdot\,;t)$ needs to be determined.
Differentiating the moments leads to
\begin{align*}
\frac{d}{dt}\mu_{i,j}(t)=&(i+j)\alpha \mu_{i,j}(t)\notag\\
&+\iint_{\mathbb{R}_{+}^{2}}x^i(0)y^j(0)e^{(i+j)\alpha t}\frac{y(0)-x(0)}{x(0)+y(0)}\frac{d(f(x(0);t)f(y(0);t))}{dt}dx(0)dy(0).
\end{align*}
If one wants the moments to satisfy the time evolution in \eqref{7-2-2}, it is required that $f(x(0);t)$ and $f(y(0);t)$ adhere to the time evolution
%\begin{subequations}
\begin{align*}
\frac{d}{dt}f(x(0);t)&=(\alpha _1x(0)e^{\alpha t}+\alpha _2x^2(0)e^{2\alpha t})f(x(0);t),\\
\frac{d}{dt}f(y(0);t)&=(\alpha _1y(0)e^{\alpha t}+\alpha _2y^2(0)e^{2\alpha t})f(y(0);t).
\end{align*}
%\end{subequations}
As a result, it is reasonable to set
%\begin{subequations}
\begin{align*}
f(x(0);t)&=e^{\frac{\alpha _1}{\alpha}x(0)e^{\alpha t}+\frac{\alpha _2}{2\alpha}x^2(0)e^{2\alpha t}},\qquad f(y(0);t)=e^{\frac{\alpha _1}{\alpha}y(0)e^{\alpha t}+\frac{\alpha _2}{2\alpha}y^2(0)e^{2\alpha t}},
\end{align*}
%\end{subequations}
which results in the exact expressions of the moments 
%\begin{subequations}\label{psop:mom}
\begin{align*}
\mu_{i,j}(t)
&=\iint_{\mathbb{R}_{+}^{2}}x^i(0)y^j(0)e^{(i+j)\alpha t}\frac{y(0)-x(0)}{x(0)+y(0)}e^{\frac{\alpha _1}{\alpha}(x(0)+y(0)) e^{\alpha t}+\frac{\alpha _2}{2\alpha}(x^2(0)+y^2(0))e^{2\alpha t}}dx(0)dy(0),\\ 
\beta_i(t)
&=\int_{\mathbb{R}_{+}}x^i(0)e^{i\alpha t}e^{\frac{\alpha _1}{\alpha}x(0) e^{\alpha t}+\frac{\alpha _2}{2\alpha}x^2(0)e^{2\alpha t}}dx(0).
\end{align*}
%\end{subequations}
It is obvious that the single moments $\beta_i$ given above satisfy the time evolution
\begin{align*}
\frac{d}{dt}\beta_i(t)=i\alpha \beta_i(t)+\alpha_1\beta_{i+1}(t)+\alpha_2 \beta_{i+2}(t).
\end{align*}
\begin{theorem}\label{th13}
Under the assumption of \eqref{7-2-1} together with $\frac{\alpha_2}{\alpha}<0$, define the moments as
\begin{subequations}\label{psop:mom}
\begin{align}
\mu_{i,j}(t)
&=\iint_{\mathbb{R}_{+}^{2}}x^i(0)y^j(0)e^{(i+j)\alpha t}\frac{y(0)-x(0)}{x(0)+y(0)}e^{\frac{\alpha _1}{\alpha}(x(0)+y(0)) e^{\alpha t}+\frac{\alpha _2}{2\alpha}(x^2(0)+y^2(0))e^{2\alpha t}}dx(0)dy(0),\label{7-4-1}\\ 
\beta_i(t)
&=\int_{\mathbb{R}_{+}}x^i(0)e^{i\alpha t}e^{\frac{\alpha _1}{\alpha}x(0) e^{\alpha t}+\frac{\alpha _2}{2\alpha}x^2(0)e^{2\alpha t}}dx(0).\label{7-4-2}
\end{align}
\end{subequations}
Then the moments simultaneously satisfy the evolution relations \eqref{7-2-2-3} and 
%\begin{subequations}
\begin{align}\label{7-4-3-4}
&\frac{d}{dt}\mu_{i,j}(t)=-2\alpha \mu_{i,j}(t),
\qquad \frac{d}{dt}\beta_{i}(t)=-\alpha \beta_{i}(t).
\end{align}
%\end{subequations}
\end{theorem}

% By utilizing the exact expressions of the moments given in equations \eqref{7-4-1} and \eqref{7-4-2}, we can derive the recursive relationships satisfied by the moments using integration by parts and the boundary condition A=B. The detailed process is as follows.

\begin{proof}
It is sufficient to prove \eqref{7-4-3-4}. By performing integration by parts with respect to $x(0)$ and observing the fact at the boundary
$$\lim_{x(0)\to0}x(0)f(x(0);t)=\lim_{x(0)\to +\infty}x(0)f(x(0);t)=0,$$
 we obtain
\begin{align*}
% &\mu_{i,j}(t)\nonumber\\
% =&\iint_{\mathbb{R}_{+}^{2}}x^i(0)y^j(0)e^{(i+j)\alpha t}\frac{y(0)-x(0)}{x(0)+y(0)}f(x(0);t)f(y(0);t)dx(0)dy(0)\notag\\
% =&-(\iint_{\mathbb{R}_{+}^{2}}i x^i(0)y^j(0)e^{(i+j)\alpha t}\frac{y(0)-x(0)}{x(0)+y(0)}f(x(0);t)f(y(0);t)dx(0)dy(0)\notag\\
% &-\iint_{\mathbb{R}_{+}^{2}} x^{i+1}(0)y^j(0)e^{(i+j)\alpha t}\frac{2y(0)}{(x(0)+y(0))^2}f(x(0);t)f(y(0);t)dx(0)dy(0)\notag\\
% &+\iint_{\mathbb{R}_{+}^{2}}x^{i+1}(0)y^j(0)e^{(i+j)\alpha t}\frac{y(0)-x(0)}{x(0)+y(0)}\frac{\alpha _1}{\alpha} e^{\alpha t}f(x(0);t)f(y(0);t)dx(0)dy(0)\notag\\
% &+\iint_{\mathbb{R}_{+}^{2}}x^{i+1}(0)y^j(0)e^{(i+j)\alpha t}\frac{y(0)-x(0)}{x(0)+y(0)}\frac{\alpha _2}{\alpha}x(0)e^{2\alpha t}f(x(0);t)f(y(0);t)dx(0)dy(0)),\label{7-4-5}
\mu_{i,j}(t)
=&\iint_{\mathbb{R}_{+}^{2}}x^i(0)y^j(0)e^{(i+j)\alpha t}\frac{y(0)-x(0)}{x(0)+y(0)}f(x(0);t)f(y(0);t)dx(0)dy(0)\notag\\
=&-\iint_{\mathbb{R}_{+}^{2}}i x^i(0)y^j(0)e^{(i+j)\alpha t}\frac{y(0)-x(0)}{x(0)+y(0)}f(x(0);t)f(y(0);t)dx(0)dy(0)\notag\\
&+\iint_{\mathbb{R}_{+}^{2}} x^{i+1}(0)y^j(0)e^{(i+j)\alpha t}\frac{2y(0)}{(x(0)+y(0))^2}f(x(0);t)f(y(0);t)dx(0)dy(0)\notag\\
% &-\iint_{\mathbb{R}_{+}^{2}}x^{i+1}(0)y^j(0)e^{(i+j)\alpha t}\frac{y(0)-x(0)}{x(0)+y(0)}\frac{\alpha _1}{\alpha} e^{\alpha t}f(x(0);t)f(y(0);t)dx(0)dy(0)\notag\\
&-\iint_{\mathbb{R}_{+}^{2}}x^{i+1}(0)y^j(0)e^{(i+j)\alpha t}\frac{y(0)-x(0)}{x(0)+y(0)}\frac{\alpha _1e^{\alpha t}+\alpha _2x(0)e^{2\alpha t}}{\alpha}f(x(0);t)f(y(0);t)dx(0)dy(0)).
\end{align*}
On the other hand, by applying integration by parts to the bimoments with respect to $y(0)$, we also have
\begin{align*}
% &\mu_{i,j}(t)\nonumber\\
% =&\iint_{\mathbb{R}_{+}^{2}}x^i(0)y^j(0)e^{(i+j)\alpha t}\frac{y(0)-x(0)}{x(0)+y(0)}f(x(0);t)f(y(0);t)dx(0)dy(0)\notag\\
% =&-(\iint_{\mathbb{R}_{+}^{2}}j x^i(0)y^j(0)e^{(i+j)\alpha t}\frac{y(0)-x(0)}{x(0)+y(0)}f(x(0);t)f(y(0);t)dx(0)dy(0)\notag\\
% &+\iint_{\mathbb{R}_{+}^{2}} x^{i}(0)y^{j+1}(0)e^{(i+j)\alpha t}\frac{2x(0)}{(x(0)+y(0))^2}f(x(0);t)f(y(0);t)dx(0)dy(0)\notag\\
% &+\iint_{\mathbb{R}_{+}^{2}}x^{i}(0)y^{j+1}(0)e^{(i+j)\alpha t}\frac{y(0)-x(0)}{x(0)+y(0)}\frac{\alpha _1}{\alpha} e^{\alpha t}f(x(0);t)f(y(0);t)dx(0)dy(0)\notag\\
% &+\iint_{\mathbb{R}_{+}^{2}}x^{i}(0)y^{j+1}(0)e^{(i+j)\alpha t}\frac{y(0)-x(0)}{x(0)+y(0)}\frac{\alpha _2}{\alpha}y(0)e^{2\alpha t}f(x(0);t)f(y(0);t)dx(0)dy(0)).\label{7-4-6}
\mu_{i,j}(t)
=&\iint_{\mathbb{R}_{+}^{2}}x^i(0)y^j(0)e^{(i+j)\alpha t}\frac{y(0)-x(0)}{x(0)+y(0)}f(x(0);t)f(y(0);t)dx(0)dy(0)\notag\\
=&-\iint_{\mathbb{R}_{+}^{2}}j x^i(0)y^j(0)e^{(i+j)\alpha t}\frac{y(0)-x(0)}{x(0)+y(0)}f(x(0);t)f(y(0);t)dx(0)dy(0)\notag\\
&-\iint_{\mathbb{R}_{+}^{2}} x^{i}(0)y^{j+1}(0)e^{(i+j)\alpha t}\frac{2x(0)}{(x(0)+y(0))^2}f(x(0);t)f(y(0);t)dx(0)dy(0)\notag\\
% &-\iint_{\mathbb{R}_{+}^{2}}x^{i}(0)y^{j+1}(0)e^{(i+j)\alpha t}\frac{y(0)-x(0)}{x(0)+y(0)}\frac{\alpha _1}{\alpha} e^{\alpha t}f(x(0);t)f(y(0);t)dx(0)dy(0)\notag\\
&-\iint_{\mathbb{R}_{+}^{2}}x^{i}(0)y^{j+1}(0)e^{(i+j)\alpha t}\frac{y(0)-x(0)}{x(0)+y(0)}\frac{\alpha _1e^{\alpha t}+\alpha _2y(0)e^{2\alpha t}}{\alpha}f(x(0);t)f(y(0);t)dx(0)dy(0).
\end{align*}
By taking a summation on the above two equations, we see that the bimoments satisfy the recurrence relationship
% By adding and simplifying equations \eqref{7-4-5} and \eqref{7-4-6}, we can obtain the recurrence relationship satisfied by the bimoments
\begin{align}
(i+j+2)\alpha \mu_{i,j}(t)+\frac{\alpha _1}{\alpha}(\mu_{i+1,j}(t)+\mu_{i,j+1}(t))+\frac{\alpha _2}{\alpha} (\mu_{i+2,j}(t)+\mu_{i,j+2}(t))=0.\label{7-4-7}
\end{align}
In a similar manner, we also get 
\begin{align*}
\beta_i(t)
=&\int_{\mathbb{R}_{+}}x^i(0)e^{i\alpha t}e^{\frac{\alpha _1}{\alpha}x(0) e^{\alpha t}+\frac{\alpha _2}{2\alpha}x^2(0)e^{2\alpha t}}dx(0)\notag\\
=&-\int_{\mathbb{R}_{+}}ix^i(0)e^{i\alpha t}e^{\frac{\alpha _1}{\alpha}x(0) e^{\alpha t}+\frac{\alpha _2}{2\alpha}x^2(0)e^{2\alpha t}}dx(0)\notag\\
&-\int_{\mathbb{R}_{+}}\frac{\alpha _1}{\alpha}x^{i+1}(0)e^{(i+1)\alpha t}e^{\frac{\alpha _1}{\alpha}x(0) e^{\alpha t}+\frac{\alpha _2}{2\alpha}x^2(0)e^{2\alpha t}}dx(0)\notag\\
&-\int_{\mathbb{R}_{+}}\frac{\alpha _2}{\alpha}x^{i+2}(0)e^{(i+2)\alpha t}e^{\frac{\alpha _1}{\alpha}x(0) e^{\alpha t}+\frac{\alpha _2}{2\alpha}x^2(0)e^{2\alpha t}}dx(0)\nonumber\\
=&-i\beta_i(t)-\frac{\alpha _1}{\alpha}\beta_{i+1}(t)-\frac{\alpha _2}{\alpha}\beta_{i+2}(t),
\end{align*}
which leads to the recurrence relationship
\begin{align}
(i+1)\beta_i(t)+\frac{\alpha _1}{\alpha}\beta_{i+1}(t)+\frac{\alpha _2}{\alpha}\beta_{i+2}(t)=0.\label{7-4-9}
\end{align}
 Finally, it is straightforward to derive the desired result by utilizing the time evolution of the moments \eqref{7-2-2-3} along with the recurrence relationships  \eqref{7-4-7} and \eqref{7-4-9}.
\end{proof}
Applying the time evolution \eqref{7-4-3-4} for the moments  and the derivative formulae for Pfaffians \eqref{pf5} and \eqref{pf6}, we can conclude the following result.
% Applying the time evolution of the moments \eqref{7-4-3} and \eqref{7-4-4} and the derivative formule for Pfaffian entries as given in equations \eqref{pf5} and \eqref{pf6}, we can readily obtain:
\begin{theorem}
    Given the moments in \eqref{psop:mom}, we have
    \begin{align*}
\frac{d}{dt}u_n(t)=\frac{d}{dt}b_n(t)=0
\end{align*}
and 
$$\frac{d}{dt}\gamma_{n,j}(t)=0,$$
where $\{u_n\}$ and $\{b_{n}\}$ are the variables  in the four-term recurrence relation 
\eqref{7-1-14}
for the monic partial-skew OPs $\{P_n(z;t)\}_{n\in \mathbb{N}}$ in \eqref{psop} with the inner product \eqref{skew_kernel} and  $\{\gamma_{n,j}\}$ are the coefficients of the monic partial-skew OPs with the expansion $P_n(z)=\sum_{j=0}^n\gamma_{n,j}z^j$.
\end{theorem}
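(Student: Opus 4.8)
The plan is to reproduce, now in the Pfaffian setting, the same homogeneity argument that established stationarity in the earlier realization-of-stationary-reduction theorems. The starting point is the preceding Theorem~\ref{th13}: under the moments \eqref{psop:mom} the bimoments and single moments obey the pure-scaling evolutions \eqref{7-4-3} and \eqref{7-4-4}. Rewriting these through the definitions \eqref{7-1-4} of the Pfaffian entries, they read $\frac{d}{dt}\Pf(i,j)=-2\alpha\Pf(i,j)$ and $\frac{d}{dt}\Pf(d_0,i)=-\alpha\Pf(d_0,i)$; equivalently, every $\mu$-type entry carries the factor $e^{-2\alpha t}$ and every $\beta$-type entry the factor $e^{-\alpha t}$.

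First I would compute $\frac{d}{dt}\tau_n$ and $\frac{d}{dt}\sigma_n$ by feeding these entry evolutions into the Pfaffian derivative formulae \eqref{pf5} and \eqref{pf6}. Equivalently, one argues by homogeneity: $\tau_{2n}=\Pf(0,\dots,2n-1)$ is a sum of monomials, each a product of $n$ entries of $\mu$-type, hence scales by $e^{-2n\alpha t}$; whereas $\tau_{2n+1}=\Pf(d_0,0,\dots,2n)$ has in every monomial exactly one $\beta$-type factor (the pair containing $d_0$) together with $n$ $\mu$-type factors, and so scales by $e^{-(2n+1)\alpha t}$. Both cases collapse to the single relation $\frac{d}{dt}\tau_n=-n\alpha\tau_n$. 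The identical bookkeeping applied to $\sigma_{2m}=\Pf(0,\dots,2m-2,2m)$ and $\sigma_{2m+1}=\Pf(d_0,0,\dots,2m-1,2m+1)$ yields $\frac{d}{dt}\sigma_n=-n\alpha\sigma_n$.

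Given these two scaling laws, the vanishing of the time derivatives of $u_n=\tau_{n+1}\tau_{n-1}/\tau_n^2$ and $b_n=\sigma_n/\tau_n$ is immediate: the exponents in the ratio for $u_n$ sum to $-(n+1)\alpha-(n-1)\alpha+2n\alpha=0$, and those for $b_n$ to $-n\alpha+n\alpha=0$, so $\frac{d}{dt}u_n=\frac{d}{dt}b_n=0$. For the expansion coefficients I would expand the Pfaffian representations \eqref{7-1-9} and \eqref{7-1-10} along the column carrying the indeterminate, using $\Pf(i,z)=z^i$ and $\Pf(d_0,z)=0$; the coefficient of $z^j$ is then $\pm 1/\tau_n$ times a Pfaffian whose entries are purely $\mu$- and $\beta$-type and whose index census matches that of $\tau_n$. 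Consequently this numerator Pfaffian carries exactly the same factor $e^{-n\alpha t}$ as $\tau_n$, the ratio $\gamma_{n,j}$ is constant in $t$, and $\frac{d}{dt}\gamma_{n,j}=0$, which is the final assertion.

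The only delicate point, and the step I expect to require the most care, is the homogeneity count for the mixed Pfaffians containing the special index $d_0$ (and, after invoking \eqref{7-2-7}, possibly $d_1$): one must verify that every perfect matching underlying such a Pfaffian contains exactly one $\beta$-type pair, so that the scaling exponent is genuinely uniform across all monomials and the derivative formulae \eqref{pf5}, \eqref{pf6} reduce to a single scalar multiple of the Pfaffian. Since $z$ cannot pair with $d_0$, and $d_0$ pairs with precisely one integer index in each term, this uniformity does hold; once it is confirmed, the remaining steps are the purely formal exponent cancellations above.
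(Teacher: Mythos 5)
Your proposal is correct and follows essentially the same route as the paper: it uses the scaling evolutions \eqref{7-4-3-4} together with the Pfaffian derivative rules \eqref{pf5}, \eqref{pf6} (your monomial-homogeneity count is just these rules spelled out) to get $\frac{d}{dt}\tau_n=-n\alpha\tau_n$, $\frac{d}{dt}\sigma_n=-n\alpha\sigma_n$, and the matching scaling of the numerator Pfaffians $T_{n,j}$ obtained by expanding \eqref{7-1-9}--\eqref{7-1-10} in $z$, after which $u_n$, $b_n$, $\gamma_{n,j}$ are constant by exponent cancellation. Your "delicate point" about the $d_0$-pairs is exactly what \eqref{pf6} and the entry $\Pf(d_0,z)=0$ already guarantee, and $d_1$ never actually enters, so nothing further is needed.
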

\begin{proof}
 By use of ~\eqref{7-4-3-4} and the derivative formulae for Pfaffians \eqref{pf5} and \eqref{pf6},  it is easy to see that the Pfaffians $ \tau_n$  satisfy
\begin{align*}
\frac{d}{dt}\tau_{n}=-n\alpha\tau_{n}.
\end{align*}
Recall that we have
\begin{align*}
u_n=\frac{\tau_{n+1}\tau_{n-1}}{\tau_{n}^2},\qquad b_{n}=\frac{\sigma_n}{\tau_n}.
\end{align*}
By taking derivative with respect to $t$ for $u_n$ and $b_n$
and applying the above evolution relation,
we conclude that $\{u_n,b_n\}_{n\in \mathbb{N}}$ are independent of time $t$.

% Next, we will continue to explain from the perspective of the Lax pair how the nonisospectral B-Toda lattice can be transformed into a stationary form, resulting in the Painlev\'{e}-type equation. As we know, when the partial-skew OPs $P_n(z;t)_{n\in \mathbb{N}}$ do not explicitly depend on $t$, according to the ideas proposed by Levi\cite{levi1992non}, the Lax pair of the nonisospectral B-Toda lattice can be used to derive the Lax pair of the d-P-type equation, thereby demonstrating that the nonisospectral B-Toda lattice can be transformed into a stationary form, resulting in the d-P-type equation.

As for the coefficients $\gamma_{n,j}$ in the expansion of $P_n(z)$, it follows from the Pfaffian expression of $\{P_n(z)\}_{n=0}^{\infty}$ that
$$\gamma_{n,j}=\frac{T_{n,j}}{\tau_{n}},$$
where 
\begin{align*}  
&T_{2n,j}=(-1)^j\Pf(0,1,\ldots,\hat{j},\ldots,2n),\\
&T_{2n+1,j}=(-1)^{j+1}\Pf(d_0,0,1,\ldots,\hat{j},\ldots,2n+1).
\end{align*}
Employing the derivative formulae for Pfaffians \eqref{pf5} and \eqref{pf6}, one can see that
\begin{align*}
    \frac{d}{dt}T_{n,j}=-n\alpha T_{n,j},
\end{align*}
based on which it is not hard to see $\frac{d}{dt}\gamma_{n,j}=0$.

\end{proof}

The above theorem implies that the nonisospectral B-Toda lattice \eqref{7-2-30} can indeed allow stationary reduction so that an integrable difference system \eqref{7-3-0} is obtained. Furthermore, since the coefficients of the partial-skew OPs $\{P_n(z;t)\}_{n\in \mathbb{N}}$ are independent of $t$, it is valid that the Lax pair \eqref{7-2-29} of the nonisospectral B-Toda lattice  is transferred into a stationary form, resulting in the Lax pair \eqref{7-3-00} of the difference system. We also note that the integrable
difference system \eqref{7-3-0} admits a solution in terms of Pfaffians.

% At the end of this section, we note that the d-P-type equation \eqref{7-3-0} admits a solution in terms of Pfaffians. To the best of our knowledge, d-P-type equations with Pfaffian solutions have never been reported before.

%\begin{coro}
% The d-P-type equation \eqref{7-3-0}
%\end{coro}

%This indicates that by selecting the appropriate weight function to ensure that the moments satisfy the time evolution described by equations \eqref{7-2-2} and \eqref{7-2-3}, the nonisospectral B-Toda lattice can be reduced to the d-P-type equation.

%In conclusion, it can be observed that the coefficients of the  partial-skew OPs $\{P_{n}(z;t)\}_{n=0}^{\infty}$ satisfy $\frac{d}{dt}\gamma_{n,j}=0$. Thereby, the Lax pair of the nonisospectral B-Toda lattice can be transformed into a stationary form, resulting in the Lax pair of d-P-type equation \eqref{7-3-00}.

\section{Conclution and discussions}\label{cd}

We have developed a new approach called ``stationary reduction method based on nonisospectral deformation of OPs" for generating d-P-type equations and its effectiveness has been demonstrated by considering different classes of bi-OPs. As a result, we are enabled to obtain diverse families of d-P-type equations, along with their particular solutions and associated Lax pairs. It seems that some of the derived integrable difference systems exhibit several new features. In particular, the integrable difference systems related to partial-skew orthogonality admits a solution expressed in terms of Pfaffians. Some of the obtained integrable difference systems have been related to d-P-type equations, 
whereas the others, which we believe are also related, deserve further investigation.

%\todo{This work conducts related research on OPs and d-P-type equations. We combine the compatibility method based on orthogonality and the stationary reduction method for nonisospectral flow to propose a new method for deriving the d-P equations, i.e. the stationary reduction method based on nonisospectral deformation of OPs. First, without giving the specific expression of the weight functions, the time evolution equation of the OPs is derived based on the orthogonality and the time evolution relation that the moments satisfy, and then the nonisospectral integrable equations are obtained. Then, the Lax pairs of the d-P-type equations are obtained from the stationary reduction of the Lax pairs of the nonisospectral equations. Finally, we constructed a specific weight function to strictly prove the rationality of the above stationary reduction.}
%

\section{Acknowledgement}
We thank Professors D. Dai, A.N.W. Hone, N. Joshi, F. Nijhoff, W. Van Assche and Anton Dzhamay for their encouraging comments and valuable communications.
X.K. Chang's research was supported by the National Natural Science Foundation of China (Grant Nos. 12222119, 12288201, 12571270) and the Youth Innovation Promotion Association CAS. X.B. Hu was supported in part by the National Natural Science Foundation of China (Grant Nos. 11931017 and 12071447). X.L. Yue was partially supported by grants from the Research Grants Council
of the Hong Kong Special Administrative Region, China (Project No. CityU 11306723 and CityU 11301924).

\appendix
\section{On the Pfaffians}\label{chap:chaapp}
% 对于一个 $n$ 阶反对称矩阵 A：
Consider a skew symmetric matrix $A$ of order $n$
% For an $n$-th order antisymmetric matrix $A$:
\begin{align*}
A=(a_{i,j})_{i,j=1}^{n}.
\end{align*}
% 当 $n$ 为奇数时，其行列式的值为 $0$；当 $n=2m$ 为偶数时，其行列式的值是一个 $m$ 阶 Pfaffian 式的平方:
It is obvious that the determinant of $A$ is zero when $n$ is odd. We remark that, when $n$ is even, specifically in the case of $n=2m$, the determinant of $A$ is equal to the square of a Pfaffian of order $m$ (see e.g. \cite{cayley1849determinants,hirota2004direct})
\begin{align}\label{intro7}
\det(A)=(\Pf(1,2,\ldots,2m))^2,
\end{align}
where the Pfaffian entries $\Pf(i,j)=a_{i,j}$.
In general, following e.g. \cite{hirota2004direct}, a Pfaffian of order $m$ is defined according to the following expansion based on Pfaffian entries
\begin{align}\label{intro8}
\Pf(1,2,\ldots,2m)=\sum_P(-1)^P \Pf(i_1,i_2)\Pf(i_3,i_4)\Pf(i_5,i_6)\cdots \Pf(i_{2m-1},i_{2m}),
\end{align}
% 其中，$\sum_P$ 表示集合 $\{1,2,\ldots,2m \}$ 中所有满足
where $\sum_P$ denotes the summation over all pairs satisfying the conditions
$$
i_1<i_2,\ i_3<i_4,\ i_5<i_6,\ \ldots,i_{2m-1}<i_{2m},
$$
$$i_1<i_3<i_5<\cdots<i_{2m-1},
$$
with $i_j$ belongs to the set $\{1,2,\ldots,2m\}$, and $(-1)^P$ takes the value of $+1\ (-1)$ respectively if $\{i_1, i_2,\ldots, i_{2m}\}$ represents an even (odd) permutation of $\{1, 2,\ldots, 2m\}$.
It can also be shown that the Pfaffian admits the following expansion
\begin{align}\label{intro10}
&\Pf(1,2,\ldots,2m)\nonumber\\
=&\Pf(1,2)\Pf(3,4,\ldots,2m)-\Pf(1,3)\Pf(2,4,5,\ldots,2m)\nonumber\\
&+\Pf(1,4)\Pf(2,3,5,\ldots,2m)-\cdots+\Pf(1,2m)\Pf(2,3,\ldots,2m-1)\nonumber\\
=&\sum_{j=2}^{2m}(-1)^j\Pf(1,j)\Pf(2,3,\ldots,\hat{j},\ldots,2m),
\end{align}
where $\hat{j}$ represents the removal of the element $j$.

\subsection{Formulae related to determinants and Pfaffians}
There are a number of interesting connections between certain determinants and Pfaffians (see e.g. \cite[Appendix A.1]{chang2022hermite}).

(1). For a skew-symmetric matrix $A_{2n-1}$ of size $2n-1$ augmented with an arbitrary row and column, there holds
\begin{align}
\det\left( \begin{array}{ccc|c}
& & & x_1 \\
& A_{2n-1}& &\vdots \\
& &  & x_{2n-1} \\ \hline
-y_1&\cdots&-y_{2n-1}&z
\end{array}
\right)=\Pf(B_{2n})\Pf(C_{2n}),\label{7-1-7}
\end{align}
where
\begin{align}
B_{2n}=\left( \begin{array}{ccc|c}
& & & x_1 \\
& A_{2n-1}& &\vdots \\
& &  & x_{2n-1} \\ \hline
-x_1&\cdots&-x_{2n-1}&0
\end{array}
\right),\quad
C_{2n}=\left( \begin{array}{ccc|c}
& & & y_1 \\
& A_{2n-1}& &\vdots \\
& &  & y_{2n-1} \\ \hline
-y_1&\cdots&-y_{2n-1}&0
\end{array}
\right).\nonumber
\end{align}

(2). For a skew-symmetric matrix $A_{2n}$ of size $2n$ augmented with an arbitrary row and column, there holds
\begin{align}
\det\left( \begin{array}{ccc|c}
& & & x_1 \\
& A_{2n}& &\vdots \\
& &  & x_{2n} \\ \hline
-y_1&\cdots&-y_{2n}&z
\end{array}
\right)=\Pf(A_{2n})\Pf(B_{2n+2}),\label{7-1-8}
\end{align}
where
\begin{align}
B_{2n+2}=\left( \begin{array}{ccc|cc}
& & & x_1 & -y_1\\
& A_{2n}& &\vdots&\vdots \\
& &  & x_{2n} & -y_{2n}\\ \hline
-x_1&\cdots&-x_{2n}&0&z\\
y_1&\cdots&y_{2n}&-z&0
\end{array}
\right).\nonumber
\end{align}

\subsection{Derivative formulae for some special Pfaffians}\label{DerPfaffian}
For some special Pfaffians, there hold some derivative formulae (see e.g. \cite[Chapter 2]{hirota2004direct}).

(1). If the $t$-derivative of a Pfaffian entry $\Pf(i,j)$ satisfy 
\begin{align*}
\frac{d}{dt}\Pf(i,j)=&\alpha(i+j) \Pf(i,j)+\alpha_1(\Pf(i+1,j)+\Pf(i,j+1))\\
&\quad+\alpha_2 (\Pf(i+2,j)+\Pf(i,j+2)),
\end{align*} 
then
\begin{align}
&\frac{d}{dt}\Pf(i_1,\ldots, i_{2N})\nonumber\\
=&\alpha \sum_{k=i_1}^{i_{2N}}k \Pf(i_1,\ldots, i_{2N})+\alpha_1 \sum_{k=i_1}^{i_{2N}}\Pf(i_1,i_2,\ldots,i_{k}+1,\ldots, i_{2N})\nonumber\\
&\quad+\alpha_2  \sum_{k=i_1}^{i_{2N}}\Pf(i_1,i_2,\ldots,i_{k}+2,\ldots, i_{2N}).\label{pf1}
\end{align}

 (2). If 
\begin{align*}
\frac{d}{dt}\Pf(i,j)=&\alpha(i+j) \Pf(i,j)+\alpha_1(\Pf(i+1,j)+\Pf(i,j+1))\\
&\quad+\alpha_2 (\Pf(i+2,j)+\Pf(i,j+2))
\end{align*} 
 and 
 \begin{align*}
 \frac{d}{dt} \Pf(a_0,i)=\alpha i \Pf(a_0,i)+\alpha_1 \Pf(a_0,i+1)+\alpha_2 \Pf(a_0,i+2),
 \end{align*} 
  then
\begin{align}
&\frac{d}{dt}\Pf(a_0,i_1,\ldots, i_{2N-1})\nonumber\\
=&\alpha \sum_{k=i_1}^{i_{2N-1}}k \Pf(a_0,i_1,\ldots, i_{2N-1})+\alpha_1 \sum_{k=i_1}^{i_{2N-1}}\Pf(a_0,i_1,i_2,\ldots,i_{k}+1,\ldots, i_{2N-1})\nonumber\\
&\quad+\alpha_2 \sum_{k=i_1}^{i_{2N}}\Pf(a_0,i_1,i_2,\ldots,i_{k}+2,\ldots, i_{2N-1}).\label{pf2}
\end{align}

(3). If 
\begin{align*}
\frac{d}{dt}\Pf(i,j)=\alpha(i+j) \Pf(i,j)+\alpha_1 \Pf(a_0,b_0,i,j)+\alpha_2 (\Pf(i+2,j)+\Pf(i,j+2))
 \end{align*} 
 and $\Pf(a_0,b_0)=0$, then
\begin{align}
&\frac{d}{dt}\Pf(i_1,\ldots, i_{2N})\nonumber\\
=&\alpha \sum_{k=i_1}^{i_{2N}}k \Pf(i_1,\ldots, i_{2N})+\alpha_1 \Pf(a_0,b_0,i_1,\ldots, i_{2N})\nonumber\\
&+\alpha_2  \sum_{k=i_1}^{i_{2N}}\Pf(i_1,i_2,\ldots,i_{k}+2,\ldots, i_{2N}).\label{pf3}
\end{align}

 (4). If 
\begin{align*}
\frac{d}{dt}\Pf(i,j)=\alpha(i+j) \Pf(i,j)+\alpha_1 \Pf(a_0,b_0,i,j)+\alpha_2 (\Pf(i+2,j)+\Pf(i,j+2))
 \end{align*} 
 and 
 \begin{align*}
 \frac{d}{dt}\Pf(a_0,j)=\alpha i \Pf(a_0,j)+\alpha_1 \Pf(b_0,j)+\alpha_2 \Pf(a_0,i+2),
  \end{align*}  then
\begin{align}
&\frac{d}{dt}\Pf(a_0,i_1,\cdots i_{2N-1})\nonumber\\
=&\alpha \sum_{k=i_1}^{i_{2N-1}}k \Pf(a_0,i_1,\ldots, i_{2N-1})+\alpha_1 \Pf(b_0,i_1,i_2,\ldots, i_{2N-1})\nonumber\\
&\quad+\alpha_2 \sum_{k=i_1}^{i_{2N}}\Pf(a_0,i_1,i_2,\ldots,i_{k}+2,\ldots, i_{2N-1}).\label{pf4}
\end{align}

 (5). If 
  \begin{align*} 
 \frac{d}{dt}\Pf(i,j)=-2\alpha \Pf(i,j),
  \end{align*}  then
\begin{align}
\frac{d}{dt}\Pf(i_1,\cdots i_{2N})
=-2N\alpha \Pf(i_1,\cdots i_{2N}).\label{pf5}
\end{align}

(6). If 
\begin{align*}
\frac{d}{dt}\Pf(i,j)=-2\alpha \Pf(i,j), \quad \text{and}\quad \frac{d}{dt} \Pf(a_0,i)=-\alpha  \Pf(a_0,i),
 \end{align*} 
  then
\begin{align}
\frac{d}{dt}\Pf(a_0,i_1,\cdots i_{2N+1})
=-(2N+1)\alpha  \Pf(a_0,i_1,\cdots i_{2N+1}).\label{pf6}
\end{align}

\subsection{Bilinear identities on Pfaffians}\label{BilinearPfaffian}
Here we present two commonly used Pfaffian identities (see e.g. \cite[Chapter 2]{hirota2004direct}), that is, 
\begin{align}
&\Pf(a_1,a_2,a_3,a_4,1,2,\ldots,2N)\Pf(1,2,\ldots,2N)\nonumber\\
=&\sum_{j=2}^{4}(-1)^j \Pf(a_1,a_j,1,2,\ldots,2N)\Pf(a_2,\hat{a}_j,a_4,1,2,\ldots,2N),\label{iden1}\\
&\Pf(a_1,a_2,a_3,1,2,\ldots,2N-1)\Pf(1,2,\ldots,2N)\nonumber\\
=&\sum_{j=1}^{3}(-1)^{j-1}\Pf(a_j,1,2,\ldots,2N-1)\Pf(a_1,\hat{a}_j,a_3,1,2,\ldots,2N).\label{iden2}
\end{align}

\section{Proof of Lemma \ref{th9}}\label{cauchy6.1}
\begin{proof}
As for the biorthogonal relation
~\eqref{6-1-6} 
\begin{align*}
\langle P_n(x;t),P_m(y;t)\rangle=0,\quad\quad m=0,1,\ldots,n-1,
\end{align*}
by differentiating it
with respect to time $t$, we get
\begin{align}
0=&\iint \frac{\left(\frac{d}{dt}(P_n(x;t))P_m(y;t)+P_n(x;t)\frac{d}{dt}P_m(y;t)\right)(x+y)}{(x+y)^2}w(x;t) w(y;t)dxdy\nonumber\\
&-\iint \frac{\alpha (x+y)P_n(x;t)P_m(y;t)}{(x+y)^2}w(x;t) w(y;t)dxdy\nonumber\\
&+\iint \frac{P_n(x;t)P_m(y;t)}{x+y}\left(\frac{d}{dt}w(x;t)w(y;t)+ w(x;t)\frac{d}{dt}w(y;t)+2\alpha w(x;t)w(y;t)\right)dxdy\nonumber\\
=&\left\langle\frac{d}{dt}P_n(x;t),P_m(y;t)\right\rangle+\langle(\alpha_1(x+y)+\alpha_2(x^2+y^2))P_n(x;t),P_m(y;t)\rangle,\label{6-2-6}
\end{align}
where we used the orthogonality and \eqref{6-2-3} in the last step. In addition, it obviously follows from \eqref{6-1-9} that
\begin{align}
&\langle (x+y)(P_{n+1}(x;t)+a_{n+1}P_n(x;t)),P_m(y;t)\rangle\nonumber\\
=&\iint (P_{n+1}(x;t)+a_{n+1}P_n(x;t))P_m(y;t)w(x;t) w(y;t)dxdy\nonumber\\
=&\int (P_{n+1}(x;t)+a_{n+1}P_{n}(x;t))w(x;t)dx \int P_m(y;t)w(y;t)dy=0.\label{6-2-7}
\end{align}

Now we simplify ~\eqref{6-2-6} by use of the orthogonality ~\eqref{6-1-6} and ~\eqref{6-2-7}. First, we deal with the term $-\langle (x+y)P_n(x;t),P_m(y;t)\rangle$. Since $$\langle P_n(x;t),y P_k(y;t)\rangle=0, \quad k<n-1,$$ 
we have
\begin{align}
&-\langle (x+y)P_n(x;t),P_m(y;t)\rangle\nonumber\\
=&-\langle xP_n(x;t),P_m(y;t)\rangle-\frac{1}{a_{n+1}}\langle(x+y-x)(P_{n+1}(x;t)+a_{n+1}P_n(x;t)),P_m(y;t)\rangle\nonumber\\
=&-\langle xP_n(x;t),P_m(y;t)\rangle-\frac{1}{a_{n+1}}\iint (P_{n+1}(x;t)+a_{n+1}P_n(x;t))P_m(y;t)w(x;t) w(y;t)dxdy \nonumber\\ &+\frac{1}{a_{n+1}}\langle x(P_{n+1}(x;t)+a_{n+1}P_n(x;t)),P_m(y;t)\rangle.\nonumber
\end{align}
By employing ~\eqref{6-2-7}, we immediately get
\begin{align}
-\langle (x+y)P_n(x;t),P_m(y;t)\rangle
=&\frac{1}{a_{n+1}}\langle xP_{n+1}(x;t),P_m(y;t)\rangle\nonumber\\
=&\frac{1}{a_{n+1}}\langle (xP_{n+1}(x;t)-P_{n+2}(x;t)),P_m(y;t)\rangle.\nonumber
\end{align}
It is noted that, by rewriting the recurrence relation~\eqref{6-1-10} as
\begin{align}
xP_{n+1}(x;t)-P_{n+2}(x;t)=b_{n+1}P_{n+1}(x)+c_{n+1}P_{n}(x)+d_{n+1}P_{n-1}(x;t)-a_{n+1}xP_n(x;t),\nonumber
\end{align}
and using the orthogonality, we have
\begin{align}
&\frac{1}{a_{n+1}}\langle (xP_{n+1}(x;t)-P_{n+2}(x;t)),P_m(y;t)\rangle\nonumber\\
=&\frac{1}{a_{n+1}}\langle (d_{n+1}P_{n-1}(x;t)-a_{n+1}xP_n(x;t)),P_m(y;t)\rangle\nonumber\\
=&\left\langle \frac{d_{n+1}}{a_{n+1}}P_{n-1}(x;t)-xP_n(x;t),P_m(y;t)\right\rangle\nonumber\\
=&\left\langle \frac{d_{n+1}}{a_{n+1}}P_{n-1}(x;t)-xP_n(x;t)+P_{n+1}(x;t)-(a_n-b_n)P_n(x;t),P_m(y;t)\right\rangle.\nonumber
\end{align}
Therefore we conclude
\begin{align}
&-\langle (x+y)P_n(x;t),P_m(y;t)\rangle\nonumber\\
=&\left\langle \frac{d_{n+1}}{a_{n+1}}P_{n-1}(x;t)-xP_n(x;t)+P_{n+1}(x;t)-(a_n-b_n)P_n(x;t),P_m(y;t)\right\rangle.\label{6-2-8}
\end{align}

% Second, we are ready to simplify $-\langle (x^2+y^2)P_n(x;t),P_m(y;t)\rangle$. On the one hand, we use orthogonality, recurrence relation~\eqref{6-1-10}, and Eq.~\eqref{6-2-7} to further simplify $-\alpha_2\langle P_n(x;t),y^2P_m(y;t)\rangle$ as follows.
Next, we proceed to simplify the expression $-\alpha_2\langle (x^2+y^2)P_n(x;t),P_m(y;t)\rangle$. On one hand, we utilize the orthogonality condition and the recurrence relation~\eqref{6-1-10} as well as Eq.~\eqref{6-2-7} to get
\begin{align}
&-\alpha_2\langle P_n(x;t),y^2P_m(y;t)\rangle\nonumber\\
=&-\frac{\alpha_2}{a_{n+1}}\langle P_{n+1}(x;t)+a_{n+1}P_n(x;t),y^2P_m(y;t)\rangle+\frac{\alpha_2}{a_{n+1}}\langle P_{n+1}(x;t),y^2P_m(y;t)\rangle.\nonumber
\end{align}
One can easily observe that
\begin{align} 
&-\frac{\alpha_2}{a_{n+1}}\langle P_{n+1}(x;t)+a_{n+1}P_n(x;t),y^2P_m(y;t)\rangle\nonumber\\
=&-\frac{\alpha_2}{a_{n+1}}\langle (y(x+y)-xy)P_{n+1}(x;t)+a_{n+1}P_n(x;t),P_m(y;t)\rangle\nonumber\\
=&-\frac{\alpha_2}{a_{n+1}}\iint_{\mathbb{R}_{+}^{2}}(P_{n+1}(x;t)+a_{n+1}P_n(x;t))yP_m(y;t)w(x;t)w(y;t)dxdy\nonumber\\
&+\frac{\alpha_2}{a_{n+1}}\langle x(P_{n+1}(x;t)+a_{n+1}P_n(x;t)),yP_m(y;t)\rangle,\nonumber
\end{align}
based on which and using the recurrence relation~\eqref{6-1-10} and ~\eqref{6-2-7}, we can get
\begin{align}
&-\alpha_2\langle P_n(x;t),y^2P_m(y;t)\rangle\nonumber\\
=&\frac{\alpha_2}{a_{n+1}}\langle c_{n+1}P_n(x;t)+d_{n+1}P_{n-1}(x;t)),yP_m(y;t)\rangle+\frac{\alpha_2}{a_{n+1}}\langle P_{n+1}(x;t),y^2P_m(y;t)\nonumber\\
=&\frac{\alpha_2}{a_{n+1}}\langle c_{n+1}P_n(x;t)+d_{n+1}P_{n-1}(x;t)),yP_m(y;t)\rangle\nonumber\\
&+\frac{\alpha_2}{a_{n+1}a_{n+2}}\langle P_{n+2}(x;t)+a_{n+2}P_{n+1}(x;t),y^2P_m(y;t)\rangle\nonumber\\
=&\frac{\alpha_2}{a_{n+1}}\langle c_{n+1}P_n(x;t)+d_{n+1}P_{n-1}(x;t),yP_m(y;t)\rangle\nonumber\\
&-\frac{\alpha_2}{a_{n+1}a_{n+2}}\langle x(P_{n+2}(x;t)+a_{n+2}P_{n+1}(x;t)),yP_m(y;t)\rangle\nonumber\\
&+\frac{\alpha_2}{a_{n+1}a_{n+2}}\iint_{\mathbb{R}_{+}^{2}}(P_{n+2}(x;t)+a_{n+2}P_{n+1}(x;t))yP_m(y;t)w(x;t)w(y;t)dxdy \nonumber\\
=&\quad \cdots\nonumber\\
=&-\frac{\alpha_2d_nd_{n+1}}{a_na_{n+1}}\langle P_{n-2}(x;t),P_m(y;t)\rangle\nonumber\\
&-\frac{\alpha_2d_{n+1}}{a_{n+1}}\left(\frac{c_{n+1}}{a_{n+1}}+\frac{c_n}{a_n}-\frac{d_{n+2}}{a_{n+1}a_{n+2}}-\frac{d_{n+1}}{a_na_{n+1}}\right)\langle P_{n-1}(x;t),P_m(y;t)\rangle.\label{6-2-9}
\end{align}
% \begin{align}
% =&-\frac{\alpha_2}{a_{n+1}}\iint_{\mathbb{R}_{+}^{2}}(P_{n+1}(x;t)+a_{n+1}P_n(x;t))yP_m(y;t)w(x;t)w(y;t)dxdy\nonumber\\
% &+\frac{\alpha_2}{a_{n+1}}\langle x(P_{n+1}(x;t)+a_{n+1}P_n(x;t)),yP_m(y;t)\rangle+\frac{\alpha_2}{a_{n+1}}\langle P_{n+1}(x;t),y^2P_m(y;t)\rangle\nonumber\\
% =&\frac{\alpha_2}{a_{n+1}}\langle c_{n+1}P_n(x;t)+d_{n+1}P_{n-1}(x;t)),yP_m(y;t)\rangle+\frac{\alpha_2}{a_{n+1}}\langle P_{n+1}(x;t),y^2P_m(y;t)\rangle\nonumber\\
% =&\frac{\alpha_2}{a_{n+1}}\langle c_{n+1}P_n(x;t)+d_{n+1}P_{n-1}(x;t)),yP_m(y;t)\rangle+\frac{\alpha_2}{a_{n+1}a_{n+2}}\langle P_{n+2}(x;t)+a_{n+2}P_{n+1}(x;t),y^2P_m(y;t)\rangle\nonumber\\
% =&\frac{\alpha_2}{a_{n+1}}\langle c_{n+1}P_n(x;t)+d_{n+1}P_{n-1}(x;t),yP_m(y;t)\rangle-\frac{\alpha_2}{a_{n+1}a_{n+2}}\langle x(P_{n+2}(x;t)+a_{n+2}P_{n+1}(x;t)),yP_m(y;t)\rangle\nonumber\\
% &+\frac{\alpha_2}{a_{n+1}a_{n+2}}\iint_{\mathbb{R}_{+}^{2}}(P_{n+2}(x;t)+a_{n+2}P_{n+1}(x;t))yP_m(y;t)w(x;t)w(y;t)dxdy \nonumber\\
% =&\nonumber\\
% &\qquad \vdots\nonumber\\
% =&-\frac{\alpha_2d_nd_{n+1}}{a_na_{n+1}}\langle P_{n-2}(x;t),P_m(y;t)\rangle-\frac{\alpha_2d_{n+1}}{a_{n+1}}\left(\frac{c_{n+1}}{a_{n+1}}+\frac{c_n}{a_n}-\frac{d_{n+2}}{a_{n+1}a_{n+2}}-\frac{d_{n+1}}{a_na_{n+1}}\right)\langle P_{n-1}(x;t),P_m(y;t)\rangle\label{6-2-9}
% \end{align}
On the other hand, we can simplify $-\alpha_2\langle x^2P_n(x;t),P_m(y;t)\rangle$ in the same way. In fact, we have
\begin{align}
&-\alpha_2\langle x^2P_n(x;t),P_m(y;t)\rangle\nonumber\\
=&-\alpha_2\langle x^2(P_n(x;t)+a_nP_{n-1}(x;t)),P_m(y;t)\rangle+\alpha_2a_n\langle x^2P_{n-1}(x;t),P_m(y;t)\rangle\nonumber\\
=&-\alpha_2\langle x(P_{n+1}(x;t)+b_nP_n(x;t)+c_nP_{n-1}(x;t)+d_nP_{n-2}(x;t)),P_m(y;t)\rangle\nonumber\\&+\alpha_2a_n\langle x^2(P_{n-1}(x;t)+a_{n-1}P_{n-2}(x;t)),P_m(y;t)\rangle-\alpha_2a_{n-1}a_n\langle x^2P_{n-2}(x;t),P_m(y;t)\rangle\nonumber\\
=&-\alpha_2\langle x(P_{n+1}(x;t)+b_nP_n(x;t)+c_nP_{n-1}(x;t)+d_nP_{n-2}(x;t)),P_m(y;t)\rangle\nonumber\\&+\alpha_2a_n\langle x(P_n(x;t)+b_{n-1}P_{n-1}(x;t)+c_{n-1}P_{n-2}(x;t)+d_{n-1}P_{n-3}(x;t)),P_m(y;t)\rangle\nonumber\\&-\alpha_2a_{n-1}a_n\langle x^2P_{n-2}(x;t),P_m(y;t)\rangle,\nonumber
\end{align}
where we used the recurrence relation~\eqref{6-1-10} in the last two steps. By means of orthogonality and the recurrence relation, we can derive
\begin{align}
&-\alpha_2\langle x(P_{n+1}(x;t)+b_nP_n(x;t)+c_nP_{n-1}(x;t)+d_nP_{n-2}(x;t)),P_m(y;t)\rangle\nonumber\\&+\alpha_2a_n\langle x(P_n(x;t)+b_{n-1}P_{n-1}(x;t)+c_{n-1}P_{n-2}(x;t)+d_{n-1}P_{n-3}(x;t)),P_m(y;t)\rangle\nonumber\\&-\alpha_2a_{n-1}a_n\langle x^2P_{n-2}(x;t),P_m(y;t)\rangle\nonumber\\
=&-\alpha_2\langle xP_{n+1}(x;t)-P_{n+2}(x;t),P_m(y;t)\rangle-\alpha_2(b_n-a_n)\langle  xP_n(x;t)-P_{n+1}(x;t),P_m(y;t)\rangle\nonumber\\&-\alpha_2\langle x(c_nP_{n-1}(x;t)+d_nP_{n-2}(x;t)),P_m(y;t)\rangle-\alpha_2a_{n-1}a_n\langle x^2P_{n-2}(x;t),P_m(y;t)\rangle\nonumber\\
&+\alpha_2a_n\langle x(b_{n-1}P_{n-1}(x;t)+c_{n-1}P_{n-2}(x;t)+d_{n-1}P_{n-3}(x;t)),P_m(y;t)\rangle\nonumber\\
=&\quad \cdots\nonumber\\
=&\alpha_2\langle a_{n+1}(x P_n(x;t)-P_{n+1}(x;t)+(a_n-b_n)P_n(x;t)),P_m(y;t)\rangle-\alpha_2\langle d_{n+1}P_{n-1}(x;t),P_m(y;t)\rangle\nonumber\\
&-\alpha_2(b_n-a_n)\langle c_nP_{n-1}(x;t)+d_nP_{n-2}(x;t),P_m(y;t)\rangle\nonumber\\
&+\alpha_2a_n(b_n-a_n)\langle xP_{n-1}(x;t)-P_n(x;t),P_m(y;t)\rangle\nonumber\\
&-\alpha_2\langle x(c_nP_{n-1}(x;t)+d_nP_{n-2}(x;t))-c_nP_n(x;t),P_m(y;t)\rangle\nonumber\\
&+\alpha_2a_n\langle x(b_{n-1}P_{n-1}(x;t)+c_{n-1}P_{n-2}(x;t)+d_{n-1}P_{n-3}(x;t))-b_{n-1}P_n(x;t),P_m(y;t)\rangle\nonumber\\
&-\alpha_2a_{n-1}a_n\langle x(P_{n-1}(x;t)+b_{n-2}P_{n-2}(x;t)+c_{n-2}P_{n-3}(x;t)+d_{n-2}P_{n-4}(x;t))-P_n(x;t),P_m(y;t)\rangle\nonumber\\
&+\alpha_2a_{n-2}a_{n-1}a_n\langle x^2P_{n-3}(x;t),P_m(y;t)\rangle.\nonumber
\end{align}
Consequently, we have
\begin{align}
&-\alpha_2\langle x^2P_n(x;t),P_m(y;t)\rangle\nonumber\\
=&\alpha_2\langle a_{n+1}(x P_n(x;t)-P_{n+1}(x;t)+(a_n-b_n)P_n(x;t)),P_m(y;t)\rangle\nonumber\\
&-\alpha_2\langle d_{n+1}P_{n-1}(x;t),P_m(y;t)\rangle-\alpha_2(b_n-a_n)\langle c_nP_{n-1}(x;t)+d_nP_{n-2}(x;t),P_m(y;t)\rangle\nonumber\\
&+\alpha_2a_n(b_n-a_n)\langle xP_{n-1}(x;t)-P_n(x;t),P_m(y;t)\rangle\nonumber\\
&-\alpha_2\langle x(c_nP_{n-1}(x;t)+d_nP_{n-2}(x;t))-c_nP_n(x;t),P_m(y;t)\rangle\nonumber\\
&+\alpha_2a_n\langle x(b_{n-1}P_{n-1}(x;t)+c_{n-1}P_{n-2}(x;t)+d_{n-1}P_{n-3}(x;t))-b_{n-1}P_n(x;t),P_m(y;t)\rangle\nonumber\\
&-\alpha_2a_{n-1}a_n\langle x(P_{n-1}(x;t)+b_{n-2}P_{n-2}(x;t)+c_{n-2}P_{n-3}(x;t)+d_{n-2}P_{n-4}(x;t))-P_n(x;t),P_m(y;t)\rangle\nonumber\\
&+\alpha_2a_{n-2}a_{n-1}a_n\langle x^2P_{n-3}(x;t),P_m(y;t)\rangle.\label{6-2-10}
\end{align}

Finally, by inserting Eq.~\eqref{6-2-8},~\eqref{6-2-9} and ~\eqref{6-2-10} into ~\eqref{6-2-6}, we eventually obtain
\begin{align*}
&\left\langle\frac{d}{dt}P_n(x;t)-n\alpha P_n{(x;t)},P_m(y;t)\right\rangle\nonumber\\
=&\langle-(\alpha_1-\alpha_2a_{n+1})(x P_n(x;t)-P_{n+1}(x;t)+(a_n-b_n)P_n(x;t))+\bigg(\alpha_1\frac{d_{n+1}}{a_{n+1}}-\alpha_2d_{n+1}\nonumber\\
&-\alpha_2(b_n-a_n)c_n-\frac{\alpha_2d_{n+1}}{a_{n+1}}\bigg(\frac{c_{n+1}}{a_{n+1}}+\frac{c_n}{a_n}-\frac{d_{n+2}}{a_{n+1}a_{n+2}}-\frac{d_{n+1}}{a_na_{n+1}}\bigg)\bigg)P_{n-1}(x;t)\nonumber\\
&+\alpha_2(a_n(a_n-b_n)-c_n+a_nb_{n-1}-a_{n-1}a_n)(xP_{n-1}(x)-P_n(x;t))\nonumber\\
&+\alpha_2\bigg(-\frac{d_nd_{n+1}}{a_na_{n+1}}+d_n(a_n-b_n)-x(d_n-a_nc_{n-1}+a_{n-1}a_nb_{n-2})\bigg)P_{n-2}(x;t)\nonumber\\
&+\alpha_2a_nx(xa_{n-2}a_{n-1}+d_{n-1}-a_{n-1}c_{n-2})P_{n-3}(x;t)-\alpha_2a_{n-1}a_nd_{n-2}xP_{n-4}(x;t),P_m(y;t)\rangle,
%\\&+\langle n \alpha P_{(x;t)}, P_m(y;t)\rangle,
\end{align*}
where we used the fact that $\langle n\alpha P_n{(x;t)},P_m(y;t)\rangle=0$ for $m=0,1,\ldots,n-1$.
Observe that the 
$\frac{d}{dt}P_n(x;t)-n\alpha P_n(x;t),$
$x P_n(x;t)-P_{n+1}(x;t)+(a_n-b_n)P_n(x;t)
$
and $xP_{n-1}(x;t)-P_n(x;t)$
are all polynomials of degree $n-1$ in $x$, the expression \eqref{6-2-4} for $\frac{d}{dt}P_n(x;t)$  follows from the orthogonality.
Therefore, we complete the proof.
\end{proof}

\section{Proof of Corollary \ref{co:7-2}}\label{pco1}
%This appendix is devoted to demonstrating the formulae \eqref{7-2-19}-\eqref{7-2-24} in Corollary \ref{co:7-2}.
\begin{proof}
Since \eqref{7-2-19} and \eqref{7-2-20} are straightforward consequence of applying the derivative formulae \eqref{7-2-15-16} to \eqref{7-2-17} and \eqref{7-2-18}, we are remaining to prove \eqref{7-2-23} and \eqref{7-2-24}.
Based on \eqref{7-2-21}, by use of the derivative formulae \eqref{7-2-15-16}, we get 
\begin{align}
&\frac{\partial^2}{\partial t_1^2}(\tau_{2n}P_{2n}(z;t))\nonumber\\
=&\frac{\partial}{\partial t_1}(\Pf(0,\ldots,2n-1,2n+1,z)-z \Pf(0,1,\ldots,2n,z))\nonumber\\
=&\frac{\partial}{\partial t_1}\bigg(\sum_{j=0}^{2n-1}(-z)^j\Pf(0,\ldots,\hat{j},\ldots,2n-1,2n+1)\nonumber\\
&\qquad\qquad+\sum_{j=0}^{2n-1}(-1)^{j+1}z^{j+1}\Pf(0,1,\ldots,\hat{j},\ldots,2n)\bigg)\nonumber\\
%=&\sum_{j=0}^{2n-2}(-z)^j(\Pf(0,\ldots,\hat{j},\ldots,2n-2,2n,2n+1)+\Pf(0,\ldots,\hat{j},\ldots,2n-1,2n+2))\\
%&+z^{2n+1}\Pf(0,\ldots,2n-2,2n)\\ %&+\sum_{j=0}^{2n}(-1)^{j+1}z^{j+1}\Pf(0,1,\ldots,\hat{j},\ldots,2n-1,2n+1)\\
%=&\Pf(1,\ldots,2n-2,2n,2n+1)+\Pf(1,\ldots,2n-1,2n+2)-z(\Pf(1,\ldots,2n-1,2n+1)\nonumber\\ &+\Pf(0,2,\ldots,2n-2,2n,2n+1)+\Pf(0,2,\ldots,2n-1,2n+2))\nonumber\\  &+z^2(\Pf(0,2,\ldots,2n-1,2n+1)+\Pf(0,1,3,\ldots,2n-2,2n,2n+1)\nonumber\\ &+\Pf(0,1,3,\ldots,2n-1,2n+2))+\cdots+(-1)^{2n-3}z^{2n-3}(\Pf(0,\ldots,\widehat{2n-4},\ldots,2n-1,2n+1)\nonumber\\ &+\Pf(0,\ldots,2n-4,2n-2,2n,2n+1)+\Pf(0,\ldots,\widehat{2n-3},\ldots,2n-1,2n+2))\nonumber\\ &+(-1)^{2n-2}z^{2n-2}(\Pf(0,\ldots,\widehat{2n-3},\ldots,2n-1,2n+1)+\Pf(0,\ldots,2n-3,2n,2n+1)\nonumber\\ &+\Pf(0,\ldots,2n-3,2n-1,2n+2))+(-1)^{2n-1}z^{2n-1}(\Pf(0,\ldots,2n-3,2n-1,2n+1)\nonumber\\ &+\Pf(0,\ldots,2n-2,2n+2))+z^{2n+1}\Pf(0,\ldots,2n-2,2n)-z \Pf(1,\ldots,2n-1,2n+1)\nonumber\\ &+z^2(\Pf(1,\ldots,2n)+\Pf(0,2,\ldots,2n-1,2n+1))-z^3(\Pf(0,2,\ldots,2n)\nonumber\\ &+\Pf(0,1,3,\ldots,2n-1,2n+1))+\cdots+(-1)^{2n-2}z^{2n-2}(\Pf(0,\ldots,\widehat{2n-4},\ldots,2n)\nonumber\\ &+\Pf(0,\ldots,\widehat{2n-3},\ldots,2n-1,2n+1))+(-1)^{2n-1}z^{2n-1}(\Pf(0,\ldots,\widehat{2n-3},\ldots,2n)\nonumber\\ &+\Pf(0,\ldots,2n-3,2n-1,2n+1))+(-1)^{2n}z^{2n}(\Pf(0,\ldots,2n-3,2n-1,2n)\nonumber\\ &+\Pf(0,\ldots,2n-2,2n+1))+(-1)^{2n+1}z^{2n+1} \Pf(0,\ldots,2n-2,2n)\nonumber\\
=&\sum_{j=0}^{2n-2}(-z)^j\Pf(0,\ldots,\hat{j},\ldots,2n-2,2n,2n+1)\nonumber\\
&+\sum_{j=0}^{2n-1}(-z)^j\Pf(0,\ldots,\hat{j},\ldots,2n-1,2n+2)\nonumber\\
&-2z\sum_{j=0}^{2n-2}(-z)^j\Pf(0,\ldots,\hat{j},\ldots,2n-1,2n+1)\nonumber\\
&+z^2\sum_{j=0}^{2n-2}(-z)^j\Pf(0,\ldots,\hat{j},\ldots,2n)+(-1)^{2n}z^{2n} \Pf(0,\ldots,2n-2,2n+1),\label{pf11}
\end{align}
Observe that the expansion of the Pfaffian gives
\begin{align}
&\Pf(0,\ldots,2n-1,2n+1,z)
=\sum_{j=0}^{2n-2}(-z)^j\Pf(0,\ldots,\hat{j},\ldots,2n-1,2n+1)\nonumber\\
&\quad\quad\quad\quad\quad\quad\quad\quad\quad\quad\quad\quad-z^{2n-1}\Pf(0,\ldots,2n-2,2n+1)+z^{2n+1}\Pf(0,\ldots,2n-1),\nonumber\\
&\Pf(0,\ldots,2n-2,2n,2n+1,z)
=\sum_{j=0}^{2n-2}(-z)^j\Pf(0,\ldots,\hat{j},\ldots,2n-2,2n,2n+1)\nonumber\\
&\quad\quad\quad\quad\quad\quad\quad\quad\quad\quad\quad\quad\quad\quad-z^{2n}\Pf(0,\ldots,2n-2,2n+1)+z^{2n+1}\Pf(0,\ldots,2n-2,2n),\nonumber\\
&\Pf(0,\ldots,2n-1,2n+2,z)
=\sum_{j=0}^{2n-1}(-z)^j\Pf(0,\ldots,\hat{j},\ldots,2n-1,2n+2)\nonumber\\
&\quad\quad\quad\quad\quad\quad\quad\quad\quad\quad\quad\quad+z^{2n+2}\Pf(0,\ldots,2n-1),\nonumber\\
&\Pf(0,\ldots,2n,z)
=\sum_{j=0}^{2n-2}(-z)^j\Pf(0,\ldots,\hat{j},\ldots,2n)-z^{2n-1}\Pf(0,\ldots,2n-2,2n)\nonumber\\
&\quad\quad\quad\quad\quad\quad\quad\quad+z^{2n}\Pf(0,\ldots,2n-1),\nonumber
\end{align}
from which it is not hard to see that ~\eqref{pf11} can be equivalently written as
\begin{align*}
&\frac{\partial^2}{\partial t_1^2}(\tau_{2n}P_{2n}(z;t))\nonumber\\
=&-2z \Pf(0,\ldots,2n-1,2n+1,z)+\Pf(0,\ldots,2n-2,2n,2n+1,z)\nonumber\\
&+\Pf(0,\ldots,2n-1,2n+2,z)+z^2 \Pf(0,\ldots,2n,z).
\end{align*}
This confirms the validity of~\eqref{7-2-23}.

Similarly, we can demonstrate the formula ~\eqref{7-2-24}. In fact, based on the  expression \eqref{7-2-22} and the derivative formula \eqref{7-2-15-16}, we have
\begin{align*}
&\frac{\partial^2}{\partial t_1^2}(\tau_{2n+1}P_{2n+1}(z;t))\nonumber\\
=&\frac{\partial}{\partial t_1}(\Pf(d_0,0,\ldots,2n,2n+2,z)-z \Pf(d_0,0,\ldots,2n+1,z))\nonumber\\
=&\frac{\partial}{\partial t_1}\bigg(\sum_{j=0}^{2n}(-1)^{j+1}z^j \Pf(d_0,0,\ldots,\hat{j},\ldots,2n,2n+2)\nonumber\\
&\qquad\quad-z\sum_{j=0}^{2n}(-1)^{j+1}z^j \Pf(d_0,0,\ldots,\hat{j},\ldots,2n+1)\bigg)\nonumber\\
%=&-(\Pf(d_0,1,\ldots,\widehat{2n},\ldots,2n+2)+\Pf(d_0,1,\ldots,2n,2n+3))\nonumber\\
%&+z(\Pf(d_0,1,\ldots,2n,2n+2)+\Pf(d_0,0,2,\ldots,\widehat{2n},\ldots,2n+2)\nonumber\\
%&+\Pf(d_0,0,2,\ldots,2n,2n+3))-z^2(\Pf(d_0,0,2,\ldots,2n,2n+2)\nonumber\\
%&+\Pf(d_0,0,1,3,\ldots,2n-1,2n+1,2n+2)+\Pf(d_0,0,1,3,\ldots,2n,2n+3))\nonumber\\
%&+\cdots+(-1)^{2n}z^{2n-1}(\Pf(d_0,0,\ldots,\widehat{2n-2},\ldots,2n,2n+2)\nonumber\\
%&+\Pf(d_0,0,\ldots,2n-2,2n+1,2n+2)+\Pf(d_0,0,\ldots,2n-2,2n,2n+3))\nonumber\\
%&+(-1)^{2n+1}z^{2n}(\Pf(d_0,0,\ldots,2n-2,2n,2n+2)+\Pf(d_0,0,\ldots,2n-1,2n+3))\nonumber\\
%&+z^{2n+2}\Pf(d_0,0,\ldots,2n-1,2n+1)+z \Pf(d_0,1,\ldots,2n,2n+2)\nonumber\\
%&-z^2(\Pf(d_0,1,\ldots,2n+1)+\Pf(d_0,0,2,\ldots,2n,2n+2))\nonumber\\
%&+z^3(\Pf(d_0,0,2,\ldots,2n+1)+\Pf(d_0,0,1,3,\ldots,2n,2n+2))\nonumber\\
%&+\cdots+(-1)^{2n-1}z^{2n}(\Pf(d_0,0,\ldots,2n-3,2n-1,2n,2n+1)\nonumber\\
%&+\Pf(d_0,0,\ldots,2n-2,2n,2n+2))+(-1)^{2n}z^{2n+1}(\Pf(d_0,0,\ldots,2n-2,2n,2n+1)\nonumber\\
%&+\Pf(d_0,0,\ldots,2n-1,2n+2))+(-1)^{2n+1}z^{2n+2}\Pf(d_0,0,\ldots,2n-1,2n+1)\nonumber\\
=&\sum_{j=0}^{2n}(-1)^{j+1}z^j \Pf(d_0,0,\ldots,\hat{j},\ldots,2n,2n+3)\nonumber\\
&+\sum_{j=0}^{2n-1}(-1)^{j+1}z^j \Pf(d_0,0,\ldots,\hat{j},\ldots,2n-1,2n+1,2n+2)\nonumber\\
&+2z\sum_{j=0}^{2n-1}(-z)^j \Pf(d_0,0,\ldots,\hat{j},\ldots,2n,2n+2)\nonumber\\
&-z^2\sum_{j=0}^{2n-1}(-z)^j \Pf(d_0,0,\ldots,\hat{j},\ldots,2n+1)+z^{2n+1}\Pf(d_0,0,\ldots,2n-1,2n+2).
\end{align*}
Recall that Pfaffian's expansion gives
\begin{align}
&\Pf(d_0,0,\ldots,2n,2n+2,z)
=\sum_{j=0}^{2n-1}(-1)^{j+1}z^j \Pf(d_0,0,\ldots,\hat{j},\ldots,2n,2n+2)\nonumber\\
&\quad\quad\quad\quad\quad\quad\quad\quad\quad\quad\quad\quad-z^{2n}\Pf(d_0,0,\ldots,2n-1,2n+2)+z^{2n+2}\Pf(d_0,0,\ldots,2n),\nonumber\\
&\Pf(d_0,0,\ldots,2n,2n+3,z)
=\sum_{j=0}^{2n}(-1)^{j+1}z^j \Pf(d_0,0,\ldots,\hat{j},\ldots,2n,2n+3)\nonumber\\
&\quad\quad\quad\quad\quad\quad\quad\quad\quad\quad\quad\quad+z^{2n+3}\Pf(d_0,0,\ldots,2n),\nonumber\\
&\Pf(d_0,0,\ldots,2n-1,2n+1,2n+2,z)\nonumber\\
&\quad\quad\quad\quad=\sum_{j=0}^{2n-1}(-1)^{j+1}z^j \Pf(d_0,0,\ldots,\hat{j},\ldots,2n-1,2n+1,2n+2)\nonumber\\
&\quad\quad\quad\quad\quad\quad\quad-z^{2n+1}\Pf(d_0,0,\ldots,2n-1,2n+2)+z^{2n+2}\Pf(d_0,0,\ldots,2n-1,2n+1),\nonumber\\
&\Pf(d_0,0,\ldots,2n+1,z)
=\sum_{j=0}^{2n-1}(-1)^{j+1}z^j \Pf(d_0,0,\ldots,\hat{j},\ldots,2n+1)\nonumber\\
&\quad\quad\quad\quad\quad\quad\quad\quad\quad\quad\quad-z^{2n}\Pf(d_0,0,\ldots,2n-1,2n+1)+z^{2n+1}\Pf(d_0,0,\ldots,2n),\nonumber
\end{align}
based on which, we are led to \eqref{7-2-24} from \eqref{7-2-24}.
\end{proof}

\section{Proof of Lemma \ref{th11}}\label{pth11}
\begin{proof}
We first proceed the proof of \eqref{7-2-13}.
Based on the expansion formulae of the Pfaffian, we have 
\begin{align}
&\frac{d}{dt}(\tau_{2n}P_{2n}(z;t))\nonumber\\
=&\frac{d}{dt}\left(\sum_{j=0}^{2n}(-1)^j \Pf(0,1,\ldots,\hat{j},\ldots,2n)z^j\right)\nonumber\\
=&\sum_{j=0}^{2n}\left((-z)^j\frac{d}{dt} \Pf(0,1,\ldots,\hat{j},\ldots,2n)\right)\nonumber\\
&\qquad+\sum_{j=0}^{2n}\left((-1)^j\alpha j \Pf(0,1,\ldots,\hat{j},\ldots,2n)z^j\right).\label{pf7}
\end{align}
As for the first expression at the right hand side above, applying the derivative formula~\eqref{pf1}, we have 
\begin{align}
&\sum_{j=0}^{2n}\left((-z)^j\frac{d}{dt} \Pf(0,1,\ldots,\hat{j},\ldots,2n)\right)\nonumber\\
%=&\alpha\sum_{j=0}^{2n}(-z)^j(n(2n+1)-j)\Pf(0,1,\ldots,\hat{j},\ldots,2n)+\alpha_1(\Pf(1,\ldots,2n-1,2n+1)\nonumber\\
%&+\sum_{j=0}^{2n-1}(-z)^j(\Pf(0,1,\ldots,\widehat{j-1},\ldots,2n)+\Pf(0,1,\ldots,\hat{j},\ldots,2n-1,2n+1))\nonumber\\
%&+z^{2n}\Pf(0,1,\ldots,2n-2,2n))+\alpha_2(\Pf(1,\ldots,2n-2,2n+1,2n)+\Pf(1,\ldots,2n-1,2n+2)\nonumber\\
%&-z(\Pf(0,2,\ldots,2n-2,2n+1,2n)+\Pf(0,2,\ldots,2n-1,2n+2))+z^2(\Pf(2,1,3,\ldots,2n)\nonumber\\
%&+\Pf(0,1,3\cdots,2n-2,2n+1,2n)+\Pf(0,1,3,\ldots,2n-1,2n+2))\nonumber\\
%&+\cdots+(-1)^{2n-3}z^{2n-3}(\Pf(0,\ldots,2n-6,2n-3,2n-4,2n-2,2n-1,2n)\nonumber\\
%&+\Pf(0,\ldots,2n-4,2n-2,2n+1,2n)+\Pf(0,\ldots,2n-4,2n-2,2n-1,2n+2))\nonumber\\
%&+(-1)^{2n-2}z^{2n-2}(\Pf(0,\ldots,2n-5,2n-2,2n-3,2n-1,2n)\nonumber\\
%&+\Pf(0,\ldots,2n-3,2n+1,2n)+\Pf(0,\ldots,2n-3,2n-1,2n+2))\nonumber\\
%&-z^{2n-1}(\Pf(0,\ldots,2n-4,2n-1,2n-2,2n)+\Pf(0,\ldots,2n-2,2n+2))\nonumber\\
%&+z^{2n}(\Pf(0,\ldots,2n-3,2n,2n-1)+\Pf(0,\ldots,2n-2,2n+1)))\nonumber\\
=&\alpha\sum_{j=0}^{2n}(-z)^j(n(2n+1)-j)\Pf(0,1,\ldots,\hat{j},\ldots,2n)\nonumber\\
&+\alpha_1\bigg(\Pf(1,\ldots,2n-1,2n+1)+z^{2n}\Pf(0,1,\ldots,2n-2,2n)\nonumber\\
&\qquad+\sum_{j=1}^{2n-1}(-z)^j\left(\Pf(0,1,\ldots,\widehat{j-1},\ldots,2n)+\Pf(0,1,\ldots,\hat{j},\ldots,2n-1,2n+1)\right)\bigg)\nonumber\\
&+\alpha_2\bigg(\sum_{j=0}^{2n-2}(-z)^j \Pf(0,\ldots,\hat{j},\ldots,2n-2,2n+1,2n)\nonumber\\
&\qquad+\sum_{j=0}^{2n-1}(-z)^j \Pf(0,\ldots,\hat{j},\ldots,2n-1,2n+2)\nonumber\\
&\qquad+\sum_{j=3}^{2n-1}(-z)^j \Pf(0,\ldots,j-3,j,j-1,j+1,\ldots,2n)+z^2 \Pf(2,1,3,\ldots,2n)\nonumber\\
&\qquad+z^{2n}\big(\Pf(0,\ldots,2n-3,2n,2n-1)+\Pf(0,\ldots,2n-2,2n+1)\big)\bigg)\nonumber\\
=&\alpha\sum_{j=0}^{2n}(-z)^j(n(2n+1)-j)\Pf(0,1,\ldots,\hat{j},\ldots,2n)\nonumber\\
&+\alpha_1\bigg(\sum_{j=0}^{2n-1}(-z)^j \Pf(0,\ldots,\hat{j},\ldots,2n-1,2n+1)\nonumber\\
&\qquad+\sum_{j=1}^{2n}(-z)^j\Pf(0,1,\ldots,\widehat{j-1},\ldots,2n)\bigg)\nonumber\\
&+\alpha_2\bigg(\sum_{j=0}^{2n-2}(-z)^j \Pf(0,\ldots,\hat{j},\ldots,2n-2,2n+1,2n)\nonumber\\
&\quad+\sum_{j=0}^{2n-1}(-z)^j \Pf(0,\ldots,\hat{j},\ldots,2n-1,2n+2)-\sum_{j=3}^{2n-1}(-z)^j \Pf(0,\ldots,\widehat{j-2},\ldots,2n)\nonumber\\
&\quad-z^2 \Pf(1,\ldots,2n)
-z^{2n}(\Pf(0,\ldots,\widehat{2n-2},2n-1,2n)-\Pf(0,\ldots,2n-2,2n+1))\bigg).\nonumber
\end{align}
By use of the Pfaffian expansion, we easily see
\begin{align}
&\sum_{j=0}^{2n-1}(-z)^j \Pf(0,\ldots,\hat{j},\ldots,2n-1,2n+1)\nonumber\\
=&\Pf(0,\ldots,2n-1,2n+1,z)-z^{2n+1}\Pf(0,\ldots,2n-1),\nonumber\\
&\sum_{j=1}^{2n}(-z)^j\Pf(0,1,\ldots,\widehat{j-1},\ldots,2n)\nonumber\\
=&-z \Pf(0,1,\ldots,2n,z)+z^{2n+1}\Pf(0,\ldots,2n-1),\nonumber\\
&\sum_{j=2}^{2n}(-z)^j \Pf(0,\ldots,\widehat{j-2},\ldots,2n)\nonumber\\
=&\sum_{j=3}^{2n-1}(-z)^j \Pf(0,\ldots,\widehat{j-2},\ldots,2n)+z^2 \Pf(1,\ldots,2n)+z^{2n}\Pf(0,\ldots,\widehat{2n-2},2n-1,2n).\nonumber
\end{align}
% \begin{align}
% \sum_{j=0}^{2n-1}(-z)^j \Pf(0,\ldots,\hat{j},\ldots,2n-1,2n+1)
% =&\Pf(0,\ldots,2n-1,2n+1,z)-z^{2n+1}\Pf(0,\ldots,2n-1),\nonumber\\
% \sum_{j=1}^{2n}(-z)^j\Pf(0,1,\ldots,\widehat{j-1},\ldots,2n)
% =&-z \Pf(0,1,\ldots,2n,z)+z^{2n+1}\Pf(0,\ldots,2n-1),\nonumber\\
% \sum_{j=2}^{2n}(-z)^j \Pf(0,\ldots,\widehat{j-2},\ldots,2n)=&\sum_{j=3}^{2n-1}(-z)^j \Pf(0,\ldots,\widehat{j-2},\ldots,2n)+z^2 \Pf(1,\ldots,2n)\nonumber\\
% &+z^{2n}\Pf(0,\ldots,\widehat{2n-2},2n-1,2n),\nonumber
% \end{align}
Consequently, we obtain
\begin{align}
&\sum_{j=0}^{2n}(-z)^j\frac{d}{dt} \Pf(0,1,\ldots,\hat{j},\ldots,2n)\nonumber\\
=&\alpha\sum_{j=0}^{2n}(-z)^j(n(2n+1)-j)\Pf(0,1,\ldots,\hat{j},\ldots,2n)\nonumber\\
&+\alpha_1\Big(\Pf(0,\ldots,2n-1,2n+1,z)-z \Pf(0,1,\ldots,2n,z)\Big)\nonumber\\
&+\alpha_2\bigg(-\sum_{j=0}^{2n-2}(-z)^j \Pf(0,\ldots,\hat{j},\ldots,2n-2,2n,2n+1)\nonumber\\
&\qquad+\sum_{j=0}^{2n-1}(-z)^j \Pf(0,\ldots,\hat{j},\ldots,2n-1,2n+2)\nonumber\\
&\qquad-\sum_{j=2}^{2n}(-z)^j \Pf(0,\ldots,\widehat{j-2},\ldots,2n)+z^{2n}\Pf(0,\ldots,2n-2,2n+1)\bigg).\nonumber
\end{align}
Furthermore, it is noted that
\begin{align}
&\Pf(0,1,\ldots,2n-2,2n,2n+1,z)
=\sum_{j=0}^{2n-2}(-z)^j \Pf(0,\ldots,\hat{j},\ldots,2n-2,2n,2n+1)\nonumber\\
&\quad\quad\quad\quad\quad\quad\quad\quad\quad\quad\qquad\qquad\qquad\quad-z^{2n}\Pf(0,\ldots,2n-2,2n+1)\nonumber\\
&\quad\quad\quad\quad\quad\quad\quad\quad\quad\quad\qquad\qquad\qquad\quad+z^{2n+1}\Pf(0,\ldots,2n-2,2n),\nonumber\\
&\Pf(0,1,\ldots,2n-1,2n+2,z)
=\sum_{j=0}^{2n-1}(-z)^j \Pf(0,\ldots,\hat{j},\ldots,2n-1,2n+2)\nonumber\\
&\quad\quad\quad\quad\quad\quad\quad\quad\quad\quad\quad\quad\quad\quad\quad+z^{2n+2}\Pf(0,\ldots,2n-1),\nonumber\\
&z^2 \Pf(0,1,\ldots,2n,z)
=\sum_{j=2}^{2n}(-z)^j \Pf(0,\ldots,\widehat{j-2},\ldots,2n)\nonumber\\
&\quad\quad\quad\quad\quad\quad\quad\quad\quad\qquad-z^{2n+1}\Pf(0,\ldots,2n-2,2n)+z^{2n+2}\Pf(0,\ldots,2n-1),\nonumber
\end{align}
from which we derive for the second expression in \eqref{pf7} that
\begin{align}
&\sum_{j=0}^{2n}(-z)^j\frac{d}{dt} \Pf(0,1,\ldots,\hat{j},\ldots,2n)\nonumber\\
=&\alpha\sum_{j=0}^{2n}(-z)^j(n(2n+1)-j)\Pf(0,1,\ldots,\hat{j},\ldots,2n)\nonumber\\
&+\alpha_1\Big(\Pf(0,\ldots,2n-1,2n+1,z)-z \Pf(0,1,\ldots,2n,z)\Big)\nonumber\\
&+\alpha_2\Big(-\Pf(0,1,\ldots,2n-2,2n,2n+1,z)\nonumber\\
&\qquad+\Pf(0,1,\ldots,2n-1,2n+2,z)-z^2 \Pf(0,1,\ldots,2n,z)\Big).\label{pf8}
\end{align}
Inserting ~\eqref{pf8} into  ~\eqref{pf7} immediately leads to \eqref{7-2-13}.
%\begin{align}
%&\frac{d}{dt}(\tau_{2n}P_{2n}(z;t))\nonumber\\
%=&\alpha n(2n+1)\Pf(0,1,\ldots,2n,z)+\alpha_1(\Pf(0,\ldots,2n-1,2n+1,z)\nonumber\\
%&-z \Pf(0,1,\ldots,2n,z))+\alpha_2(-\Pf(0,1,\ldots,2n-2,2n,2n+1,z)\nonumber\\
%&+\Pf(0,1,\ldots,2n-1,2n+2,z)-z^2 \Pf(0,1,\ldots,2n,z)).\nonumber
%\end{align}
%which is nothing but~\eqref{7-2-13}.

Similarly, we can establish the validity of ~\eqref{7-2-14}. In fact, we have
\begin{align}
&\frac{d}{dt}(\tau_{2n+1}P_{2n+1}(z;t))\nonumber\\
=&\frac{d}{dt}\bigg(\sum_{j=0}^{2n+1}(-1)^{j+1} \Pf(d_0,0,1,\ldots,\hat{j},\ldots,2n+1)z^j\bigg)\nonumber\\
=&\sum_{j=0}^{2n+1}(-1)^{j+1}z^j\frac{d}{dt} \Pf(d_0,0,\ldots,\hat{j},\ldots,2n+1)\nonumber\\
&\quad+\sum_{j=0}^{2n+1}(-1)^{j+1}\alpha j \Pf(d_0,0,\ldots,\hat{j},\ldots,2n+1)z^j.\label{pf9}
\end{align}
By employing the derivative formula ~\eqref{pf2}, we obtain
\begin{align}
&\sum_{j=0}^{2n+1}(-1)^{j+1}z^j\frac{d}{dt} \Pf(d_0,0,\ldots,\hat{j},\ldots,2n+1)\nonumber\\
%=&\alpha\sum_{j=0}^{2n+1}(-1)^{j+1}((n+1)(2n+1)-j)z^j \Pf(d_0,0,\ldots,\hat{j},\ldots,2n+1)\nonumber\\
%&+\alpha_1(-\Pf(d_0,1,\ldots,2n,2n+2)+z^{2n+1}\Pf(d_0,0,\ldots,2n-1,2n+1)\nonumber\\
%&+\sum_{j=1}^{2n}(-1)^{j+1}z^j (\Pf(d_0,0,\ldots,\widehat{j-1},\ldots,2n+1)+\Pf(d_0,0,\ldots,\hat{j},\ldots,2n,2n+2)))\nonumber\\
%&+\alpha_2(\Pf(d_0,1,\ldots,2n-1,2n+1,2n+2)-\Pf(d_0,1,\ldots,2n,2n+3)\nonumber\\
%&+z(\Pf(d_0,0,2,\ldots,2n-1,2n+2,2n+1)+\Pf(d_0,0,2,\ldots,2n,2n+3))\nonumber\\
%&-z^2(\Pf(d_0,2,1,3,\ldots,2n+1)+\Pf(d_0,0,1,3,\ldots,2n-1,2n+2,2n+1)\nonumber\\
%&+\Pf(d_0,0,1,3,\ldots,2n,2n+3))\nonumber\\
%&+\cdots+(-1)^{2n-1}z^{2n-2}(-\Pf(d_0,0,\ldots,2n-5,2n-3,\ldots,2n+1)\nonumber\\
%&+\Pf(d_0,0,\ldots,2n-3,2n-1,2n+2,2n+1)+\Pf(d_0,0,\ldots,2n-3,2n-1,2n,2n+3))\nonumber\\
%&+(-1)^{2n}z^{2n-1}(-\Pf(d_0,0,\ldots,2n-4,2n-2,\ldots,2n+1)\nonumber\\
%&+\Pf(d_0,0,\ldots,2n-2,2n+2,2n+1)+\Pf(d_0,0,\ldots,2n-2,2n,2n+3))\nonumber\\
%&+(-1)^{2n+1}z^{2n}(\Pf(d_0,0,\ldots,2n-3,2n,2n-1,2n+1)+\Pf(d_0,0,\ldots,2n-1,2n+3))\nonumber\\
%&+(-1)^{2n+2}z^{2n+1}(\Pf(d_0,0,\ldots,2n-2,2n+1,2n)+\Pf(d_0,0,\ldots,2n-1,2n+2)))\nonumber\\
=&\alpha\sum_{j=0}^{2n+1}(-1)^{j+1}((n+1)(2n+1)-j)z^j \Pf(d_0,0,\ldots,\hat{j},\ldots,2n+1)\nonumber\\
&+\alpha_1\bigg(\sum_{j=1}^{2n+1}(-1)^{j+1}z^j \Pf(d_0,0,\ldots,\widehat{j-1},\ldots,2n+1)\nonumber\\
&\qquad\quad+\sum_{j=0}^{2n}(-1)^{j+1}z^j \Pf(d_0,0,\ldots,\hat{j},\ldots,2n,2n+2)\bigg)\nonumber\\
&+\alpha_2\bigg(\sum_{j=0}^{2n-1}(-1)^{j+1}z^j \Pf(d_0,0,\ldots,\hat{j},\ldots,2n-1,2n+2,2n+1)\nonumber\\
&\qquad\quad+\sum_{j=0}^{2n}(-1)^{j+1}z^j \Pf(d_0,0,\ldots,\hat{j},\ldots,2n,2n+3)-z^2 \Pf(d_0,2,1,3,\ldots,2n+1)\nonumber\\
&\qquad\quad+\sum_{j=3}^{2n}(-1)^{j+1}z^j \Pf(d_0,0,\ldots,j-3,j,j-1,j+1,\ldots,2n+1)\nonumber\\
&\qquad\quad+z^{2n+1}(\Pf(d_0,0,\ldots,2n-2,2n+1,2n)+\Pf(d_0,0,\ldots,2n-1,2n+2))\bigg).\nonumber
\end{align}
According to the expansion formulae for Pfaffians, we have
\begin{align}
&z \Pf(d_0,0,\ldots,2n+1,z),\nonumber\\
=&-\sum_{j=1}^{2n+1}(-1)^{j+1}z^j \Pf(d_0,0,\ldots,\widehat{j-1},\ldots,2n+1)+z^{2n+2}\Pf(d_0,0,\ldots,2n),\nonumber\\
&\Pf(d_0,0,\ldots,2n,2n+2,z),\nonumber\\
=&\sum_{j=0}^{2n}(-1)^{j+1}z^j \Pf(d_0,0,\ldots,\hat{j},\ldots,2n,2n+2)+z^{2n+2}\Pf(d_0,0,\ldots,2n),\nonumber
\end{align}
using which, we obtain
\begin{align}
&\sum_{j=0}^{2n+1}(-1)^{j+1}z^j\frac{d}{dt} \Pf(d_0,0,\ldots,\hat{j},\ldots,2n+1)\nonumber\\
%=&\alpha\sum_{j=0}^{2n+1}(-1)^{j+1}((n+1)(2n+1)-j)z^j \Pf(d_0,0,\ldots,\hat{j},\ldots,2n+1)\nonumber\\
%&+\alpha_1(-z \Pf(d_0,0,\ldots,2n+1,z)+\Pf(d_0,0,\ldots,2n,2n+2,z))\nonumber\\
%&+\alpha_2(\sum_{j=0}^{2n-1}(-z)^j \Pf(d_0,0,\ldots,\hat{j},\ldots,2n-1,2n+1,2n+2)\nonumber\\
%&+\sum_{j=0}^{2n}(-1)^{j+1}z^j \Pf(d_0,0,\ldots,\hat{j},\ldots,2n,2n+3)\nonumber\\
%&+\sum_{j=3}^{2n}(-z)^j \Pf(d_0,0,\ldots,\widehat{j-2},\ldots,2n+1)+z^2 \Pf(d_0,1,2,3,\ldots,2n+1)\nonumber\\
%&+z^{2n+1}(-\Pf(d_0,0,\ldots,2n-2,2n,2n+1)+\Pf(d_0,0,\ldots,2n-1,2n+2)))\nonumber
%\end{align}
%\begin{align}
=&\alpha\sum_{j=0}^{2n+1}(-1)^{j+1}((n+1)(2n+1)-j)z^j \Pf(d_0,0,\ldots,\hat{j},\ldots,2n+1)\nonumber\\
&+\alpha_1\big(-z \Pf(d_0,0,\ldots,2n+1,z)+\Pf(d_0,0,\ldots,2n,2n+2,z)\big)\nonumber\\
&+\alpha_2\bigg(\sum_{j=0}^{2n-1}(-z)^j \Pf(d_0,0,\ldots,\hat{j},\ldots,2n-1,2n+1,2n+2)\nonumber\\
&\quad\quad-\sum_{j=0}^{2n}(-z)^j \Pf(d_0,0,\ldots,\hat{j},\ldots,2n,2n+3)\nonumber\\
&\quad\quad+\sum_{j=2}^{2n+1}(-z)^j \Pf(d_0,0,\ldots,\widehat{j-2},\ldots,2n+1)+z^{2n+1}\Pf(d_0,0,\ldots,2n-1,2n+2)\bigg).\nonumber
\end{align}
Recall that we also have the following expansions 
\begin{align*}
&\Pf(d_0,0,\ldots,2n-1,2n+1,2n+2,z)\nonumber\\
=&-\sum_{j=0}^{2n-1}(-z)^j \Pf(d_0,0,\ldots,\hat{j},\ldots,2n-1,2n+1,2n+2)\nonumber\\
&-z^{2n+1}\Pf(d_0,1,\ldots,2n-1,2n+2)+z^{2n+2}\Pf(d_0,0,\ldots,2n-1,2n+1),\\
&\Pf(d_0,0,\ldots,2n,2n+3,z)\nonumber\\
=&-\sum_{j=0}^{2n}(-z)^j \Pf(d_0,0,\ldots,\hat{j},\ldots,2n,2n+3) +z^{2n+3}\Pf(d_0,0,\ldots,2n),\\
&z^2\Pf(d_0,0,\ldots,2n-1,2n,2n+1,z)\nonumber\\
=&-\sum_{j=2}^{2n+1}(-z)^j \Pf(d_0,0,\ldots,\widehat{j-2},\ldots,2n+1)+z^{2n+3}\Pf(d_0,0,\ldots,2n)\nonumber\\
&-z^{2n+2}\Pf(d_0,0,\ldots,2n-1,2n+1).
\end{align*}
Consequently, we are led to
\begin{align}
&\sum_{j=0}^{2n+1}(-1)^{j+1}z^j\frac{d}{dt} \Pf(d_0,0,\ldots,\hat{j},\ldots,2n+1)\nonumber\\
=&\alpha\sum_{j=0}^{2n+1}(-1)^{j+1}((n+1)(2n+1)-j)z^j \Pf(d_0,0,\ldots,\hat{j},\ldots,2n+1)\nonumber\\
&+\alpha_1\big(-z \Pf(d_0,0,\ldots,2n+1,z)+\Pf(d_0,0,\ldots,2n,2n+2,z)\big)\nonumber\\
&+\alpha_2\big(-\Pf(d_0,0,\ldots,2n-1,2n+1,2n+2,z)\nonumber\\
&\qquad\quad+\Pf(d_0,0,\ldots,2n,2n+3,z)-z^2 \Pf(d_0,0,\ldots,2n+1,z)\big).\label{pf10}
\end{align}
Plugging ~\eqref{pf10} into Eqs.~\eqref{pf9}, we immediately get the desired formula \eqref{7-2-14}.
%\begin{align}
%&\frac{d}{dt}(\tau_{2n+1}P_{2n+1}(z;t))\nonumber\\
%=&\alpha (n+1)(2n+1)\Pf(d_0,0,1,\ldots,2n+1,z)+\alpha_1(-z \Pf(d_0,0,1,\ldots,2n+1,z)\nonumber\\
%&+\Pf(d_0,0,1,\ldots,2n,2n+2,z))+\alpha_2(-\Pf(d_0,0,\ldots,2n-1,2n+1,2n+2,z)\nonumber\\
%&+\Pf(d_0,0,\ldots,2n,2n+3,z)-z^2 \Pf(d_0,0,\ldots,2n+1,z)),\nonumber
%\end{align}
%which is exactly the formula \eqref{7-2-14}. 
Therefore, we complete the proof.
\end{proof}

\section{Proof of Theorem \ref{th12}}\label{pth12}
\begin{proof}
We proceed the proof based on the parity.

(I). Let's first consider the even case. It follows from~\eqref{7-2-12} and~\eqref{7-2-13} that
\begin{align}
&\tau_{2n}^2\frac{d}{dt}P_{2n}(z;t)\nonumber\\
=&-\Pf(0,\ldots,2n,z)\Big(n(2n-1)\alpha\tau_{2n}+\alpha_1 \Pf(0,\ldots,2n-2,2n)\nonumber\\
&\qquad\quad+\alpha_2\big(\Pf(0,\ldots,2n-3,2n,2n-1)+\Pf(0,\ldots,2n-2,2n+1)\big)\Big)\nonumber\\
&+\tau_{2n}\Big(n(2n+1)\alpha \Pf(0,\ldots,2n,z)+\alpha_1(\Pf(0,\ldots,2n-1,2n+1,z)\nonumber\\
&\qquad\quad-z \Pf(0,\ldots,2n,z))+\alpha_2\big(-\Pf(0,\ldots,2n-2,2n,2n+1,z)\nonumber\\
&\qquad\quad+\Pf(0,\ldots,2n-1,2n+2,z)-z^2 \Pf(0,\ldots,2n,z)\big)\Big).\label{der_p2n}
%&=2n \alpha \tau_{2n}^2P_{2n}+\alpha_1 (-\Pf(0,\ldots,2n,z)\Pf(0,\ldots,2n-2,2n)\nonumber\\
%&+\tau_{2n}(\Pf(0,\ldots,2n-1,2n+1,z)-z \Pf(0,\ldots,2n,z)))\nonumber\\
%&+\alpha_2()
\end{align}
Regarding the terms with $\alpha_1$, we have
\begin{align}
&-\Pf(0,\ldots,2n,z)\Pf(0,\ldots,2n-2,2n)\nonumber\\
&+\tau_{2n}\big(\Pf(0,\ldots,2n-1,2n+1,z)-z \Pf(0,\ldots,2n,z)\big)\nonumber\\
=&-\tau_{2n}P_{2n}\frac{\partial}{\partial t_1}\tau_{2n}+\tau_{2n}\frac{\partial}{\partial t_1}(\tau_{2n}P_{2n})=\tau_{2n}^2\frac{\partial}{\partial t_1}P_{2n}
\label{pf13}
\end{align}
from ~\eqref{7-2-17} and~\eqref{7-2-21}. In the following, we deal with the terms with $\alpha_2$. We begin with collecting some useful formulae.
By employing the Pfaffian identities in \eqref{iden1} and \eqref{iden2}, it is easy to see that the following equalities hold
\begin{align}
&\Pf(d_0,d_1,0,\ldots,2n,z)\tau_{2n}\nonumber\\
=&\Pf(d_0,d_1,0,\ldots,2n-1)\Pf(0,\ldots,2n,z)\nonumber\\
&-\Pf(d_0,0,\ldots,2n)\Pf(d_1,0,\ldots,2n-1,z)\nonumber\\
&+\Pf(d_0,0,\ldots,2n-1,z)\Pf(d_1,0,\ldots,2n),\label{pf14}\\
&-\Pf(0,\ldots,2n-2,2n+1)\Pf(0,\ldots,2n,z)\nonumber\\
=&\Pf(2n-1,2n,z,0,\ldots,2n-2)\Pf(0,\ldots,2n-2,2n+1)\nonumber\\
&-\Pf(0,\ldots,2n-2,2n)\Pf(0,\ldots,2n-2,2n-1,2n+1,z)\nonumber\\
&-\Pf(0,\ldots,2n-2,z)\Pf(0,\ldots,2n-2,2n-1,2n,2n+1),\label{pf15}\\
&-\Pf(0,\ldots,2n-2,2n,2n+1,z)\Pf(0,\ldots,2n-1)\nonumber\\
=&\Pf(0,\ldots,2n-2,2n-1)\Pf(0,\ldots,2n-2,2n,2n+1,z)\nonumber\\
&-\Pf(0,\ldots,2n-2,2n)\Pf(0,\ldots,2n-2,2n-1,2n+1,z)\nonumber\\
&-\Pf(0,\ldots,2n-2,z)\Pf(0,\ldots,2n-2,2n-1,2n,2n+1).\label{pf16}
\end{align}
With the help of \eqref{pf15}, \eqref{pf16}, \eqref{7-2-9} and \eqref{7-2-19}, 
we obtain
\begin{align}
&-\Pf(0,\ldots,2n,z)(\Pf(0,\ldots,2n-3,2n,2n-1)+\Pf(0,\ldots,2n-2,2n+1))\nonumber\\
&+\tau_{2n}(-\Pf(0,\ldots,2n-2,2n,2n+1,z)+\Pf(0,\ldots,2n-1,2n+2,z)-z^2 \Pf(0,\ldots,2n,z))\nonumber\\
%=&\Pf(0,\ldots,2n,z)\frac{\partial^2}{\partial t_1^2}\tau_{2n}\nonumber\\
%&+\tau_{2n}(\Pf(0,\ldots,2n-1,2n+2,z)-z^2 \Pf(0,\ldots,2n,z))\nonumber\\
%&+2 \Pf(0,\ldots,2n-1)\Pf(0,\ldots,2n-2,2n,2n+1,z)\nonumber\\
%&+\Pf(0,\ldots,2n-2,2n+1)\Pf(0,\ldots,2n,z)\nonumber\\
%&-3\Pf(0,\ldots,2n-2,2n)\Pf(0,\ldots,2n-1,2n+1,z)\nonumber\\
%&-3 \Pf(0,\ldots,2n-2,z)\Pf(0,\ldots,2n+1)\nonumber\\
=&\Pf(0,\ldots,2n,z)\frac{\partial^2}{\partial t_1^2}\tau_{2n}\nonumber\\
&+\tau_{2n}(\Pf(0,\ldots,2n-1,2n+2,z)-z^2 \Pf(0,\ldots,2n,z))\nonumber\\
&+2 \Pf(0,\ldots,2n-1)\Pf(0,\ldots,2n-2,2n,2n+1,z)\nonumber\\
&+\Pf(0,\ldots,2n-2,2n+1)\Pf(0,\ldots,2n,z)\nonumber\\
&-3\Pf(0,\ldots,2n-2,2n)\Pf(0,\ldots,2n-1,2n+1,z)\nonumber\\
&-3 \Pf(0,\ldots,2n-2,z)\Pf(0,\ldots,2n+1).\label{pf18}
\end{align}
We can also get from \eqref{7-2-9} and the Pfaffian identity \eqref{pf14}
\begin{align*}
&-z^2 \Pf(0,\ldots,2n,z)\tau_{2n}\nonumber\\
=&z(-\Pf(d_0,d_1,0,\ldots,2n,z)-\Pf(0,\ldots,2n-1,2n+1,z))\tau_{2n}\nonumber\\
=&-z \Pf(0,\ldots,2n-1,2n+1,z)\tau_{2n}-z(\Pf(d_0,d_1,0,\ldots,2n-1)\Pf(0,\ldots,2n,z)\nonumber\\
&-\Pf(d_0,0,\ldots,2n)\Pf(d_1,0,\ldots,2n-1,z)
+\Pf(d_0,0,\ldots,2n-1,z)\Pf(d_1,0,\ldots,2n)),
\end{align*}
by substituting which into \eqref{pf18}, we have
\begin{align*}
&-\Pf(0,\ldots,2n,z)(\Pf(0,\ldots,2n-3,2n,2n-1)+\Pf(0,\ldots,2n-2,2n+1))\nonumber\\
&+\tau_{2n}(-\Pf(0,\ldots,2n-2,2n,2n+1,z)+\Pf(0,\ldots,2n-1,2n+2,z)-z^2 \Pf(0,\ldots,2n,z))\nonumber\\
=&\Pf(0,\ldots,2n,z)\frac{\partial^2}{\partial t_1^2}\tau_{2n}\nonumber\\
&+\tau_{2n}(\Pf(0,\ldots,2n-1,2n+2,z)-z \Pf(0,\ldots,2n-1,2n+1,z))\nonumber\\
&+2 \Pf(0,\ldots,2n-1)\Pf(0,\ldots,2n-2,2n,2n+1,z)\nonumber\\
&+\Pf(0,\ldots,2n-2,2n+1)\Pf(0,\ldots,2n,z)\nonumber\\
&-3\Pf(0,\ldots,2n-2,2n)\Pf(0,\ldots,2n-1,2n+1,z)\nonumber\\
&-3 \Pf(0,\ldots,2n-2,z)\Pf(0,\ldots,2n+1)\nonumber\\
&-z(\Pf(d_0,d_1,0,\ldots,2n-1)\Pf(0,\ldots,2n,z)\nonumber\\
&+\Pf(d_0,0,\ldots,2n)\Pf(d_1,0,\ldots,2n-1,z)
-\Pf(d_0,0,\ldots,2n-1,z)\Pf(d_1,0,\ldots,2n))\nonumber\\
=&\Pf(0,\ldots,2n,z)\frac{\partial^2}{\partial t_1^2}\tau_{2n}\nonumber\\
&+(\Pf(0,\ldots,2n-1,2n+2,z)-z \Pf(0,\ldots,2n-1,2n+1,z))\Pf(0,\ldots,2n-1)\nonumber\\
&+2 \Pf(0,\ldots,2n-1)\Pf(0,\ldots,2n-2,2n,2n+1,z)\nonumber\\
&+\Pf(0,\ldots,2n-2,2n+1)\Pf(0,\ldots,2n,z)\nonumber\\
&-2 \Pf(0,\ldots,2n-2,2n)\Pf(0,\ldots,2n-1,2n+1,z)\nonumber\\
&-3 \Pf(0,\ldots,2n-2,z)\Pf(0,\ldots,2n+1)+z \Pf(d_0,0,\ldots,2n)\Pf(d_1,0,\ldots,2n-1,z)\nonumber\\
&-\Pf(d_0,d_1,0,\ldots,2n-1)\Pf(d_0,d_1,0,\ldots,2n,z)\nonumber\\
&-\Pf(d_1,0,\ldots,2n)(\Pf(d_0,0,\ldots,2n-2,2n,z)-\Pf(d_1,0,\ldots,2n-1,z)),
\end{align*}
where \eqref{7-2-9} and \eqref{7-2-11} are used. If we make use of the Pfaffian identity
\begin{align*}
&\Pf(0,\ldots,2n-2,2n)\Pf(0,\ldots,2n-1,2n+1,z)\nonumber\\
=&\Pf(0,\ldots,2n-1)\Pf(0,\ldots,2n-2,2n,2n+1,z)\\
&+\Pf(0,\ldots,2n-2,2n+1)\Pf(0,\ldots,2n,z)\nonumber\\
&-\Pf(0,\ldots,2n-2,z)\Pf(0,\ldots,2n+1),
\end{align*}
then we can arrive at
\begin{align}
&-\Pf(0,\ldots,2n,z)(\Pf(0,\ldots,2n-3,2n,2n-1)+\Pf(0,\ldots,2n-2,2n+1))\nonumber\\
&+\tau_{2n}(-\Pf(0,\ldots,2n-2,2n,2n+1,z)+\Pf(0,\ldots,2n-1,2n+2,z)-z^2 \Pf(0,\ldots,2n,z))\nonumber\\
=&\Pf(0,\ldots,2n,z)\frac{\partial^2}{\partial t_1^2}\tau_{2n}\nonumber\\
&+(\Pf(0,\ldots,2n-1,2n+2,z)-z \Pf(0,\ldots,2n-1,2n+1,z))\Pf(0,\ldots,2n-1)\nonumber\\
&-\Pf(0,\ldots,2n-2,2n+1)\Pf(0,\ldots,2n,z)-\Pf(0,\ldots,2n-2,z)\Pf(0,\ldots,2n+1)\nonumber\\
&+z \Pf(d_0,0,\ldots,2n)\Pf(d_1,0,\ldots,2n-1,z)\nonumber\\
&-\Pf(d_0,d_1,0,\ldots,2n-1)\Pf(d_0,d_1,0,\ldots,2n,z)\nonumber\\
&-\Pf(d_1,0,\ldots,2n)(\Pf(d_0,0,\ldots,2n-2,2n,z)-\Pf(d_1,0,\ldots,2n-1,z))\nonumber\\
=&\tau_{2n}P_{2n}\frac{\partial^2}{\partial t_1^2}\tau_{2n}+\Pf(0,\ldots,2n-1,2n+2,z)\Pf(0,\ldots,2n-1)\nonumber\\
%&+(\Pf(0,\ldots,2n-1,2n+2,z)-z \Pf(0,\ldots,2n-1,2n+1,z))\Pf(0,\ldots,2n-1)\nonumber\\
%&+\Pf(0,\ldots,2n-1)\Pf(0,\ldots,2n-2,2n,2n+1,z)\nonumber\\
%&-\Pf(0,\ldots,2n-2,2n)\Pf(0,\ldots,2n-1,2n+1,z)-\Pf(0,\ldots,2n-2,z)\Pf(0,\ldots,2n+1)\nonumber\\
%&-\Pf(0,\ldots,2n-2,z)\Pf(0,\ldots,2n+1)-z \Pf(d_0,0,\ldots,2n)\Pf(d_1,0,\ldots,2n-1,z)\nonumber\\
%&-\Pf(d_0,d_1,0,\ldots,2n-1)\Pf(d_0,d_1,0,\ldots,2n,z)\nonumber\\
%&+\Pf(d_1,0,\ldots,2n)(\Pf(d_0,0,\ldots,2n-2,2n,z)-\Pf(d_1,0,\ldots,2n-1,z))
&-z\tau_{2n}(\frac{\partial}{\partial t_1}(\tau_{2n}P_{2n})+z\tau_{2n}P_{2n})
-\Pf(0,\ldots,2n-2,2n+1)\Pf(0,\ldots,2n,z)\nonumber\\
&-\tau_{2n-2}P_{2n-2}\tau_{2n+2}+z\tau_{2n+1}\frac{\partial}{\partial t_1}(\tau_{2n-1}P_{2n-1})\nonumber\\
&+\frac{\partial}{\partial t_1}\tau_{2n}\frac{\partial}{\partial t_1}(\tau_{2n}P_{2n})-z\tau_{2n-1}P_{2n-1}\frac{\partial}{\partial t_1}\tau_{2n+1}.\label{pf22}
\end{align}

Recall that \eqref{7-2-23} gives
\begin{align}
&\left(\Pf(0,\ldots,2n-1,2n+2,z)+\Pf(0,\ldots,2n-2,2n,2n+1,z)\right)\tau_{2n}\nonumber\\
=&\bigg(\frac{\partial^2}{\partial t_1^2}(\tau_{2n}P_{2n})+2z \frac{\partial}{\partial t_1}(\tau_{2n}P_{2n})+z^2\tau_{2n}P_{2n}\bigg)\tau_{2n}.\label{pf23}
\end{align}
If we plug the Pfaffian identities
\begin{align}
&-\Pf(0,\ldots,2n-2,2n+1)\Pf(0,\ldots,2n,z)\nonumber\\
=&\Pf(0,\ldots,2n-1)\Pf(0,\ldots,2n-2,2n,2n+1,z)\nonumber\\
&-\Pf(0,\ldots,2n-2,2n)\Pf(0,\ldots,2n-1,2n+1,z)\nonumber\\
&-\Pf(0,\ldots,2n-2,z)\Pf(0,\ldots,2n+1)\nonumber\\
=&\tau_{2n}\Pf(0,\ldots,2n-2,2n,2n+1,z)-\frac{\partial}{\partial t_1}\tau_{2n}(\frac{\partial}{\partial t_1}(\tau_{2n}P_{2n})\nonumber\\
&+z\tau_{2n}P_{2n})-\tau_{2n-2}P_{2n-2}\tau_{2n+2}\nonumber
\end{align}
and \eqref{pf23} to the right hand side of \eqref{pf22}, we get
% \begin{align}
% &-\Pf(0,\ldots,2n,z)(\Pf(0,\ldots,2n-3,2n,2n-1)+\Pf(0,\ldots,2n-2,2n+1))\nonumber\\
% &+\tau_{2n}(-\Pf(0,\ldots,2n-2,2n,2n+1,z)+\Pf(0,\ldots,2n-1,2n+2,z)-z^2 \Pf(0,\ldots,2n,z))\nonumber\\
% =&\tau_{2n}P_{2n}\frac{\partial^2}{\partial t_1^2}\tau_{2n}+\Pf(0,\ldots,2n-1,2n+2,z)\Pf(0,\ldots,2n-1)\nonumber\\
% &-z\tau_{2n}(\frac{\partial}{\partial t_1}(\tau_{2n}P_{2n})+z\tau_{2n}P_{2n})
% -\Pf(0,\ldots,2n-2,2n+1)\Pf(0,\ldots,2n,z)\nonumber\\
% &-\tau_{2n-2}P_{2n-2}\tau_{2n+2}-z\tau_{2n+1}\frac{\partial}{\partial t_1}(\tau_{2n-1}P_{2n-1})-\frac{\partial}{\partial t_1}\tau_{2n}\frac{\partial}{\partial t_1}(\tau_{2n}P_{2n})+z\tau_{2n-1}P_{2n-1}\frac{\partial}{\partial t_1}\tau_{2n+1}.\label{pf23}
% \end{align}
\begin{align}
&-\Pf(0,\ldots,2n,z)(\Pf(0,\ldots,2n-3,2n,2n-1)+\Pf(0,\ldots,2n-2,2n+1))\nonumber\\
&+\tau_{2n}(-\Pf(0,\ldots,2n-2,2n,2n+1,z)+\Pf(0,\ldots,2n-1,2n+2,z)-z^2 \Pf(0,\ldots,2n,z))\nonumber\\
=&\tau_{2n}P_{2n}\frac{\partial^2}{\partial t_1^2}\tau_{2n}+z\tau_{2n}\frac{\partial}{\partial t_1}(\tau_{2n}P_{2n})-2\tau_{2n-2}\tau_{2n+2}P_{2n-2}\nonumber\\
&+z\tau_{2n+1}\frac{\partial}{\partial t_1}(\tau_{2n-1}P_{2n-1})-z\tau_{2n-1}P_{2n-1}\frac{\partial}{\partial t_1}\tau_{2n+1}-2\frac{\partial}{\partial t_1}\tau_{2n}\frac{\partial}{\partial t_1}(\tau_{2n}P_{2n})\nonumber\\
&-z\tau_{2n}P_{2n}\frac{\partial}{\partial t_1}\tau_{2n}+\tau_{2n}\frac{\partial^2}{\partial t_1^2}(\tau_{2n}P_{2n})\nonumber\\
=&2\tau_{2n}P_{2n}\frac{\partial^2}{\partial t_1^2}\tau_{2n}+z\tau^2_{2n}\frac{\partial}{\partial t_1}P_{2n}-2\tau_{2n-2}\tau_{2n+2}P_{2n-2}+z\tau_{2n-1}\tau_{2n+1}\frac{\partial}{\partial t_1}P_{2n-1}\nonumber\\
&+z\tau_{2n+1}P_{2n-1}\frac{\partial}{\partial t_1}\tau_{2n-1}-z\tau_{2n-1}P_{2n-1}\frac{\partial}{\partial t_1}\tau_{2n+1}-2P_{2n}(\frac{\partial}{\partial t_1}\tau_{2n})^2+\tau^2_{2n}\frac{\partial^2}{\partial t_1^2}P_{2n}.\label{pf24}
\end{align}
According to the formula in \cite[eq.(3.38)]{chang2018partial}, it can be inferred that
\begin{align*}
\tau^2_{2n}\frac{\partial}{\partial t_1}P_{2n}=-\left(\tau_{2n-1}\frac{\partial}{\partial t_1}\tau_{2n+1}-\tau_{2n+1}\frac{\partial}{\partial t_1}\tau_{2n-1}\right)P_{2n-1}+\tau_{2n-1}\tau_{2n+1}\frac{\partial}{\partial t_1}P_{2n-1},
\end{align*}
by plugging which into \eqref{pf24}, we can arrive at
\begin{align}
&-\Pf(0,\ldots,2n,z)(\Pf(0,\ldots,2n-3,2n,2n-1)+\Pf(0,\ldots,2n-2,2n+1))\nonumber\\
&+\tau_{2n}(-\Pf(0,\ldots,2n-2,2n,2n+1,z)+\Pf(0,\ldots,2n-1,2n+2,z)-z^2 \Pf(0,\ldots,2n,z))\nonumber\\
=&2\tau_{2n}P_{2n}\frac{\partial^2}{\partial t_1^2}\tau_{2n}-2\tau_{2n-2}\tau_{2n+2}P_{2n-2}+2z\tau_{2n-1}\tau_{2n+1}\frac{\partial}{\partial t_1}P_{2n-1}+2z\tau_{2n+1}P_{2n-1}\frac{\partial}{\partial t_1}\tau_{2n-1}\nonumber\\
&-2z\tau_{2n-1}P_{2n-1}\frac{\partial}{\partial t_1}\tau_{2n+1}-2P_{2n}(\frac{\partial}{\partial t_1}\tau_{2n})^2+\tau^2_{2n}\frac{\partial^2}{\partial t_1^2}P_{2n}.\label{pf26}
\end{align}
Finally, based on \eqref{der_p2n}, \eqref{pf13} and \eqref{pf26}, we deduce that
\begin{align}
&\tau_{2n}^2\frac{d}{dt}P_{2n}(z;t)\nonumber\\
=&2n\alpha\tau_{2n}^2P_{2n}+\alpha_1\tau_{2n}^2\frac{\partial}{\partial t_1}P_{2n}+\alpha_2\bigg(2\tau_{2n}P_{2n}\frac{\partial^2}{\partial t_1^2}\tau_{2n}-2\tau_{2n-2}\tau_{2n+2}P_{2n-2}\nonumber\\
&\qquad\qquad+2z\tau_{2n-1}\tau_{2n+1}\frac{\partial}{\partial t_1}P_{2n-1}+2z\tau_{2n+1}P_{2n-1}\frac{\partial}{\partial t_1}\tau_{2n-1}\nonumber\\
&\qquad\qquad-2z\tau_{2n-1}P_{2n-1}\frac{\partial}{\partial t_1}\tau_{2n+1}-2P_{2n}\left(\frac{\partial}{\partial t_1}\tau_{2n}\right)^2+\tau^2_{2n}\frac{\partial^2}{\partial t_1^2}P_{2n}\bigg).\label{pf27}
\end{align}

(II). Now we focus on some formulae on $\tau_{2n+1}^2\frac{d}{dt}P_{2n+1}(z;t)$.
We first collect some useful formulae.
Combining equations \eqref{7-2-12} and \eqref{7-2-14}, we obtain
\begin{align}
&\tau_{2n+1}^2\frac{d}{dt}P_{2n+1}(z;t)\nonumber\\
=&(2n+1)\alpha\tau_{2n+1}^2P_{2n+1}+\alpha_1\Big(-\Pf(d_0,0,\ldots,2n-1,2n+1)\Pf(d_0,0,\ldots,2n+1,z)\nonumber\\
&\qquad\qquad+\big(-z \Pf(d_0,0,\ldots,2n+1,z)+\Pf(d_0,0,\ldots,2n,2n+2,z)\big)\tau_{2n+1}\Big)\nonumber\\
&+\alpha_2\Big(\big(\Pf(d_0,0,\ldots,2n-2,2n,2n+1)-\Pf(d_0,0,\ldots,2n-1,2n+2)\big)\Pf(d_0,0,\ldots,2n+1,z)\nonumber\\
&\qquad+\big(-\Pf(d_0,0,\ldots,2n-1,2n+1,2n+2,z)+\Pf(d_0,0,\ldots,2n,2n+3,z)\nonumber\\
&\qquad\qquad-z^2\Pf(d_0,0,\ldots,2n+1,z)\big)\Pf(d_0,0,\ldots,2n)\Big). \label{der_p2n+1}
\end{align}
By employing equations \eqref{7-2-18} and \eqref{7-2-22}, we have
\begin{align}
&-\Pf(d_0,0,\ldots,2n-1,2n+1)\Pf(d_0,0,\ldots,2n+1,z)\nonumber\\
&+(-z \Pf(d_0,0,\ldots,2n+1,z)+\Pf(d_0,0,\ldots,2n,2n+2,z))\tau_{2n+1}\nonumber\\
=&-\tau_{2n+1}P_{2n+1}\frac{\partial}{\partial t_1}\tau_{2n+1}+\tau_{2n+1}\frac{\partial}{\partial t_1}(\tau_{2n+1}P_{2n+1})\nonumber\\
=&\tau_{2n+1}^2\frac{\partial}{\partial t_1}P_{2n+1}.\label{pf29}
\end{align}
Furthermore, it easily follows from the Pfaffian identities \eqref{iden1} and \eqref{iden2} that
\begin{align}
&-\Pf(d_0,0,\ldots,2n-1,2n+2)\Pf(d_0,0,\ldots,2n+1,z)\nonumber\\
=&\Pf(d_0,0,\ldots,2n)\Pf(d_0,0,\ldots,2n-1,2n+1,2n+2,z)\nonumber\\
&-\Pf(d_0,0,\ldots,2n-1,2n+1)\Pf(d_0,0,\ldots,2n,2n+2,z)\nonumber\\
&-\Pf(d_0,0,\ldots,2n-1,z)\Pf(d_0,0,\ldots,2n+2),\label{pf30}
\end{align}
\begin{align}
&-\Pf(d_0,0,\ldots,2n-1,2n+1,2n+2,z)\Pf(d_0,0,\ldots,2n)\nonumber\\
=&\Pf(d_0,0,\ldots,2n+1,z)\Pf(d_0,0,\ldots,2n-1,2n+2)\nonumber\\
&-\Pf(d_0,0,\ldots,2n+1,2n+1)\Pf(d_0,0,\ldots,2n,2n+2,z)\nonumber\\
&-\Pf(d_0,0,\ldots,2n-1,z)\Pf(d_0,0,\ldots,2n+2),\label{pf31}
\end{align}
and
\begin{align}
&\Pf(d_1,0,\ldots,2n+1,z)\Pf(d_0,0,\ldots,2n)\nonumber\\
=&\Pf(d_1,0,\ldots,2n)\Pf(d_0,0,\ldots,2n+1,z)\nonumber\\
&-\Pf(0,\ldots,2n+1)\Pf(d_0,d_1,0,\ldots,2n,z)\nonumber\\
&+\Pf(0,\ldots,2n,z)\Pf(d_0,d_1,0,\ldots,2n+1).\label{pf32}
\end{align}

% By substituting equations \eqref{pf31} and \eqref{pf32} into the given equation, we obtain
By virtue of \eqref{pf30} and \eqref{pf31}, we can derive
\begin{align}
&(\Pf(d_0,0,\ldots,2n-2,2n,2n+1)-\Pf(d_0,0,\ldots,2n-1,2n+2))\Pf(d_0,0,\ldots,2n+1,z)\nonumber\\
&+(-\Pf(d_0,0,\ldots,2n-1,2n+1,2n+2,z)+\Pf(d_0,0,\ldots,2n,2n+3,z)\nonumber\\
&-z^2\Pf(d_0,0,\ldots,2n+1,z))\Pf(d_0,0,\ldots,2n)\nonumber\\
=&\Pf(d_0,0,\ldots,2n+1,z)\frac{\partial^2}{\partial t_1^2}\tau_{2n+1}+2\Pf(d_0,0,\ldots,2n)\Pf(d_0,0,\ldots,2n-1,2n+1,2n+2,z)\nonumber\\
&+\Pf(d_0,0,\ldots,2n+1,z)\Pf(d_0,0,\ldots,2n-1,2n+2)\nonumber\\
&-3\Pf(d_0,0,\ldots,2n-1,2n+1)\Pf(d_0,0,\ldots,2n,2n+2,z)\nonumber\\
&-3\Pf(d_0,0,\ldots,2n-1,z)\Pf(d_0,0,\ldots,2n+2)\nonumber\\
&+\Pf(d_0,0,\ldots,2n,2n+3,z)\Pf(d_0,0,\ldots,2n)\nonumber\\
&+z(\Pf(d_1,0,\ldots,2n+1,z)-\Pf(d_0,0,\ldots,2n,2n+2,z))\Pf(d_0,0,\ldots,2n).\label{pf33}
\end{align}
where \eqref{7-2-11} is also employed. Substituing \eqref{pf32} into the right-hand side of \eqref{pf33}, we obtain
\begin{align*}
&(\Pf(d_0,0,\ldots,2n-2,2n,2n+1)-\Pf(d_0,0,\ldots,2n-1,2n+2))\Pf(d_0,0,\ldots,2n+1,z)\nonumber\\
&+(-\Pf(d_0,0,\ldots,2n-1,2n+1,2n+2,z)+\Pf(d_0,0,\ldots,2n,2n+3,z)\nonumber\\
&-z^2\Pf(d_0,0,\ldots,2n+1,z))\Pf(d_0,0,\ldots,2n)\nonumber\\
=&\Pf(d_0,0,\ldots,2n+1,z)\frac{\partial^2}{\partial t_1^2}\tau_{2n+1}\nonumber\\
&+2\Pf(d_0,0,\ldots,2n)\Pf(d_0,0,\ldots,2n-1,2n+1,2n+2,z)\nonumber\\
&+\Pf(d_0,0,\ldots,2n+1,z)\Pf(d_0,0,\ldots,2n-1,2n+2)\nonumber\\
&-2\Pf(d_0,0,\ldots,2n-1,2n+1)\Pf(d_0,0,\ldots,2n,2n+2,z)\nonumber\\
&-3\Pf(d_0,0,\ldots,2n-1,z)\Pf(d_0,0,\ldots,2n+2)\nonumber\\
&+\Pf(d_0,0,\ldots,2n,2n+3,z)\Pf(d_0,0,\ldots,2n)\nonumber\\
&-z \Pf(d_0,0,\ldots,2n,2n+2,z)\Pf(d_0,0,\ldots,2n)\nonumber\\
&-z \Pf(0,\ldots,2n+1)\Pf(d_0,d_1,0,\ldots,2n,z)\nonumber\\
&-\Pf(d_1,0,\ldots,2n)\Pf(d_1,0,\ldots,2n+1,z)\nonumber\\
&+\Pf(d_0,d_1,0,\ldots,2n+1)(\Pf(0,\ldots,2n-1,2n+1,z)+\Pf(d_0,d_1,0,\ldots,2n,z)),
\end{align*}
from which we have
\begin{align*}
&(\Pf(d_0,0,\ldots,2n-2,2n,2n+1)-\Pf(d_0,0,\ldots,2n-1,2n+2))\Pf(d_0,0,\ldots,2n+1,z)\nonumber\\
&+(-\Pf(d_0,0,\ldots,2n-1,2n+1,2n+2,z)+\Pf(d_0,0,\ldots,2n,2n+3,z)\nonumber\\
&-z^2\Pf(d_0,0,\ldots,2n+1,z))\Pf(d_0,0,\ldots,2n)\nonumber\\
=&\Pf(d_0,0,\ldots,2n+1,z)\frac{\partial^2}{\partial t_1^2}\tau_{2n+1}\nonumber\\
&-\Pf(d_0,0,\ldots,2n+1,z)\Pf(d_0,0,\ldots,2n-1,2n+2)\nonumber\\
&-\Pf(d_0,0,\ldots,2n-1,z)\Pf(d_0,0,\ldots,2n+2)\nonumber\\
&+\Pf(d_0,0,\ldots,2n,2n+3,z)\Pf(d_0,0,\ldots,2n)\nonumber\\
&-z \Pf(d_0,0,\ldots,2n,2n+2,z)\Pf(d_0,0,\ldots,2n)\nonumber\\
&-z \Pf(0,\ldots,2n+1)\Pf(d_0,d_1,0,\ldots,2n,z)\nonumber\\
&-\Pf(d_1,0,\ldots,2n)\Pf(d_1,0,\ldots,2n+1,z)\nonumber\\
&+\Pf(d_0,d_1,0,\ldots,2n+1)(\Pf(0,\ldots,2n-1,2n+1,z)\nonumber\\
&+\Pf(d_0,d_1,0,\ldots,2n,z))
\end{align*}
with the help of the Pfaffian identity
\begin{align*}
&\Pf(d_0,0,\ldots,2n-1,2n+1)\Pf(d_0,0,\ldots,2n,2n+2,z)\nonumber\\
=&\Pf(d_0,0,\ldots,2n)\Pf(d_0,0,\ldots,2n-1,2n+1,2n+2,z)\nonumber\\
&+\Pf(d_0,0,\ldots,2n+1,z)\Pf(d_0,0,\ldots,2n-1,2n+2)\nonumber\\
&-\Pf(d_0,0,\ldots,2n-1,z)\Pf(d_0,0,\ldots,2n+2).
\end{align*}
Moreover, by observing that
\begin{align*}
&-\Pf(d_0,0,\ldots,2n+1,z)\Pf(d_0,0,\ldots,2n-1,2n+2)\nonumber\\
=&\Pf(d_0,0,\ldots,2n)\Pf(d_0,0,\ldots,2n-1,2n+1,2n+2,z)\nonumber\\
&-\Pf(d_0,0,\ldots,2n-1,2n+1)\Pf(d_0,0,\ldots,2n,2n+2,z)\nonumber\\
&-\Pf(d_0,0,\ldots,2n-1,z)\Pf(d_0,0,\ldots,2n+2)\nonumber\\
=&-\tau_{2n-1}P_{2n-1}\tau_{2n+3}-\left(\frac{\partial}{\partial t_1}(\tau_{2n+1}P_{2n+1})+z\tau_{2n+1}P_{2n+1}\right)\frac{\partial}{\partial t_1}\tau_{2n+1}\nonumber\\
&+\Pf(d_0,0,\ldots,2n-1,2n+1,2n+2,z)\tau_{2n+1},
\end{align*}
and
\begin{align*}
&(\Pf(d_0,0,\ldots,2n,2n+3,z)+\Pf(d_0,0,\ldots,2n-1,2n+1,2n+2,z))\tau_{2n+1}\nonumber\\
=&\left(\frac{\partial^2}{\partial t_1^2}(\tau_{2n+1}P_{2n+1})+2z\frac{\partial}{\partial t_1}(\tau_{2n+1}P_{2n+1})+z^2\tau_{2n+1}P_{2n+1}\right)\tau_{2n+1},
\end{align*}
we can arrive at
\begin{align*}
&(\Pf(d_0,0,\ldots,2n-2,2n,2n+1)-\Pf(d_0,0,\ldots,2n-1,2n+2))\Pf(d_0,0,\ldots,2n+1,z)\nonumber\\
&+(-\Pf(d_0,0,\ldots,2n-1,2n+1,2n+2,z)+\Pf(d_0,0,\ldots,2n,2n+3,z)\nonumber\\
&-z^2\Pf(d_0,0,\ldots,2n+1,z))\Pf(d_0,0,\ldots,2n)\nonumber\\
=&\tau_{2n+1}P_{2n+1}\frac{\partial^2}{\partial t_1^2}\tau_{2n+1}-2\tau_{2n-1}\tau_{2n+3}P_{2n-1}+z\tau_{2n+2}\frac{\partial}{\partial t_1}(\tau_{2n}P_{2n})\nonumber\\
&-2\frac{\partial}{\partial t_1}\tau_{2n+1}\frac{\partial}{\partial t_1}(\tau_{2n+1}P_{2n+1})-z\tau_{2n}P_{2n}\frac{\partial}{\partial t_1}\tau_{2n+2}-z\tau_{2n+1}P_{2n+1}\frac{\partial}{\partial t_1}\tau_{2n+1}\nonumber\\
&+\tau_{2n+1}\left(\frac{\partial^2}{\partial t_1^2}(\tau_{2n+1}P_{2n+1})+z\frac{\partial}{\partial t_1}(\tau_{2n+1}P_{2n+1})\right)\nonumber\\
=&2\tau_{2n+1}P_{2n+1}\frac{\partial^2}{\partial t_1^2}\tau_{2n+1}-2\tau_{2n-1}\tau_{2n+3}P_{2n-1}+z\tau_{2n+2}\frac{\partial}{\partial t_1}(\tau_{2n}P_{2n})\nonumber\\
&-2P_{2n+1}(\frac{\partial}{\partial t_1}\tau_{2n+1})^2-z\tau_{2n}P_{2n}\frac{\partial}{\partial t_1}\tau_{2n+2}+\tau_{2n+1}^2\frac{\partial^2}{\partial t_1^2}P_{2n+1}+z\tau_{2n+1}^2\frac{\partial}{\partial t_1}P_{2n+1}.
\end{align*}
Consequently, we have
\begin{align}
&(\Pf(d_0,0,\ldots,2n-2,2n,2n+1)-\Pf(d_0,0,\ldots,2n-1,2n+2))\Pf(d_0,0,\ldots,2n+1,z)\nonumber\\
&+(-\Pf(d_0,0,\ldots,2n-1,2n+1,2n+2,z)+\Pf(d_0,0,\ldots,2n,2n+3,z)\nonumber\\
&-z^2\Pf(d_0,0,\ldots,2n+1,z))\Pf(d_0,0,\ldots,2n)\nonumber\\
=&2\tau_{2n+1}P_{2n+1}\frac{\partial^2}{\partial t_1^2}\tau_{2n+1}-2\tau_{2n-1}\tau_{2n+3}P_{2n-1}+2z\tau_{2n+2}\frac{\partial}{\partial t_1}(\tau_{2n}P_{2n})\nonumber\\
&-2P_{2n+1}(\frac{\partial}{\partial t_1}\tau_{2n+1})^2-2z\tau_{2n}P_{2n}\frac{\partial}{\partial t_1}\tau_{2n+2}+\tau_{2n+1}^2\frac{\partial^2}{\partial t_1^2}P_{2n+1},\label{pf41}
\end{align}
by applying the similar relationship  in \cite[eq.(3.38)]{chang2018partial}
\begin{align*}
\tau^2_{2n+1}\frac{\partial}{\partial t_1}P_{2n+1}=-\left(\tau_{2n}\frac{\partial}{\partial t_1}\tau_{2n+2}-\tau_{2n+2}\frac{\partial}{\partial t_1}\tau_{2n}\right)P_{2n}+\tau_{2n}\tau_{2n+2}\frac{\partial}{\partial t_1}P_{2n}.
\end{align*}
Finally, we combine \eqref{der_p2n+1}, \eqref{pf29} and \eqref{pf41} to get
\begin{align}
&\tau_{2n+1}^2\frac{d}{dt}P_{2n+1}(z;t)\nonumber\\
=&(2n+1)\alpha\tau_{2n+1}^2P_{2n+1}+\alpha_1\tau_{2n+1}^2\frac{\partial}{\partial t_1}P_{2n+1}+\alpha_2\bigg(2\tau_{2n+1}P_{2n+1}\frac{\partial^2}{\partial t_1^2}\tau_{2n+1}\nonumber\\
&-2\tau_{2n-1}\tau_{2n+3}P_{2n-1}+2z\tau_{2n+2}\frac{\partial}{\partial t_1}(\tau_{2n}P_{2n})-2P_{2n+1}(\frac{\partial}{\partial t_1}\tau_{2n+1})^2\nonumber\\
&-2z\tau_{2n}P_{2n}\frac{\partial}{\partial t_1}\tau_{2n+2}+\tau_{2n+1}^2\frac{\partial^2}{\partial t_1^2}P_{2n+1}\bigg).\label{pf42}
\end{align}

(III). The odd case \eqref{pf42} and  the even case \eqref{pf27} can be written together as
\begin{align*}
&\tau_{n}^2\frac{d}{dt}P_{n}(z;t)\nonumber\\
=&n\alpha\tau_{n}^2P_{n}+\alpha_1\tau_{n}^2\frac{\partial}{\partial t_1}P_{n}+\alpha_2\left(2\tau_{n}P_{n}\frac{\partial^2}{\partial t_1^2}\tau_{n}-2\tau_{n-2}\tau_{n+2}P_{n-2}\right.\nonumber\\
&\qquad \left.+2z\tau_{n+1}\frac{\partial}{\partial t_1}(\tau_{n-1}P_{n-1})-2P_{n}(\frac{\partial}{\partial t_1}\tau_{n})^2-2z\tau_{n-1}P_{n-1}\frac{\partial}{\partial t_1}\tau_{n+1}+\tau_{n}^2\frac{\partial^2}{\partial t_1^2}P_{n}\right).
\end{align*}
Consequently, we have
\begin{align}
&\frac{d}{dt}P_{n}(z;t)\nonumber\\
=&n\alpha P_{n}+\alpha_1\frac{\partial}{\partial t_1}P_{n}+\alpha_2\left(2P_{n}\frac{\frac{\partial^2}{\partial t_1^2}\tau_{n}}{\tau_{n}}-2\frac{\tau_{n-2}\tau_{n+2}}{\tau_{n}^2}P_{n-2}+2z\frac{\tau_{n+1}\tau_{n-1}}{\tau_{n}^2}\frac{\partial}{\partial t_1}P_{n-1}\right.\nonumber\\
&\left.+2z\frac{\tau_{n+1}\tau_{n-1}}{\tau_{n}^2}P_{n-1}\frac{\frac{\partial}{\partial t_1}\tau_{n-1}}{\tau_{n-1}}-2P_{n}\left(\frac{\frac{\partial}{\partial t_1}\tau_{n}}{\tau_{n}}\right)^2-2z\frac{\tau_{n+1}\tau_{n-1}}{\tau_{n}^2}P_{n-1}\frac{\frac{\partial}{\partial t_1}\tau_{n+1}}{\tau_{n+1}}
+\frac{\partial^2}{\partial t_1^2}P_{n}\right)\nonumber\\
=&n\alpha P_{n}+\alpha_1\frac{\partial}{\partial t_1}P_{n}+\alpha_2\left(2\frac{\partial}{\partial t_1}b_{n}P_{n}-2u_{n-1}u_n^2u_{n+1}P_{n-2}+2zu_n\frac{\partial}{\partial t_1}P_{n-1}\right.\nonumber\\
&\left.+2zu_nb_{n-1}P_{n-1}-2zu_nb_{n+1}P_{n-1}+\frac{\partial^2}{\partial t_1^2}P_{n}
\right),\label{pf44}
\end{align}
where 
$$u_n=\frac{\tau_{n+1}\tau_{n-1}}{\tau_{n}^2},\quad b_n=\frac{\partial}{\partial t_1}\log \tau_{n},$$ and 
$$\frac{\partial}{\partial t_1}b_n=u_n(b_{n+1}-b_{n-1}).$$
Furthermore, observe that $P_n$ satisfies the time evolution (following \cite[eq.(3.42)]{chang2018partial})
\begin{align}
\frac{\partial}{\partial t_1}P_{n}=-zP_{n}+P_{n+1}+(b_{n+1}-b_n)P_{n}-u_nu_{n+1}P_{n-1},\label{pf45}
\end{align}
from which we can get
\begin{align}
\frac{\partial^2}{\partial t_1^2}P_{n}
=&-z\frac{\partial}{\partial t_1}P_{n}+\frac{\partial}{\partial t_1}P_{n+1}+(\frac{\partial}{\partial t_1}b_{n+1}-\frac{\partial}{\partial t_1}b_n)P_{n}\nonumber\\
&+(b_{n+1}-b_n)\frac{\partial}{\partial t_1}P_{n}-\frac{\partial}{\partial t_1}(u_nu_{n+1}P_{n-1})\nonumber\\
=&P_{n+2}+(-2z-b_n+b_{n+2})P_{n+1}+((-z+b_{n+1}-b_n)^2-u_{n+1}u_{n+2}\nonumber\\
&-u_{n+1}(b_n-b_{n+2})+u_n(b_{n-1}-b_{n+1})-u_nu_{n+1})P_{n}\nonumber\\
&+u_nu_{n+1}(2z+b_n-b_{n+2})P_{n-1}+u_{n-1}u_n^2u_{n+1}P_{n-2}.\label{pf46}
\end{align}

The formula \eqref{7-2-25} is eventually confirmed by substituting \eqref{pf45} and \eqref{pf46} into \eqref{pf44} and therefore we complete the proof.
\end{proof}

\bibliographystyle{abbrv}
%\bibliography{/Users/cxk/Downloads/cxk/chang1608/chang_document/research_1608/papers/me/paper_preparing/1111_bibtex/ref,/Users/cxk/Downloads/cxk/chang1608/chang_document/research_1608/papers/me/paper_preparing/1111_bibtex/peakon702,/Users/cxk/Downloads/cxk/chang1608/chang_document/research_1608/papers/me/paper_preparing/1111_bibtex/IS_new,/Users/cxk/Downloads/cxk/chang1608/chang_document/research_1608/papers/me/paper_preparing/1111_bibtex/OPandIS702,/Users/cxk/Downloads/cxk/chang1608/chang_document/research_1608/papers/me/paper_preparing/1111_bibtex/hankel702,/Users/cxk/Downloads/cxk/chang1608/chang_document/research_1608/papers/me/paper_preparing/1111_bibtex/extrapolation702,/Users/cxk/Downloads/cxk/chang1608/chang_document/research_1608/papers/me/paper_preparing/1111_bibtex/painleve}

\end{document}